\documentclass[reqno,12pt,a4paper]{article}

\usepackage{amsfonts,amssymb,amsthm, 
amsmath,amscd, bbm, bm, mathabx,
mathrsfs,delarray,subfigure,framed,graphicx}

\usepackage[dvipsnames]{xcolor}[svgnames,x11names]

\usepackage{hyperref,enumitem}
\usepackage{cite}
\usepackage{xypic}
\usepackage{pstricks}

\definecolor{light-gray}{gray}{0.96}
\definecolor{shadecolor}{named}{light-gray}

\usepackage[scr=rsfs,cal=dutchcal]{mathalfa}


\usepackage{qcircuit}


\usepackage{scalerel}


\usepackage{fancyhdr}
\pagestyle{fancy}
\fancyhf{}
\fancyheadoffset[RE,LO]{0.\textwidth} 



\fancyhead[LE,RO]{\thepage}
\fancyhead[RE]{\color{blue}{\sffamily{ \nouppercase{\leftmark}}} }
\fancyhead[LO]{\color{blue}{\sffamily{ \nouppercase{\rightmark}}} }


\hypersetup{
  colorlinks,
  linkcolor=blue,
  linktoc=all,
  citecolor=blue,
   urlcolor=blue
}


\usepackage{sectsty}

\colorlet{sectitlecolor}{blue}
\colorlet{sectboxcolor}{white}
\colorlet{secnumcolor}{blue}

\sectionfont{\color{sectitlecolor}}

\makeatletter
\renewcommand\@seccntformat[1]{%
  \colorbox{sectboxcolor}{\textcolor{secnumcolor}{\csname the#1\endcsname}}%
  \quad
}
\makeatother


\usepackage{etoolbox}
\patchcmd{\thebibliography}{\section*{\refname}}{}{}{}






\DeclareSymbolFont{sfletters}{OML}{cmbrm}{m}{it}  

\DeclareMathSymbol{\sfGamma}{\mathord}{sfletters}{"00}
\DeclareMathSymbol{\sfDelta}{\mathord}{sfletters}{"01}
\DeclareMathSymbol{\sfTheta}{\mathord}{sfletters}{"02}
\DeclareMathSymbol{\sfLambda}{\mathord}{sfletters}{"03}
\DeclareMathSymbol{\sfXi}{\mathord}{sfletters}{"04}
\DeclareMathSymbol{\sfPi}{\mathord}{sfletters}{"05}
\DeclareMathSymbol{\sfSigma}{\mathord}{sfletters}{"06}
\DeclareMathSymbol{\sfUpsilon}{\mathord}{sfletters}{"07}
\DeclareMathSymbol{\sfPhi}{\mathord}{sfletters}{"08}
\DeclareMathSymbol{\sfPsi}{\mathord}{sfletters}{"09}
\DeclareMathSymbol{\sfOmega}{\mathord}{sfletters}{"0A}
\DeclareMathSymbol{\sfalpha}{\mathord}{sfletters}{"0B}
\DeclareMathSymbol{\sfbeta}{\mathord}{sfletters}{"0C}
\DeclareMathSymbol{\sfgamma}{\mathord}{sfletters}{"0D}
\DeclareMathSymbol{\sfdelta}{\mathord}{sfletters}{"0E}
\DeclareMathSymbol{\sfepsilon}{\mathord}{sfletters}{"0F}
\DeclareMathSymbol{\sfzeta}{\mathord}{sfletters}{"10}
\DeclareMathSymbol{\sfeta}{\mathord}{sfletters}{"11}
\DeclareMathSymbol{\sftheta}{\mathord}{sfletters}{"12}
\DeclareMathSymbol{\sfiota}{\mathord}{sfletters}{"13}
\DeclareMathSymbol{\sfkappa}{\mathord}{sfletters}{"14}
\DeclareMathSymbol{\sflambda}{\mathord}{sfletters}{"15}
\DeclareMathSymbol{\sfmu}{\mathord}{sfletters}{"16}
\DeclareMathSymbol{\sfnu}{\mathord}{sfletters}{"17}
\DeclareMathSymbol{\sfxi}{\mathord}{sfletters}{"18}
\DeclareMathSymbol{\sfpi}{\mathord}{sfletters}{"19}
\DeclareMathSymbol{\sfrho}{\mathord}{sfletters}{"1A}
\DeclareMathSymbol{\sfsigma}{\mathord}{sfletters}{"1B}
\DeclareMathSymbol{\sftau}{\mathord}{sfletters}{"1C}
\DeclareMathSymbol{\sfupsilon}{\mathord}{sfletters}{"1D}
\DeclareMathSymbol{\sfphi}{\mathord}{sfletters}{"1E}
\DeclareMathSymbol{\sfchi}{\mathord}{sfletters}{"1F}
\DeclareMathSymbol{\sfpsi}{\mathord}{sfletters}{"20}
\DeclareMathSymbol{\sfomega}{\mathord}{sfletters}{"21}
\DeclareMathSymbol{\sfvarepsilon}{\mathord}{sfletters}{"22}
\DeclareMathSymbol{\sfvartheta}{\mathord}{sfletters}{"23}
\DeclareMathSymbol{\sfvarpi}{\mathord}{sfletters}{"24}
\DeclareMathSymbol{\sfvarrho}{\mathord}{sfletters}{"25}
\DeclareMathSymbol{\sfvarsigma}{\mathord}{sfletters}{"26}
\DeclareMathSymbol{\sfvarphi}{\mathord}{sfletters}{"27}


\DeclareMathSymbol{\spartial}{\mathord}{sfletters}{"40}

\DeclareMathSymbol{\sfA}{\mathord}{sfletters}{"41}
\DeclareMathSymbol{\sfB}{\mathord}{sfletters}{"42}
\DeclareMathSymbol{\sfC}{\mathord}{sfletters}{"43}
\DeclareMathSymbol{\sfD}{\mathord}{sfletters}{"44}
\DeclareMathSymbol{\sfE}{\mathord}{sfletters}{"45}
\DeclareMathSymbol{\sfF}{\mathord}{sfletters}{"46}
\DeclareMathSymbol{\sfG}{\mathord}{sfletters}{"47}
\DeclareMathSymbol{\sfH}{\mathord}{sfletters}{"48}
\DeclareMathSymbol{\sfI}{\mathord}{sfletters}{"49}
\DeclareMathSymbol{\sfJ}{\mathord}{sfletters}{"4A}
\DeclareMathSymbol{\sfK}{\mathord}{sfletters}{"4B}
\DeclareMathSymbol{\sfL}{\mathord}{sfletters}{"4C}
\DeclareMathSymbol{\sfM}{\mathord}{sfletters}{"4D}
\DeclareMathSymbol{\sfN}{\mathord}{sfletters}{"4E}
\DeclareMathSymbol{\sfO}{\mathord}{sfletters}{"4F}
\DeclareMathSymbol{\sfP}{\mathord}{sfletters}{"50}
\DeclareMathSymbol{\sfQ}{\mathord}{sfletters}{"51}
\DeclareMathSymbol{\sfR}{\mathord}{sfletters}{"52}
\DeclareMathSymbol{\sfS}{\mathord}{sfletters}{"53}
\DeclareMathSymbol{\sfT}{\mathord}{sfletters}{"54}
\DeclareMathSymbol{\sfU}{\mathord}{sfletters}{"55}
\DeclareMathSymbol{\sfV}{\mathord}{sfletters}{"56}
\DeclareMathSymbol{\sfW}{\mathord}{sfletters}{"57}
\DeclareMathSymbol{\sfX}{\mathord}{sfletters}{"58}
\DeclareMathSymbol{\sfY}{\mathord}{sfletters}{"59}
\DeclareMathSymbol{\sfZ}{\mathord}{sfletters}{"5A}
\DeclareMathSymbol{\sfa}{\mathord}{sfletters}{"61}
\DeclareMathSymbol{\sfb}{\mathord}{sfletters}{"62}
\DeclareMathSymbol{\sfc}{\mathord}{sfletters}{"63}
\DeclareMathSymbol{\sfd}{\mathord}{sfletters}{"64}
\DeclareMathSymbol{\sfe}{\mathord}{sfletters}{"65}
\DeclareMathSymbol{\sff}{\mathord}{sfletters}{"66}
\DeclareMathSymbol{\sfg}{\mathord}{sfletters}{"67}
\DeclareMathSymbol{\sfh}{\mathord}{sfletters}{"68}
\DeclareMathSymbol{\sfi}{\mathord}{sfletters}{"69}
\DeclareMathSymbol{\sfj}{\mathord}{sfletters}{"6A}
\DeclareMathSymbol{\sfk}{\mathord}{sfletters}{"6B}
\DeclareMathSymbol{\sfl}{\mathord}{sfletters}{"6C}
\DeclareMathSymbol{\sfm}{\mathord}{sfletters}{"6D}
\DeclareMathSymbol{\sfn}{\mathord}{sfletters}{"6E}
\DeclareMathSymbol{\sfo}{\mathord}{sfletters}{"6F}
\DeclareMathSymbol{\sfp}{\mathord}{sfletters}{"70}
\DeclareMathSymbol{\sfq}{\mathord}{sfletters}{"71}
\DeclareMathSymbol{\sfr}{\mathord}{sfletters}{"72}
\DeclareMathSymbol{\sfs}{\mathord}{sfletters}{"73}
\DeclareMathSymbol{\sft}{\mathord}{sfletters}{"74}
\DeclareMathSymbol{\sfu}{\mathord}{sfletters}{"75}
\DeclareMathSymbol{\sfv}{\mathord}{sfletters}{"76}
\DeclareMathSymbol{\sfw}{\mathord}{sfletters}{"77}
\DeclareMathSymbol{\sfx}{\mathord}{sfletters}{"78}
\DeclareMathSymbol{\sfy}{\mathord}{sfletters}{"79}
\DeclareMathSymbol{\sfz}{\mathord}{sfletters}{"7A}

\usepackage[LGR,T1]{fontenc}
\usepackage{amsmath,etoolbox}

\newcommand{\declarebsfgreek}[2]{%
  \protected\csdef{bsf#1}{\mathord{\text{\bsfgreekfont#2}}}%
}
\newcommand{\bsfgreekfont}{\usefont{LGR}{cmss}{bx}{it}}

\declarebsfgreek{alpha}{a}
\declarebsfgreek{beta}{b}
\declarebsfgreek{gamma}{g}
\declarebsfgreek{delta}{d}
\declarebsfgreek{epsilon}{e}
\declarebsfgreek{zeta}{z}
\declarebsfgreek{eta}{h}
\declarebsfgreek{theta}{j}
\declarebsfgreek{iota}{i}
\declarebsfgreek{kappa}{k}
\declarebsfgreek{lambda}{l}
\declarebsfgreek{mu}{m}
\declarebsfgreek{nu}{n}
\declarebsfgreek{xi}{x}
\declarebsfgreek{omicron}{o}
\declarebsfgreek{pi}{p}
\declarebsfgreek{rho}{r}
\declarebsfgreek{sigma}{s}
\declarebsfgreek{tau}{t}
\declarebsfgreek{upsilon}{u}
\declarebsfgreek{phi}{f}
\declarebsfgreek{chi}{q}
\declarebsfgreek{psi}{y}
\declarebsfgreek{omega}{w}
\declarebsfgreek{varsigma}{c}

\declarebsfgreek{Gamma}{G}
\declarebsfgreek{Delta}{D}
\declarebsfgreek{Upsilon}{U}
\declarebsfgreek{Omega}{W}
\declarebsfgreek{Theta}{J}
\declarebsfgreek{Lambda}{L}
\declarebsfgreek{Xi}{X}
\declarebsfgreek{Pi}{P}
\declarebsfgreek{Sigma}{S}
\declarebsfgreek{Phi}{F}
\declarebsfgreek{Psi}{Y}

\newcommand{\declarebsfitalic}[2]{%
  \protected\csdef{bsf#1}{\mathord{\text{\bsfitalicfont#2}}}%
}
\newcommand{\bsfitalicfont}{\usefont{T1}{cmss}{bx}{it}}

\declarebsfitalic{a}{a}
\declarebsfitalic{b}{b}
\declarebsfitalic{c}{c}
\declarebsfitalic{d}{d}
\declarebsfitalic{e}{e}
\declarebsfitalic{f}{f}
\declarebsfitalic{g}{g}
\declarebsfitalic{h}{h}
\declarebsfitalic{i}{i}
\declarebsfitalic{j}{j}
\declarebsfitalic{k}{k}
\declarebsfitalic{l}{l}
\declarebsfitalic{m}{m}
\declarebsfitalic{n}{n}
\declarebsfitalic{o}{o}
\declarebsfitalic{p}{p}
\declarebsfitalic{q}{q}
\declarebsfitalic{r}{r}
\declarebsfitalic{s}{s}
\declarebsfitalic{t}{t}
\declarebsfitalic{u}{u}
\declarebsfitalic{v}{v}
\declarebsfitalic{x}{x}
\declarebsfitalic{y}{y}
\declarebsfitalic{z}{z}

\declarebsfitalic{A}{A}
\declarebsfitalic{B}{B}
\declarebsfitalic{C}{C}
\declarebsfitalic{D}{D}
\declarebsfitalic{E}{E}
\declarebsfitalic{F}{F}
\declarebsfitalic{G}{G}
\declarebsfitalic{H}{H}
\declarebsfitalic{I}{I}
\declarebsfitalic{J}{J}
\declarebsfitalic{K}{K}
\declarebsfitalic{L}{L}
\declarebsfitalic{M}{M}
\declarebsfitalic{N}{N}
\declarebsfitalic{O}{O}
\declarebsfitalic{P}{P}
\declarebsfitalic{Q}{Q}
\declarebsfitalic{R}{R}
\declarebsfitalic{S}{S}
\declarebsfitalic{T}{T}
\declarebsfitalic{U}{U}
\declarebsfitalic{V}{V}
\declarebsfitalic{X}{X}
\declarebsfitalic{Y}{Y}
\declarebsfitalic{Z}{Z}


\newcommand{\bfsfH}{\sfH\!\!\!\!\!\sfH}



\newcommand{\msA}{{\sf{A}}}
\newcommand{\msB}{{\sf{B}}}
\newcommand{\msC}{{\sf{C}}}
\newcommand{\msD}{{\sf{D}}}

\newcommand{\msF}{{\sf{F}}}

\newcommand{\msI}{{\sf{I}}}

\newcommand{\msK}{{\sf{K}}}

\newcommand{\msM}{{\sf{M}}}

\newcommand{\msP}{{\sf{P}}}

\newcommand{\msR}{{\sf{R}}}
\newcommand{\msS}{{\sf{S}}}
\newcommand{\msT}{{\sf{T}}}
\newcommand{\msU}{{\sf{U}}}
\newcommand{\msV}{{\sf{V}}}

\newcommand{\msY}{{\sf{Y}}}
\newcommand{\msZ}{{\sf{Z}}}




\newcommand{\bbF}{{\mathbb{F}}}

\newcommand{\bbH}{{\mathbb{H}}}

\newcommand{\bbN}{{\mathbb{N}}}

\newcommand{\bbT}{{\mathbb{T}}}

\newcommand{\bbZ}{{\mathbb{Z}}}



\newcommand{\scC}{{\matheul{C}}}

\newcommand{\scH}{{\matheul{H}}}

\newcommand{\scW}{{\matheul{W}}}


\newcommand{\clE}{{\mathcal{E}}}

\newcommand{\clG}{{\mathcal{G}}}

\newcommand{\clU}{{\mathcal{U}}}




\newcommand{\bra}[1]{{\langle{#1}|}}

\newcommand{\ket}[1]{{|{#1}\rangle}}

\newcommand{\braket}[2]{{\langle{#1}|{#2}\rangle}}

\newcommand{\exval}[3]{{\langle{#1}|{#2}|{#3}\rangle}}


\usepackage{accents}

\sectionfont{\fontsize{12}{15}\selectfont}
\subsectionfont{\fontsize{12}{15}\selectfont}

\usepackage{enumitem}


\setlength{\parskip}{.28mm} 
\setlength{\textheight}{215mm} 
\setlength{\textwidth}{155mm} 
\setlength{\headsep}{1cm}  
\setlength{\hfuzz}{6pt}
\linespread{1.35}
\flushbottom
\hoffset=-.75cm



\makeatletter

\makeatother 
\raggedbottom

\footskip.04\paperheight


\DeclareMathOperator{\id}{id}

\DeclareMathOperator{\Hom}{Hom}

\DeclareMathOperator{\Obj}{Obj}



\DeclareMathOperator{\pro}{pr}

\DeclareMathOperator{\Vect}{Vect}
\DeclareMathOperator{\Mod}{Mod}

\DeclareMathOperator{\Gal}{Gal}

\DeclareMathOperator{\End}{End}

\DeclareMathOperator{\Aut}{Aut}


\DeclareMathOperator{\tr}{tr}
\DeclareMathOperator{\Tr}{Tr}

\DeclareMathOperator{\ee}{e}


\DeclareMathOperator{\GR}{GR}
\DeclareMathOperator{\GF}{GF}



\DeclareMathOperator{\rst}{rem}

\DeclareMathOperator{\supp}{supp}

\numberwithin{equation}{subsection} 
\numberwithin{subsection}{section} 

\newcommand{\ceqref}[1]{{\textcolor{blue}{\eqref{#1}}}}
\newcommand{\cref}[1]{{\textcolor{blue}{\ref{#1}}}}
\newcommand{\ccite}[1]{{\textcolor{blue}{\hspace{-.40pt}\cite{#1}}}}


\newcommand{\sss}{{\hbox{\large $\sum$}}}
\newcommand{\ppp}{{\hbox{\large $\prod$}}}
\newcommand{\ooo}{{\hbox{\large $\bigotimes$}}}
\newcommand{\uuu}{{\hbox{\large $\bigcup$}}}

\newcommand{\ddd}{{\hbox{\large $\bigoplus$}}}

\newcommand{\ul}[1]{{\underline{#1}}}
\newcommand{\ol}[1]{{\overline{#1}}}


\newcommand{\hfpt}{\hspace{.75pt}}
\newcommand{\mhfpt}{\hspace{-.75pt}}


\newcommand\mycom[2]{\genfrac{}{}{-3pt}{}{#1}{#2}}





%

%



%
%

%



\font\euler=eusm10 at 12.8 truept
\font\scripteuler=eusm7
\font\scriptscripteuler=eusm5 
\textfont12=\euler
\scriptfont12=\scripteuler
\scriptscriptfont12=\scriptscripteuler
\def\eul{\fam=12}
\newcommand{\matheul}[1]{{{\eul #1}}}


\newtheorem{defi}{{\sf Definition}}[subsection]
\newtheorem{prop}{{\sf Proposition}}[subsection]
\newtheorem{theor}{{\sf Theorem}}[subsection]
\newtheorem{lemma}{{\sf Lemma}}[subsection]
\newtheorem{exa}{{\sf Example}}[subsection]

\newtheorem{cor}{{\sf Corollary}}[subsection]





\DeclareMathSymbol{*}{\mathbin}{symbols}{"03} 


\makeatletter
\renewcommand{\eqref}[1]{\tagform@{\ref{#1}}}
\def\maketag@@@#1{\hbox{#1}}
\makeatother


\DeclareMathVersion{normal2}

\begin{document}

\thispagestyle{empty} 

\vskip1.5cm
\begin{large} 
{\flushleft\textcolor{blue}{\sffamily\bfseries Calibrated hypergraph states:}}
{\flushleft\textcolor{blue}{\sffamily\bfseries II calibrated hypergraph state construction and applications}} 
\end{large}
\vskip1.2cm
\hrule height 1.5pt
\vskip1.2cm
{\flushleft{\sffamily \bfseries Roberto Zucchini}\\
\it Department of Physics and Astronomy,\\
University of Bologna,\\
I.N.F.N., Bologna division,\\
viale Berti Pichat, 6/2\\
Bologna, Italy\\
Email: \textcolor{blue}{\tt \href{mailto:roberto.zucchini@unibo.it}{roberto.zucchini@unibo.it}}, 
\textcolor{blue}{\tt \href{mailto:zucchinir@bo.infn.it}{zucchinir@bo.infn.it}}}


\vskip.9cm 
{\flushleft\sc
Abstract:} 
Hypergraph states are a special kind of multipartite states encoded by
hypergraphs relevant in quantum error correction, measurement--based quantum computation,
quantum non locality and entanglement.
In a series of two papers, we introduce and investigate calibrated hypergraph states,
an extension of weighted hypergraph states codified by 
hypergraphs equipped with calibrations, a broad generalization of weightings.
The guiding principle informing our approach is that a constructive theory of hypergraph
states must be based on a categorical framework for both 
hypergraphs and multi qudit states  constraining hypergraph
states enough to render the determination of their general structure possible. 
In this second paper, we build upon the graded $\varOmega$ monadic framework worked out in the
companion paper, focusing on qudits over a generic Galois ring. 
We explicitly construct a calibrated hypergraph state map as a special morphism of the
calibrated hypergraph and multi qudit state $\varOmega$ monads.
We further prove that the calibrated hypergraph states so yielded are locally maximally entangleable
stabilizer states, elucidate their relationship to weighted hypergraph states, show that
they reduce to the weighted ones in the familiar qubit
case and prove through examples that this is no longer the case for higher qudits. 
\vspace{2mm}
\par\noindent
MSC: 05C65 81P99 81Q99 

\vfill\eject

{\color{blue}\tableofcontents}

\vfill\eject

\renewcommand{\sectionmark}[1]{\markright{\thesection\ ~~#1}}

\section{\textcolor{blue}{\sffamily Introduction}}\label{sec:intro}

Graphs and hypergraphs are discrete topological structures which find many applications
in the modelling of the binary and higher-order relations of the 
objects and in the analysis of the data sets occurring in a variety scientific disciplines
\ccite{Berge:1973gth,Ouvrard:2020hir}.
Graphs and hypergraphs both feature vertices, but while graphs allow only 
edges connecting pairs of distinct vertices hypergraphs admit hyperedges
joining any number of vertices.
Graphs and hypergraphs can further be enriched with edge and hyperedge
weight data yielding multigraphs and weighted hypergraphs, respectively. 

In quantum information and computation theory, graphs and hypergraphs are employed to tackle a number
of fundamental problems as reviewed in the next subsection.


\subsection{\textcolor{blue}{\sffamily Graph and hypergraph states}}\label{subsec:ghgstrev}

Graph states are a distinguished kind of multipartite states encoded by graphs. They
were originally introduced by Schlingemann and Hein, Eisert and Briegel 
in refs. \!\!\ccite{Schlingemann:2003cag,Hein:2003meg} over twenty years ago.
Since then they have emerged as key elements 
of the analysis of quantum error correction \ccite{Schlingemann:2000ecg,Bausch:2019tpn},
measurement-based quantum computation \ccite{Raussendorf:2001oqc},
quantum secret sharing \ccite{Markham:2008gss}
and Bell non locality \ccite{Scarani:2005ncs,Guehne:2004big,Baccari:2020bst}
and entanglement \ccite{Kruszynska:2006epp,Toth:2005dsf,Jungnitsch:2011ewg}. 
Ref. \!\!\ccite{Hein:2006ega} offers a comprehensive introduction to the subject. 

Hypergraph states are a wider class of multipartite states encoded by hypergraphs. 
They were introduced and studied more than a decade ago by Qu {\it et al.} and Rossi {\it et al.}
in refs. \!\!\ccite{Qu:2012eqs,Rossi:2012qhs} as a broad extension of graph states.
As these latter, they have found significant applications to the study of quantum error correction \ccite{Wagner:2017shs},
measurement-based quantum computation \ccite{Gachechiladze:2019mbh,Takeuchi:2018qcu},
state verification and self-testing \ccite{Morimae:2017vmq,Zhu:2018evh} and 
Bell non locality \ccite{Gachechiladze:2015evr}   
and entanglement \ccite{Guehne:2014nch,Ghio:2017med}.  
Background material on hypergraph states is available in ref. \!\!\ccite{Gachechiladze:2019phd}.

Qudits are higher level computational units generalizing familiar 2--level qubits. 
Their larger state space can be employed to process and store more information and allows in principle
for a diminution of circuit complexity together with an improvement of algorithmic efficiency.
They were first introduced and investigated in refs. \!\!\ccite{Ashikhmin:2000nbc,Gheorghiu:2011qsg}. Ref.
\!\!\ccite{Wang:2020qhd} reviews qudits and their use in higher dimensional quantum computing. 

Qudit graph states have been examined in a number of works including \ccite{Helwig:2013amq,Keet:2010qss}. 
Qudit hypergraph states were studied first in refs. \!\!\ccite{Steinhoff:2016:qhs,Xiong:2017qhp}. 
They were also considered in relation to quantum error correction in ref. \!\!\ccite{Looi:2007ecq}.

Graph and hypergraph states, in spite of their broad variety
reviewed above, have in common certain basic features. They are stabilizer states, 
i.e. one-dimensional error correcting codes \ccite{Gottesman:1997scq,Garcia:2014gss}. 
Further, they are locally maximally entangleable states \ccite{Kruszynska:2008lem}. 

A key problem in quantum information theory is the classification of multipartite states 
in entanglement local equivalence classes. Local unitary/local Clifford, separable and stochastic local
operations with classical communications equivalence constitute the main classing schemes.
Graph and hypergraph states are particularly suitable for this kind of investigation
because of the richness of their mathematical properties. This matter has been investigated in 
refs. \ccite{Nest:2003lct,Nest:2004lce,Nest:2004uce,Bravyi:2005ghz} for graph states
and \ccite{Lyons:2014luh} for hypergraph states and has been considered also for  qudit 
states \ccite{Hostens:2004qma,Bahramgiri:2006cnb}.


\subsection{\textcolor{blue}{\sffamily Plan of the endeavour}}\label{subsec:wsproject}

With the present endeavour, we introduce and study calibrated hypergraph states,
an ample extension of weighted hypergraph states codified by 
hypergraphs equipped with calibrations construed as generalization of weightings.
The guiding principle which we have followed is that a viable theory
of hypergraph states should refer to a suitable categorical framework for
hypergraphs on one hand and multi qudit states on the other
and within this impose requirements on hypergraph
states sufficient to make the determination of their general structure possible.
It results in a rather general framework covering most of the graph state constructions
mentioned in the previous subsection. Our approach parallels that of
Ionicioiu and Spiller in ref. \!\!\ccite{Ionicioiu:2012egq} and we are indebted to these authors for inspiration.
However, unlike theirs, our theory cannot be properly described as axiomatic but rather as constructive. 

The endeavour is naturally divided in two parts, hereafter mentioned as I and II,
which we outline next. 

In I, we introduce graded $\varOmega$ monads, concrete Pro
categories isomorphic to the Pro category $\varOmega$ of finite von Neumann ordinals and equipped with an associative and unital
graded multiplication, and their morphisms, maps of $\varOmega$ monads compatible with their monadic structure. 
We then show that both calibrated hypergraphs and multi qudit states 
naturally organize in graded $\varOmega$ monads. In this way, we pave the way to the construction of the calibrated
hypergraph state map as a special morphism of these $\varOmega$ monads carried out in II. 

In II, relying on the graded $\varOmega$ monadic set--up worked out in I and concentra\-ting on 
qudits over a generic Galois ring, 
we explicitly construct a calibrated hypergraph state map as a special morphism of the
calibrated hypergraph and multi qudit state $\varOmega$ monads.
We further prove that calibrated hypergraph states are locally maximally entangleable
stabilizer states, clarify their relationship to weighted hypergraph states, demonstrate that
they reduce to the weighted ones in the familiar qubit
case and prove through examples that this is no longer the case for higher qudits. 
We finally illustrate a number of special technical results and applications. 

The present paper covers Part II. Its contents are outlined in more detail in subsect. \cref{subsec:design}
below.


\subsection{\textcolor{blue}{\sffamily Review of the graded monadic framework of hypergraph states}}\label{subsec:plan}

In this subsection, we review the $\varOmega$ monadic approach to calibrated hypergraphs and multi qudit states
worked out in I, which constitutes the categorical framework of
the calibrated hypergraph state construction. 
Our discussion will be merely informal. The reader interested in 
the precise definitions of the concepts presented and rigorous statements
of the results obtained is referred to the paper I. 

In I, we stated and justified the principle inspiring our work:
hypergraphs and their data on one hand and multipartite quantum states on the other are naturally 
described by Pro categories isomorphic to the Pro category of finite von Neumann ordinals and
equipped with a graded monadic multiplicative structure. 
Refs. \!\!\ccite{MacLane:1978cwm,Fong:2019act}
and app. A of I provide background on category theory.

A Pro category \,\ccite{Maclane:1965cta,Boardman:1968hes,May:1972ils,Markl:2002otp}
is a strict monoidal category all of whose objects are monoidal powers of a single generating object. 
The category of finite von Neumann ordinals $\varOmega$ is the prototypical example of Pro category.
It is a concrete category featuring the ordinal sets $[l]$ and functions $f:[l]\rightarrow[m]$ as objects and morphisms,
respectively, and equipped with a monoidal product formalizing the set theoretic operation of disjoint union of ordinal
sets and functions through the usual ordinal sum.

The main categories studied in I, such as the various hypergraph categories
and the multi dit mode categories, are instances of $\varOmega$ categories. 
An $\varOmega$ category is a Pro category $D\varOmega$ together with a strict monoidal isofunctor
$D:\varOmega\rightarrow D\varOmega$. 
Though from a purely category theoretic perspective an $\varOmega$ category $D\varOmega$ is indistinguishable from
$\varOmega$ itself, the explicit realizations the objects $D[l]$ and morphisms $Df:D[l]\rightarrow D[m]$ of $D\varOmega$ have 
as sets and functions when $D\varOmega$ is a concrete category do matter in the applications of the theory.

A graded $\varOmega$ monad is a concrete $\varOmega$ category $D\varOmega$ 
together with an associative and unital graded monadic multiplication. The monoidal and monadic structures
of $D\varOmega$ are compatible but distinct. While the former acts on the objects and morphisms
of $D\varOmega$, the latter does on the elements forming the objects of $D\varOmega$, 
the relationship of the two being analogous to that occurring between tensor multiplication of
Hilbert spaces and vectors belonging to those spaces. Graded $\varOmega$ monad morphisms can be defined
as maps of graded $\varOmega$ monads compatible with their monadic structures. Graded $\varOmega$ monads
and their morphisms form a category $\ul{\rm GM}_\varOmega$, which is closely related 
to the category $\ul{\rm GM}$ of graded monads \ccite{Mellies:2012mea,Katsumata:2014pms,Mellies:2017pcm,Fujii:2019gim}. 

The finite ordinal category $\varOmega$, $\varOmega$ categories and graded $\varOmega$ monads
are studied in detail in sect. 2 of I. 

The ordinal Pro category $\varOmega$ provides a
combinatorial model which describes units as diverse as hypergraph vertices or dits. Each set of $l$ units is identified with
the ordinal set $[l]$ viewed as a standard unit set. The functions mapping a set of $l$ units into one
of $m$ units are accordingly identified with the ordinal functions $f:[l]\rightarrow[m]$ 
construed as standard functions of standard unit sets. The operation of disjoint union of sets and functions of units
are then described by the monoidal products of the ordinal sets and functions which represent them.
This model underlies the main graded $\varOmega$ monads entering our theory.

Hypergraphs are collections of hyperedges and hyperedges are non empty sets of vertices. 
Hypergraphs can be bare or else endowed with hyperedge data such as calibrations and weightings.
Bare, calibrated and weighted hypergraphs are organized in appropriate graded $\varOmega$ monads,
the bare, calibrated and weighted hypergraph $\varOmega$ monads 
$G\varOmega$, $G_C\varOmega$ and $G_W\varOmega$, respectively. 
For each ordinal $[l]$, $G[l]$ is the set of all hypergraphs over the vertex set $[l]$;  
in the same manner, for each ordinal function $f:[l]\rightarrow[m]$, $Gf:G[l]\rightarrow G[m]$ is the function
of hypergraph sets induced by $f$ as a vertex set function. The monadic multiplicative structure of
$G\varOmega$ is patterned after the disjoint union of hypergraphs as collections of hyperedges.
$G_C\varOmega$ and $G_W\varOmega$ have a richer content than $G\varOmega$ but otherwise a completely similar 
design. 

The hypergraph $\varOmega$ monad $G\varOmega$ and its calibrated and weighted augmentations $G_C\varOmega$ and 
$G_W\varOmega$ are studied in depth in sect. 3 of I. 

In classical theory, the distinct configurations of a single cdit are indexed by some finite set $\msR$ of labels.
Computation theory requires that $\msR$ is endowed with some algebraic structure depending on context.
In any case, $\msR$ is at least a commutative monoid.
In quantum theory, through base encoding, the independent states of a single qudit are indexed by the same monoid $\msR$
as its cdit counterpart. 
Such states span a state Hilbert space $\scH_1$ of dimension $|\msR|$. 


Multi cdit configurations and multi qudit states are described by specific graded $\varOmega$ monads,
the multi cdit configuration and  multi qudit state $\varOmega$ monads $E\varOmega$ and $\scH_E\varOmega$.
For each ordinal $[l]$, $E[l]$ is the set of all configurations of the cdit set $[l]$; 
correspondingly, for each ordinal function $f:[l]\rightarrow[m]$, $Ef:E[l]\rightarrow E[m]$ is the
function of configuration sets induced by $f$ as a function of cdit sets. The monadic multiplicative
structure of $E\varOmega$ is defined as concatenation of configurations viewed as strings of $\msR$ labels.
Since $\scH_E\varOmega$ is the basis encoding of of $E\varOmega$, 
the layout of $\scH_E\varOmega$ mirrors closely that of $E\varOmega$ with multi qudit state Hilbert spaces 
replacing multi cdit configuration sets, linear operators between those spaces
taking the place of the functions between those sets and 
tensor multiplication of state vectors substituting concatenation of configuration strings. 

The multi cdit configuration and multi qudit state $\varOmega$ monads $E\varOmega$ are 
$\scH_E\varOmega$ are defined and studied in detail in sect. 4 of I. 


\subsection{\textcolor{blue}{\sffamily The calibrated hypergraph state construction}}\label{subsec:design}

The calibrated hypergraph state construction is at the same time the final goal and the
ultimate justification of the elaboration of the categorical set--up worked out in I.
The proper definition and the study of the calibrated hypergraph state map
lies at the heart of the present paper. 

Before proceeding to showing how the calibrated hypergraph state construction can be formulated in an
$\varOmega$ monadic framework, it is necessary to define the notation used.
A calibrated hypergraph specified by a hypergraph $H$ and a calibration $\varrho$ over $H$
is expressed by a pair $(H,\varrho)$. The monadic multiplication of calibrated hypergraphs is indicated by the symbol
$\smallsmile$ to recall its similarity to disjoint union.
The calibrated hypergraph monadic unit is denoted by $(O,\varepsilon)$ to remind its being empty. A multi qudit state vector
is denoted by a Dirac ket $\ket{\varPsi}$. The monadic multiplication of state vectors is indicated by
the tensor product symbol $\otimes$, in accordance with its mathematical meaning. The state vector monadic unit
is written as $\ket{0}$ signifying the complex unit.

The calibrated hypergraph state construction associates with any calibrated hypergraph
$(H,\varrho)\in G_C[l]$ a hypergraph state $\ket{(H,\varrho)}\in\scH_E[l]$. It enjoys three basic properties.
First, it is covariant, meaning that 
\begin{equation}
\label{i/whgsts16}
\scH_Ef\ket{(H,\varrho)}=\ket{G_Cf(H,\varrho)}
\end{equation}
for any ordinal function $f:[l]\rightarrow[m]$. Second, it is compatible with monadic multiplication
in the sense that 
\begin{equation}
\label{i/whgsts18}
\ket{(H,\varrho)\smallsmile(K,\varsigma)}=\ket{(H,\varrho)}\otimes\ket{(K,\varsigma)}
\end{equation}
for $(H,\varrho)\in G_C[l]$, $(K,\varsigma)\in G_C[m]$. Third, it satisfies 
\begin{equation}
\label{i/whgsts19}
\ket{(O,\varepsilon)}=\ket{0}.
\end{equation}
By these properties, the collection of the maps 
$\ket{-}:G_C[l]\rightarrow \scH_E[l]$ assigning the 
hypergraph state $\ket{(H,\varrho)}\in\scH_E[l]$ to each calibrated hypergraph $(H,\varrho)\in G_C[l]$
for the different values of $l$ specify a distinguished morphism of the graded $\varOmega$ monads 
$G_C\varOmega$, $\scH_E\varOmega$. 
This $\varOmega$ monad theoretic interpretation of the calibrated hypergraph state map $\ket{-}$ is not only 
interesting in its own, but underlies a powerful approach to the analysis of calibrated hypergraph states:
relations \ceqref{i/whgsts16}--\ceqref{i/whgsts19} constrain to a considerable extent the 
map $\ket{-}$ and so provide valuable information about its form 
and essentially determine the nature of the hypergraph calibration data, which we have not spelled out yet.
This, ultimately, is the reason why our categorical set--up is not an end in itself 
but provides a basic framework for the study of hypergraph states.
We outline the results of this investigation next.


As we noticed earlier, computation theory requires the set $\msR$ labelling a cdit's configurations
to have some kind of algebraic structure.  It is often posited that $\msR$ is a Galois field. However, this assumption
may sometimes be relaxed. We have found that for our construction of hypergraph states to work it is enough to demand
that $\msR$ is a Galois ring.
Refs. \ccite{Wan:2011gfr,Bini:2002gfr,Kibler:2017gre} and app. \cref{app:galois} 
provide background on Galois ring theory. 

A Galois ring $\msR$ is characterized uniquely up to isomorphism by its characteristic $p^r$,
a prime power, and degree $d$. It contains $q=p^{rd}$ elements altogether.
It further features a prime subring $\msP$, a minimal Galois subring isomorphic to the ring $\bbZ_{p^r}$ of integers
modulo $p^r$. A trace function $\tr:\msR\rightarrow\msP$ is defined with relevant surjectivity and non singularity
properties.

A Galois ring $\msR$ is finite. Thus, for any fixed $x\in\msR$ the exponents $u\in\bbN$ giving distinct powers
$x^u$ form a finite set $\msZ_x$ of exponents depending on $x$. $\msZ_x$ has an algebraic structure
of cyclic monoid. The monoids $\msZ_x$ for varying $x\in\msR$
can be assembled in the direct sum monoid $\msZ=\bigoplus_x\msZ_x$, 
the cyclicity monoid of $\msR$. The power of a ring element $x\in\msR$ to the exponent $u\in\msZ$ is defined to be given
by $x^u=x^{u_x}$ if $u=(u_x)_{x\in\msR}$. 

The properties of Galois rings most relevant in the theory of calibrated hypergraph states,
in particular the proper definition and structure of their cyclicity monoids,
are reviewed in sect. \cref{sec:galqud}. 
Other topics related to these, such as the qudit Pauli group
and quantum Fourier transformation, though well-known, are reviewed in the same section for later use. 

We illustrate next the calibrated hypergraph state construction in some detail. 
An exponent function of a given vertex set $X$ is any function $w:X\rightarrow\msZ$.
Since $\msZ$ is a monoid, the exponent functions of $X$ form a monoid, the exponent
monoid $\msZ^X$. A calibration of $X$ is any function $\varrho_X:\msZ^X\rightarrow\msP$.
A calibration $\varrho$  over a hypergraph $H\in G[l]$ is an assignment to each hyperedge $X\in H$ of a calibration
$\varrho_X$ of $X$.

In structural terms, the hypergraph state $\ket{(H,\varrho)}\in\scH_E[l]$ associated with a calibrated hypergraph
$(H,\varrho)\in G_C[l]$ reads as 
\begin{equation}
\label{i/whgsts10}
\ket{(H,\varrho)}=D_{(H,\varrho)}\ket{0_l}, \vphantom{\sum}
\end{equation}
where $D_{(H,\varrho)}$ is a linear operator of $\scH_E[l]$ of the form 
\begin{equation}
\label{i/whgsts11}
D_{(H,\varrho)}=\mycom{{}_\sss}{{}_{x\in E[l]}}F_l{}^+\ket{x}\,\omega^{\sigma_{(H,\varrho)}(x)}\hfpt\bra{x}F_l.
\end{equation}
Here, $F_l$ is the quantum Fourier transform operator and $\omega=\exp(2\pi i/p^r)$. 
The vectors $\ket{x}$, $x\in E[l]$, constitute the qudit Hadamard basis of
$\scH_E[l]$ to which $\ket{0_l}$ belongs. 
$\sigma_{(H,\varrho)}:E[l]\rightarrow\msP$ is a phase function depending on $(H,\varrho)$ reading as 
\begin{equation}
\label{i/whgsts9}
\sigma_{(H,\varrho)}(x)=\mycom{{}_\sss}{{}_{X\in H}}\mycom{{}_\sss}{{}_{w\in\msZ^X}}\varrho_X(w)
\tr\left(\mycom{{}_\ppp}{{}_{r\in X}}x_r{}^{w(r)}\right) 
\end{equation}
with $x\in E[l]$. 

The definition of hypergraphs states we have submitted is justified ultimately 
by the following theorem which is the main result of the paper. 

\begin{theor} \label{theor:main}
The calibrated hypergraph state map defined by expressions \ceqref{i/whgsts10}--\ceqref{i/whgsts9}
satisfies properties \ceqref{i/whgsts16}--\ceqref{i/whgsts19}.
\end{theor}

\noindent
The hypergraph states outputted by the map have all the properties expected
from such states.

\begin{theor} \label{theor:basic}
The calibrated hypergraph states given by expressions \ceqref{i/whgsts10}--\ceqref{i/whgsts9}
are locally maximally entangleable stabilizer states \ccite{Kruszynska:2008lem}. 
\end{theor}

\noindent
The  stabilizer group operators $K_{(H,\varrho)}(a)$ of the hypergraph states $\ket{(H,\varrho)}$ 
can be explicitly found. Further, the associated hypergraph orthonormal bases $\ket{(H,\varrho),a}$
can also be obtained. 

The phase function $\sigma_{(H,\varrho)}$ shown in \ceqref{i/whgsts9}
determines the hypergraph state $\ket{(H,\varrho)}$ and its properties. 
The definitions of $\sigma_{(H,\varrho)}$ we have put forward is however not straightforward and for such a reason 
needs to be properly commented.

The phase function $\sigma_{(H,\varrho)}$ receives a contribution from each hyperedge $X$ of the hypergraph $H$.
In turn, the component of $X$ gets a contribution from each exponent function $w$ of $X$.
The component of $w$ carries a $\msP$ valued weight $\varrho_X(w)$ determined by the calibration $\varrho_X$ of $X$.
It further features the Galois qudit variables $x_r$ attached to the vertices $r\in X$ raised to the 
powers $w(r)$, $x_r{}^{w(r)}$. Notice however that the $w(r)$ are not ordinary exponents as defined in elementary algebra
and consequently the $x_r{}^{w(r)}$ are not ordinary powers. Rather, the $w(r)$ are tuples of ordinary exponents
$w(r)_u$, one for each element $u$ of the ring $\msR$, and the expressions $x_r{}^{w(r)}$ really stand
for the ordinary powers $x_r{}^{w(r)_{x_r}}$. $\sigma_{(H,\varrho)}(x)$ so can be described as a polynomial function 
in the Galois qudit variables $x_r$ only in the more general sense just explained.

There is a least integer $\delta\in\bbN_+$ depending only on the ring $\msR$ such that the powers 
$x^k$ with $k\in[\delta]$ comprises all non negative powers of any ring element $x\in\msR$.
Hypergraph states with phase functions given by an expression similar to \ceqref{i/whgsts9}
with the exponent monoid $\msZ^X$ replaced by the set $[\delta]^X$ of exponent
functions $w:X\rightarrow[\delta]$  can be shown to be instances of calibrated hypergraph states.
These states are truly polynomial but however not the most general possible ones for a general Galois ring $\msR$.
This conclusion must be reconsidered when $\msR$ is a Galois field, as will be addressed momentarily. 

Our calibrated hypergraph state construction scheme is in its main lines
analogous to that of previous literature such as refs. \!\!\ccite{Qu:2012eqs,Rossi:2012qhs}
or, in the qudit case, \ccite{Steinhoff:2016:qhs,Xiong:2017qhp}. In those works, however,
weighted hypergraphs rather than calibrated hypergraphs are used. Stated in our $\varOmega$ monadic language
for comparison, with any weighted hypergraph $(H,\alpha)\in G_W\varOmega$
these authors  associate a weighted hypergraph state $\ket{(H,\alpha)}\in\scH_E[l]$ given by
an expression analogous to \ceqref{i/whgsts10}, \ceqref{i/whgsts11} but with $\sigma_{(H,\alpha)}$ of the form 
\begin{equation}
\label{i0/whgsts9}
\sigma_{(H,\alpha)}(x)=\mycom{{}_\sss}{{}_{X\in H}}\alpha_X
\tr\left(\mycom{{}_\ppp}{{}_{r\in X}}x_r\right),
\end{equation}
where $\alpha_X\in\msP$ is the weight factor of hyperedge $X$.
This construction certainly has a sound grounding but the hypergraph states it produces 
are not sufficiently general for property \ceqref{i/whgsts16} to be enjoyed for a generic
ordinal function $f$. By design, conversely, the calibrated hypergraph states 
possess this property. 

Formula \ceqref{i0/whgsts9} is a direct extrapolation to the qudit case of the expression holding in the basic
qubit case, where the operator generating the hypergraph state can be seen to reduce to a sequence 
of familiar controlled phase gates, one for each hyperedge of the underlying hypergraph counting weights.
For this reason, the reader may be unwilling to modify it in the way
we have indicated. This point of view is certainly legitimate. We reply to this criticism by pointing out that
extrapolations are by their nature non unique, calling for the exploration of alternatives.
We quote next a further results which may support our proposal.

\begin{theor} \label{theor:calweicomp}
For each weighted hypergraph $(H,\alpha)\in G_W[l]$, there exists a calibrated hypergraph
$(H,\varrho)\in G_C[l]$ with the same underlying hypergraph $H\in G[l]$ with the property that
$\ket{(H,\alpha)}=\ket{(H,\varrho)}$.
\end{theor}

\noindent
In other words, calibrated hypergraph states contain weighted hypergraph states as a subset.
In this sense, the calibrated hypergraph approach essentially subsumes the weighted hypergraph one.
We could summarize this relationship in simple words by saying that while 
the phase functions of calibrated hyper graph states
exhibit all possible powers of the Galois qudit variables those of the weighted hypergraph states do
only the $0$-th and $1$-st ones.
The critical question about whether genuinely new hypergraph state entanglement classes are generated
by broadening the scope of hypergraph theory in this way will not be addressed in this work. 

Fig. \cref{fig:hgraphdiag} compares pictorially through a simple example the different ways
a hypergraph's weight and calibration data encode the associated weighted and calibrated hypergraph states. 

\begin{figure}[!t]
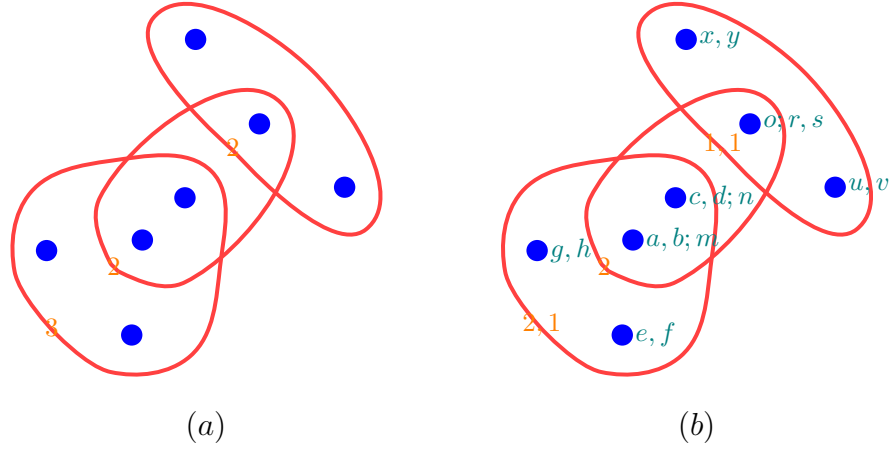

\vspace{.75cm}
\hbox{
\hspace{1.5cm}
{\psset{xunit=.28,yunit=.28,runit=.4}{
\pspicture(-.5,-1.5)(17.5,17.5)

\pscurve[linecolor=red!75,linewidth=1.5pt](2.25,2.25)(5,.25)(8.75,1)(10.15,5.25)(9.75,9.75)(5,10.25)(1.25,8.75)(.5,5)(2.25,2.25)

\pscurve[linecolor=red!75,linewidth=1.5pt](5.195,5.195)(5,5.4)(4.5,8.5)(13,13)(8.5,4.5)(5.4,5)(5.195,5.195)

\pscurve[linecolor=red!75,linewidth=1.5pt](10.75,10.75)(17,7)(14.5,14.5)(7,17)(10.75,10.75)


{\psset{linecolor=blue}
\qdisk(6,2){4.pt}
\qdisk(2,6){4.pt}
\qdisk(6.5,6.5){4.pt}
\qdisk(8.5,8.5){4.pt}
\qdisk(12,12){4.pt}
\qdisk(16,9){4.pt}
\qdisk(9,16){4.pt}
}

\uput[45](1.5,1.5){\footnotesize\textcolor{orange}{$3$}}
\uput[45](4.4,4.4){\footnotesize\textcolor{orange}{$2$}}
\uput[45](10.,10.){\footnotesize\textcolor{orange}{$2$}}
\uput[-90](9.5,-1){$(a)$}

\endpspicture
}}

\hspace{1cm}

{\psset{xunit=.28,yunit=.28,runit=.4}{   
\pspicture(-.5,-1.5)(17.5,17.5)

\pscurve[linecolor=red!75,linewidth=1.5pt](2.25,2.25)(5,.25)(8.75,1)(10.15,5.25)(9.75,9.75)(5,10.25)(1.25,8.75)(.5,5)(2.25,2.25)

\pscurve[linecolor=red!75,linewidth=1.5pt](5.195,5.195)(5,5.4)(4.5,8.5)(13,13)(8.5,4.5)(5.4,5)(5.195,5.195)

\pscurve[linecolor=red!75,linewidth=1.5pt](10.75,10.75)(17,7)(14.5,14.5)(7,17)(10.75,10.75)


{\psset{linecolor=blue}
\qdisk(6,2){4.pt}
\qdisk(2,6){4.pt}
\qdisk(6.5,6.5){4.pt}
\qdisk(8.5,8.5){4.pt}
\qdisk(12,12){4.pt}
\qdisk(16,9){4.pt}
\qdisk(9,16){4.pt}
}

\uput[45](1.5,1.5){\hspace{-5pt}\footnotesize\textcolor{orange}{$2,1$}}
\uput[45](4.4,4.4){\footnotesize\textcolor{orange}{$2$}}
\uput[45](10.,10.){\hspace{-5pt}\footnotesize\textcolor{orange}{$1,1$}}
\uput[0](6.5,6.5){\footnotesize\textcolor{teal}{$a,b;m$}}
\uput[0](8.5,8.5){\footnotesize\textcolor{teal}{$c,d;n$}}
\uput[0](12,12){\footnotesize\textcolor{teal}{$o;r,s$}}
\uput[0](6,2){\footnotesize\textcolor{teal}{$e,f$}}
\uput[0](2,6){\footnotesize\textcolor{teal}{$g,h$}}
\uput[0](16,9){\footnotesize\textcolor{teal}{$u,v$}}
\uput[0](9,16){\footnotesize\textcolor{teal}{$x,y$}}
\uput[-90](9.5,-1){$(b)$}

\endpspicture
}}
}
\vspace{.8cm}
\caption{\small Hypergraphs can be rendered pictorially by representing
vertices as dots (blue) and hyperedges as closed lines encircling the dots corresponding
to their vertices (red). Panel $(a)$ shows a weighted hypergraph encoding a weighted hypergraph
state. The integer juxtaposed to each hyperedge (orange) indicates its weight.
Panel $(b)$ shows a calibrated hypergraph encoding a calibrated hypergraph state.
The symbols attached to each vertex (teal) indicate the exponents to which the underlying qudit's
Galois variable is raised. The integers collocated upon each hyperedge (orange) indicate the weights
with which the corresponding exponent assignments of the encircled vertices appear. When a vertex belongs to several
hyperedges, several exponent strings separated by colons are placed on it. 
\label{fig:hgraphdiag}}
\end{figure}

For qubits, for which the relevant Galois ring $\msR$ is the simple
binary field $\bbF_2$, the calibrated hypergraph state construction does not yield
new states  beyond the weighted hypergraph ones already known.

\begin{theor} \label{theor:qudqubcomp}
Assume that $\msR=\bbF_2$ is the binary field. Then, for each calibrated hypergraph $(H,\varrho)\in G_C[l]$
there exists a weighted hypergraph $(K,\alpha)\in G_W[l]$ such that 
$\ket{(K,\alpha)}=\ket{(H,\varrho)}$ up to a sign.
\end{theor}

The matters outlined above constitute the main results of our endeavour. The are found in sect. \cref{sec:grstt},
which contains also a optimization scheme for the classification problem of hypergraph states
based on the set--up proposed. 

When the Galois ring $\msR$ is a field, the general expression of the calibrated hyper\-graph state map
we have obtained can be recast into an alternative form 
thanks to the existence of a distinguished set of polynomials drawn from the
polynomial ring $\msR[\sfx]$ computing the  powers of the field's elements to
the cyclicity monoid's exponents.

\begin{prop} \label{prop:i/huxu}
Let $\msR$ be a Galois field. Then, for each exponent $u\in\msZ$, there is a unique polynomial $m_u(\sfx)\in\msR[\sfx]$
of degree at most $q-1$ such that $x^u=m_u(x)$ for every $x\in\msR$. 
\end{prop}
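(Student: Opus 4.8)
The plan is to reduce the proposition to the classical fact that over a finite field every set--theoretic self--map is represented by a unique polynomial of degree below the field's order. First I would unravel the notation: for an exponent $u=(u_x)_{x\in\msR}\in\msZ$ the symbol $x^u$ denotes the ordinary power $x^{u_x}$, so $u$ determines a well--defined function $f_u\colon\msR\rightarrow\msR$, $f_u(x)=x^{u_x}$; the only slightly delicate value, $f_u(0)=0^{u_0}\in\{0,1\}$, is pinned down by the structure of $\msZ_0$ together with the convention $0^0=1$. The assertion of the proposition is then precisely that $f_u$ is a polynomial function carried by a unique polynomial of degree at most $q-1$, where $q=|\msR|$.

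Next I would consider the $\msR$--linear evaluation map $\ev\colon\msR[\sfx]_{\le q-1}\rightarrow\Map(\msR,\msR)$ sending a polynomial of degree at most $q-1$ to the function it represents. Both its source and its target are $\msR$--vector spaces of dimension $q$: the source has basis $1,\sfx,\dots,\sfx^{q-1}$, while a function $\msR\rightarrow\msR$ is determined by its $q$ values. The map $\ev$ is injective, since a nonzero polynomial of degree at most $q-1$ cannot vanish at all $q$ points of the field $\msR$; and an injective linear map between finite--dimensional vector spaces of equal dimension is an isomorphism, so $f_u$ admits a unique preimage $m_u(\sfx)\in\msR[\sfx]_{\le q-1}$, which is the polynomial sought. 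Alternatively one may exhibit $m_u$ directly through Lagrange interpolation,
\begin{equation*}
m_u(\sfx)=\sum_{a\in\msR}a^u\prod_{b\in\msR,\,b\neq a}(a-b)^{-1}(\sfx-b),
\end{equation*}
which has degree at most $q-1$ and manifestly satisfies $m_u(a)=a^u$ for every $a\in\msR$; by the uniqueness just established it coincides with the polynomial produced abstractly.

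I do not anticipate any genuine obstacle, the argument being entirely standard; the two points that warrant attention are conceptual rather than technical. The first is that the degree bound $q-1$ is indispensable for uniqueness: without it one could add arbitrary multiples of $\sfx^q-\sfx$, which vanishes identically on $\msR$. The second, and the reason the hypothesis that $\msR$ be a Galois \emph{field} rather than merely a Galois ring enters the statement, is the injectivity of $\ev$: the estimate ``a nonzero polynomial of degree $n$ has at most $n$ roots'' relies on $\msR$ being an integral domain and fails over a Galois ring with zero divisors — which is exactly why the polynomial recasting of the calibrated hypergraph state map is available only in the field case.
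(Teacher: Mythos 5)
Your argument is correct. It establishes exactly the claim of prop. \cref{prop:i/huxu} (proved in the body of the paper as prop. \cref{prop:huxu}): the only point needing care is that $x^u$ means $x^{u_x}$, so that $u$ does define a bona fide function $\msR\rightarrow\msR$, and you handle this, including the value at $0$, correctly.

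Your route differs from the paper's in how the invertibility of the evaluation problem is established and in which explicit formula is produced. The paper first proves that the power matrix $A_{xk}=x^k$ is invertible (prop. \cref{prop:ainv}) by \emph{explicitly constructing} $A^{-1}$ from a primitive element $\xi$ and the character--sum identity $\sum_{j\in[q-1]}\xi^{\pm nj}=-\delta_{n,0}$; it then defines $m_u(\sfx)$ by the closed formula \ceqref{huxu1/1} involving $A^{-1}$, verifies \ceqref{huxu1} directly, and derives uniqueness from the non--singularity of $A$. You instead prove injectivity of the evaluation map $\ev:\msR[\sfx]_{\le q-1}\rightarrow\Map(\msR,\msR)$ by the root--counting bound, get surjectivity from a dimension count, and offer Lagrange interpolation as the explicit representative. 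The two proofs are two faces of the same linear--algebra fact (your $\ev$ written in the monomial basis \emph{is} the matrix $A$), but the paper's choice is not gratuitous: the explicit inverse $A^{-1}$ is reused downstream, e.g.\ in the polynomial $p(\sfx)$ of eq. \ceqref{qcz5} for the controlled phase gates and in prop. \cref{prop:taufexp}, so constructing it here yields the computational algorithm the paper advertises. Your version is more elementary and self--contained, and your closing remark correctly identifies why the field hypothesis is essential — the root bound, equivalently the invertibility of $A$, fails over a Galois ring with zero divisors — which is precisely the reason the paper restricts this polynomial recasting to the field case.
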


\noindent
By this result, the phase function $\sigma_{(H,\varrho)}$ given in \ceqref{i/whgsts9} can so be expressed through 
the polynomials $m_u(\sfx)$ as
\begin{equation}
\label{i/whgsts9/pol}
\sigma_{(H,\varrho)}(x)=\mycom{{}_\sss}{{}_{X\in H}}\mycom{{}_\sss}{{}_{w\in\msA^X}}\varrho_X(w)
\tr\left(\mycom{{}_\ppp}{{}_{r\in X}}m_{w(r)}(x_r)\right)\!.
\end{equation}
An algorithm exists which furnishes the polynomials $m_u(\sfx)$ in terms only of the data underlying the field $\msR$. 
These results do not extend to a general Galois ring.


Suppose that $\msR$ is a prime field $\bbF_p$. Fix a reference element 
$x^*\in\msR$, typically $x^*=p-1\in\bbF_p$. 
Let $X\subseteq[l]$ be a hyperedge with at least two vertices. 
We think of a distinguished vertex $r_X\in X$ as the target vertex and the remaining vertices $r\in X\,\setminus\,\{r_X\}$
as the control vertices of $X$. 
The controlled phase gate $CZ_{(X,r_X)}$ can be defined as the operator of $\scH_E[l]$ acting as follows. 
$CZ_{(X,r_X)}F_l{}^+\ket{x}=F_l{}^+\ket{x}$ unless all control qudits are
in the state $\ket{x^*}\in\scH_1$, in which case we have $CZ_{(X,r_X)}F_l{}^+\ket{x}=F_l{}^+\ket{x}\,\omega^{x_{r_X}}$, 
the phase factor $\omega^{x_{r_X}}$ being so  determined by the state $\ket{x_{r_X}}\in\scH_1$ of the target
qudit  \ccite{Giri:2024qtt}.

A marked hypergraph $(H,r_\cdot)$ consists of a hypergraph
$H\in G[l]$ together with a choice of a target vertex $r_X\in X$ for each hyperedge $X\in H$.
The hypergraph state associated with a marked hypergraph $(H,r_\cdot)$ is 
\begin{equation}
\label{i/qcz3}
\ket{(H,r_\cdot)}=\mycom{{}_\ppp}{{}_{X\in H}}CZ_{(X,r_X)}\ket{0_l}.
\end{equation}
The above definition of hypergraph state closely parallels the standard one of the qubit case and extends that provided 
in \ccite{Giri:2024qtt} for the qutrit case. $\ket{(H,r_\cdot)}$ is in fact built by the controlled
phase operators $CZ_{(X,r_X)}$ and can arguably be regarded as the proper form of hypergraph state for prime dimension
qudits. However, it can be shown that $\ket{(H,r_\cdot)}$ is an instance of calibrated hypergraph state. $\ket{(H,r_\cdot)}$ 
has special properties. In particular, its phase function can be expressed as a computable polynomial with $\bbF_p$
coefficients in the qudit variables $x_r$ containing powers of these latter up to the $p-1$--th one.
For $p>2$, hypergraph states of this kind are
interesting, among other reasons, because they are examples of calibrated hypergraph states which are not weighted.

The above selected application of the theory of calibrated hypergraph states we have
developed are detailed in sect. \cref{sec:techno}.








\vfill\eject

\noindent
\markright{\textcolor{blue}{\sffamily Conventions}} 


\noindent
\textcolor{blue}{\sffamily Conventions.}
In this paper, we adopt the following notational conventions.

\begin{enumerate}[leftmargin=*]

\item \label{it:conv1} We indicate by $|A|$ the cardinality of a finite set $A$.

\item \label{it:conv2} For any set $A$, we let $e_A:\emptyset\rightarrow A$ be 
the empty function with range $A$. 


\item \label{it:conv3} Complying with the most widely used convention computer science,
we denote by $\bbN$ the set of all non negative integers. Hence, $0\in\bbN$.
We denote  by $\bbN_+$ the set of all strictly positive integers. 

\item \label{it:conv4} For $l\in\bbN$, we let $[l]=\emptyset$ if $l=0$ and $[l]=\{0,\dots,l-1\}$ if $l>0$
be the standard finite von Neumann ordinals.

\item \label{it:conv5} If $l\in\bbN$ and $A\subset\bbN$, we let $A+l=\{i+l|i\in A\}$. Accordingly, if $M\subset P\bbN$
(the power set of $\bbN$) is a collection of subsets of $\bbN$, we let $M+l=\{A+l|A\in M\}$.

\item \label{it:conv6} An indexed finite set is a finite set $A$ together with a bijection $a:[|A|]\rightarrow A$.
An indexed finite set $(A,a)$ is so described by the $|A|$--tuple $(a(0),\ldots,a(|A|-1))$.
As a rule, 
we shall write the indexed set as $A$ and the associated $|A|$--tuple
as $(a_0,\ldots,a_{|A|-1})$ for simplicity.
When $A$ is a totally ordered set, e.g. a set of integers,
then it is tacitly assumed that the indexing employed
is the one such that $a_r<a_s$ for $r,s\in[|A|]$ with $r<s$, unless otherwise stated. 

\item \label{it:conv7} If $A$, $B$ are finite sets and $f:A\rightarrow B$ is a function and furthermore $A$ is indexed
as $(a_0,\ldots,a_{|A|-1})$, then $f$ is specified by the $|A|$--tuple $(f(a_0),\ldots,f(a_{|A|-1})_{A}$
of its values. 

\end{enumerate}

\vfill\eject

\renewcommand{\sectionmark}[1]{\markright{\thesection\ ~~#1}}

\vspace{.75mm}

\section{\textcolor{blue}{\sffamily Galois qudits and all that}}\label{sec:galqud}

The $\varOmega$ monadic framework of hypergraphs and multi dit modes 
elaborated in sects. 3 and 4 of I respectively is an essential element of the construction
of qudit calibrated hypergraph states carried out subsequently in sect. \cref{sec:grstt}. The theory however requires
that the finite monoid that indexes single dit modes is endowed with a sufficiently
rich algebraic structure, namely that of Galois ring.
In this section, we introduce the algebraic framework and tools necessary for an appropriate and efficient 
treatment of Galois dits. 


Indexing dits by a Galois ring adds new elements to our framework. 
Galois rings are briefly reviewed in subsect. \cref{subsec:galoisrev}.
More information can be found in app. \cref{app:galois}.
Subsect. \cref{subsec:cycmongal} presents a detailed analysis of cyclic submonoids of Galois rings, a technical subject
matter which however deserves a careful treatment for its relevance in the hypergraph state construction.
In subsect. \cref{subsec:qdgalois}, a Galois ring module theory of multi Galois cdit configurations is introduced.
Finally, in subsect. \cref{subsec:qdpauli}, the Galois qudit Pauli group and quantum Fourier transformation
are surveyed. All this material is accompanied by illustrative examples.




\vspace{.75mm}

\subsection{\textcolor{blue}{\sffamily Galois rings}}\label{subsec:galoisrev}

In a classical computational perspective, the basic datum of the multi cdit configuration $\varOmega$ monad $E\varOmega$
is a Galois ring. Namely, the finite commutative monoid $\msR$ used for the construction of $E\varOmega$
in subsect. 4.1 of I is assumed to be the additive monoid of a Galois ring.  
The reason for requiring $\msR$ to have a richer Galois ring structure
rather than a mere finite commutative monoid one will become clear in due time. 

A finite commutative ring is a finite set in which operations of addition, subtraction and multiplication 
are defined and additive and multiplicative unities $0$, $1$ are given with the same formal properties
of the corresponding operations and unities of familiar integer number arithmetic. In particular, multiplication is commutative
and distributive over addition. Multiplicative inverses need not exist. Idempotent and nilpotent
elements may be present. A Galois ring is a finite ring with special properties.
Not all finite rings are Galois, in contrast to finite fields which always are. 
The theory of Galois rings is extremely well-developed \ccite{Wan:2011gfr,Bini:2002gfr}.
See also app. \cref{app:galois} for a brief summary. Here,
we shall quote only those properties which are relevant in our analysis. 

By definition, a Galois ring is a finite commutative ring $\msR$ with unity whose zero divisors
form a principal ideal of the form $p\msR$ for some prime number $p\geq 2$. 
Such ring  is local, since the ideal $p\msR$ is the only maximal ideal.  

A Galois ring $\msR$ is characterized by its characteristic and degree. The former is a prime power, i.e.
an integer of the form $p^r$ with $p,r\in\bbN_+$, $p\geq2$ prime (the same as that mentioned in
the previous paragraph). The latter is an integer $d\in\bbN_+$.
All Galois rings of the same characteristic and degree are isomorphic. One therefore writes
$\msR=\GR(p^r,d)$ to emphasize the isomorphy class of $\msR$ when appropriate. 

A Galois ring $\msR=\GR(p^r,d)$ has cardinality $p^{rd}$. For $d=1$, $\msR$ is just the ring $\bbZ_{p^r}$ of integers
modulo $p^r$. For $r=1$, $\msR$ reduces to the Galois field $\bbF_{p^d}$ with $p^d$ elements.
For $r=1$ and $d=1$, $\msR$ reduces further to the familiar Galois field $\bbF_p$ of the integer modulo $p$. 

A Galois ring $\msR=\GR(p^r,d)$ contains precisely one Galois ring $\msR'=\GR(p^r,d')$ as a subring
whenever $d'$ divides $d$. In particular, $\msR$ always contains a minimal subring, the prime subring
$\msP=\GR(p^r,1)\simeq\bbZ_{p^r}$. $\msP$ is the subring generated by the multiplicative unity $1$. 
The ring $\msR$ itself is a Galois extension of $\msP$. 

By definition, the zero divisors of a Galois ring $\msR=\GR(p^r,d)$ are the elements constituting the principal
ideal $p\msR$. All zero divisors of $\msR$ are nilpotent. No zero divisor is idempotent except for $0$. 

A Galois ring $\msR$ is a free module over its prime subring $\msP$ of dimension equal to its degree 
$d$. This allows to treat $\msR$ much as a vector space, a fact that has important implications in the following analysis.

Multiplication by $x\in\msR$ yields a free module morphism $L_x\in\End_\msP(\msR)$. The trace mapping
$\tr:\msR\rightarrow\msP$ is defined by \hphantom{xxxxxx}
\begin{equation}
\label{galois1}
\tr(x)=\Tr(L_x)   
\end{equation}
with $x\in\msR$. $\tr$ has several relevant properties.
In particular, $\tr$ is $\msP$ linear and surjective. When $\msR=\msP$, $\tr$ reduces to the identity map $\id_\msP$. 
Further, $\tr$ is non singular in the sense that for $x\in\msR$ one has $\tr(xy)=0$ for all $y\in\msR$ if and only if
$x=0$ (cf. prop. \cref{prop:rt3}). The trace map $\tr$ is an essential element of a number of constructions carried
out in this section and in sect. \cref{sec:grstt}.

Ref. \!\!\ccite{Kibler:2017gre} provides a rich collection of illustrative examples of Galois rings and fields.
Here, we shall present a few very well--known ones.

\begin{exa} \label{exa:gr22}
The Galois ring $\msR=\GR(2,2)$. {\rm $\msR$ is actually a field, $\msR=\bbF_4$, the smallest field of non prime
cardinality. $\msR$ is the extension field $\bbF_2[\theta]$ of the field $\bbF_2$
by a root $\theta$ of the primitive irreducible monic degree $2$ polynomial $h_2(\sfx)=1+\sfx+\sfx^2$.
The prime subfield of $\msR$ is therefore $\msP=\bbF_2$. Further, 
The elements of $\msR$ can be written as  %
\begin{equation}
\label{gr22ex1}
x=x_0+x_1\theta
\end{equation}
with $x_0,x_1\in\bbZ_2$. Formally, the ring operation of $\msR$ are those of the polynomial ring $\bbF_2[\sfx]$ 
combined with evaluation at $\theta$ and systematic use of the relation 
\begin{equation}
\label{gr22ex2}
1+\theta+\theta^2=0.
\end{equation}
The trace map of $\msR$ is given by \hphantom{xxxxxxxxxxxx}
\begin{equation}
\label{gr22ex3}
\tr(x)=x_1. 
\end{equation}
}
\end{exa} 

\begin{exa} \label{exa:gr42}
The Galois ring $\msR=\GR(4,2)$. 
{\rm In this case, $\msR$ is the extension ring $\bbZ_4[\theta]$
of the ring $\bbZ_4$ by a root $\theta$ of the basic primitive irreducible
monic degree $2$ polynomial $h_2(\sfx)=1+\sfx+\sfx^2$.
The prime subring of $\msR$ is therefore $\msP=\bbZ_4$. The elements of $\msR$ are so of the form 
\begin{equation}
\label{gr42ex1}
x=x_0+x_1\theta
\end{equation}
with $x_0,x_1\in\bbZ_4$. Formally, the ring operation of $\msR$ are those of the polynomial ring $\bbZ_4[\sfx]$ 
combined with evaluation at $\theta$ and recast in the form above by using that 
\begin{equation}
\label{gr42ex2}
1+\theta+\theta^2=0.
\end{equation}
The trace map of $\msR$ is given by 
\begin{equation}
\label{gr42ex3}
\tr(x)=2x_0+3x_1.
\end{equation}
}
\end{exa}

\begin{exa} \label{exa:gr43}
The Galois ring $\msR=\GR(4,3)$. 
{\rm $\msR$ is the extension ring $\bbZ_4[\theta]$ of the ring $\bbZ_4$ by a root $\theta$ of a basic primitive
irreducible monic degree $3$ polynomial, usually taken to be $h_3(\sfx)=3+\sfx+2\sfx^2+\sfx^3$. The prime subring of
$\msR$ is again $\msP=\bbZ_4$. Its elements are hence of the form \hphantom{xxxxxxxxxxxx}
\begin{equation}
\label{gr43ex1}
x=x_0+x_1\theta+x_2\theta^2
\end{equation}
with $x_0,x_1,x_2\in\bbZ_4$. Again, the ring operation of $\msR$ are those of the polynomial ring $\bbZ_4[\sfx]$ 
with evaluation at $\theta$ and the relation 
\begin{equation}
\label{gr43ex2}
3+\theta+2\theta^2+\theta^3=0
\end{equation}
implemented. The trace map of $\msR$ is given by 
\begin{equation}
\label{gr43ex3}
\tr(x)=3x_0+2x_1+2x_2.
\end{equation}
}
\end{exa}


\subsection{\textcolor{blue}{\sffamily The cyclicity monoid of a Galois ring}}\label{subsec:cycmongal}

The cyclicity monoid of the relevant Galois ring constitutes a key element
of the construction of qudit calibrated hypergraph states in sect. \cref{sec:grstt}.
In the present subsection, we briefly review the theory of  Galois ring cyclicity monoids, 
since this topic is mostly unfamiliar. Most of the results presented actually hold
for any finite ring. Standard references on cyclic submonoids of a finite
monoid are \ccite{Howie:1995sgt,Clifford:1961sgt}. 

Below, we suppose first that $\msR$ is any finite commutative ring with unity. We shall assume 
$\msR$ to be a Galois ring only later on. 

Let $x\in\msR$. The cyclic monoid $\msC_x$ of $x$ is the submonoid of the commutative multiplicative monoid
of $\msR$ generated by $x$. i.e. the smallest submonoid containing $x$. $\msC_x$ is 
the set of all non negative powers of $x$, $\msC_x=\{x^u\hfpt|\hfpt u\in\bbN\}$ with multiplication law
$x^ux^v=x^{u+v}$ and unity $x^0=1$.

The monoid $\msC_x$ is finite for every $x\in\msR$, since the ambient ring $\msR$ is. Thus, there must be integers
$u,v\in\bbN$ with $u\neq v$ such that $x^u=x^v$. The following result holds. 
For any element $x\in\msR$, there exist unique integers
$\iota_x\in\bbN$ and $\pi_x\in\bbN_+$ enjoying the following properties.
{\it
\begin{enumerate}

\item \label{item:gr1} {\rm The elements $x^t\in\msR$ with $t\in\bbN$, $t<\iota_x+\pi_x$ are all distinct.}

\item \label{item:gr2} {\rm For every integer $u\in\bbN$, there is precisely one integer 
$h_x(u)\in\bbN$ with the property that $h_x(u)<\iota_x+\pi_x$ and $x^u=x^{h_x(u)}$.}

\item \label{item:gr3} {\rm For every $u\in\bbN$ with $u\geq\iota_x$ and $q\in\bbN$, one has $x^{u+q\pi_x}=x^u$.}

\end{enumerate}
}

\noindent
The parameters $\iota_x$ and $\pi_x$ are called respectively the index and the period of $x$. 
The function $h_x:\bbN\rightarrow\bbN$ defined in item \cref{item:gr2} above is a key structural element
of the monoid $\msC_x$. It can be explicitly computed: for $u\in\bbN$, $h_x(u)$ is given by 
\begin{equation}
\label{cycmongal1}
h_x(u)=\Bigg\{
\begin{array}{ll}
u&\text{if $u<\iota_x+\pi_x$},\\
\iota_x+\rst(u-\iota_x,\pi_x)&\text{if $u\geq\iota_x+\pi_x$},
\end{array}
\end{equation}
where $\rst(k,l)$ denotes the remainder of the division of $k$ by $l\leq k$.

The ring $\msR$ contains two kinds of elements: the elements whose product with some
non zero element of the ring vanishes, called zero divisors, which form a subset $\msD$ of $\msR$, 
and the elements admitting a multiplicative inverse in the multiplicative monoid of $\msR$, called units,
which form a subgroup $\msR^\times$ of such monoid
\footnote{$\vphantom{\dot{dot{\dot{f}}}}$ We count $x=0$ as a trivial zero divisor.}. 
$\msD$ and $\msR^\times$ are disjoint and $\msR=\msD\sqcup\msR^\times$.
The zero divisors $x\in\msD$ are precisely the elements with index $\iota_x>0$. 
The units $x\in\msR^\times$ are correspondingly the elements whose index $\iota_x=0$. The period $\pi_x$
of either kinds of elements is instead not restricted. 

Items {\it \cref{item:gr1}--\cref{item:gr3}} indicate that for $x\in\msR$ the cardinality of the monoid $\msC_x$ is 
$|\msC_x|=\iota_x+\pi_x$ and that as a set $\msC_x$ can be identified with the von Neumann ordinal
$[\iota_x+\pi_x]$. By virtue of this, the multiplication law and the multiplicative unity of $\msC_x$ can 
be expressed through an addition law and an additive unity on the set $\msZ_x=[\iota_x+\pi_x]$.
The addition law  $+_x$ is defined by the property that for $u,v\in\msZ_x$,
the sum $u+_xv\in\msZ_x$ satisfies
\begin{equation}
\label{cycmongal2}
x^{u+_xv}=x^{u+v}.
\end{equation}
By item \cref{item:gr2}, $u+_xv=$ can be expressed through the function $h_x$ as 
\begin{equation}
\label{cycmongal3}
u+_xv=h_x(u+v).
\end{equation}
The additive unity $0_x\in\msZ_x$ is given simply by $0_x=0$. $\msZ_x$ with the addition law and additive unity just defined 
is evidently a monoid isomorphic to the monoid $\msC_x$, indeed it is its additive form, as can be
verified also directly from \ceqref{cycmongal2}.
Thus, $\msZ_x$ is called the cyclic monoid of $x$ too. 

For any for $x\in\msR$ and $q\in\bbN$, $\msC_{x^q}$ is evidently a submonoid of $\msC_x$. It follows that $\msZ_x$
contains a submonoid isomorphic to $\msZ_{x^q}$. There therefore exists a monoid monomorphism
$i_{x,q};\msZ_{x^q}\rightarrow\msZ_x$. $i_{x,q}$ is given explicitly by 
\begin{equation}
\label{cycmongal4}
i_{x,q}(u)=h_x(qu)
\end{equation}
for $u\in\msZ_{x^q}$ in terms of the function $h_x$.
The cyclic monoid structure of the ring $\msR$ is in this way determined by
that of its primitive elements, that is the elements $x\in\msR$ such that $x\neq y^q$ for every $y\in\msR$
and $q\in\bbN$ with $q\geq2$.

For a unit $x\in\msR^\times$, the cyclic monoids $\msC_x$ and $\msC_{x^{-1}}$ are isomorphic.
So are consequently also $\msZ_x$ and $\msZ_{x^{-1}}$.

The complete cyclic monoid structure of the ring $\msR$ is subsumed by the cyclicity monoid of $\msC$ of
$\msR$ and its additive form $\msZ$. $\msC$ is defined as the direct sum of the cyclic monoids $\msC_x$
of all elements $x\in\msR$, 
\begin{equation}
\label{cycmongal5}
\msC=\ddd_{x\in\msR}\msC_x.
\end{equation}
$\msZ$ is correspondingly given by \hphantom{xxxxxxxx}
\begin{equation}
\label{cycmongal6}
\msZ=\ddd_{x\in\msR}\msZ_x.
\end{equation}   

We describe next $\msZ$ in greater detail because of its relevance in our analysis.
The cardinality of $\msZ$ is $\sum_{x\in\msR}(\iota_x+\pi_x)$ and is always at least $3$. 
An element $u\in\msZ$ is a tuple of the form $u=(u_x)_{x\in\msR}$ with $u_x\in\msZ_x$. 
The sum $u+v\in\msZ$ of a pair of elements $u,v\in\msZ$ is defined to be the tuple $((u+v)_x)$ with 
$(u+v)_x=u_x+_xv_x$. The additive unity $0\in\msZ$ is the tuple $(0_x)$.

To specify the elements of $\msZ$ in practice, one must choose beforehand an ordering of the ring $\msR$.
We shall employ only orderings of the form $\msR=(0,1,x_0,\ldots,x_{|R|-3})$, where 
$x_0,\ldots,x_{|R|-3}$ are the distinct elements of $\msR\,\backslash\,\{0,1\}$. Once one such ordering is
selected, every element $u\in\msZ$ is represented as an $|R|$--tuple $(u_0,u_1,u_{x_0},\ldots,u_{x_{|R|-3}})$,
with $u_0\in\msZ_0$, $u_1\in\msZ_1$ and $u_{x_0}\in\msZ_{x_0}$, ..., $u_{x_{|R|-3}}\in\msZ_{x_{|R|-3}}$.
Notice that one always has $u_0=0,1$ and $u_1=0$. 

It is possible to give a natural meaning to the exponentiation $x^u\in\msR$ of an element $x\in\msR$
to the power $u\in\msZ$. Explicitly, $x^u$ is given by 
\begin{equation}
\label{cycmongal7}
x^u=x^{u_x}.
\end{equation}
In this way, the integer exponent to which $x$ is raised depends on $x$ itself!
The usual properties of exponentiation hold. For $u,v\in\msZ$, one has $x^{u+ v}=x^ux^v$;
furthermore, $x^{0}=1$. For explicitness, we shall 
write $x^u$ as $x^{(u_0,u_1,u_{x_0},\ldots,u_{x_{|R|-3}})}$ whenever appropriate.

The generalized exponentiation operation of the ring $\msR$
introduced in the previous paragraph is key in the construction
of qudit calibrated hypergraph states elaborated in sect. \cref{sec:grstt}. For this reason,
the determination of the cyclic monoids $\msZ_x$ for all elements $x\in\msR$ 
is indispensable. From \ceqref{cycmongal1} and \ceqref{cycmongal3}, it ensues that
\begin{equation}
\label{cycmongal18}
\msZ_x=\bbH_{\iota_x,\pi_x}.
\end{equation}
The two parameter family $\bbH_{\lambda,\mu}$, $\lambda\in\bbN$, $\mu\in\bbN_+$ of monoids
appearing in the right hand side of this identity is defined as follows. For given $\lambda,\mu$, 
the underlying set of $\bbH_{\lambda,\mu}$ is the von Neumann ordinal $[\lambda+\mu]$, the additive unity
of $\bbH_{\lambda,\mu}$ is $0_{\lambda,\mu}=0$ and the addition operation of $\bbH_{\lambda,\mu}$ reads
for $u,v\in\bbH_{\lambda,\mu}$ as 
\begin{equation}
\label{cycmongal19}
u+_{\lambda,\mu}v=\Bigg\{
\begin{array}{ll}
u+v&\text{if $u+v<\lambda+\mu$},\\
\lambda+\rst(u+v-\lambda,\mu)&\text{if $u+v\geq\lambda+\mu$}.
\end{array}
\end{equation}

Owing to \ceqref{cycmongal18}, since $\iota_0=1$, $\pi_0=1$ and $\iota_1=0$, $\pi_1=1$ in every finite ring $\msR$, 
the cyclic monoids $\msZ_0$ and $\msZ_1$ have the same universal form for every such ring, viz
$\msZ_0=\bbH_{1,1}$, $\msZ_1=\bbH_{0,1}$. For $x\in\msR$, $x\neq 0,1$, the monoids $\msZ_x=\bbH_{\iota_x,\pi_x}$
are instead characteristic of the ring $\msR$ considered.


For a nilpotent element $x\in\msR$,  
$\iota_x$ is the index of nilpotency of $x$ and $\pi_x=1$. In this case so, by \ceqref{cycmongal18},
$\msZ_x=\bbH_{\iota_x,1}$. For a unit $x\in\msR$, $\iota_x=0$ and $\pi_x$ is the group theoretic
order of $x$ in the unit group $\msR^\times$. So, again by \ceqref{cycmongal18}, $\msZ_x=\bbH_{0,\pi_x}$.
Two special cases of the monoid $\bbH_{\lambda,\mu}$, $\lambda\in\bbN$, $\mu\in\bbN_+$,
introduced above appear here: $\bbH_{\lambda,1}$ and $\bbH_{0,\mu}$.
$\bbH_{\lambda,1}$ is characterized by its addition operation being of the simple form
given by $u+_{\lambda,1}v=\min(u+v,\lambda)$ for all $u,v\in\bbH_{\lambda,1}$. 
$\bbH_{0,\mu}$ is just the monoid of the integers modulo $\mu$, as is easily checked,
that is the monoid underlying the order $\mu$ cyclic group $\bbZ_\mu$. 
For a Galois ring $\msR$, every zero divisor is a nilpotent; so, every element
is either a nilpotent or a unit. Therefore, for such an $\msR$, the cyclic monoids $\msZ_x$
are all of the two kinds just indicated.

Here are the cyclicity monoids of a few Galois rings.

\begin{exa} \label{exa:gr41cyc}
The cyclicity monoid $\msZ$ of the Galois ring $\msR=\GR(4,1)$.
{\rm In this case $\msR=\bbZ_4$, the modulo $4$ integer ring. The cyclicity monoid $\msZ$ is therefore
\begin{equation}
\label{}
\msZ=\ddd_{x\in\bbZ_4}\msZ_x.
\end{equation}
The only non universal cyclic monoids featured by $\msR$ are $\msZ_2=\bbH_{2,1}$ and $\msZ_3=\bbH_{0,2}$.
reflecting $2$, $3$  being respectively a nilpotent and a unit of $\bbZ_4$. 
}
\end{exa}

\begin{exa} \label{exa:gr22cyc}
The cyclicity monoid $\msZ$ of the Galois ring $\msR=\GR(2,2)$.
{\rm In this case $\msR=\bbF_4$, the cardinality $4$ Galois field described in ex. \cref{exa:gr22}.
The cyclicity monoid $\msZ$ is 
\begin{equation}
\label{}
\msZ=\ddd_{x_0,x_1\in\bbZ_2}\msZ_{x_0+x_1\theta}.
\end{equation}
The only non universal cyclic monoids characterizing $\msR$ are $\msZ_\theta=\msZ_{1+\theta}=\bbH_{0,3}$,
the order $3$ cyclic group monoid, as expected from $\theta$, $1+\theta$
being reciprocally inverse units of $\msR$. 
}
\end{exa}

\begin{exa} \label{exa:gr42cyc}
The cyclicity monoid $\msZ$ of the Galois ring $\msR=\GR(4,2)$.
{\rm The ring $\msR$ is described in ex. \cref{exa:gr42}.
The cyclicity monoid $\msZ$ is 
\begin{equation}
\label{}
\msZ=\ddd_{x_0,x_1\in\bbZ_4}\msZ_{x_0+x_1\theta}.
\end{equation}
The only non universal cyclic monoids characterizing $\msR$ are
$\msZ_2=\msZ_{2\theta}=\msZ_{2+2\theta}=\bbH_{2,1}$ for the nilpotent elements of $\msR$ and 
$\msZ_3=\msZ_{1+2\theta}=\msZ_{3+2\theta}=\bbH_{0,2}$, $\msZ_{\theta}=\msZ_{3+3\theta}=\bbH_{0,3}$,
$\msZ_{1+\theta}=\msZ_{2+\theta}=\msZ_{3+\theta}=\msZ_{3\theta}
=\msZ_{1+3\theta}=\msZ_{2+3\theta}=\bbH_{0,6}$ for the units of $\msR$. 
}
\end{exa}


\subsection{\textcolor{blue}{\sffamily Multi Galois cdit configuration sets as Galois ring modules}}\label{subsec:qdgalois}

In this subsection, we examine the main implications for the multi cdit configuration $\varOmega$ monad $E\varOmega$
ensuing from $\msR$ being the additive monoid a Galois ring $\msR=\GR(p^r,d)$. 

\begin{prop}
For every $l\in\bbN$, $E[l]\in\Obj_{\Vect_\msP}$ is a free module over the prime subring $\msP$.
\end{prop}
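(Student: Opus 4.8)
The plan is to reduce the statement to the single structural fact recalled in subsect.~\cref{subsec:galoisrev}, namely that the Galois ring $\msR=\GR(p^r,d)$ is a free module of rank $d$ over its prime subring $\msP\simeq\bbZ_{p^r}$. First I would recall from subsect.~4.1 of I the precise description of $E[l]$: a configuration of the cdit set $[l]$ is a string of $\msR$ labels indexed by $[l]$, so that $E[l]$ is canonically identified with the set $\Map([l],\msR)$ of functions $[l]\to\msR$. Since $\msR$ is now (the additive group of) a Galois ring, $E[l]$ acquires a natural abelian group structure by pointwise addition, and since $\msP\subseteq\msR$ is a subring it further acquires a $\msP$ module structure by pointwise scalar multiplication; this is the $\msP$ module structure with respect to which the proposition is stated. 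Under this identification one has a $\msP$ module isomorphism $E[l]\cong\bigoplus_{i\in[l]}\msR$, the $i$-th summand consisting of the configurations supported at the vertex $i$.

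Next I would fix once and for all a $\msP$ basis $(\vartheta_0,\dots,\vartheta_{d-1})$ of $\msR$, for instance the power basis $\vartheta_k=\theta^k$ attached to the defining primitive polynomial used in the examples of subsect.~\cref{subsec:galoisrev}, and use it to exhibit an explicit basis of $E[l]$. For $i\in[l]$ and $k\in[d]$ let $\mse_{i,k}\in E[l]$ be the configuration determined by $\mse_{i,k}(j)=\vartheta_k$ if $j=i$ and $\mse_{i,k}(j)=0$ otherwise. Reading off the $\msP$ expansion of each component $x(i)\in\msR$ in the chosen basis shows that every $x\in E[l]$ has a unique expansion $x=\sum_{i\in[l]}\sum_{k\in[d]}c_{i,k}\,\mse_{i,k}$ with $c_{i,k}\in\msP$, so $\{\mse_{i,k}\}_{i\in[l],\,k\in[d]}$ is a free generating set and $E[l]$ is free over $\msP$ of rank $dl$. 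The degenerate case $l=0$ is covered as well: $E[0]$ is the one point set consisting of the empty configuration, i.e. the zero module, which is free of rank $0$.

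I do not expect any serious obstacle; the delicate points are only bookkeeping ones. One is the empty/trivial case just mentioned. A second is checking that the module structure implicitly carried by $E[l]$ as built in I coincides with the pointwise one above, which is immediate from the definition of $E\varOmega$ in subsect.~4.1 of I. It is worth stressing \emph{why} the argument genuinely needs the freeness of $\msR$ over $\msP$ and cannot be abridged to ``a finitely generated module over $\msP$'': since $\msP\simeq\bbZ_{p^r}$ is a local ring that is not a field for $r\ge 2$, finitely generated $\msP$ modules need not be free (already $\bbZ_p$ fails to be free over $\bbZ_{p^2}$), so the explicit basis inherited from the Galois ring structure is the real content of the proof. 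Finally, although the proposition asserts freeness only object by object, I would remark in passing that the bases $\{\mse_{i,k}\}$ are compatible with the monadic concatenation $E[l]\times E[m]\to E[l+m]$, a fact convenient for later use.
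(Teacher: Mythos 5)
Your proof is correct and follows essentially the same route as the paper, which simply notes that $E[l]=\msR^l$ and that $\msR$ is free over $\msP$. Your explicit construction of the basis $\{\mse_{i,k}\}$ and the treatment of the $l=0$ case are just a fuller elaboration of that same one-line argument.
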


\begin{proof}
Indeed, $\msR$ is a free module over $\msP$ and $E[l]=\msR^l$ by eq. (4.1.1) of I.
\end{proof}

\noindent
In what follows, we shall denote by $0_l$ the zero of $E[l]$. However, we shall often
write $0_0=0$ for simplicity.

\begin{defi}
For $l\in\bbN$, the trace pairing of $E[l]$ is the map $\langle\cdot,\cdot\rangle:E[l]\times E[l]\rightarrow\msP$ given 
by the formula \hphantom{xxxxxxxxxx}
\begin{equation}
\label{qdinner1}
\langle x,y\rangle=\mycom{{}_\sss}{{}_{r\in[l]}}\tr(x_ry_r)
\end{equation}
with $x,y\in E[l]$.
\end{defi}

\noindent
Note that $\langle x,y\rangle=0$ when $l=0$.
As hinted by the notation used, the trace pairing is an inner product.

\begin{prop}
For every $l\in\bbN$, the trace pairing $\langle\cdot,\cdot\rangle$ is a non singular inner product on
the module $E[l]$.
\end{prop}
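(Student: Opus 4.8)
The plan is to verify, in turn, the two features that make $\langle\cdot,\cdot\rangle$ a non singular inner product on $E[l]$: that it is a symmetric $\msP$-bilinear form, and that its radical is trivial. The first part is bookkeeping; the second is where the content lies, and it reduces immediately to the non singularity of the trace map already recorded in the excerpt.

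First I would record bilinearity and symmetry. Since $\tr:\msR\rightarrow\msP$ is $\msP$-linear (as noted after \ceqref{galois1}) and multiplication in $\msR$ is $\msP$-bilinear, each summand $(x,y)\mapsto\tr(x_ry_r)$ of the right-hand side of \ceqref{qdinner1} is $\msP$-bilinear, hence so is their sum; symmetry follows from commutativity of $\msR$, since $x_ry_r=y_rx_r$. For $l=0$ the module $E[0]=\msR^0$ is the zero module and the form is identically zero, so it is vacuously non singular and one may assume $l\geq 1$ from here on.

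Next comes the substantive step. Suppose $x\in E[l]$ lies in the radical, i.e. $\langle x,y\rangle=0$ for every $y\in E[l]$; I must show $x=0_l$, equivalently $x_r=0$ for each $r\in[l]$. Fix $s\in[l]$ and an arbitrary $z\in\msR$, and take $y\in E[l]=\msR^l$ to be the vector with $y_s=z$ and $y_r=0$ for $r\neq s$. Then formula \ceqref{qdinner1} collapses to $\langle x,y\rangle=\tr(x_sz)$, so $\tr(x_sz)=0$. As $z$ ranges over all of $\msR$, the non singularity of $\tr$ (prop. \cref{prop:rt3}) forces $x_s=0$. Since $s\in[l]$ was arbitrary, $x=0_l$, as required; the reverse inclusion $0_l$ in the radical is trivial.

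I do not anticipate a genuine obstacle: the whole argument is the componentwise reduction of non singularity of the pairing on $E[l]=\msR^l$ to non singularity of $\tr$ on a single copy of $\msR$, which is available. The only point meriting a remark is what ``non singular'' should mean over the finite ring $\msP\simeq\bbZ_{p^r}$: the annihilator formulation used above is the natural one and matches the way non singularity of $\tr$ was phrased. If a stronger reading is wanted, one can upgrade it by noting that $x\mapsto\langle x,\cdot\rangle$ is an injective $\msP$-linear map between the free $\msP$-modules $E[l]$ and $\Hom_\msP(E[l],\msP)$, both of finite rank $ld$ and hence of equal (finite) cardinality, so the map is bijective; I would add this single sentence only if the paper's notion of inner product demands it.
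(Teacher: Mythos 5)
Your proposal is correct and follows essentially the same route as the paper, which simply notes that $\msP$-linearity and non-singularity of the pairing follow from the corresponding properties of $\tr$; your componentwise reduction (testing against vectors supported at a single coordinate) is exactly the implicit content of that one-line argument, spelled out. The closing remark about the meaning of non-singularity over $\msP\simeq\bbZ_{p^r}$ is a sensible clarification but not something the paper addresses.
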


\begin{proof}
The $\msP$ linearity of $\langle\cdot,\cdot\rangle$ follows readily from the property of $\msP$ linearity
of the trace map $\tr$ recalled earlier. The non singularity of of $\langle\cdot,\cdot\rangle$ follows likewise
from the non singularity of $\tr$ (cf. subsect. \cref{subsec:galoisrev}).
\end{proof}

\noindent
Note that $\langle\cdot,\cdot\rangle$ is non definite since $\msP$
does not have any ordering.

\begin{prop}
For $l,m\in\bbN$, the morphism $Ef\in\Hom_{E\varOmega}(E[l],E[m])$ associated with
$f\in\Hom_\varOmega([l],[m])$ 
corresponds to a module morphism $Ef\in\Hom_{\Mod_\msP}(E[l],E[m])$. 

\end{prop}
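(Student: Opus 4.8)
The plan is to unwind the explicit description of the induced map $Ef$ given in I and to observe that it is assembled purely from the additive structure of $\msR$, which is $\msP$-linear by construction. Writing a configuration $x\in E[l]=\msR^l$ as a tuple $x=(x_r)_{r\in[l]}$, the map $Ef:E[l]\to E[m]$ sends $x$ to the tuple whose $s$-th entry, for $s\in[m]$, is $\sum_{r\in f^{-1}(s)}x_r$, an empty sum and hence $0$ when $s\notin\im f$. Equivalently, and this is what makes the assertion transparent, $Ef$ is the composite of the canonical $\msP$-module isomorphism $\msR^l\cong\bigoplus_{s\in[m]}\msR^{f^{-1}(s)}$ with the direct sum, over $s\in[m]$, of the iterated addition maps $\msR^{f^{-1}(s)}\to\msR$ (the zero map on the empty factors). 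I would nonetheless also spell the argument out elementarily.

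First I would pin down the $\msP$-module structure in play: by the first proposition of the present subsection, $E[l]=\msR^l$ is the free $\msP$-module $\msR^{\oplus l}$, with addition componentwise and with a scalar $c\in\msP$ acting by $(cx)_r=cx_r$, the product being taken in $\msR$ through the inclusion $\msP\subseteq\msR$. In particular each coordinate projection $\pi_r:E[l]\to\msR$, $x\mapsto x_r$, is $\msP$-linear, and a map $E[l]\to E[m]$ is a $\msP$-module morphism exactly when each of its $m$ component functions $E[l]\to\msR$ is.

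The claim then follows at once. For fixed $s\in[m]$ the $s$-th component of $Ef$ is $x\mapsto\sum_{r\in f^{-1}(s)}\pi_r(x)$, a \emph{finite} sum — since $[l]$ is finite — of $\msP$-linear maps $E[l]\to\msR$, and is therefore itself $\msP$-linear: additivity comes from the commutativity and associativity of addition in $\msR$, and $\msP$-homogeneity from the distributivity of multiplication over addition in $\msR$ together with $c\in\msP\subseteq\msR$. Hence every component of $Ef$ is $\msP$-linear, so $Ef\in\Hom_{\Mod_\msP}(E[l],E[m])$, as claimed. I do not foresee a genuine obstacle here; the one point that deserves a line of care is that the scalar action is the one transported along the inclusion $\msP\subseteq\msR$ via the freeness of $\msR$ over $\msP$, so that $\msP$-homogeneity of $Ef$ reduces exactly to the distributive law in $\msR$ and requires nothing about $\msR$ being a field, about its characteristic, or about $f$ beyond its being a function of finite sets.
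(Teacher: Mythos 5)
Your argument is correct and follows the same route as the paper: the paper simply quotes the formula $Ef(x)_s=\sum_{r\in[l],\,f(r)=s}x_r$ from eq. (4.1.2) of I and declares the $\msP$-linearity "apparent," while you spell out the same observation in full (finite sums of the $\msP$-linear coordinate projections are $\msP$-linear). The extra care you take about the scalar action coming from the inclusion $\msP\subseteq\msR$ is sound but not a departure from the paper's reasoning.
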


\begin{proof}
From eq. (4.1.2) of I, $Ef(x)_s=\sum_{r\in[l],f(r)=s}x_r$ for $x\in E[l]$. 
The $\msP$ linearity of $Ef$ is apparent from this formula. 
\end{proof}

Given that  $Ef\in\Hom_{\Mod_\msP}(E[l],E[m])$, $Ef$ admits a unique transposed
morphism $Ef^t\in\Hom_{\Mod_p}(E[m],E[l])$ fully characterized  by the property that  
\begin{equation}
\label{qdinner2}
\langle y,Ef(x)\rangle=\langle Ef^t(y),x\rangle
\end{equation}
for $x\in E[l]$, $y\in E[m]$, by the non singularity of the trace pairing.

\noindent
The transposed morphism $Ef^t$ has a simple expression. 

\begin{prop}
For $l,m\in\bbN$ and $f\in\Hom_\varOmega([l],[m])$, one has
\begin{equation}
\label{qdinner3}
Ef^t(y)_r=y_{f(r)}
\end{equation}
with $r\in[l]$ if $l>0$ and $Ef^t(y)=0$ if $l=0$ for $y\in E[m]$.   
\end{prop}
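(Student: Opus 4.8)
The plan is to verify formula \ceqref{qdinner3} directly by substituting the proposed map into the defining relation \ceqref{qdinner2} and invoking the uniqueness of the transposed morphism, which holds because the trace pairing is non singular. Concretely, since $Ef^t$ is the unique element of $\Hom_{\Mod_p}(E[m],E[l])$ satisfying $\langle y,Ef(x)\rangle=\langle Ef^t(y),x\rangle$ for all $x\in E[l]$, $y\in E[m]$, it is enough to exhibit a $\msP$ linear map $T\colon E[m]\rightarrow E[l]$ with $\langle y,Ef(x)\rangle=\langle T(y),x\rangle$ identically; then necessarily $Ef^t=T$.

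Assume first $l>0$. Starting from the expression $Ef(x)_s=\sum_{r\in[l],\,f(r)=s}x_r$ recalled in the proof of the preceding proposition, I would expand the left hand pairing using definition \ceqref{qdinner1} together with the $\msP$ linearity of the trace $\tr$ (see subsect. \cref{subsec:galoisrev}) and distributivity of ring multiplication:
\begin{equation*}
\langle y,Ef(x)\rangle=\sum_{s\in[m]}\tr\big(y_s\,Ef(x)_s\big)=\sum_{s\in[m]}\sum_{r\in[l],\,f(r)=s}\tr(y_sx_r).
\end{equation*}
Since the fibers $\{r\in[l]\mid f(r)=s\}$, $s\in[m]$, partition $[l]$, the double sum reorganizes as a single sum over $r\in[l]$, in which the index $s$ becomes $f(r)$, giving $\sum_{r\in[l]}\tr(y_{f(r)}x_r)$. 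Setting $T(y)_r=y_{f(r)}$ for $r\in[l]$, this is precisely $\langle T(y),x\rangle$. As $T$ is manifestly $\msP$ linear, the uniqueness in the characterization \ceqref{qdinner2} yields $Ef^t=T$, i.e. \ceqref{qdinner3}.

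The case $l=0$ is immediate: $E[0]=\msR^0$ is the zero module, so $\Hom_{\Mod_p}(E[m],E[0])$ is trivial and $Ef^t(y)=0$. I do not foresee any genuine obstacle here; the only steps deserving a word of care are the reindexing of the two finite sums — legitimate precisely because the fibers of $f$ partition $[l]$ — and the degenerate $l=0$ boundary case.
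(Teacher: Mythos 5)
Your proposal is correct and follows essentially the same route as the paper: expand $\langle y,Ef(x)\rangle$ via the trace pairing and the formula $Ef(x)_s=\sum_{r\in[l],f(r)=s}x_r$, reindex the double sum over the fibers of $f$ into a single sum $\sum_{r\in[l]}\tr(y_{f(r)}x_r)$, and read off the transpose from the non singularity of the pairing. Your explicit appeal to uniqueness of the transposed morphism and your treatment of the degenerate $l=0$ case merely make precise what the paper leaves implicit.
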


\begin{proof}
Let $x\in E[l]$, $y\in E[m]$. Then, we have
{\allowdisplaybreaks
\begin{align}
\label{}
\langle Ef^t(y),x\rangle&=\langle y,Ef(x)\rangle
\\
\nonumber
&=\mycom{{}_\sss}{{}_{s\in[m]}}\tr(y_sEf(x)_s)
\\
\nonumber
&=\mycom{{}_\sss}{{}_{s\in[m]}}\mycom{{}_\sss}{{}_{r\in[l]f(r)=s}}\tr(y_sx_r)
\\
\nonumber
&=\mycom{{}_\sss}{{}_{r\in[l],s\in[m]}}\delta_{f(r),s}\tr(y_sx_r)=\mycom{{}_\sss}{{}_{r\in[l]}}\tr(y_{f(r)}x_r).
\end{align}
}
\!\!From here, \ceqref{qdinner3} is evident. 
\end{proof}

The trace pairing enjoys the following further property relating to the multi cdit monadic 
multiplication. 

\begin{prop}
For $l,m\in\bbN$, $x,u\in E[l]$, $y,v\in E[m]$, one has
\begin{equation}
\label{qdinner4}
\langle x\smallsmile y, u\smallsmile v\rangle=\langle x, u\rangle+\langle y, v\rangle.
\end{equation}
\end{prop}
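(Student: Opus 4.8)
The plan is to reduce the claim directly to the defining formula \ceqref{qdinner1} for the trace pairing, exploiting the fact that the monadic multiplication $\smallsmile$ of the multi cdit configuration $\varOmega$ monad $E\varOmega$ is just concatenation of configuration strings. Explicitly, for $x\in E[l]$ and $y\in E[m]$ one has $x\smallsmile y\in E[l+m]$ with $(x\smallsmile y)_r=x_r$ for $r\in[l]$ and $(x\smallsmile y)_r=y_{r-l}$ for $r\in[l+m]\,\setminus\,[l]$, and likewise for $u\smallsmile v$; this is the definition of the monadic structure of $E\varOmega$ recalled in subsect. 4.1 of I.

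The computation itself is then a short index bookkeeping. First I would expand the left-hand side by \ceqref{qdinner1} applied on the module $E[l+m]$, obtaining $\sum_{r\in[l+m]}\tr\big((x\smallsmile y)_r(u\smallsmile v)_r\big)$. Next I would split the range of summation according to the disjoint decomposition $[l+m]=[l]\sqcup([m]+l)$: over $r\in[l]$ the summand equals $\tr(x_ru_r)$, while over $r=s+l$ with $s\in[m]$ it equals $\tr(y_sv_s)$. Re-indexing the second block by $s$ and applying \ceqref{qdinner1} to $E[l]$ and to $E[m]$ separately yields $\langle x,u\rangle+\langle y,v\rangle$, as desired.

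The boundary cases $l=0$ and $m=0$ require no separate treatment: when one of the ordinals is empty, the corresponding block of the sum is empty and contributes $0$, in agreement with the convention $\langle\cdot,\cdot\rangle=0$ on $E[0]$ noted after \ceqref{qdinner1}, while $x\smallsmile y$ reduces to $y$ (resp. to $x$) under the canonical identification with the monadic unit. Consequently there is no genuine obstacle here; the only point deserving attention is to keep the concatenation index convention straight throughout, and in particular to verify that the decomposition $[l+m]=[l]\sqcup([m]+l)$ is exactly the one along which $\smallsmile$ is defined.
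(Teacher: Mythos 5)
Your proposal is correct and follows essentially the same route as the paper's own proof: expand the left-hand side via \ceqref{qdinner1} on $E[l+m]$, split the sum along $[l+m]=[l]\sqcup([m]+l)$ using the concatenation definition of $\smallsmile$, and re-index the second block. The extra remark on the degenerate cases $l=0$ or $m=0$ is harmless and consistent with the conventions stated after \ceqref{qdinner1}.
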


\noindent
The monadic product $\smallsmile$ is defined in (4.1.8) of I. 

\begin{proof}
Indeed, from \ceqref{qdinner1}, we have 
{\allowdisplaybreaks
\begin{align}
\label{}
\langle x\smallsmile y, u\smallsmile v\rangle
&=\mycom{{}_\sss}{{}_{r\in[l+m]}}\tr(x\smallsmile y_r u\smallsmile v_r)
\\
\nonumber
&=\mycom{{}_\sss}{{}_{r\in[l]}}\tr(x\smallsmile y_r u\smallsmile v_r)
+\mycom{{}_\sss}{{}_{r\in[m]+l}}\tr(x\smallsmile y_r u\smallsmile v_r)
\\
\nonumber
&=\mycom{{}_\sss}{{}_{r\in[l]}}\tr(x_ru_r)
+\mycom{{}_\sss}{{}_{r\in[m]+l}}\tr(y_{r-l} v_{r-l})
=\langle x, u\rangle+\langle y, v\rangle,
\end{align}
}
\!\!showing \ceqref{qdinner4}. 
\end{proof}

We conclude this subsection presenting a few examples. The expressions of the trace pairings shown
are cast in diagonal form for convenience. 

\begin{exa} \label{exa:gr22tr}
The case of the Galois ring $\msR=\GR(2,2)$.
{\rm Recall that $\msR=\bbF_4$, the cardinality $4$ Galois field described in ex. \cref{exa:gr22},
whose prime subfield is $\msP=\bbF_2$. $E[l]$ is therefore a $\bbF_2$ module. Its trace pairing 
is given by
\begin{equation}
\label{}
\langle x,y\rangle=\mycom{{}_\sss}{{}_{r\in[l]}}\big[x_{r0}y_{r0}+(x_{r0}+x_{r1})(y_{r0}+y_{r1})\big]
\end{equation}
with $x,y\in E[l]$, where the representation \ceqref{gr22ex1} of the elements of $\msR$ is used. 
}
\end{exa} 

\begin{exa} \label{exa:gr42tr}
The case of Galois ring $\msR=\GR(4,2)$. 
{\rm The ring $\msR$ is described in ex. \cref{exa:gr42}. Its prime subring is $\msP=\bbZ_4$.
$E[l]$ is thus a $\bbZ_4$ module. Its trace pairing reads as 
\begin{equation}
\label{}
\langle x,y\rangle=\mycom{{}_\sss}{{}_{r\in[l]}}\big[3x_{r0}y_{r0}+3(x_{r0}+x_{r1})(y_{r0}+y_{r1})\big]
\end{equation}
with $x,y\in E[l]$, where the elements of $\msR$ are expressed as in \ceqref{gr42ex1}. 
}
\end{exa}

\begin{exa} \label{exa:gr43tr}
The case of the Galois ring $\msR=\GR(4,3)$. 
{\rm The ring $\msR$ is described in ex. \cref{exa:gr43}. Its prime subring is $\msP=\bbZ_4$.
$E[l]$ is so also a $\bbZ_4$ module; its trace pairing takes the form
\begin{align}
\label{}
\langle x,y\rangle&=\mycom{{}_\sss}{{}_{r\in[l]}}
\big[(x_{r0}+x_{r1})(y_{r0}+y_{r1})
\\
\nonumber  
&\hspace{2.5cm}+(x_{r0}+x_{r2})(y_{r0}+y_{r2})+(x_{r0}+x_{r1}+x_{r2})(y_{r0}+y_{r1}+y_{r2})\big]
\end{align}
for $x,y\in E[l]$, the elements of $\msR$ being parametrized as in \ceqref{gr43ex1}. 
}
\end{exa}


\subsection{\textcolor{blue}{\sffamily Galois qudit Pauli groups}}\label{subsec:qdpauli}

The qubit Pauli group has been extensively studied in the literature in relation to
stabilizer error correcting codes. Graph states are stabilizer states, i.e
one--dimensional error correcting codes. The Pauli group enters therefore their
description in an essential way. For qudit hypergraph states, appropriate
generalizations of the Pauli group, the qudit Pauli groups, are required. 
In this subsection, we review basic results of its theory for Galois qudits.
This material is standard \ccite{Ashikhmin:2000nbc,Gheorghiu:2011qsg}. 
We thus provide no proofs. 

The basic datum of a Galois $l$ qudit Pauli group is a Galois ring $\msR=\GR(p^r,d)$ with prime subring
$\msP=\bbZ_{p^r}$. The primitive $p$--th root of unity  
\begin{equation}
\label{qdpauli0}
\omega=\exp(2\pi i/p^r)
\end{equation}
is a key element of its constructions. 
We note that the phase $\omega^a$ is defined for all $a\in\msP$
and that the map $a\rightarrow\omega^a$ is an additive character,
a morphism from the the additive group underlying $\msP$ to the multiplicative group
$\msU(1)\simeq\bbT$.


Let us fix $l\in\bbN$ and work in the Hilbert space $\scH_E[l]$.

\begin{defi}
The Galois $l$ qudit Pauli operators $Z_l(a), X_l(a)\in\End_{\bfsfH}(\scH_E[l])$, where $a\in E[l]$,
are given by \hphantom{xxxxxxxxxxxxx}
{\allowdisplaybreaks
\begin{align}
\label{qdpauli1}
Z_l(a)&=\mycom{{}_\sss}{{}_{x\in E[l]}}\ket{x+a}\hfpt\bra{x},
\\
\label{qdpauli2}
X_l(a)&=\mycom{{}_\sss}{{}_{x\in E[l]}}\ket{x}\,\omega^{\langle a, x\rangle}\hfpt\bra{x}.
\end{align}
} 
\end{defi}

\noindent
The orthonormal basis $\ket{x}$, $x\in E[l]$ is therefore the qudit Hadamard basis in which 
$X_l(a)$ is diagonal while $Z_l(a)$ is not. This is to be contrasted with the
conventional formulation in which the orthonormal basis used is the qudit computational basis in which
$Z_l(a)$ is diagonal instead. 

The Galois $l$ qudit Pauli operators $Z_l(a)$, $X_l(a)$ are to begin with unitary operators, so that 
$Z_l(a), X_l(a)\in\msU(\scH_E[l])$ for $a\in E[l]$. They further obey the relations 
{\allowdisplaybreaks
\begin{align}
\label{qdpauli3}
Z_l(a)Z_l(b)&=Z_l(a+b),
\\
\label{qdpauli4}
Z_l(a){}^{-1}&=Z_l(-a),
\\
\label{qdpauli5}
Z_l(0_l)&=1_l,
\\
\label{qdpauli6}
X_l(a)X_l(b)&=X_l(a+b),
\\
\label{qdpauli7}
X_l(a)^{-1}&=X_l(-a),
\\
\label{qdpauli8}
X_l(0_l)&=1_l,
\\
\label{qdpauli9}
X_l(a)Z_l(b)&=\omega^{\langle a,b\rangle}Z_l(b)X_l(a),
\end{align}
}
\!\!where $a,b\in E[l]$. These follow straightforwardly form the defining relations
\ceqref{qdpauli1}, \ceqref{qdpauli2}. We note that owing to \ceqref{qdpauli3}, \ceqref{qdpauli6}
{\allowdisplaybreaks
\begin{align}
\label{qdpauli10}
Z_l(a)^{p^r}&=1_l,
\\
\label{qdpauli11}
X_l(a)^{p^r}&=1_l,
\end{align}
}
\!\!
since $p^ra=0$ for all $a\in E[l]$. \pagebreak More generally, for $a,b\in E[l]$, $Z_l(a)=Z_l(b)$ if and only if
$X_l(a)=X_l(b)$ if and only if $a=b$.

The operator collection  $\clE_l=\{X(a)Z(b)\hfpt|\hfpt a,b\in E[l]\}$ enjoys 
the following basic properties:
{\it
\begin{enumerate}

\item {\rm $\clE_l\subset\msU(\scH_E[l])$, $\clE_l$ consists of unitary operators in $\scH_E[l]$;}
  
\item {\rm $\clE_l$ contains the unit operator $1_l$;}

\item {\rm the product of two operators in $\clE_l$ is a scalar multiple of another operator in $\clE_l$;}

\item {\rm $\clE_l\setminus\{1_l\}$ consists of traceless operators in $\scH_E[l]$.}
  
\end{enumerate}
}
\noindent
A finite operator set $\clE_l$ with the above features is called an $l$ qudit nice error basis.
The set $\clG_l=\{\omega^cX(a)Z(b)|a,b\in E[l],c\in\msP\}$ is a subgroup of $\msU(\scH_E[l])$, called a 
Galois $l$ qudit Pauli group.

Suppose now that $l,m\in\bbN$ and that $a\in E[l]$, $b\in E[m]$. Then, 
{\allowdisplaybreaks
\begin{align}
\label{qdpauli14}
Z_{l+m}(a\smallsmile b)&=Z_l(a)\otimes Z_m(b),
\\
\label{qdpauli15}
X_{l+m}(a\smallsmile b)&=X_l(a)\otimes X_m(b).  
\end{align}
}
\!\!These relations follow straightforwardly from (4.2.12) of I and \ceqref{qdinner4}. They show that the 
general $l$ qudit Pauli operators can be obtained through  tensor multiplication of $1$ qudit ones. 

The results expounded above clearly indicate that the operators $Z_l(a)$, $X_l(a)$ enjoy the same kind of
algebraic properties. There is a reason for this: they are related by unitary quantum Fourier transformation.
The importance of this relationship motivates reviewing the Fourier transformation operator and its main
properties. 

Many of the results concerning the quantum Fourier transform are based on the following formula, which is also of
independent interest \ccite{Zhang:2009qft}:  
\begin{equation}
\label{char}
\mycom{{}_\sss}{{}_{z\in E[l]}}\omega^{\langle x,z\rangle}=q^l\delta_{x,0_l},
\end{equation}
where $l\in\bbN$ and $x\in E[l]$ and $q=|\msR|=p^{rd}$. 
The mapping $x\mapsto\omega^{\langle x,z\rangle}$ is indeed an additive character of the Abelian group $E[l]$.
As a consequence, the sum of all values of a character vanishes unless the character is trivial. 

Let us fix $l\in\bbN$ and work in the Hilbert space $\scH_E[l]$ again.

\begin{defi}
The $l$ qudit Fourier transform operator is the special automorphism $F_l\in\Aut_{\bfsfH}(\scH_q[l])$ given by \hphantom{xxxxx}
\begin{equation}
\label{four1}
F_l=\mycom{{}_\sss}{{}_{x,y\in E[l]}}\ket{x}\hfpt q^{-l/2}\hfpt\omega^{\langle x,y\rangle}\hfpt\bra{y}.
\end{equation}
\end{defi}

\noindent 
Note that in particular $F_0=1_0$.

From \ceqref{four1}, it appears that the $l$ qudit Fourier transform operator $F_l$ is unitary, 
so that $F_l\in\msU(\scH_E[l])$. 
Conjugation by $F_l$ relates the two types of Galois qudit Pauli operators
introduced earlier. Indeed, the relations 
{\allowdisplaybreaks
\begin{align}
\label{four3}
F_l{}^+X_l(a)F_l&=Z_l(a),
\\
\label{four4}
F_l{}^+Z_l(a)F_l&=X_l(-a)
\end{align}
}
\!\!hold for $a\in E[l]$. \ceqref{qdpauli1}, \ceqref{qdpauli2} and \ceqref{four3}, \ceqref{four4} together imply that
$Z_l(a)$ is diagonal in the orthonormal basis $F_l{}^+\ket{x}$, $x\in E[l]$, while $X_l(a)$ is not.
Such a basis is indeed nothing but the qudit computational basis.


We also note that for $l,m\in\bbN$ 
\begin{equation}
\label{four5}
F_l\otimes F_m=F_{l+m},
\end{equation}
as required also by \ceqref{qdpauli14}, \ceqref{qdpauli15}.
This relation is a simple consequence of \ceqref{four1}. It shows that the generic 
$l$ qudit Fourier operator can be expressed as the $l$--th tensor power of the $1$ qudit one. 

\begin{exa} \label{exa:gr22pg}
The case of the Galois ring $\msR=\GR(2,2)$.
{\rm The setting considered here is that of exs. \cref{exa:gr22}, \cref{exa:gr22tr}.
By virtue of \ceqref{gr22ex1}, each element $x\in\msR$ is represented as a pair of elements $x_0,x_1\in\bbF_2$.
Summation over $x\in E[1]=\msR$ is so reduced to one over $x_0,x_1\in\bbF_2$. For $x_0,x_1,y_0,y_1\in\bbF_2$, set 
\begin{equation}
\label{}
\kappa(x_0,x_1,y_0,y_1)=x_0y_0+(x_0+x_1)(y_0+y_1).
\end{equation}
Then, the Pauli operators $Z_1(a_0,a_1)$, $X_1(a_0,a_1)$ with
$a_0,a_1\in\bbF_2$ are 
{\allowdisplaybreaks 
\begin{align}
\label{}
Z_1(a_0,a_1)&=\mycom{{}_\sss}{{}_{x_0,x_1\in\bbF_2}}\ket{x_0+a_0,x_1+a_1}\hfpt\bra{x_0,x_1},
\\
\label{}
X_1(a_0,a_1)&=\mycom{{}_\sss}{{}_{x_0,x_1\in\bbF_2}}\ket{x_0,x_1}\,\ee^{i\pi\kappa(a_0,a_1,x_0,x_1)}\hfpt\bra{x_0,x_1}.
\end{align}
}
\!\!The Fourier transform operator $F_1$ reads as
\begin{equation}
\label{}
F_1=\mycom{{}_\sss}{{}_{x_0,x_1,y_0,y_1\in\bbF_2}}\ket{x_0,x_1}\,\ee^{i\pi\kappa(x_0,x_1,y_0,y_1)}\mhfpt/2\,\bra{y_0,y_1}.
\end{equation}
}
\end{exa}

\begin{exa} \label{exa:gr42pg}
The case of the Galois ring $\msR=\GR(4,2)$.
{\rm The setting we are considering is that of exs. \cref{exa:gr42}, \cref{exa:gr42tr}.
Owing to \ceqref{gr42ex1}, each element $x\in\msR$ is expressed through a pair of elements $x_0,x_1\in\bbZ_4$.
Summation over $x\in E[1]=\msR$ is so turned into one over $x_0,x_1\in\bbZ_4$. For $x_0,x_1,y_0,y_1\in\bbZ_4$, set 
\begin{equation}
\label{}
\kappa(x_0,x_1,y_0,y_1)=3x_0y_0+3(x_0+x_1)(y_0+y_1).
\end{equation}
Then, the Pauli operators $Z_1(a_0,a_1)$, $X_1(a_0,a_1)$ with $a_0,a_1\in\bbZ$ read as 
{\allowdisplaybreaks 
\begin{align}
\label{}
Z_1(a_0,a_1)&=\mycom{{}_\sss}{{}_{x_0,x_1\in\bbZ_4}}\ket{x_0+a_0,x_1+a_1}\hfpt\bra{x_0,x_1},
\\
\label{}
X_1(a_0,a_1)&=\mycom{{}_\sss}{{}_{x_0,x_1\in\bbZ_4}}\ket{x_0,x_1}\,\ee^{i\pi\kappa(a_0,a_1,x_0,x_1)/2}\hfpt\bra{x_0,x_1}.
\end{align}
}\!\!The Fourier transform operator $F_1$ takes the form 
\begin{equation}
\label{}
F_1=\mycom{{}_\sss}{{}_{x_0,x_1,y_0,y_1\in\bbZ_4}}\ket{x_0,x_1}\,\ee^{i\pi\kappa(x_0,x_1,y_0,y_1)/2}\mhfpt/4\,\bra{y_0,y_1}.
\end{equation}
}
\end{exa}

\begin{exa} \label{exa:gr43pg}
The case of the Galois ring $\msR=\GR(4,3)$.
{\rm The relevant setting is that of exs. \cref{exa:gr43}, \cref{exa:gr43tr}.
Because of \ceqref{gr43ex1}, every element $x\in\msR$ is represented as a triple of elements $x_0,x_1,x_2\in\bbZ_4$.
Summation over $x\in E[1]=\msR$ is so equivalent to one over $x_0,x_1,x_2\in\bbZ_4$. For $x_0,x_1,x_2,y_0,y_1,y_2\in\bbZ_4$, put 
{\allowdisplaybreaks
\begin{align}
\label{}
\kappa(x_0,x_1,x_2,y_0,y_1,y_2)&=(x_0+x_1)(y_0+y_1)
\\
\nonumber
&\hspace{1cm}+(x_0+x_2)(y_0+y_2)+(x_0+x_1+x_2)(y_0+y_1+y_2), 
\end{align}
}
\!\!Then, the Pauli operators $Z_1(a_0,a_1,a_2)$, $X_1(a_0,a_1,a_2)$ with $a_0,a_1,a_2\in\bbZ_4$ are
{\allowdisplaybreaks 
\begin{align}
\label{}
Z_1(a_0,a_1,a_2)&=\mycom{{}_\sss}{{}_{x_0,x_1,x_2\in\bbZ_4}}\ket{x_0+a_0,x_1+a_1,x_2+a_2}\hfpt\bra{x_0,x_1,x_2},
\\
\label{}
X_1(a_0,a_1,a_2)&=\mycom{{}_\sss}{{}_{x_0,x_1,x_2\in\bbZ_4}}\ket{x_0,x_1,x_2}
\,\ee^{i\pi\kappa(a_0,a_1,a_2,x_0,x_1,x_2)/2}\hfpt\bra{x_0,x_1,x_2}.
\end{align}
}
\!\!The Fourier \pagebreak transform operator $F_1$ is given by 
\begin{equation}
\label{}
F_1=\mycom{{}_\sss}{{}_{x_0,x_1,x_2,y_0,y_1,y_2\in\bbZ_4}}\ket{x_0,x_1,x_2}\,
\ee^{i\pi\kappa(x_0,x_1,x_2,y_0,y_1,y_2)/2}\mhfpt/8\,\bra{y_0,y_1,y_2}.
\end{equation}
}
\end{exa}
\vspace{-3mm}

\vfill\eject

\renewcommand{\sectionmark}[1]{\markright{\thesection\ ~~#1}}

\section{\textcolor{blue}{\sffamily Calibrated hypergraph states}}\label{sec:grstt}

In this section, we introduce and study qudit calibrated hypergraph states, which are the main
topic of the present endeavour. We shall do so relying in an essential way on the graded $\varOmega$ monadic framework
of hypergraphs and multi dit modes elaborated in sects. 3 and 4 of I respectively,
which are repeatedly referred to in the following treatment, properly adapted to Galois qudits
along the lines of sect. \cref{sec:galqud}.

Calibrated hypergraph states are introduced and studied in great detail in 
subsect. \cref{subsec:whgsts}. Subsect. \cref{subsec:whgstab} provides a full analysis of 
calibrated hypergraph states as stabilizer states. In subsect. \cref{subsec:chglme},
calibrated hypergraph states are shown to be locally maximally entangleable. 
The optimization of the calibrated hypergraph
state classification problem is discussed in some detail in subsect. \cref{subsec:hgststruc}. 
The relationship of weighted to calibrated hypergraph states is explained in subsect. 
\cref{subsec:hgscompar}, where it is also shown that calibrated states reduce to the
weighted ones in the familiar qubit case.
In the final subsect. \cref{subsec:cwhsmonad}, of a more abstract nature, we prove that
calibrated hypergraph states themselves organize in an $\varOmega$ monad.
All the material presented below is accompanied by illustrative examples and calculations.



\subsection{\textcolor{blue}{\sffamily Calibrated hypergraph states}}\label{subsec:whgsts}

In this subsection, we shall introduce and study Galois qudit calibrated hypergraph states. The theoretical construction
we present below extends and generalizes a number of related designs which have appeared in the
literature \ccite{Helwig:2013amq,Keet:2010qss,Steinhoff:2016:qhs,Xiong:2017qhp}.
Our formulation is somewhat lengthy and elaborated and so will proceed in a step by step
manner for the sake of clarity.

The calibrated hypergraph $\varOmega$ monad $G_C\varOmega$ studied in subsect. 3.2 of I provides
the graph theoretic framework for the construction of qudit calibrated hypergraph states.
We consider specifically the Galois qudits surveyed in sect. \cref{sec:galqud}.
The commutative monoids $\msA$ and $\msM$ entering the definition of $G_C\varOmega$
are thus related to the relevant Galois ring $\msR$ and are taken in this and the ensuing subsections
to be of the following form. 
\begin{enumerate}
{\it
\item \label{item:whgsts1}
{\rm $\msA$ is a submonoid of the additive cyclicity monoid $\msZ$ of the ring $\msR$ (cf. subsect. \cref{subsec:cycmongal}).}
  
\item \label{item:whgsts2}
{\rm $\msM$ is the commutative monoid underlying the additive group of the prime subring $\msP$
of $\msR$ (cf. subsect. \cref{subsec:galoisrev}).}
}
\end{enumerate}

\noindent
In this way, the scheme depends implicitly on the choice of $\msA$. However, 
we shall tacitly assume in the rest of this study that $\msA=\msZ$ 
unless otherwise stated. $\msM=\msP$ is instead fixed.
The symbols $\msA$, $\msZ$ and $\msM$, $\msP$ will so be used interchangeably as appropriate. 
 
There is one calibrated hypergraph state for each $l\in\bbN$ and calibrated hypergraph $(H,\varrho)\in G_C[l]$. 
Basic ingredients of the definition are hence exponent functions and hypergraph calibrations
(cf. defs. 3.2.1, 3.2.3 and 3.2.5 of I).  

The calibrated hypergraph operators introduced next are the basic structural elements of the expression of the
calibrated hypergraph states. For this reason, it is also important to study their formal properties.  

\begin{defi} \label{def:sigmahr}
Let $l\in\bbN$ and let $(H,\varrho)\in G_C[l]$ be a calibrated hypergraph.
The calibrated hypergraph operator $D_{(H,\varrho)}\in\End_{\bfsfH}(\scH_E[l])$
associated with $(H,\varrho)$ is 
\begin{equation}
\label{whgsts10}
D_{(H,\varrho)}=\mycom{{}_\sss}{{}_{x\in E[l]}}F_l{}^+\ket{x}\,\omega^{\sigma_{(H,\varrho)}(x)}\hfpt\bra{x}F_l,
\end{equation}
where $F_l$ is the Fourier transform operator (cf. eq. \ceqref{four1}) and 
the phase function $\sigma_{(H,\varrho)}:E[l]\rightarrow\msP$ is given by
\begin{equation}
\label{whgsts9}
\sigma_{(H,\varrho)}(x)=\mycom{{}_\sss}{{}_{X\in H}}\mycom{{}_\sss}{{}_{w\in\msA^X}}\varrho_X(w)
\tr\left(\mycom{{}_\ppp}{{}_{r\in X}}x_r{}^{w(r)}\right)
\end{equation}
for $x\in E[l]$. 
\end{defi}

\noindent
The ring trace $\tr$ and the power $x^w$ with $x\in\msR$ and $w\in\msZ$ were defined
in subsect. \cref{subsec:galoisrev} and \cref{subsec:cycmongal}, respectively.
For this reason, here and in the following, the addition operation and additive unity of the exponents $w$
will be tacitly understood to be those of $\msZ$ as a monoid. 

A first basic property of calibrated hypergraph operators $D_{(H,\varrho)}$ is their unitarity. 

\begin{prop} \label{prop:dhuni}
Let $l\in\bbN$ and let $(H,\varrho)\in G_C[l]$ be a calibrated hypergraph. Then, $D_{(H,\varrho)}$ is a unitary operator,
so that $D_{(H,\varrho)}\in\msU(\scH_E[l])$. 
\end{prop}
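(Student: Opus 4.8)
The plan is to recognize $D_{(H,\varrho)}$ as the conjugate, by the unitary Fourier operator $F_l$, of a diagonal operator whose entries all have unit modulus, and to deduce unitarity from that presentation.

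First I would record that $\{\ket{x}\}_{x\in E[l]}$ is an orthonormal basis of $\scH_E[l]$, namely the qudit Hadamard basis already used in the definitions of the Pauli operators and of $F_l$, so that $\braket{x}{y}=\delta_{x,y}$. I would then isolate the diagonal operator $\Delta_{(H,\varrho)}=\sum_{x\in E[l]}\ket{x}\,\omega^{\sigma_{(H,\varrho)}(x)}\hfpt\bra{x}$, which belongs to $\End_{\bfsfH}(\scH_E[l])$ and is well defined because $\sigma_{(H,\varrho)}$ takes values in $\msP$ by \ceqref{whgsts9} and the phase $\omega^a$ is defined for every $a\in\msP$, $\omega=\exp(2\pi i/p^r)$ being a root of unity. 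Using orthonormality and the fact that $|\omega^a|=1$ for $a\in\msP$, a direct computation gives $\Delta_{(H,\varrho)}{}^+\Delta_{(H,\varrho)}=\sum_{x\in E[l]}\ket{x}\bra{x}=1_l$ and likewise $\Delta_{(H,\varrho)}\Delta_{(H,\varrho)}{}^+=1_l$, so that $\Delta_{(H,\varrho)}\in\msU(\scH_E[l])$.

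Finally, comparing this with \ceqref{whgsts10} shows $D_{(H,\varrho)}=F_l{}^+\Delta_{(H,\varrho)}F_l$. Since $F_l\in\msU(\scH_E[l])$, as noted after \ceqref{four1}, the operator $F_l{}^+\Delta_{(H,\varrho)}F_l$ is a composition of unitaries and hence unitary; explicitly, $D_{(H,\varrho)}{}^+D_{(H,\varrho)}=F_l{}^+\Delta_{(H,\varrho)}{}^+F_lF_l{}^+\Delta_{(H,\varrho)}F_l=F_l{}^+\Delta_{(H,\varrho)}{}^+\Delta_{(H,\varrho)}F_l=F_l{}^+F_l=1_l$, and symmetrically for $D_{(H,\varrho)}D_{(H,\varrho)}{}^+$. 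Therefore $D_{(H,\varrho)}\in\msU(\scH_E[l])$.

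I do not anticipate a genuine obstacle here: the statement is a routine consequence of $F_l$ being unitary together with the unit-modulus spectrum of the diagonal factor. The only point deserving a moment's attention is that $\omega^{\sigma_{(H,\varrho)}(x)}$ is meaningful, which is immediate since $\sigma_{(H,\varrho)}$ is $\msP$-valued by \ceqref{whgsts9}.
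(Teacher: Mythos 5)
Your proof is correct and follows essentially the same route as the paper: the paper's own argument likewise notes that $F_l$ is unitary and that $D_{(H,\varrho)}$ is unitary by virtue of its diagonal form with unit absolute value coefficients. You have merely spelled out the computation that the paper leaves to inspection.
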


\begin{proof}
The quantum Fourier transform operator $F_l$ is unitary in $\scH_E[l]$.
Inspection of expression \ceqref{whgsts10} shows then that the operator
$D_{(H,\varrho)}$ is also unitary in $\scH_E[l]$ by virtue of its diagonal form
with unit absolute value coefficients. 
\end{proof}

The calibrated hypergraph operators are compatible with the morphism structures of the calibrated hypergraph 
and multi qudit state $\varOmega$ monads $G_C\varOmega$ and $\scH_E\varOmega$
(cf. subsects. 3.2 and 4.2 of I) 
in the sense stated in the following proposition. 

\begin{prop} \label{prop:shefdgcfd}
Let $l,m\in\bbN$ and let $f\in\Hom_\varOmega([l],[m])$ be a morphism. Further, let $(H,\varrho)\in G_C[l]$
be a calibrated hypergraph. Then, the relation
\begin{equation}
\label{whgsts13}
\scH_EfD_{(H,\varrho)}=D_{G_Cf(H,\varrho)}\scH_Ef
\end{equation}
is satisfied
respectively.
\end{prop}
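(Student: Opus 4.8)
The plan is to verify the operator identity \ceqref{whgsts13} on the qudit computational basis $\{F_l{}^+\ket{x}\}_{x\in E[l]}$ of $\scH_E[l]$, which spans the space, since both $\scH_Ef D_{(H,\varrho)}$ and $D_{G_Cf(H,\varrho)}\scH_Ef$ are linear maps $\scH_E[l]\to\scH_E[m]$. The only fact needed about the calibrated hypergraph operators is that, by their very form \ceqref{whgsts10}, they are diagonal in the computational basis: $D_{(H,\varrho)}F_l{}^+\ket{x}=\omega^{\sigma_{(H,\varrho)}(x)}F_l{}^+\ket{x}$, and likewise $D_{G_Cf(H,\varrho)}F_m{}^+\ket{y}=\omega^{\sigma_{G_Cf(H,\varrho)}(y)}F_m{}^+\ket{y}$ for $y\in E[m]$.

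Next I would bring in the explicit action of $\scH_Ef$ recorded in sect.~4 of I, namely $\scH_Ef\ket{x}=\ket{Ef(x)}$ on the Hadamard basis, which on the computational basis reads $\scH_Ef F_l{}^+\ket{x}=q^{(l-m)/2}\sum_{y\in E[m],\,Ef^{t}(y)=x}F_m{}^+\ket{y}$ — the two descriptions being related through \ceqref{four1}, the adjunction \ceqref{qdinner2} for the transpose $Ef^{t}$, and the character sum \ceqref{char}. Plugging the diagonal action of the $D$'s into the two sides of \ceqref{whgsts13} and using $Ef^{t}(y)_r=y_{f(r)}$ from \ceqref{qdinner3}, the whole statement then collapses to the scalar identity
\[
\sigma_{G_Cf(H,\varrho)}(y)=\sigma_{(H,\varrho)}\bigl(Ef^{t}(y)\bigr)\qquad\text{for all } y\in E[m].
\]

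Finally I would establish this identity directly from \ceqref{whgsts9}. Writing $\sigma_{(H,\varrho)}(Ef^{t}(y))=\sum_{X\in H}\sum_{w\in\msA^X}\varrho_X(w)\tr\bigl(\prod_{r\in X}y_{f(r)}{}^{w(r)}\bigr)$ and collecting, for each $X\in H$, the factors attached to a given $y_s$ with $s\in f(X)$, the law $x^{u+v}=x^ux^v$ of generalized powers (subsect.~\cref{subsec:cycmongal}) gives $\prod_{r\in X}y_{f(r)}{}^{w(r)}=\prod_{s\in f(X)}y_s{}^{(\Sigma_{f,X}w)(s)}$, where $(\Sigma_{f,X}w)(s)=\sum_{r\in X,\,f(r)=s}w(r)$ is computed with the monoid addition of $\msZ$ and lies in $\msA$ because $\msA$ is a submonoid. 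Regrouping the resulting double sum first over the hyperedges $Y=f(X)$ of $G_Cf(H)=\{f(X):X\in H\}$ and then over $v=\Sigma_{f,X}w\in\msA^{Y}$ reproduces \ceqref{whgsts9} for $\sigma_{G_Cf(H,\varrho)}$ exactly when the transported calibration satisfies $(G_Cf\varrho)_Y(v)=\sum_{X\in H,\,f(X)=Y}\ \sum_{w\in\msA^X,\,\Sigma_{f,X}w=v}\varrho_X(w)$ — which is precisely how $G_Cf$ acts on calibrations in subsect.~3.2 of I. The main obstacle is this last bookkeeping step: one has to match the regrouping of the hyperedge and exponent sums with the definition of $G_Cf$ and, crucially, keep track that every exponent addition in sight is the non-standard monoid addition of $\msZ$, which is exactly what makes the collapse $\prod_r y_{f(r)}{}^{w(r)}=\prod_s y_s{}^{(\Sigma_{f,X}w)(s)}$ legitimate.
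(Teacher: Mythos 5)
Your proposal is correct and follows essentially the same route as the paper: the paper packages your basis-vector verification into two lemmas (an explicit computational-basis expression for $\scH_Ef$ and an if-and-only-if criterion reducing the operator identity to the phase-function identity $\sigma_{G_Cf(H,\varrho)}(y)=\sigma_{(H,\varrho)}(Ef^t(y))$), and then performs exactly the regrouping computation you describe, using $Ef^t(y)_r=y_{f(r)}$, the monoid addition of $\msZ$ to collapse $\prod_{r\in X}y_{f(r)}{}^{w(r)}$ into $\prod_{s\in f(X)}y_s{}^{f_\star(w)(s)}$, and the definition of the push-forward calibration to identify the result with $\sigma_{G_Cf(H,\varrho)}(y)$.
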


\begin{proof} The demonstration of the proposition is based a couple of lemmas which we prove preliminarily.

\begin{lemma} \label{lemma:whslm1}
For any $l,m\in\bbN$ and $f\in\Hom_\varOmega([l],[m])$,
\begin{equation}
\label{whgsts1}
\scH_Ef=\mycom{{}_\sss}{{}_{y\in E[m]}}F_m{}^+\ket{y}\hfpt q^{(l-m)/2}\hfpt\bra{Ef^t(y)}F_l,
\end{equation}
where $Ef^t(y)$ given by \ceqref{qdinner3}. 
\end{lemma}

\begin{proof} Employing expressions (4.3.1) of I and \ceqref{four1} for $\scH_Ef$ and $F_l$, $F_m$, 
through simple manipulations we obtain 
{\allowdisplaybreaks 
\begin{align}
\label{}
F_m\scH_EfF_l{}^+&=\mycom{{}_\sss}{{}_{x,y\in E[m],x',y',z\in E[l]}}
\ket{y}\hfpt q^{-m/2}\omega^{\langle y,x\rangle}\hfpt\braket{x}{Ef(z)}\hfpt
\braket{z}{x'}\hfpt q^{-l/2}\omega^{-\langle x',y'\rangle}\hfpt\bra{y'}
\\
\nonumber
&=\mycom{{}_\sss}{{}_{x,y\in E[m],x',y',z\in E[l]}}\ket{y}\hfpt q^{-(l+m)/2}\omega^{\langle y,x\rangle-\langle x',y'\rangle}
\delta_{x,Ef(z)}\delta_{z,x'}\hfpt\bra{y'}
\\
\nonumber
&=\mycom{{}_\sss}{{}_{y\in E[m],y',z\in E[l]}}\ket{y}\hfpt q^{-(l+m)/2}\omega^{-\langle z,y'-Ef^t(y)\rangle}\hfpt\bra{y'}
\\
\nonumber
&=\mycom{{}_\sss}{{}_{y\in E[m],y',\in E[l]}}\ket{y}\hfpt q^{(l-m)/2}\delta_{y',Ef^t(y)}\hfpt\bra{y'}
=\mycom{{}_\sss}{{}_{y\in E[m]}}\ket{y}\hfpt q^{(l-m)/2}\hfpt\bra{Ef^t(y)},
\end{align}
}
\!\!where relation \ceqref{char} was used. 
\ceqref{whgsts1} follows immediately from the above computation  by virtue of the unitarity of $F_l$, $F_m$. 
\end{proof}

\begin{lemma} \label{lemma:whslm2}
Let $l,m\in\bbN$ and $f\in\Hom_\varOmega([l],[m])$. Let moreover $A\in\End_{\bfsfH}(\scH_E[l])$,
$B\in\End_{\bfsfH}(\scH_E[m])$ be operators of the form
{\allowdisplaybreaks
\begin{align}
\label{whgsts3}
A&=\mycom{{}_\sss}{{}_{x\in E[l]}}F_l{}^+\ket{x}\,\omega^{\alpha(x)}\hfpt\bra{x}F_l,
\\
\label{whgsts4}
B&=\mycom{{}_\sss}{{}_{y\in E[m]}}F_m{}^+\ket{y}\,\omega^{\beta(y)}\hfpt\bra{y}F_m,
\end{align}
}
\!\!where $\alpha:E[l]\rightarrow\msP$, $\beta:E[m]\rightarrow\msP$ are certain phase functions.
Then, 
\begin{equation}
\label{whgsts7}
\scH_EfA=B\scH_Ef 
\end{equation}
if and only if for every $y\in E[m]$ 
\begin{equation}
\label{whgsts8}
\beta(y)=\alpha(Ef^t(y)).
\end{equation}
\end{lemma}

\begin{proof} Using \ceqref{whgsts1} and \ceqref{whgsts3}, \ceqref{whgsts4}, we obtain 
{\allowdisplaybreaks  
\begin{align}
\label{}
\scH_EfA&=\mycom{{}_\sss}{{}_{x\in E[l],y\in E[m]}}F_m{}^+\ket{y}\hfpt q^{(l-m)/2}
\hfpt\bra{Ef^t(y)}F_lF_l{}^+\ket{x}\,\omega^{\alpha(x)}\hfpt\bra{x}F_l
\\
\nonumber 
&=\mycom{{}_\sss}{{}_{x\in E[l],y\in E[m]}}F_m{}^+\ket{y}\hfpt q^{(l-m)/2}
\hfpt\braket{Ef^t(y)}{x}\,\omega^{\alpha(x)}\hfpt\bra{x}F_l
\\
\nonumber
&=\mycom{{}_\sss}{{}_{x\in E[l],y\in E[m]}}F_m{}^+\ket{y}\hfpt q^{(l-m)/2}\delta_{Ef^t(y),x}\,\omega^{\alpha(x)}\hfpt\bra{x}F_l
\\
\nonumber
&\hspace{4cm}
=\mycom{{}_\sss}{{}_{y\in E[m]}}F_m{}^+\ket{y}\hfpt q^{(l-m)/2}\omega^{\alpha(Ef^t(y))}\hfpt\bra{Ef^t(y)}F_l,
\\
%
\label{}
B\scH_Ef&=\mycom{{}_\sss}{{}_{y,y'\in E[m]}}F_m{}^+\ket{y}\,\omega^{\beta(y)}
\hfpt\bra{y}F_mF_m{}^+\ket{y'}\hfpt q^{(l-m)/2}\hfpt\bra{Ef^t(y')}F_l
\\
\nonumber
&=\mycom{{}_\sss}{{}_{y,y'\in E[m]}}F_m{}^+\ket{y}\,\omega^{\beta(y)}
\hfpt\braket{y}{y'}\hfpt q^{(l-m)/2}\hfpt\bra{Ef^t(y')}F_l
\\
\nonumber
&=\mycom{{}_\sss}{{}_{y,y'\in E[m]}}F_m{}^+\ket{y}\,\omega^{\beta(y)}\hfpt\delta_{y,y'}\hfpt q^{(l-m)/2}\hfpt\bra{Ef^t(y')}F_l
\\
\nonumber
&\hspace{4.8cm}
=\mycom{{}_\sss}{{}_{y\in E[m]}}F_m{}^+\ket{y}\hfpt q^{(l-m)/2}\omega^{\beta(y)}\hfpt\bra{Ef^t(y)}F_l.
\end{align}
}
\!\!Hence, $\scH_EfA=B\scH_Ef$ if and only if $\beta(y)=\alpha(Ef^t(y))$ for $y\in E[l]$, as claimed. 
\end{proof}

We now show \ceqref{whgsts13}. From \ceqref{qdinner3} and \ceqref{whgsts9}, recalling (3.2.2) of I, we have 
{\allowdisplaybreaks 
\begin{align}
\label{si2si}
\sigma_{(H,\varrho)}(Ef^t(y))&=\mycom{{}_\sss}{{}_{X\in H}}\mycom{{}_\sss}{{}_{w\in\msA^X}}\varrho_X(w)
\tr\left(\mycom{{}_\ppp}{{}_{r\in X}}y_{f(r)}{}^{w(r)}\right)
\\
\nonumber
&=\mycom{{}_\sss}{{}_{X\in H}}\mycom{{}_\sss}{{}_{w\in\msA^X}}\varrho_X(w)
\tr\left(\mycom{{}_\ppp}{{}_{s\in f(X)}}y_s{}^{\sum_{r\in X,f(r)=s}w(r)}\right)
\\
\nonumber
&=\mycom{{}_\sss}{{}_{Y\in Gf(H)}}\mycom{{}_\sss}{{}_{v\in\msA^Y}} \mycom{{}_\sss}{{}_{X\in H}}\mycom{{}_\sss}{{}_{w\in\msA^X}}
\delta_{Y,f(X)}\delta_{v,f_\star(w)}\hfpt\varrho_X(w)
\tr\left(\mycom{{}_\ppp}{{}_{s\in f(X)}}y_s{}^{f_\star(w)(s)}\right)
\\
\nonumber
&=\mycom{{}_\sss}{{}_{Y\in Gf(H)}}\mycom{{}_\sss}{{}_{v\in\msA^Y}}
\mycom{{}_\sss}{{}_{X\in H,f(X)=Y}}\mycom{{}_\sss}{{}_{w\in\msA^X,f_\star(w)=v}}\varrho_X(w)
\tr\left(\mycom{{}_\ppp}{{}_{s\in Y}}y_s{}^{v(s)}\right)
\\
\nonumber
&=\mycom{{}_\sss}{{}_{Y\in Gf(H)}}\mycom{{}_\sss}{{}_{v\in\msA^Y}}f_{H*}(\varrho)_Y(v)
\tr\left(\mycom{{}_\ppp}{{}_{s\in Y}}y_s{}^{v(s)}\right)
\\
\nonumber
&=\sigma_{(Gf(H),f_{H*}(\varrho))}(y)=\sigma_{G_Cf(H,\varrho)}(y).
\end{align}
}
\!\!Above, we used expressions (3.2.2) of I in the third line and (3.2.22) of I in the fifth line
to recast the relevant combination of calibrations in the final form shown. \ceqref{whgsts13} follows now readily
from applying lemma \cref{lemma:whslm2}.
\end{proof}

The qudit calibrated hypergraph operators are compatible also with the monadic multiplicative structure of the 
$\varOmega$ monad $G_C\varOmega$ (cf. subsect. 3.2 of I).


\begin{prop} \label{prop:jntdhr}
Let $l,m\in\bbN$ and let $(H,\varrho)\in G_C[l]$, $(K,\varsigma)\in G_C[m]$ be calibrated hypergraphs. Then,
one has \hphantom{xxxxxxxxxxxxx}  
\begin{equation}
\label{dhjntdh1}
D_{(H,\varrho)\smallsmile(K,\varsigma)}=D_{(H,\varrho)}\otimes D_{(K,\varsigma)}.
\end{equation}
Further, for the special calibrated hypergraph $(O,\varepsilon)\in G_C[0]$, 
\begin{equation}
\label{dhjntdh2}
D_{(O,\varepsilon)}=1_0.  
\end{equation}
\end{prop}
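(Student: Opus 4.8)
The plan is to reduce both identities to the defining formula \ceqref{whgsts10}, exploiting the way each of its ingredients behaves under monadic multiplication. Three facts are needed. First, $E[l+m]$ is the concatenation $E[l]\smallsmile E[m]$, so every $x\in E[l+m]$ is uniquely of the form $x=x'\smallsmile x''$ with $x'\in E[l]$, $x''\in E[m]$, and correspondingly $\ket{x'\smallsmile x''}=\ket{x'}\otimes\ket{x''}$ in the basis encoding $\scH_E\varOmega$ (cf. subsects. 4.1 and 4.2 of I). Second, $F_{l+m}=F_l\otimes F_m$, which is \ceqref{four5}, whence also $F_{l+m}{}^+\ket{x'\smallsmile x''}=(F_l{}^+\ket{x'})\otimes(F_m{}^+\ket{x''})$ and the adjoint identity for bras. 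Third — the only substantive point — the phase function is additive under the monadic product of $G_C\varOmega$:
\begin{equation}
\label{whgsts/sigmajoin}
\sigma_{(H,\varrho)\smallsmile(K,\varsigma)}(x'\smallsmile x'')=\sigma_{(H,\varrho)}(x')+\sigma_{(K,\varsigma)}(x'')
\end{equation}
for all $x'\in E[l]$, $x''\in E[m]$.

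To establish \ceqref{whgsts/sigmajoin} I would unwind the definition of $\smallsmile$ on $G_C\varOmega$ from subsect. 3.2 of I: the hyperedge set of $(H,\varrho)\smallsmile(K,\varsigma)$ is the disjoint union of $H$, carried by the vertices $[l]$, and the translate $K+l$, carried by the vertices $[m]+l$, with the calibration restricting to $\varrho_X$ on each $X\in H$ and to $\varsigma_Y$ on the translate of each $Y\in K$. Splitting the outer sum in \ceqref{whgsts9} along this disjoint union, a hyperedge $X\in H$ involves only the components $x_r$ with $r\in[l]$, i.e. only $x'$, and $\prod_{r\in X}x_r{}^{w(r)}$ for such an $X$ is exactly the product built from $x'$; likewise a translated hyperedge of $K$ involves only $x''$ after the relabelling $r\mapsto r-l$, with $\msA^{Y+l}$ matched to $\msA^Y$ by the same shift. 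Hence the first group of terms reassembles into $\sigma_{(H,\varrho)}(x')$ and the second into $\sigma_{(K,\varsigma)}(x'')$, which is \ceqref{whgsts/sigmajoin}.

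Granting this, \ceqref{dhjntdh1} is a direct computation of the same character-sum type already used in lemma~\cref{lemma:whslm2}: substituting $x=x'\smallsmile x''$ into \ceqref{whgsts10}, replacing $F_{l+m}{}^+\ket{x'\smallsmile x''}$ and $\bra{x'\smallsmile x''}F_{l+m}$ by their tensor factorizations, and using \ceqref{whgsts/sigmajoin} to write $\omega^{\sigma_{(H,\varrho)\smallsmile(K,\varsigma)}(x)}=\omega^{\sigma_{(H,\varrho)}(x')}\omega^{\sigma_{(K,\varsigma)}(x'')}$, the double sum over $(x',x'')$ regroups, by bilinearity of $\otimes$ and $(A_1\otimes B_1)(A_2\otimes B_2)=(A_1A_2)\otimes(B_1B_2)$, into the tensor product of the single sums defining $D_{(H,\varrho)}$ and $D_{(K,\varsigma)}$. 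For \ceqref{dhjntdh2}, note that $(O,\varepsilon)\in G_C[0]$ has empty hyperedge set, so the sum in \ceqref{whgsts9} is empty and $\sigma_{(O,\varepsilon)}\equiv 0$ on $E[0]$; since $E[0]$ is a one-point module with element $0_0$ and $F_0=1_0$, formula \ceqref{whgsts10} collapses to $\ket{0_0}\bra{0_0}=1_0$.

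The main obstacle is the combinatorial bookkeeping behind \ceqref{whgsts/sigmajoin}: one must be sure that the monadic product on $G_C\varOmega$ creates no "mixed" hyperedge joining a vertex of $[l]$ to a vertex of $[m]+l$, and that the calibration of $(H,\varrho)\smallsmile(K,\varsigma)$ on each hyperedge is literally $\varrho_X$ or $\varsigma_Y$ (up to the index shift), with no cross contributions. These are precisely the structural properties of $\smallsmile$ proved in subsect. 3.2 of I, so in practice the step amounts to quoting them correctly; the rest of the argument is routine.
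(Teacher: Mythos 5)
Your proposal is correct and follows essentially the same route as the paper: the paper likewise reduces \eqref{dhjntdh1} to the additivity of the phase function, $\sigma_{(H,\varrho)}(P_l(z))+\sigma_{(K,\varsigma)}(P'{}_m(z))=\sigma_{(H,\varrho)\smallsmile(K,\varsigma)}(z)$ (your \eqref{whgsts/sigmajoin} with $z=x'\smallsmile x''$), proved by splitting the hyperedge sum over $H\smallsmile K=H\cup(K+l)$ and relabelling the exponent functions on the translated hyperedges, and then combines this with $F_{l+m}=F_l\otimes F_m$ and the tensor factorization of the basis kets; the unit case is handled identically via $\sigma_{(O,\varepsilon)}=0$ and $F_0=1_0$. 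The only cosmetic difference is the direction of the computation (the paper expands $D_{(H,\varrho)}\otimes D_{(K,\varsigma)}$ and reassembles, while you decompose $D_{(H,\varrho)\smallsmile(K,\varsigma)}$), which is immaterial.
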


\begin{proof}
From expression \ceqref{whgsts10}, recalling relation \ceqref{four5}, we find 
{\allowdisplaybreaks
\begin{align}
\label{whgsts18p2}
D_{(H,\varrho)}\otimes D_{(K,\varsigma)}
&=\mycom{{}_\sss}{{}_{x\in E[l],y\in E[m]}}F_l{}^+\ket{x}
\,\omega^{\sigma_{(H,\varrho)}(x)}\hfpt\bra{x}F_l\otimes F_m{}^+\ket{y}\,\omega^{\sigma_{(K,\varsigma)}(y)}\hfpt\bra{y}F_m
\\
\nonumber
&=\mycom{{}_\sss}{{}_{x\in E[l],y\in E[m]}}F_l{}^+\otimes F_m{}^+\ket{x}\otimes\ket{y}
\,\omega^{\sigma_{(H,\varrho)}(x)+\sigma_{(K,\varsigma)}(y)}\hfpt\bra{x}\otimes\bra{y}F_l\otimes F_m
\\
\nonumber
&=\mycom{{}_\sss}{{}_{z\in E[l+m]}}F_{l+m}{}^+\ket{z}\,\omega^{\sigma_{(H,\varrho)}(P_l(z))+\sigma_{(K,\varsigma)}(P'{}_m(z))}\hfpt\bra{z}F_{l+m},
\end{align}
}
\!\!where $P_l:E[l+m]\rightarrow E[l]$, $P'{}_m:E[l+m]\rightarrow E[m]$ are given by
$P_l(z)_r=z_r$, $r\in[l]$, and $P'{}_m(z)_s=z_{s+l}$, $s\in[m]$, for $z\in E[l+m]$. 
$\sigma_{(H,\varrho)}(P_l(z))$, $\sigma_{(K,\varsigma)}(P'{}_m(z))$ read as
{\allowdisplaybreaks
\begin{align}
\label{whgsts18p3}
\sigma_{(H,\varrho)}(P_l(z))&=\mycom{{}_\sss}{{}_{X\in H}}\mycom{{}_\sss}{{}_{w\in\msA^X}}\varrho_X(w)
\tr\left(\mycom{{}_\ppp}{{}_{r\in X}}P_l(z)_r{}^{w(r)}\right)
\\
\nonumber
&\hspace{3.5cm}=\mycom{{}_\sss}{{}_{X\in H}}\mycom{{}_\sss}{{}_{u\in\msA^X}}\varrho_X(u)
\tr\left(\mycom{{}_\ppp}{{}_{t\in X}}z_t{}^{u(t)}\right),
\\
\label{whgsts18p4}
\sigma_{(K,\varsigma)}(P'{}_m(z))
&=\mycom{{}_\sss}{{}_{X\in K}}\mycom{{}_\sss}{{}_{v\in\msA^X}}\varsigma_X(v)
\tr\left(\mycom{{}_\ppp}{{}_{s\in X}}P'{}_m(z)_s{}^{v(s)}\right)
\\
\nonumber
&=\mycom{{}_\sss}{{}_{X\in K+l}}\mycom{{}_\sss}{{}_{v\in\msA^{X-l}}}\varsigma_{X-l}(v)
\tr\left(\mycom{{}_\ppp}{{}_{s\in X-l}}z_{s+l}{}^{v(s)}\right)
\\
\nonumber
&=\mycom{{}_\sss}{{}_{X\in K+l}}\mycom{{}_\sss}{{}_{u\in\msA^X}}\varsigma_{X-l}(t_{Xl\star}(u))
\tr\left(\mycom{{}_\ppp}{{}_{s\in X}}z_s{}^{t_{Xl\star}(u)(s-l)}\right)
\\
\nonumber
&\hspace{3.5cm}=\mycom{{}_\sss}{{}_{X\in K+l}}\mycom{{}_\sss}{{}_{u\in\msA^X}}\varsigma_{X-l}\circ t_{Xl\star}(u)
\tr\left(\mycom{{}_\ppp}{{}_{t\in X}}z_t{}^{u(t)}\right),
\end{align}
}
\!\!where we expressed each $v\in\msA^{X-l}$ as $v=t_{Xl\star}(u)$ for a unique $u\in\msA^X$ (cf.
subsect. 3.2 and eq. (3.2.38) of I) and used that $t_{Xl\star}(u)(t)=u(t+l)$ for $t\in X$. 
Now, by (3.1.4) of I, $H\smallsmile K=H\cup(K+l)$, where $H\cap(K+l)=\emptyset$. 
So, recalling (3.2.41) of I, we have 
{\allowdisplaybreaks
\begin{align}
\label{whgsts18p5}
\sigma_{(H,\varrho)}(P_l(z))+\sigma_{(K,\varsigma)}(P'{}_m(z))
&=\mycom{{}_\sss}{{}_{X\in H\smallsmile K}}\mycom{{}_\sss}{{}_{u\in\msA^X}}(\varrho\smallsmile\varsigma)_X(u)
\tr\left(\mycom{{}_\ppp}{{}_{t\in X}}z_t{}^{u(t)}\right)
\\
\nonumber
&=\sigma_{(H\smallsmile K,\varrho\smallsmile\varsigma)}(z)=\sigma_{(H,\varrho)\smallsmile(K,\varsigma)}(z),
\end{align}
}
\!\!where (3.2.52) of I was used in the last step. 
Inserting \ceqref{whgsts18p5} into \ceqref{whgsts18p2}, we find
\begin{equation}
\label{whgsts18p6}
D_{(H,\varrho)}\otimes D_{(K,\varsigma)}
=\mycom{{}_\sss}{{}_{z\in E[l+m]}}F_{l+m}{}^+\ket{z}\,\omega^{\sigma_{(H,\varrho)\smallsmile(K,\varsigma)}(z)}\hfpt\bra{z}F_{l+m}
=D_{(H,\varrho)\smallsmile(K,\varsigma)}, 
\end{equation}
showing \ceqref{dhjntdh1}.

For the hypergraph $(O,\varepsilon)$, we have
\begin{equation}
\label{whgsts18p8}
\sigma_{(O,\varepsilon)}=0.
\end{equation}
It follows from \ceqref{whgsts18p8} and the identity $F_0=1_0$ that
\begin{equation}
\label{whgsts18p10}
D_{(O,\varepsilon)}=F_0{}^+\ket{0}\,\omega^{\sigma_{(O,\varepsilon)}}\hfpt\bra{0}F_0=1_0, 
\end{equation}
showing \ceqref{dhjntdh2}. 
\end{proof}

We are now ready to define qudit calibrated hypergraph states.

\begin{defi} \label{def:whgst}
Let $l\in\bbN$. Further, let $(H,\varrho)\in G_C[l]$ be a calibrated hypergraph. The qudit calibrated hypergraph
state of $(H,\varrho)$ is the ket $\ket{(H,\varrho)}\in\scH_E[l]$ given by
\begin{equation}
\label{whgsts14}
\ket{(H,\varrho)}=D_{(H,\varrho)}\ket{0_l}.
\end{equation}
\end{defi}

\noindent
We notice that, by virtue of \ceqref{whgsts10}, \ceqref{whgsts9}, $D_{(H,\varrho)}$
factorizes in a product of commuting terms, one
for each pair formed by a hyperedge $X$ of $H$ and an exponent function $w$ of $X$.
The above definition therefore generalizes
various definitions of hypergraph states which have appeared in the literature
\ccite{Helwig:2013amq,Keet:2010qss,Steinhoff:2016:qhs,Xiong:2017qhp},
but does so in a rather non trivial manner. 


\begin{prop}
Let $l\in\bbN$ and let $(H,\varrho)\in G_C[l]$ be a calibrated hypergraph. Then, $\ket{(H,\varrho)}$ is normalized. 
\end{prop}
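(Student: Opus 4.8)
The plan is to reduce normalization to the unitarity of the calibrated hypergraph operator, which has already been established. First I would recall that, by Definition~\cref{def:whgst}, $\ket{(H,\varrho)}=D_{(H,\varrho)}\ket{0_l}$, where $\ket{0_l}$ is the distinguished element of the qudit Hadamard orthonormal basis of $\scH_E[l]$ introduced in subsect.~\cref{subsec:qdpauli}; in particular $\braket{0_l}{0_l}=1$. Next I would invoke Proposition~\cref{prop:dhuni}, which asserts $D_{(H,\varrho)}\in\msU(\scH_E[l])$, so that $D_{(H,\varrho)}{}^+D_{(H,\varrho)}=1_l$.

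The computation is then immediate:
\begin{equation}
\label{normcomp}
\braket{(H,\varrho)}{(H,\varrho)}=\bra{0_l}D_{(H,\varrho)}{}^+D_{(H,\varrho)}\ket{0_l}=\braket{0_l}{0_l}=1 ,
\end{equation}
which is exactly the claimed normalization.

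There is essentially no obstacle here: the only inputs are the unitarity of $D_{(H,\varrho)}$ (already proved) and the fact that $\ket{0_l}$ is a unit vector (true by construction of the Hadamard basis). The one small point worth a remark is the degenerate case $l=0$, where $E[0]$ is a one-point set and $\scH_E[0]=\bbC$, $\ket{0_0}=\ket{0}$ the complex unit and $D_{(H,\varrho)}=1_0$ by \ceqref{dhjntdh2}; the identity \ceqref{normcomp} still holds verbatim. Thus the statement follows, and no further case analysis is needed.
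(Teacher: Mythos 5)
Your proof is correct and coincides with the paper's own argument, which likewise derives normalization directly from the unitarity of $D_{(H,\varrho)}$ (prop. \cref{prop:dhuni}) and the fact that $\ket{0_l}$ is a unit vector. The explicit computation and the remark on the $l=0$ case are harmless elaborations of the same one-line reasoning.
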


\begin{proof}
This follows immediately from the unitarity of $D_{(H,\varrho)}$, see prop. \cref{prop:dhuni},
and the normalization of $\ket{0_l}$.
\end{proof}

The calibrated hypergraph states function covariantly under the joint morphism actions of the calibrated
hypergraph and multi qudit state $\varOmega$ monads $G_C\varOmega$ and $\scH_E\varOmega$, a property
that indeed determines to a considerable extent their formal structure.


\begin{prop} \label{prop:hefgcf}
Let $l,m\in\bbN$ and let $f\in\Hom_\varOmega([l],[m])$ be a morphism. Moreover,
let $(H,\varrho)\in G_C[l]$ be a calibrated hypergraph.
Then, one has  \hphantom{xxxxxxx}
\begin{equation}
\label{whgsts16}
\scH_Ef\ket{(H,\varrho)}=\ket{G_Cf(H,\varrho)}.
\end{equation}
\end{prop}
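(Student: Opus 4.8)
The plan is to obtain the statement as an essentially immediate consequence of the intertwining relation \ceqref{whgsts13} of Proposition \cref{prop:shefdgcfd} together with Definition \cref{def:whgst}. Unwinding $\ket{(H,\varrho)}=D_{(H,\varrho)}\ket{0_l}$ and pushing $\scH_Ef$ through $D_{(H,\varrho)}$ by means of \ceqref{whgsts13} gives
\[
\scH_Ef\ket{(H,\varrho)}=\scH_EfD_{(H,\varrho)}\ket{0_l}=D_{G_Cf(H,\varrho)}\scH_Ef\ket{0_l},
\]
so that the whole proposition reduces to the single identity $\scH_Ef\ket{0_l}=\ket{0_m}$; granting this, Definition \cref{def:whgst} applied to $G_Cf(H,\varrho)\in G_C[m]$ yields $D_{G_Cf(H,\varrho)}\ket{0_m}=\ket{G_Cf(H,\varrho)}$, which is exactly \ceqref{whgsts16}.

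Next I would verify that $\scH_Ef$ carries the Hadamard basis vector $\ket{0_l}$ to $\ket{0_m}$. The quickest route uses the explicit form of $\scH_Ef$ in the qudit Hadamard basis (eq. (4.3.1) of I, the very expression employed in the proof of Lemma \cref{lemma:whslm1}), according to which $\scH_Ef$ acts on the basis $\ket{z}$, $z\in E[l]$, simply as $\ket{z}\mapsto\ket{Ef(z)}$. Since $Ef\in\Hom_{\Mod_\msP}(E[l],E[m])$ is a $\msP$-module morphism, it maps the zero $0_l$ of $E[l]$ to the zero $0_m$ of $E[m]$, whence $\scH_Ef\ket{0_l}=\ket{Ef(0_l)}=\ket{0_m}$. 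If instead one wishes to argue directly from Lemma \cref{lemma:whslm1}, the same identity drops out of a short computation: $F_l\ket{0_l}=q^{-l/2}\sum_{x\in E[l]}\ket{x}$ because $\langle x,0_l\rangle=0$, so $\bra{Ef^t(y)}F_l\ket{0_l}=q^{-l/2}$ for every $y\in E[m]$, and the resulting vector $q^{(l-m)/2-l/2}\sum_{y\in E[m]}F_m{}^+\ket{y}$ collapses to $\ket{0_m}$ by the character-sum identity \ceqref{char}.

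I expect no real obstacle here: the genuinely computational part — the commutation of $\scH_Ef$ with the calibrated hypergraph operators, where the push-forward of calibrations and the combinatorial manipulation \ceqref{si2si} entered — has already been carried out in Proposition \cref{prop:shefdgcfd}, so this proposition is in effect a corollary of it. The only point deserving a line of care is the elementary behaviour of $\scH_Ef$ on the reference vector $\ket{0_l}$, which is handled as above.
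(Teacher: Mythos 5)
Your proposal is correct and follows essentially the same route as the paper: the paper likewise writes $\ket{(H,\varrho)}=D_{(H,\varrho)}\ket{0_l}$, applies the intertwining relation \ceqref{whgsts13}, and concludes via $\scH_Ef\ket{0_l}=\ket{Ef(0_l)}=\ket{0_m}$, which it obtains directly from expression (4.3.1) of I exactly as in your first verification. Your alternative character-sum check via Lemma \cref{lemma:whslm1} is a harmless redundancy.
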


\begin{proof} Using expression (4.3.1) of I, we find that 
\begin{equation}
\label{whgsts15} 
\scH_Ef\ket{0_l}=\ket{Ef(0_l)}=\ket{0_m}.
\end{equation}
From \ceqref{whgsts14}, combining \ceqref{whgsts13} and \ceqref{whgsts15}, we obtain 
{\allowdisplaybreaks 
\begin{align}
\label{}
\scH_Ef\ket{(H,\varrho)}&=\scH_EfD_{(H,\varrho)}\ket{0_l}
\\
\nonumber
&=D_{G_Cf(H,\varrho)}\scH_Ef\ket{0_l}
\\
\nonumber
&=D_{G_Cf(H,\varrho)}\ket{0_m}=\ket{G_Cf(H,\varrho)},
\end{align}
}
\!\!showing \ceqref{whgsts16}.
\end{proof}

\begin{cor}
Let $l\in\bbN$ and let $h\in\Hom_\varOmega([l],[l])$ be a bijective morphism. Further, let $(H,\varrho)\in G_C[l]$ be a
calibrated hypergraph such that $G_Ch(H,\varrho)=(H,\varrho)$. Then, 
\begin{equation}
\label{whgsts17}
\scH_Eh\ket{(H,\varrho)}=\ket{(H,\varrho)}.
\end{equation}
\end{cor}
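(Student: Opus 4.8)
The plan is to obtain this as an immediate specialization of Proposition \cref{prop:hefgcf}. First I would take $m=l$ and $f=h$ in that proposition; this is legitimate because $h\in\Hom_\varOmega([l],[l])$ and $(H,\varrho)\in G_C[l]$, so all the hypotheses of \cref{prop:hefgcf} are met. The proposition then yields directly
\begin{equation}
\label{}
\scH_Eh\ket{(H,\varrho)}=\ket{G_Ch(H,\varrho)}.
\end{equation}
Next I would invoke the standing hypothesis $G_Ch(H,\varrho)=(H,\varrho)$ to replace the argument of the ket on the right-hand side, which gives $\scH_Eh\ket{(H,\varrho)}=\ket{(H,\varrho)}$, i.e.\ \ceqref{whgsts17}. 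That is the whole argument.

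There is essentially no obstacle: all the real content sits in Proposition \cref{prop:hefgcf} (covariance of the calibrated hypergraph state map under the joint $\varOmega$ monad morphism actions), and the corollary is just the fixed-point statement one gets by feeding in a morphism $h$ that stabilizes $(H,\varrho)$. The only thing worth remarking is the role of the bijectivity of $h$: it is not used in the computation itself, but it is what makes $G_Ch$ restrict to a permutation of $G_C[l]$, so that the condition $G_Ch(H,\varrho)=(H,\varrho)$ is precisely the notion of $h$ being a (calibrated) automorphism of the hypergraph $(H,\varrho)$. Under that reading, the corollary says that the unitary $\scH_Eh$ induced by such an automorphism fixes the associated state vector $\ket{(H,\varrho)}$, which is the expected symmetry property of hypergraph states and is worth recording for the later stabilizer analysis. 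I would keep the written proof to a single line citing \cref{prop:hefgcf} together with the hypothesis on $h$.
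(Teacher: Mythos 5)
Your proposal is correct and is exactly the argument the paper intends: the corollary is stated immediately after Proposition \cref{prop:hefgcf} with no separate proof, and the specialization $m=l$, $f=h$ followed by the hypothesis $G_Ch(H,\varrho)=(H,\varrho)$ is all that is needed. Your side remark on bijectivity also matches the paper's own comment following the corollary, namely that $\scH_Eh$ is then a unitary symmetry of $\ket{(H,\varrho)}$.
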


\noindent
As $G_Ch(H,\varrho)=(H,\varrho)$, $G_Ch$ represents a symmetry of $(H,\varrho)$. By the above corollary,
$\scH_Eh$ is then a symmetry of $\ket{(H,\varrho)}$. Note that $\scH_Eh$ is unitary by prop. 4.3.1 of I
as appropriate for a quantum symmetry. 

In addition,  calibrated hypergraph states behave compatibly with respect to the monadic multiplications
of the $\varOmega$ monads $G_C\varOmega$ and $\scH_E\varOmega$ 
as summarized by the following proposition. 

\begin{prop}
Let $l,m\in\bbN$ and let $(H,\varrho)\in G_C[l]$, $(K,\varsigma)\in G_C[m]$ be calibrated hypergraphs. Then,
\hphantom{xxxxxxxxxxxxx}  
\begin{equation}
\label{whgsts18}
\ket{(H,\varrho)\smallsmile(K,\varsigma)}=\ket{(H,\varrho)}\smallsmile\ket{(K,\varsigma)}.
\end{equation}
Further, it holds that \hphantom{xxxxxxxxxxxxxxxxxxx}
\begin{equation}
\label{whgsts19}
\ket{(O,\varepsilon)}=\ket{0}.
\end{equation}
\end{prop}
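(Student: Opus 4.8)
The plan is to obtain both identities directly from the definition \ceqref{whgsts14} of the calibrated hypergraph state, using the multiplicativity of the calibrated hypergraph operator already established in Proposition \cref{prop:jntdhr}; no new computation is really needed.

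For \ceqref{whgsts18} I would argue as follows. By \ceqref{whgsts14} applied to the monadic product $(H,\varrho)\smallsmile(K,\varsigma)\in G_C[l+m]$, one has $\ket{(H,\varrho)\smallsmile(K,\varsigma)}=D_{(H,\varrho)\smallsmile(K,\varsigma)}\ket{0_{l+m}}$. Now invoke \ceqref{dhjntdh1} to rewrite $D_{(H,\varrho)\smallsmile(K,\varsigma)}=D_{(H,\varrho)}\otimes D_{(K,\varsigma)}$. The one remaining ingredient is the compatibility of the reference vector with the monadic multiplication of $\scH_E\varOmega$, namely $\ket{0_{l+m}}=\ket{0_l}\otimes\ket{0_m}$; this is part of the graded $\varOmega$ monadic set-up of $\scH_E\varOmega$ from sect. 4 of I, since $0_l$ concatenated with $0_m$ is $0_{l+m}$ in $E\varOmega$ and basis encoding turns concatenation of configurations into tensor product of state vectors. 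Then, using the elementary identity $(A\otimes B)(u\otimes v)=(Au)\otimes(Bv)$ together with \ceqref{whgsts14} once more for each factor, one gets $D_{(H,\varrho)\smallsmile(K,\varsigma)}\ket{0_{l+m}}=(D_{(H,\varrho)}\ket{0_l})\otimes(D_{(K,\varsigma)}\ket{0_m})=\ket{(H,\varrho)}\otimes\ket{(K,\varsigma)}$, which is \ceqref{whgsts18} (the symbol $\smallsmile$ on the right-hand side there being the monadic multiplication of $\scH_E\varOmega$, i.e. $\otimes$).

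For \ceqref{whgsts19} I would specialize \ceqref{whgsts14} to the empty calibrated hypergraph $(O,\varepsilon)\in G_C[0]$, giving $\ket{(O,\varepsilon)}=D_{(O,\varepsilon)}\ket{0_0}$, then apply \ceqref{dhjntdh2}, i.e. $D_{(O,\varepsilon)}=1_0$, and finally the identification $\ket{0_0}=\ket{0}$ of the state vector monadic unit fixed earlier. This yields $\ket{(O,\varepsilon)}=1_0\ket{0}=\ket{0}$.

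The main obstacle: honestly there is no substantial one, since Propositions \cref{prop:dhuni}--\cref{prop:jntdhr} have already done all the work at the operator level. The only points that deserve a line of justification are the invocation of the monadic compatibility $\ket{0_{l+m}}=\ket{0_l}\otimes\ket{0_m}$ of the reference vectors (a feature of the $\varOmega$ monad $\scH_E\varOmega$ imported from I) and the standard factorization of a tensor-product operator applied to a tensor-product vector; with these in hand the proof is a two-line chain of equalities for each identity.
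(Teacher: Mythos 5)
Your proof is correct and follows essentially the same route as the paper's: both rest on definition \ceqref{whgsts14}, the operator-level multiplicativity \ceqref{dhjntdh1}--\ceqref{dhjntdh2} from prop. \cref{prop:jntdhr}, and the identity $\ket{0_{l+m}}=\ket{0_l}\otimes\ket{0_m}$ coming from (4.2.12) of I. The only difference is cosmetic: the paper runs the chain of equalities from the right-hand side to the left, whereas you run it in the opposite direction.
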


\begin{proof}
Recall $\ket{(H,\varrho)}\smallsmile\ket{(K,\varsigma)}=\ket{(H,\varrho)}\otimes\ket{(K,\varsigma)}$ by (4.2.11) of I.
By \ceqref{whgsts14}, so, 
\begin{equation}
\label{whgsts18p0}
\ket{(H,\varrho)}\smallsmile\ket{(K,\varsigma)}=D_{(H,\varrho)}\ket{0_l}\otimes D_{(K,\varsigma)}\ket{0_m}
=D_{(H,\varrho)}\otimes D_{(K,\varsigma)}\ket{0_l}\otimes\ket{0_m}
\end{equation}
By relation \ceqref{dhjntdh1}, the operator $D_{(H,\varrho)}\otimes D_{(K,\varsigma)}$
in the right hand side is given by
\begin{equation}
\label{whgsts18p0/1}
D_{(H,\varrho)}\otimes D_{(K,\varsigma)}=D_{(H,\varrho)\smallsmile(K,\varsigma)}.
\end{equation}
We notice in addition that, from  relation (4.2.12) of I, 
\begin{align}
\label{whgsts18p7/1}
\ket{0_l}\otimes\ket{0_m}&=\ket{0_l}\smallsmile\ket{0_m}
\\
\nonumber
&=\ket{0_l\smallsmile 0_m}=\ket{0_{l+m}}. 
\end{align}
Substituting \ceqref{whgsts18p0/1}, \ceqref{whgsts18p7/1} into \ceqref{whgsts18p0}, we find 
\begin{equation}
\label{whgsts18p7/2}
\ket{(H,\varrho)}\smallsmile\ket{(K,\varsigma)}
=D_{(H,\varrho)\smallsmile(K,\varsigma)}\ket{0_{l+m}}
=\ket{(H,\varrho)\smallsmile(K,\varsigma)},
\end{equation}
leading to \ceqref{whgsts18}. 

From \ceqref{whgsts14} again, using \ceqref{dhjntdh2}, we find 
\begin{equation}
\label{whgsts18p11}
\ket{(O,\varepsilon)}=D_{(O,\varepsilon)}\ket{0}=\ket{0}, 
\end{equation}
as stated in \ceqref{whgsts19}. 
\end{proof}



The analysis carried out up this point shows that the calibrated hypergraph state construction 
 enjoys a number of nice properties. These properties can be elegantly
characterized as the calibrated hypergraph state map $\ket{-}$
being a morphism of the calibrated hypergraph and multi qudit state
$\varOmega$ monads $G_C\varOmega$ and $\scH_E\varOmega$ (cf. defs. 2.3.1, 2.3.2
and props. 3.2.9 and 4.2.3 of I). 
The next proposition formulates the above in precise terms. 

\begin{prop} \label{prop:whsmapmor}
The specification for each $l\in\bbN$ of the function $\ket{-}:G_C[l]\rightarrow \scH_E[l]$
assigning to every calibrated hypergraph $(H,\varrho)\in G_C[l]$ its corresponding
hypergraph state $\ket{(H,\varrho)}\in\scH_E[l]$
defines a distinguished morphism $\ket{-}\in\Hom_{\ul{\rm GM}_\varOmega}(G_C\varOmega,\scH_E\varOmega)$
of the objects $G_C\varOmega$, $\scH_E\varOmega\in\Obj_{\ul{\rm GM}_\varOmega}$ in $\ul{\rm GM}_\varOmega$.
\end{prop}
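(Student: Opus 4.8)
The plan is to unwind the definition of a morphism of graded $\varOmega$ monads (defs.\ 2.3.1 and 2.3.2 of I) and to verify, clause by clause, that the family of maps $\ket{-}:G_C[l]\rightarrow\scH_E[l]$, $(H,\varrho)\mapsto\ket{(H,\varrho)}$, satisfies all of its requirements --- each of which has in fact already been established above. Recall first that $G_C\varOmega$ and $\scH_E\varOmega$ are objects of $\ul{\rm GM}_\varOmega$ by props.\ 3.2.9 and 4.2.3 of I, so the statement is well posed. A morphism between them amounts to a collection of set maps $\ket{-}_{[l]}:G_C[l]\rightarrow\scH_E[l]$, one for each $l\in\bbN$, subject to three conditions: \emph{(i)} naturality with respect to the $\varOmega$ category structures, i.e.\ $\scH_Ef\,\ket{-}_{[l]}=\ket{-}_{[m]}\,G_Cf$ for every $f\in\Hom_\varOmega([l],[m])$; \emph{(ii)} compatibility with the graded monadic multiplications, i.e.\ $\ket{-}_{[l+m]}\big((H,\varrho)\smallsmile(K,\varsigma)\big)=\ket{-}_{[l]}(H,\varrho)\smallsmile\ket{-}_{[m]}(K,\varsigma)$ for $(H,\varrho)\in G_C[l]$, $(K,\varsigma)\in G_C[m]$; and \emph{(iii)} preservation of the monadic units, i.e.\ $\ket{-}_{[0]}(O,\varepsilon)=\ket{0}$.

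Then I would simply match each clause with a result already in hand. Well-definedness of the individual maps $\ket{-}_{[l]}$ is immediate, since \ceqref{whgsts10}--\ceqref{whgsts9} assign to each $(H,\varrho)\in G_C[l]$ exactly one vector of $\scH_E[l]$, and the grading (the index $l$) is manifestly preserved. Clause \emph{(i)} is precisely the covariance identity \ceqref{whgsts16} of prop.\ \cref{prop:hefgcf}, read as an equality of functions $G_C[l]\rightarrow\scH_E[m]$ evaluated on an arbitrary $(H,\varrho)$. Clause \emph{(ii)} is the content of \ceqref{whgsts18}, once one recalls --- as in its proof, via (4.2.11) of I --- that $\smallsmile$ agrees with $\otimes$ on state vectors. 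Clause \emph{(iii)} is \ceqref{whgsts19}.

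There is essentially no obstacle left: the proposition repackages prop.\ \cref{prop:hefgcf} and the relations \ceqref{whgsts18}, \ceqref{whgsts19} into the language of $\ul{\rm GM}_\varOmega$. The only point worth a word of care is the passage from the operator identities of props.\ \cref{prop:shefdgcfd} and \cref{prop:jntdhr} to the corresponding statements about the state maps --- but this has already been carried out, by evaluating on the reference kets $\ket{0_l}$ and using \ceqref{whgsts15}, in the proofs of the cited results. The adjective ``distinguished'' simply records that $\ket{-}$ is the specific morphism singled out by the construction \ceqref{whgsts10}--\ceqref{whgsts9}, so that no separate uniqueness argument is needed.
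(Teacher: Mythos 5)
Your proposal is correct and follows essentially the same route as the paper's own proof, which likewise reduces the statement to the three defining conditions of a graded $\varOmega$ monad morphism (relations (2.3.4)--(2.3.6) of I) and verifies them by citing \eqref{whgsts16}, \eqref{whgsts18} and \eqref{whgsts19} respectively. Your additional remarks on well-definedness and on the word ``distinguished'' are harmless elaborations of what the paper leaves implicit.
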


\begin{proof}
The proof consists in showing that the functions $\ket{-}$ satisfy relations 
(2.3.4)--(2.3.6) of I in accordance with def. 2.3.2 of I.
(2.3.4)--(2.3.6) of I indeed are fulfilled on account of relations
\ceqref{whgsts16}, \ceqref{whgsts18}, \ceqref{whgsts19}, respectively.   
\end{proof}

The following examples serve as an illustration of the theory on one hand and
make the reach of its scope manifest on the other. 

\begin{exa} \label{exa:bell} The Bell states. 
{\rm The familiar Bell states are simple instances of 2--qubit calibrated hypergraph states
as we show now. 
Since one works with qubits, the relevant Galois ring is the field $\msR=\GR(2,1)=\bbF_2$.
Its cyclicity monoid $\msZ$ consists of the cyclic monoids $\msZ_0=\bbH_{1,1}$, $\msZ_1=\bbH_{0,0}$
and so has cardinality $2$ (cf. subsect. \cref{subsec:cycmongal}).
We have further $\msM=\msP=\msR$.  
Consider the calibrated hypergraphs $(H,\varrho_{a_0,a_1})\in G_C[2]$, $a_0,a_1\in\bbF_2$,
so defined. $H=\{X\}$ where $X=\{0,1\}$ and $\varrho_{a_0,a_1}=\{\varrho_{a_0,a_1X}\}$ where
\begin{equation}
\label{bell1}
\varrho_{a_0,a_1X}=(a_0,a_1,1,0)
\end{equation}
with respect to the indexing  $\msA^X=(w_0,\ldots,w_3)$ of $\msA^X$ specified by 
{\allowdisplaybreaks
\begin{align}
\label{bell2}
&w_0=((1,0),(0,0)), &w_1=((0,0),(1,0)),
\\
\nonumber
&w_2=((1,0),(1,0)), &w_3=((0,0),(0,0)),
\end{align}
}
\!\!the inner pairs being elements of $\msZ$. 
The calibrated hypergraph states $\ket{(H,\varrho_{a_0,a_1})}$ can be computed straightforwardly \pagebreak 
using the general expressions \ceqref{whgsts10}, \ceqref{whgsts9} and \ceqref{whgsts14}. 
Recalling that $x^{(0,0)}=1$ and $x^{(1,0)}=x$ for $x\in\msR$, the phase functions $\sigma_{(H,\varrho_{a_0,a_1})}$
can be expressed as
\begin{equation}
\label{bell3}
\sigma_{(H,\varrho_{a_0,a_1})}(x_0,x_1)=a_0x_0+a_1x_1+x_0x_1
\end{equation}
with $x_0,x_1\in\msR$. 
The resulting calibrated hypergraph states are
\begin{equation}
\label{bell4}
\ket{(H,\varrho_{a_0,a_1})}=\mycom{{}_\sss}{{}_{x_0,x_1\in\msR}}\ket{x_0,x_1}\,2^{-1}(-1)^{\sigma_{(H,\varrho_{a_0,a_1})}(x_0,x_1)},
\end{equation}
where the kets $\ket{x_0,x_1}$ shown are those of the 2--qubit computational basis. Explicitly, 
{\allowdisplaybreaks
\begin{align}
\label{bell5}
\ket{(H,\varrho_{00})}&=\big(\ket{00}+\ket{01}+\ket{10}-\ket{11}\big)2^{-1},
\\
\nonumber
\ket{(H,\varrho_{01})}&=\big(\ket{00}-\ket{01}+\ket{10}+\ket{11}\big)2^{-1},
\\
\nonumber
\ket{(H,\varrho_{10})}&=\big(\ket{00}+\ket{01}-\ket{10}+\ket{11}\big)2^{-1},
\\
\nonumber
\ket{(H,\varrho_{11})}&=\big(\ket{00}-\ket{01}-\ket{10}-\ket{11}\big)2^{-1}.
\end{align}
}
\!\!As is well--known, these states can be
turned into the standard Bell states $\ket{\varPhi_+}$, $\ket{\varPhi_-}$, $\ket{\varPsi_+}$, $\ket{\varPsi_-}$
respectively up to a sign by acting with the local unitary operator $F_1\otimes 1_1$. 
}
\end{exa}

\vspace{-1mm}

\begin{exa} \label{exa:3qtchg} Three qutrit calibrated hypergraph states.
{\rm In ref. \!\!\ccite{Giri:2024qtt}, a teleportation protocol for qutrit states in a noisy environment
is proposed, that leverages five 3--qutrit hypergraph states as quantum channels. 
These states are in fact examples of calibrated hypergraph states.
As qutrits are considered, the proper Galois ring is the field $\msR=\GR(3,1)=\bbF_3$.
Its cyclicity monoid $\msZ$ consists of the cyclic monoids $\msZ_0=\bbH_{1,1}$, $\msZ_1=\bbH_{0,0}$, $\msZ_1=\bbH_{0,2}$
and so has cardinality $4$ (cf. subsect. \cref{subsec:cycmongal}).
We have  further $\msM=\msP=\msR$.
The relevant calibrated 3--qutrits hypergraphs are all constructed out of the following elements. 
The basic hyperedges are 
$X^0=\{0,1\}$, $X^1=\{1,2\}$, $X^2=\{0,2\}$ and $X^3=\{0,1,2\}$. The basic calibrations 
$\varrho^0,\varrho^1,\varrho^2,\varrho^3$ of $X^0,X^1,X^2,X^3$ are given by 
\begin{align}
\label{3qtchgp1}
&\varrho^0=(1,2,0,\ldots,0), &\varrho^1=(1,2,0,\ldots,0), &&\varrho^2=(1,2,0,\ldots,0),
\\
\nonumber
&\varrho^2=(1,2,2,1,0,\ldots,0)
\end{align}
for the indexings $\msA^{X^0}=(w^0{}_0,\ldots,w^0{}_{15})$, $\msA^{X^1}=(w^1{}_0,\ldots,w^1{}_{15})$, 
$\msA^{X^2}=(w^2{}_0,\ldots,w^2{}_{15})$, $\msA^{X^3}=(w^3{}_0,\ldots,w^3{}_{63})$ of $\msA^{X^0}$, $\msA^{X^1}$, 
$\msA^{X^2}$, $\msA^{X^3}$ for which \pagebreak 
{\allowdisplaybreaks
\begin{align}
\label{3qtchgp2}
&w^0{}_0=((0,0,1),(1,0,1)), &&w^0{}_1=((0,0,0),(1,0,1)),
\\
\nonumber
&w^1{}_0=((0,0,1),(1,0,1)), &&w^1{}_1=((0,0,0),(1,0,1)), 
\\
\nonumber
&w^2{}_0=((1,0,1),(0,0,1)), &&w^2{}_1=((1,0,1),(0,0,0)), 
\\
\nonumber
&w^3{}_0=((0,0,1),(0,0,1),(1,0,1)), &&w^3{}_1=((0,0,0),(0,0,1),(1,0,1)),
\\
\nonumber
&w^3{}_2=((0,0,1),(0,0,0),(1,0,1)), &&w^3{}_3=((0,0,0),(0,0,0),(1,0,1)), 
\end{align}
}
\!\!the inner triples being all elements of $\msZ$. Consider now the calibrated hypergraphs 
$(H^i,\varrho^i)\in G_C[3]$, $i=\sfa,\sfb,\sfc,\sfd,\sfe$, specified by 
{\allowdisplaybreaks
\begin{align}
\label{3qtchgp3}
&H^\sfa=\{X^3\},&&\varrho^\sfa=\{\varrho^3\},
\\
\nonumber
&H^\sfb=\{X^0,X^1\},&&\varrho^\sfb=\{\varrho^0,\varrho^1\},
\\
\nonumber 
&H^\sfc=\{X^0,X^1,X^2\},&&\varrho^\sfc=\{\varrho^0,\varrho^1,\varrho^2\},
\\
\nonumber
&H^\sfd=\{X^0,X^1,X^3\},&&\varrho^\sfd=\{\varrho^0,\varrho^1,\varrho^3\},
\\
\nonumber
&H^\sfe=\{X^0,X^1,X^2,X^3\},&&\varrho^\sfe=\{\varrho^0,\varrho^1,\varrho^2,\varrho^3\}. 
\end{align}
}
\!\!The associated calibrated hypergraph states $\ket{(H^i,\varrho^i)}$ can be computed using 
the general expressions \ceqref{whgsts10}, \ceqref{whgsts9} and \ceqref{whgsts14}.
Their phase functions $\sigma_{(H^i,\varrho^i)}$ read as 
{\allowdisplaybreaks
\begin{align}
\label{3qtchgp4}
\sigma_{(H^\sfa,\varrho^\sfa)}(x_0,x_1,x_2)&=x_0{}^{(0,0,1)}x_1{}^{(0,0,1)}x_2{}^{(1,0,1)}
\\
\nonumber
&\hspace{2.7cm}+2x_0{}^{(0,0,1)}x_2{}^{(1,0,1)}+2x_1{}^{(0,0,1)}x_2{}^{(1,0,1)}+x_2{}^{(1,0,1)},
\\
\nonumber
\sigma_{(H^\sfb,\varrho^\sfb)}(x_0,x_1,x_2)
&=x_0{}^{(0,0,1)}x_1{}^{(1,0,1)}+x_1{}^{(0,0,1)}x_2{}^{(1,0,1)}+2x_1{}^{(1,0,1)}+2x_2{}^{(1,0,1)},
\\
\nonumber
\sigma_{(H^\sfc,\varrho^\sfc)}(x_0,x_1,x_2)&=x_0{}^{(0,0,1)}x_1{}^{(1,0,1)}
+x_1{}^{(0,0,1)}x_2{}^{(1,0,1)}+x_2{}^{(0,0,1)}x_0{}^{(1,0,1)}
\\
\nonumber
&\hspace{5.cm}+2x_1{}^{(1,0,1)}+2x_2{}^{(1,0,1)}+2x_0{}^{(1,0,1)},
\\
\nonumber
\sigma_{(H^\sfd,\varrho^\sfd)}(x_0,x_1,x_2)&=x_0{}^{(0,0,1)}x_1{}^{(0,0,1)}x_2{}^{(1,0,1)}+x_0{}^{(0,0,1)}x_1{}^{(1,0,1)}
\\
\nonumber
&\hspace{5.7cm}+2x_0{}^{(0,0,1)}x_2{}^{(1,0,1)}+2x_1{}^{(1,0,1)},
\\
\nonumber
\sigma_{(H^\sfe,\varrho^\sfe)}(x_0,x_1,x_2)&=x_0{}^{(0,0,1)}x_1{}^{(0,0,1)}x_2{}^{(1,0,1)}+x_0{}^{(0,0,1)}x_1{}^{(1,0,1)}
\\
\nonumber
&\hspace{.6cm}+x_0{}^{(1,0,1)}x_2{}^{(0,0,1)}+2x_0{}^{(0,0,1)}x_2{}^{(1,0,1)}+2x_0{}^{(1,0,1)}+2x_1{}^{(1,0,1)},
\end{align}
}
\!\!where $x_0,x_1,x_2\in\msR$. The states $\ket{(H^i,\varrho^i)}$ are given by
\begin{equation}
\label{3qtchgp4/1}
\ket{(H^i,\varrho^i)}=\mycom{{}_\sss}{{}_{x_0,x_1,x_2\in\msR}}\ket{x_0,x_1,x_2}\,3^{-3/2}\omega^{\sigma_{(H^i,\varrho^i)}(x_0,x_1,x_2)},
\end{equation}
where $\omega=\exp(2\pi i/3)$ and
the kets $\ket{x_0,x_1,x_2}$ shown belong to the 3--qutrit computational basis.
These are precisely the five hypergraph states studied in \ccite{Giri:2024qtt}.
We provide the explicit expressions of the states 
$|\ket{(H^\sfc,\varrho^\sfc)}$, $\ket{(H^\sfe,\varrho^\sfe)}$,
{\allowdisplaybreaks
\begin{align}
\label{3qtchgp5}
|(H^\sfc,\varrho^\sfc)\rangle&=
\big(|000\rangle+|001\rangle+|002\rangle+|010\rangle+|011\rangle+|012\rangle+|020\rangle+|021\rangle\,\omega
\\
\nonumber
&+|022\rangle\,\omega^2+|100\rangle+|101\rangle+|102\rangle\,\omega+|110\rangle+|111\rangle+|112\rangle\,\omega+|120\rangle
\\
\nonumber
&+|121\rangle\,\omega+|122\rangle+|200\rangle+|201\rangle+|202\rangle\,\omega^2
+|210\rangle\,\omega+|211\rangle\,\omega+|212\rangle
\\
\nonumber
&+|220\rangle\,\omega^2+|221\rangle+|222\rangle\big)3^{-3/2},
\\
\nonumber
|(H^\sfe,\varrho^\sfe)\rangle&=
\big(|000\rangle+|001\rangle+|002\rangle+|010\rangle+|011\rangle+|012\rangle+|020\rangle+|021\rangle\,\omega
\\
\nonumber
&+|022\rangle\,\omega^2+|100\rangle+|101\rangle+|102\rangle\,\omega+|110\rangle+|111\rangle+|112\rangle\,\omega+|120\rangle
\\
\nonumber
&+|121\rangle\,\omega+|122\rangle+|200\rangle+|201\rangle+|202\rangle\,\omega^2
+|210\rangle\,\omega+|211\rangle\,\omega+|212\rangle
\\
\nonumber
&+|220\rangle\,\omega^2+|221\rangle\,\omega+|222\rangle\,\omega^2\big)3^{-3/2}.
\end{align}
}
\!\!The states $\ket{(H^i,\varrho^i)}$ are important for us, because they are instances of calibrated hypergraph states which
are not weighted, as will be shown later in subsect. \cref{subsec:tech}. 
}
\end{exa}


\subsection{\textcolor{blue}{\sffamily Calibrated hypergraph states as stabilizer states}}\label{subsec:whgstab}

Calibrated hypergraph states turn out to be stabilizer states, that is one dimensional stabilizer
error correcting codes, just as other hypergraph states \ccite{Gottesman:1997scq,Garcia:2014gss}.
In this subsection, we shall prove such a property and analyze its mathematical ramifications. 

\begin{defi}
Let $l\in\bbN$ and let $(H,\varrho)\in G_C[l]$ be a calibrated hypergraph. The stabilizer group operators associated with
$(H,\varrho)$ are the operators $K_{(H,\varrho)}(a)\in\End_{\bfsfH}(\scH_e[l])$, $a\in E[l]$, given by the formula
\begin{equation}
\label{whggen0}
K_{(H,\varrho)}(a)=D_{(H,\varrho)}X_l(a)D_{(H,\varrho)}{}^+,
\end{equation}
where the operator $X_l(a)$ is given by \ceqref{qdpauli2}. 
\end{defi}

\noindent The next proposition provides another expression of such operators. 

\begin{prop} \label{prop:altexprkh}
Let $l\in\bbN$ and let $(H,\varrho)\in G_C[l]$ be a calibrated hypergraph. Then,
\begin{equation}
\label{whggen1}
K_{(H,\varrho)}(a)=X_l(a)L_{(H,\varrho)}(a)
\end{equation}
for $a\in E[l]$, where $L_{(H,\varrho)}(a)\in\End_{\bfsfH}(\scH_E[l])$ is the operator 
\begin{equation}
\label{whggen1/1}
L_{(H,\varrho)}(a)
=\mycom{{}_\sss}{{}_{x\in E[l]}}F_l{}^+\ket{x}\,\omega^{\sigma_{(H,\varrho)}(x-a)-\sigma_{(H,\varrho)}(x)}\hfpt\bra{x}F_l,
\end{equation}
the phase function $\sigma_{(H,\varrho)}$ being given by \ceqref{whgsts9}.
\end{prop}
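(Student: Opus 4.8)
The plan is to reduce the claim to the quantum Fourier conjugation identities \ceqref{four3}, \ceqref{four4} between the two Pauli families, together with one short computation with operators that are diagonal in the Hadamard basis $\{\ket{x}\}_{x\in E[l]}$. First I would rewrite the calibrated hypergraph operator \ceqref{whgsts10} as $D_{(H,\varrho)}=F_l{}^+\varPhi_{(H,\varrho)}F_l$, where $\varPhi_{(H,\varrho)}=\mycom{{}_\sss}{{}_{x\in E[l]}}\ket{x}\,\omega^{\sigma_{(H,\varrho)}(x)}\hfpt\bra{x}$ is diagonal in the Hadamard basis and unitary (unit--modulus entries). Since $F_l$ is unitary and $D_{(H,\varrho)}$ is unitary by prop. \cref{prop:dhuni}, one has $D_{(H,\varrho)}{}^+=F_l{}^+\varPhi_{(H,\varrho)}{}^+F_l$ with $\varPhi_{(H,\varrho)}{}^+=\mycom{{}_\sss}{{}_{x\in E[l]}}\ket{x}\,\omega^{-\sigma_{(H,\varrho)}(x)}\hfpt\bra{x}$. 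Substituting into the definition \ceqref{whggen0} gives $K_{(H,\varrho)}(a)=F_l{}^+\varPhi_{(H,\varrho)}\big(F_lX_l(a)F_l{}^+\big)\varPhi_{(H,\varrho)}{}^+F_l$.

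The next step is to evaluate the inner conjugation $F_lX_l(a)F_l{}^+$. Multiplying \ceqref{four4} on the left by $F_l$ and on the right by $F_l{}^+$ and then sending $a\mapsto-a$ yields $F_lX_l(a)F_l{}^+=Z_l(-a)$, which by \ceqref{qdpauli1} and \ceqref{qdpauli4} is the shift operator $\mycom{{}_\sss}{{}_{x\in E[l]}}\ket{x-a}\hfpt\bra{x}$. I would then carry out the elementary computation $\varPhi_{(H,\varrho)}Z_l(-a)\varPhi_{(H,\varrho)}{}^+=\mycom{{}_\sss}{{}_{x\in E[l]}}\ket{x-a}\,\omega^{\sigma_{(H,\varrho)}(x-a)-\sigma_{(H,\varrho)}(x)}\hfpt\bra{x}$, collapsing the three sums by orthonormality of the Hadamard basis, and factor the result as $Z_l(-a)\,\varPsi_{(H,\varrho)}(a)$ with $\varPsi_{(H,\varrho)}(a)=\mycom{{}_\sss}{{}_{x\in E[l]}}\ket{x}\,\omega^{\sigma_{(H,\varrho)}(x-a)-\sigma_{(H,\varrho)}(x)}\hfpt\bra{x}$; this factorization merely pulls the shift $\ket{x-a}$ to the front. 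Conjugating back by $F_l$, and using that conjugation by $F_l{}^+$ distributes over products since $F_lF_l{}^+=1_l$, one obtains $K_{(H,\varrho)}(a)=\big(F_l{}^+Z_l(-a)F_l\big)\big(F_l{}^+\varPsi_{(H,\varrho)}(a)F_l\big)$. The first factor equals $X_l(a)$ by \ceqref{four4} with $a\mapsto-a$, and the second is exactly $L_{(H,\varrho)}(a)$ as defined in \ceqref{whggen1/1}, because conjugation by $F_l{}^+$ turns each $\ket{x}\hfpt\bra{x}$ into $F_l{}^+\ket{x}\hfpt\bra{x}F_l$ while leaving the phase coefficients untouched. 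This yields \ceqref{whggen1}.

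I do not anticipate a genuine obstacle; the only thing requiring care is bookkeeping, namely keeping \ceqref{four3} versus \ceqref{four4}, and $Z_l(a)$ versus $Z_l(-a)$, straight and tracking the direction of the shifts and the signs of $a$ through the Fourier conjugations. A fully computational alternative would insert \ceqref{whgsts10} and \ceqref{qdpauli2} directly into \ceqref{whggen0} and evaluate the resulting matrix elements in the computational basis $\{F_l{}^+\ket{x}\}$ with the help of \ceqref{char}; I would prefer the conjugation route above, since it isolates the single nontrivial input — the Fourier exchange of $X_l$ and $Z_l$ — and keeps the algebra transparent.
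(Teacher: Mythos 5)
Your proof is correct and takes essentially the same route as the paper's: the single nontrivial input in both is the Fourier exchange identity \ceqref{four4} (equivalently, writing $X_l(a)$ as the conjugate of the shift $Z_l(-a)$), combined with the fact that $D_{(H,\varrho)}$ and $L_{(H,\varrho)}(a)$ are diagonal in the basis $F_l{}^+\ket{x}$. The paper merely performs the same delta-collapsing computation directly on the triple sum in \ceqref{whggen0} instead of first conjugating everything by $F_l$ and naming the diagonal factors, so the difference is organizational rather than substantive.
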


\begin{proof}
Inserting the combination of \ceqref{qdpauli1} and \ceqref{four4} and \ceqref{whgsts10} into \ceqref{whggen0}, we get
{\allowdisplaybreaks
\begin{align}
\label{whggen3p2}
K_{(H,\varrho)}(a)&=
\mycom{{}_\sss}{{}_{x,y,z\in E[l]}}F_l{}^+\ket{x}\,\omega^{\sigma_{(H,\varrho)}(x)}\hfpt\bra{x}F_l
F_l{}^+\ket{z-a}\bra{z}F_l F_l{}^+\ket{y}\,\omega^{-\sigma_{(H,\varrho)}(y)}\hfpt\bra{y}F_l
\\
\nonumber
&=\mycom{{}_\sss}{{}_{x,y,z\in E[l]}}F_l{}^+\ket{x}\,\omega^{\sigma_{(H,\varrho)}(x)-\sigma_{(H,\varrho)}(y)}
\hfpt\braket{x}{z-a}\hfpt\braket{z}{y}\bra{y}F_l
\\
\nonumber
&=\mycom{{}_\sss}{{}_{x,y,z\in E[l]}}F_l{}^+\ket{x}\,\omega^{\sigma_{(H,\varrho)}(x)-\sigma_{(H,\varrho)}(y)}\hfpt
\delta_{x,z-a}\hfpt\delta_{z,y}\hfpt\bra{y}F_l
\\
\nonumber
&=\mycom{{}_\sss}{{}_{y\in E[l]}}F_l{}^+\ket{y-a}\,\omega^{\sigma_{(H,\varrho)}(y-a)-\sigma_{(H,\varrho)}(y)}\hfpt\bra{y}F_l
\\
\nonumber
&=\mycom{{}_\sss}{{}_{y,z\in E[l]}}F_l{}^+\ket{z-a}\hfpt\delta_{z,y}\,\omega^{\sigma_{(H,\varrho)}(y-a)-\sigma_{(H,\varrho)}(y)}\hfpt\bra{y}F_l
\\
\nonumber
&=\mycom{{}_\sss}{{}_{y,z\in E[l]}}F_l{}^+\ket{z-a}\hfpt\bra{z}F_l{}^+ F_l\ket{y}\,
\omega^{\sigma_{(H,\varrho)}(y-a)-\sigma_{(H,\varrho)}(y)}\hfpt\bra{y}F_l
=X_l(a)L_{(H,\varrho)}(a).
\end{align}
}
\!\!This shows \ceqref{whggen1}.
\end{proof}

\noindent
We notice that, by \ceqref{whgsts9} and \ceqref{whggen1/1}, 
$L_{(H,\varrho)}$ factorizes in a product of commuting terms, one
for each pair formed by a hyperedge $X$ of $H$ and an exponent function $w$ of $X$. The above expression extends
in this way similar expressions of stabilizer group operators of
hypergraph states which have appeared in the literature. Its graph theoretic interpretation 
is not however as obvious. 

The stabilizer group operators of a calibrated hypergraph are pairwise distinct.

\begin{prop} \label{prop:khdiff}
Let $l\in\bbN$ and let $(H,\varrho)\in G_C[l]$ be a calibrated hypergraph. Then, for $a,b\in E[l]$,
$K_{(H,\varrho)}(a)=K_{(H,\varrho)}(b)$ if and only if $a=b$.
\end{prop}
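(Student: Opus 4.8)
The plan is to reduce the claim to the corresponding injectivity statement for the Pauli operators $X_l$, which was already recorded in the excerpt. Recall from Proposition \cref{prop:altexprkh} that
\begin{equation}
\label{whggen-plan1}
K_{(H,\varrho)}(a)=X_l(a)L_{(H,\varrho)}(a),
\end{equation}
where $L_{(H,\varrho)}(a)$ is the operator diagonal in the computational basis $F_l{}^+\ket{x}$, $x\in E[l]$, with eigenvalues $\omega^{\sigma_{(H,\varrho)}(x-a)-\sigma_{(H,\varrho)}(x)}$. The ``if'' direction is immediate, so the content is the ``only if'' direction. First I would observe that $L_{(H,\varrho)}(a)\ket{0_l}$ has a particularly simple form: evaluating the diagonal operator on $\ket{0_l}=F_l{}^+\ket{0_l}$ (note $\ket{0_l}$ lies in the Hadamard basis, and $F_l{}^+\ket{0_l}=\ket{0_l}$ up to normalization conventions from \ceqref{four1}) picks out the $x=0_l$ coefficient, giving $L_{(H,\varrho)}(a)\ket{0_l}=\omega^{\sigma_{(H,\varrho)}(-a)-\sigma_{(H,\varrho)}(0_l)}\ket{0_l}$, a mere phase times $\ket{0_l}$.

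The key step is then to apply both $K_{(H,\varrho)}(a)$ and $K_{(H,\varrho)}(b)$ to a suitable test vector and compare. Suppose $K_{(H,\varrho)}(a)=K_{(H,\varrho)}(b)$. Applying both sides to the computational-basis vector $F_l{}^+\ket{0_l}$ and using \ceqref{whggen-plan1} together with the definition \ceqref{qdpauli2} of $X_l$ (whose action in the Hadamard basis is diagonal) one needs a vector on which the $X_l$ factor acts by translation rather than by a phase; so instead I would apply both operators to $\ket{0_l}$ directly, using that $L_{(H,\varrho)}(a)\ket{0_l}$ is a phase multiple of $\ket{0_l}$ and hence $K_{(H,\varrho)}(a)\ket{0_l}=\omega^{c_a}X_l(a)\ket{0_l}$ for a scalar $c_a\in\msP$. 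From the definition \ceqref{qdpauli2}, $X_l(a)\ket{0_l}=\ket{0_l}$, which is unfortunately again a phase; so the decisive move is to apply the operators to a Hadamard-basis vector $F_l{}^+\ket{z}$ on which $X_l(a)$ genuinely translates. Using \ceqref{qdpauli1}, \ceqref{four3}, \ceqref{four4} one has $X_l(a)F_l{}^+\ket{z}=F_l{}^+\ket{z-a}$ up to the $L_{(H,\varrho)}$ phase; hence $K_{(H,\varrho)}(a)F_l{}^+\ket{z}$ is a scalar multiple of $F_l{}^+\ket{z-a}$ while $K_{(H,\varrho)}(b)F_l{}^+\ket{z}$ is a scalar multiple of $F_l{}^+\ket{z-b}$. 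Since these two basis vectors are equal only when $z-a=z-b$, i.e. $a=b$, and the scalars are nonzero (they are roots of unity), equality of the two operators forces $a=b$.

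Concretely I would write: pick any $z\in E[l]$; from \ceqref{whggen1}, \ceqref{whggen1/1}, \ceqref{qdpauli1}--\ceqref{qdpauli2}, \ceqref{four3}--\ceqref{four4} we get
\begin{equation}
\label{whggen-plan2}
K_{(H,\varrho)}(a)F_l{}^+\ket{z}=\omega^{\sigma_{(H,\varrho)}(z-a)-\sigma_{(H,\varrho)}(z)}F_l{}^+\ket{z-a},
\end{equation}
and similarly with $b$ in place of $a$. If $K_{(H,\varrho)}(a)=K_{(H,\varrho)}(b)$, the right-hand sides coincide for every $z$; taking inner products against $F_l{}^+\ket{z-a}$ (which is a unit vector) shows the $F_l{}^+\ket{z-b}$ term must have nonzero component along it, forcing $z-b=z-a$, hence $a=b$. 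The converse is trivial since $K_{(H,\varrho)}$ is manifestly a function of $a$.

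I expect the only mild obstacle to be bookkeeping: making sure the normalization of $F_l$ and the identity $F_l{}^+\ket{0_l}$ versus $\ket{0_l}$ are handled without sign or scalar errors, and confirming that $\{F_l{}^+\ket{z}\}_{z\in E[l]}$ really is an orthonormal basis (this is the unitarity of $F_l$, already noted after \ceqref{four1}), so that distinct $z-a$, $z-b$ give orthogonal, hence unequal, vectors. No genuinely new idea beyond the alternative expression \ceqref{whggen1} is required — the statement is essentially a corollary of Proposition \cref{prop:altexprkh} plus the injectivity of $a\mapsto X_l(a)$ recorded earlier in the excerpt.
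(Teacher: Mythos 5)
Your final argument is correct, but it takes a more computational route than the paper. The paper's proof is a two-line reduction: from the definition \ceqref{whggen0}, $K_{(H,\varrho)}(a)=D_{(H,\varrho)}X_l(a)D_{(H,\varrho)}{}^+$, and since conjugation by the unitary $D_{(H,\varrho)}$ is injective on operators, $K_{(H,\varrho)}(a)=K_{(H,\varrho)}(b)$ holds iff $X_l(a)=X_l(b)$, which by the fact recorded in subsect. \cref{subsec:qdpauli} holds iff $a=b$. You instead go through the alternative expression of prop. \cref{prop:altexprkh} and evaluate both operators on the basis $F_l{}^+\ket{z}$, reading off $a$ from the translation $z\mapsto z-a$. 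Both work; the paper's version is shorter and does not need prop. \cref{prop:altexprkh} at all, while yours is more self-contained in that it does not invoke the previously recorded injectivity of $a\mapsto X_l(a)$ but effectively reproves it.

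Two side remarks in your exploratory middle paragraphs are wrong, though harmless since you abandon that line: $F_l{}^+\ket{0_l}$ is \emph{not} $\ket{0_l}$ up to normalization — by \ceqref{four1} it is $q^{-l/2}\sum_{y\in E[l]}\ket{y}$ — and consequently $L_{(H,\varrho)}(a)\ket{0_l}$ is not in general a phase multiple of $\ket{0_l}$, since $L_{(H,\varrho)}(a)$ is diagonal in the basis $F_l{}^+\ket{x}$, not in the basis $\ket{x}$. You also have the paper's terminology reversed: $\ket{x}$ is the Hadamard basis and $F_l{}^+\ket{x}$ the computational basis. None of this affects the validity of your concluding argument.
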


\begin{proof}
By \ceqref{whggen0}, $K_{(H,\varrho)}(a)=K_{(H,\varrho)}(b)$ if and only if $X_l(a)=X_l(b)$.
In turn, as reviewed in subsect. \cref{subsec:qdpauli}, $X_l(a)=X_l(b)$ if and only if $a=b$.
The statement follows.   
\end{proof}

\noindent
The number of stabilizer group operators is therefore $q^l$. 

A basic property characterizing the stabilizer group operators of a calibrated hypergraph is their unitarity. 

\begin{prop} \label{prop:khuni}
Let $l\in\bbN$ and let $(H,\varrho)\in G_C[l]$ be a calibrated hypergraph. Then, for $a\in E[l]$
the operator $K_{(H,\varrho)}(a)$ is unitary, so that $K_{(H,\varrho)}(a)\in\msU(\scH_E[l])$. 
\end{prop}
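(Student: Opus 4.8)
The plan is to deduce the unitarity of $K_{(H,\varrho)}(a)$ directly from the defining formula \ceqref{whggen0} together with two facts already in hand: the operator $D_{(H,\varrho)}$ is unitary by prop.~\cref{prop:dhuni}, and the Galois qudit Pauli operator $X_l(a)$ is unitary, as recalled in subsect.~\cref{subsec:qdpauli}. Since the unitary operators on $\scH_E[l]$ form a group, hence are closed under composition and under taking adjoints, the operator $K_{(H,\varrho)}(a)=D_{(H,\varrho)}X_l(a)D_{(H,\varrho)}{}^+$ is a unitary conjugate of a unitary and so is itself unitary. Concretely, one checks $K_{(H,\varrho)}(a){}^+K_{(H,\varrho)}(a)=D_{(H,\varrho)}X_l(a){}^+D_{(H,\varrho)}{}^+D_{(H,\varrho)}X_l(a)D_{(H,\varrho)}{}^+=1_l$, using $D_{(H,\varrho)}{}^+D_{(H,\varrho)}=1_l$ from prop.~\cref{prop:dhuni} and $X_l(a){}^+X_l(a)=1_l$, and similarly for the product taken in the other order.

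An alternative route, which I would probably spell out instead because it mirrors the proof of prop.~\cref{prop:dhuni}, uses the factorized expression \ceqref{whggen1}, namely $K_{(H,\varrho)}(a)=X_l(a)L_{(H,\varrho)}(a)$ from prop.~\cref{prop:altexprkh}. Here $X_l(a)$ is unitary as before, while $L_{(H,\varrho)}(a)$ is unitary by inspection of \ceqref{whggen1/1}: it is the conjugation by the unitary $F_l$ of an operator which is diagonal in the Hadamard basis $\ket{x}$, $x\in E[l]$, with coefficients $\omega^{\sigma_{(H,\varrho)}(x-a)-\sigma_{(H,\varrho)}(x)}$ of unit absolute value — exactly the situation exploited in prop.~\cref{prop:dhuni}. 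The product of the two unitaries $X_l(a)$ and $L_{(H,\varrho)}(a)$ is then unitary, which gives the claim.

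There is essentially no obstacle here: the statement is a purely formal consequence of results already established. The only point worth a passing remark is that the exponents $\sigma_{(H,\varrho)}(x)$ and $\sigma_{(H,\varrho)}(x-a)$ lie in $\msP$ by def.~\cref{def:sigmahr}, so that the phases $\omega^{(\cdot)}$ appearing in \ceqref{whgsts10} and \ceqref{whggen1/1} are well defined and of modulus one; this is precisely what makes the diagonal factors unitary in the first place.
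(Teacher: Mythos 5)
Your first argument is exactly the paper's proof: unitarity of $K_{(H,\varrho)}(a)$ follows immediately from the defining relation \ceqref{whggen0} together with the unitarity of $X_l(a)$ and of $D_{(H,\varrho)}$ (prop.~\cref{prop:dhuni}). The alternative route via \ceqref{whggen1} is also valid but unnecessary; the paper stops at the one-line conjugation argument.
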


\begin{proof}
This property is an immediate consequence of relation \ceqref{whggen0} and the unitarity of
$X_l(a)$ (cf. subsect. \cref{subsec:qdpauli}) and $D_{(H,\varrho)}$ (cf. prop. \cref{prop:dhuni}). 
\end{proof}

\noindent
What is more, the stabilizer group operators obey a set of algebraic identities  
implying that they form an Abelian subgroup of the group of unitaries. 

\begin{prop}
Let $l\in\bbN$ and let $(H,\varrho)\in G_C[l]$ be a calibrated hypergraph. Then,
{\allowdisplaybreaks
\begin{align}
\label{whggen4}
K_{(H,\varrho)}(a)K_{(H,\varrho)}(b)&=K_{(H,\varrho)}(a+b),
\\
\label{whggen5}
K_{(H,\varrho)}(a)^{-1}&=K_{(H,\varrho)}(-a),
\\
\label{whggen6}
K_{(H,\varrho)}(0_l)&=1_l
\end{align}
}
\!\!for $a,b\in E[l]$. Hence, $\{K_{(H,\varrho)}(a)|a\in E[l]\}$ is an Abelian subgroup of $\msU(\scH_E[l])$.
\end{prop}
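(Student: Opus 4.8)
The plan is to read off all three identities from a single structural observation: by prop.\ \cref{prop:dhuni} the operator $D_{(H,\varrho)}$ is unitary, so conjugation $A\mapsto D_{(H,\varrho)}AD_{(H,\varrho)}{}^+$ is a group automorphism of $\msU(\scH_E[l])$, and since \ceqref{whggen0} presents $K_{(H,\varrho)}(a)$ as the image of $X_l(a)$ under this automorphism, the relations \ceqref{whggen4}--\ceqref{whggen6} are nothing but the transported versions of the corresponding Pauli relations \ceqref{qdpauli6}--\ceqref{qdpauli8}, which are already available. So essentially no new computation is needed; one just pushes the known homomorphism property of $a\mapsto X_l(a)$ through the conjugation, the only analytic input being the unitarity $D_{(H,\varrho)}{}^+D_{(H,\varrho)}=1_l$.

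Concretely I would proceed as follows. For \ceqref{whggen4}, insert \ceqref{whggen0} twice, collapse the inner $D_{(H,\varrho)}{}^+D_{(H,\varrho)}$ by unitarity, and use \ceqref{qdpauli6}:
\begin{align*}
K_{(H,\varrho)}(a)K_{(H,\varrho)}(b)
&=D_{(H,\varrho)}X_l(a)D_{(H,\varrho)}{}^+D_{(H,\varrho)}X_l(b)D_{(H,\varrho)}{}^+\\
&=D_{(H,\varrho)}X_l(a)X_l(b)D_{(H,\varrho)}{}^+
=D_{(H,\varrho)}X_l(a+b)D_{(H,\varrho)}{}^+=K_{(H,\varrho)}(a+b).
\end{align*}
For \ceqref{whggen6}, use \ceqref{qdpauli8}: $K_{(H,\varrho)}(0_l)=D_{(H,\varrho)}X_l(0_l)D_{(H,\varrho)}{}^+=D_{(H,\varrho)}\,1_l\,D_{(H,\varrho)}{}^+=1_l$. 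For \ceqref{whggen5}, either invoke \ceqref{qdpauli7} directly, $K_{(H,\varrho)}(a)^{-1}=(D_{(H,\varrho)}X_l(a)D_{(H,\varrho)}{}^+)^{-1}=D_{(H,\varrho)}X_l(a)^{-1}D_{(H,\varrho)}{}^+=D_{(H,\varrho)}X_l(-a)D_{(H,\varrho)}{}^+=K_{(H,\varrho)}(-a)$ (the operator being invertible by prop.\ \cref{prop:khuni}), or derive it as a corollary of the two previous identities since $K_{(H,\varrho)}(a)K_{(H,\varrho)}(-a)=K_{(H,\varrho)}(0_l)=1_l$.

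Finally, for the closing assertion I would note that \ceqref{whggen4}, \ceqref{whggen5}, \ceqref{whggen6} exhibit $\{K_{(H,\varrho)}(a)\mid a\in E[l]\}$ as a subset of $\msU(\scH_E[l])$ (unitarity by prop.\ \cref{prop:khuni}) that is closed under products and inverses and contains $1_l$, hence a subgroup; and it is Abelian because $(E[l],+)$ is an Abelian group, so $K_{(H,\varrho)}(a)K_{(H,\varrho)}(b)=K_{(H,\varrho)}(a+b)=K_{(H,\varrho)}(b+a)=K_{(H,\varrho)}(b)K_{(H,\varrho)}(a)$. I do not anticipate any genuine obstacle here; the one point to state carefully is the appeal to the unitarity of $D_{(H,\varrho)}$ that makes the inner $D_{(H,\varrho)}{}^+D_{(H,\varrho)}$ disappear, and the recognition that the entire group-theoretic content is inherited from the already established homomorphism $a\mapsto X_l(a)$ on the additive group $E[l]$.
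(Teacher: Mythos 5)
Your proof is correct and follows exactly the paper's route: the paper likewise observes that, by \ceqref{whggen0} and the unitarity of $D_{(H,\varrho)}$ from prop.\ \cref{prop:dhuni}, the identities \ceqref{whggen4}--\ceqref{whggen6} are immediate consequences of the Pauli relations \ceqref{qdpauli6}--\ceqref{qdpauli8} transported through the conjugation. Your write-up merely makes the intermediate cancellation of $D_{(H,\varrho)}{}^+D_{(H,\varrho)}$ and the commutativity of $(E[l],+)$ explicit, which the paper leaves implicit.
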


\begin{proof}
By virtue of relation \ceqref{whggen0} and the unitarity of $D_{(H,\varrho)}$ (cf. prop. \cref{prop:dhuni}),
the identities \ceqref{whggen4}--\ceqref{whggen6} are immediate consequences of relations 
\ceqref{qdpauli6}--\ceqref{qdpauli8}.  
\end{proof}

\begin{defi}
The group $\msS_{(H,\varrho)}=\{K_{(H,\varrho)}(a)\hfpt|\hfpt a\in E[l]\}\subset\msU(\scH_E[l])$
is the stabilizer group of $\ket{(H,\varrho)}$. 
\end{defi}

The stabilizer group operators cohere with the morphism structures of the calibrated hypergraph 
and multi qudit state $\varOmega$ monads $G_C\varOmega$ and $\scH_E\varOmega$
(cf. subsects. 3.2, 4.2 of I) in the manner formulated in the next proposition.

\begin{prop} \label{prop:catkhr1}
Let $l,m\in\bbN$ and let $f\in\Hom_\varOmega([l],[m])$ be a morphism. Let further
$(H,\varrho)\in G_C[l]$ be a calibrated hypergraph. Then, for all $a\in E[l]$ one has
\begin{equation}
\label{catkhr1}
\scH_EfK_{(H,\varrho)}(a)\scH_Ef^+=q^{l-m}\mycom{{}_\sss}{{}_{b\in E[m],Ef^t(b)=a}}K_{G_Cf(H,\varrho)}(b).
\end{equation}
\end{prop}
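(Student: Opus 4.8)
The plan is to strip the calibrated hypergraph operators off by conjugation, reducing \ceqref{catkhr1} to a purely Pauli-theoretic identity, and then to establish that identity by a twofold use of the character sum formula \ceqref{char}.

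First I would invoke \ceqref{whggen0} and the intertwining relation \ceqref{whgsts13}, $\scH_Ef D_{(H,\varrho)}=D_{G_Cf(H,\varrho)}\scH_Ef$, together with its Hermitian adjoint $D_{(H,\varrho)}{}^+\scH_Ef^+=\scH_Ef^+D_{G_Cf(H,\varrho)}{}^+$ (legitimate since $D_{(H,\varrho)}$ is unitary by prop.~\cref{prop:dhuni}), to obtain
\[
\scH_Ef K_{(H,\varrho)}(a)\scH_Ef^+=D_{G_Cf(H,\varrho)}\bigl(\scH_Ef X_l(a)\scH_Ef^+\bigr)D_{G_Cf(H,\varrho)}{}^+.
\]
Since, by the same token, $K_{G_Cf(H,\varrho)}(b)=D_{G_Cf(H,\varrho)}X_m(b)D_{G_Cf(H,\varrho)}{}^+$, the right hand side of \ceqref{catkhr1} equals $D_{G_Cf(H,\varrho)}\bigl(q^{l-m}\sum_{b\in E[m],\,Ef^t(b)=a}X_m(b)\bigr)D_{G_Cf(H,\varrho)}{}^+$. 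As $D_{G_Cf(H,\varrho)}$ is invertible, it therefore suffices to prove the reduced identity
\[
\scH_Ef X_l(a)\scH_Ef^+=q^{l-m}\sum_{b\in E[m],\,Ef^t(b)=a}X_m(b).
\]

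Next I would expand both sides in the Hadamard basis. Writing $\scH_Ef=\sum_{z\in E[l]}\ket{Ef(z)}\bra{z}$ (eq.~(4.3.1) of I) and using \ceqref{qdpauli2}, orthonormality collapses the left hand side to $\sum_{x\in E[l]}\ket{Ef(x)}\,\omega^{\langle a,x\rangle}\,\bra{Ef(x)}$; regrouping the sum over $x$ by the value $y=Ef(x)$ puts it in diagonal form $\sum_{y\in E[m]}\ket{y}\bigl(\sum_{x\in E[l],\,Ef(x)=y}\omega^{\langle a,x\rangle}\bigr)\bra{y}$, while the right hand side is patently $\sum_{y\in E[m]}\ket{y}\bigl(q^{l-m}\sum_{b\in E[m],\,Ef^t(b)=a}\omega^{\langle b,y\rangle}\bigr)\bra{y}$. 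Both operators being diagonal in $\{\ket{y}\}_{y\in E[m]}$, the identity reduces to the scalar equality, for each $y\in E[m]$,
\[
\sum_{x\in E[l],\,Ef(x)=y}\omega^{\langle a,x\rangle}=q^{l-m}\sum_{b\in E[m],\,Ef^t(b)=a}\omega^{\langle b,y\rangle}.
\]
I would prove this by resolving the constraint $Ef(x)=y$ through the character delta $\delta_{Ef(x),y}=q^{-m}\sum_{c\in E[m]}\omega^{\langle Ef(x)-y,c\rangle}$ furnished by \ceqref{char}, exchanging the $x$ and $c$ summations, and using the transposition law \ceqref{qdinner2} in the form $\langle Ef(x),c\rangle=\langle Ef^t(c),x\rangle$ to merge the phases into $\omega^{\langle Ef^t(c)+a,x\rangle}$. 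A second application of \ceqref{char} gives $\sum_{x\in E[l]}\omega^{\langle Ef^t(c)+a,x\rangle}=q^l\,\delta_{Ef^t(c),-a}$, leaving $q^{l-m}\sum_{c\in E[m],\,Ef^t(c)=-a}\omega^{-\langle y,c\rangle}$. The substitution $b=-c$ — valid because $Ef^t$ is $\msP$ linear and hence maps the fibre over $-a$ bijectively onto the fibre over $a$ — together with the bilinearity and symmetry of the trace pairing converts this into $q^{l-m}\sum_{b\in E[m],\,Ef^t(b)=a}\omega^{\langle b,y\rangle}$, as required.

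The only subtle point, and the one I would verify with care, is the sign bookkeeping in the double character sum: the first use of \ceqref{char} forces the constraint $Ef^t(c)=-a$ rather than $Ef^t(b)=a$, and one must check that the reflection $b=-c$ simultaneously repairs both the index set of the sum and the phase, $\omega^{-\langle y,c\rangle}=\omega^{\langle b,y\rangle}$; this works out precisely because $Ef^t$ is a module morphism, so negation permutes its fibres, and $\langle\cdot,\cdot\rangle$ is bilinear and symmetric. The degenerate cases $l=0$ or $m=0$ need no separate argument, since the conventions $q^0=1$, empty sums equal to zero, and $F_0=1_0$ make all the formulas above reduce consistently.
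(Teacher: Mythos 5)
Your proposal is correct and follows essentially the same route as the paper: the intertwining relation \ceqref{whgsts13} (and its adjoint) strips off the calibrated hypergraph operators, reducing \ceqref{catkhr1} to the Pauli identity $\scH_EfX_l(a)\scH_Ef^+=q^{l-m}\sum_{b\in E[m],\,Ef^t(b)=a}X_m(b)$, which is exactly the lemma the paper isolates and proves by the character formula \ceqref{char} combined with the transposition law. The only difference is a sign convention in how you resolve the Kronecker delta into characters (the paper writes $\delta_{y,Ef(x)}=q^{-m}\sum_b\omega^{\langle y-Ef(x),b\rangle}$, which lands directly on the constraint $Ef^t(b)=a$ and spares the reflection $b=-c$), and you handle that bookkeeping correctly.
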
  

\begin{proof} 
The proof leverages the following lemma. 

\begin{lemma}
Let $l,m\in\bbN$, $f\in\Hom_\varOmega([l],[m])$, Then, 
\begin{equation}
\label{catkhr2}
\scH_EfX_l(a)\scH_Ef^+=q^{l-m}\mycom{{}_\sss}{{}_{b\in E[m],Ef^t(b)=a}}X_m(b)
\end{equation}
for $a\in E[l]$
\end{lemma}

\begin{proof}
From (4.3.1) of I and \ceqref{qdpauli2}, 
{\allowdisplaybreaks
\begin{align}
\label{catkhr2p1}
\scH_EfX_l(a)\scH_Ef^+&=\mycom{{}_\sss}{{}_{x,y,z\in E[l]}}\ket{Ef(x)}\braket{x}{z}\,\omega^{\langle a,z\rangle}
\hfpt\braket{z}{y}\bra{Ef(y)}
\\
\nonumber
&=\mycom{{}_\sss}{{}_{x,y,z\in E[l]}}\ket{Ef(x)}\delta_{x,z}\,\omega^{\langle a,z\rangle}\hfpt\delta_{z,y}\bra{Ef(y)}
\\
\nonumber
&=\mycom{{}_\sss}{{}_{x\in E[l]}}\ket{Ef(x)}\,\omega^{\langle a,x\rangle}\hfpt\bra{Ef(x)}
=\mycom{{}_\sss}{{}_{x\in E[l],y\in E[m]}}\ket{y}\,\omega^{\langle a,x\rangle}\delta_{y,Ef(x)}\hfpt\bra{y}.
\end{align}
}
\!\!Using the character formula \ceqref{char}, we can recast the sum over $x\in E[l]$ as follows, 
{\allowdisplaybreaks
\begin{align}
\label{catkhr2p2}
\mycom{{}_\sss}{{}_{x\in E[l]}}\omega^{\langle a,x\rangle}\delta_{y,Ef(x)}
&=q^{-m}\mycom{{}_\sss}{{}_{x\in E[l],b\in E[m]}}\omega^{\langle a,x\rangle}\omega^{\langle y-Ef(x),b\rangle}
\\
\nonumber
&=q^{-m}\mycom{{}_\sss}{{}_{x\in E[l],b\in E[m]}}\omega^{\langle a-Ef^t(b),x\rangle}\omega^{\langle b,y\rangle}
=q^{l-m}\mycom{{}_\sss}{{}_{b\in E[m]}}\omega^{\langle b,y\rangle}\delta_{a,Ef^t(b)}.
\end{align}
}
\!\!Inserting \ceqref{catkhr2p2} into \ceqref{catkhr2p1}, we obtain 
{\allowdisplaybreaks
\begin{align}
\label{catkhr2p3}
\scH_EfX_l(a)\scH_Ef^+&=\mycom{{}_\sss}{{}_{y,b\in E[m]}}\ket{y}\,q^{l-m}\,\omega^{\langle b,y\rangle}
\hfpt\delta_{a,Ef^t(b)}\hfpt\bra{y}
\\
\nonumber
&=q^{l-m}\mycom{{}_\sss}{{}_{b\in E[m]}}\delta_{a,Ef^t(b)}X_m(b)=q^{l-m}\mycom{{}_\sss}{{}_{b\in E[m],Ef^t(b)=a}}X_m(b),
\end{align}
}
\!\!proving formula \ceqref{catkhr2}. 
\end{proof}

\noindent
We can now show \ceqref{catkhr1}. By \ceqref{whgsts13} and \ceqref{catkhr2}, 
{\allowdisplaybreaks
\begin{align}
\label{}
\scH_EfK_{(H,\varrho)}(a)\scH_Ef^+&=\scH_EfD_{(H,\varrho)}X_l(a)D_{(H,\varrho)}{}^+\scH_Ef^+
\\
\nonumber
&=D_{G_Cf(H,\varrho)}\scH_EfX_l(a)\scH_Ef^+D_{G_Cf(H,\varrho)}{}^+
\\
\nonumber
&=q^{l-m}\mycom{{}_\sss}{{}_{b\in E[m],Ef^t(b)=a}}D_{G_Cf(H,\varrho)}X_m(b)D_{G_Cf(H,\varrho)}{}^+
\\
\nonumber
&\hspace{4.5cm}=q^{l-m}\mycom{{}_\sss}{{}_{b\in E[m],Ef^t(b)=a}}K_{G_Cf(H,\varrho)}(b),
\end{align}
}
\!\!which is \ceqref{catkhr1}. 
\end{proof}

\noindent 
When $l=m$ and $f$ is invertible, relation \ceqref{catkhr1} takes a simpler form, 
\begin{equation}
\label{catkhr3}
\scH_EfK_{(H,\varrho)}(a)\scH_Ef^+=K_{G_Cf(H,\varrho)}(Ef(a)), 
\end{equation}
as by (4.1.2) of I, \ceqref{qdinner3} $Ef^t(b)=a$ if and only if $Ef(a)=b$ for $a\in E[l]$, $b\in E[m]$. \pagebreak 


The stabilizer group operators are compatible also with the monadic multiplicative structure of the 
$\varOmega$ monad $G_C\varOmega$ (cf. subsect. 3.2 of I). 

\begin{prop}
Let $l,m\in\bbN$ and let $(H,\varrho)\in G_C[l]$, $(K,\varsigma)\in G_C[m]$ be calibrated hypergraphs. Then,
one has \hphantom{xxxxxxxxxxxxx}  
\begin{equation}
\label{catkhr4}
K_{(H,\varrho)\smallsmile(K,\varsigma)}(a\smallsmile b)=K_{(H,\varrho)}(a)\otimes K_{(K,\varsigma)}(b).
\end{equation}
Moreover, for the special calibrated hypergraph $(O,\varepsilon)\in G_C[0]$,
it holds that 
\begin{equation}
\label{catkhr5}
K_{(O,\varepsilon)}(0)=1_0. 
\end{equation}
\end{prop}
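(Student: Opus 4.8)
The plan is to obtain both identities directly from the definition \eqref{whggen0} of the stabilizer group operators, using only the multiplicativity properties of the ingredients already established, with no fresh computation. Concretely, I would combine three facts: relation \eqref{dhjntdh1} of Proposition \ref{prop:jntdhr}, which gives $D_{(H,\varrho)\smallsmile(K,\varsigma)}=D_{(H,\varrho)}\otimes D_{(K,\varsigma)}$; relation \eqref{qdpauli15}, which gives $X_{l+m}(a\smallsmile b)=X_l(a)\otimes X_m(b)$; and the elementary facts that the adjoint and the composition on $\End_{\bfsfH}(\scH_E[\cdot])$ are compatible with tensor products, i.e. $(A\otimes B)^+=A^+\otimes B^+$ and $(A\otimes B)(C\otimes D)=(AC)\otimes(BD)$ for $A,C\in\End_{\bfsfH}(\scH_E[l])$ and $B,D\in\End_{\bfsfH}(\scH_E[m])$.

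Substituting these into \eqref{whggen0} yields the chain
\begin{align*}
K_{(H,\varrho)\smallsmile(K,\varsigma)}(a\smallsmile b)
&=D_{(H,\varrho)\smallsmile(K,\varsigma)}\,X_{l+m}(a\smallsmile b)\,D_{(H,\varrho)\smallsmile(K,\varsigma)}{}^+
\\
&=\big(D_{(H,\varrho)}\otimes D_{(K,\varsigma)}\big)\big(X_l(a)\otimes X_m(b)\big)\big(D_{(H,\varrho)}{}^+\otimes D_{(K,\varsigma)}{}^+\big)
\\
&=\big(D_{(H,\varrho)}X_l(a)D_{(H,\varrho)}{}^+\big)\otimes\big(D_{(K,\varsigma)}X_m(b)D_{(K,\varsigma)}{}^+\big)
=K_{(H,\varrho)}(a)\otimes K_{(K,\varsigma)}(b),
\end{align*}
which is \eqref{catkhr4}. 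The only point needing a line of comment is that the $\smallsmile$ appearing on the argument $a\smallsmile b\in E[l+m]$ is the monadic multiplication of the multi qudit configuration $\varOmega$ monad $E\varOmega$, and that \eqref{qdpauli15} is precisely the statement matching it to the tensor factorization of Pauli operators, so everything lines up.

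For the second identity, I would note that $(O,\varepsilon)\in G_C[0]$ and that $E[0]$ is the trivial module, so $0$ is the only admissible argument. By \eqref{dhjntdh2}, $D_{(O,\varepsilon)}=1_0$, and by \eqref{qdpauli8} (with the convention $0_0=0$), $X_0(0)=1_0$; hence \eqref{whggen0} gives $K_{(O,\varepsilon)}(0)=1_0\,1_0\,1_0=1_0$, which is \eqref{catkhr5}. There is essentially no obstacle here: the argument is a purely formal manipulation of tensor products, and if one wished one could additionally observe that \eqref{catkhr4}--\eqref{catkhr5} are exactly the $\varOmega$ monad compatibility relations for the family of stabilizer assignments, paralleling Proposition \ref{prop:whsmapmor}, though that reading is not needed for the proof.
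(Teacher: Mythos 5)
Your argument is correct and coincides with the paper's own proof: both derive \eqref{catkhr4} by substituting the factorizations \eqref{dhjntdh1} and \eqref{qdpauli15} into the definition \eqref{whggen0} and regrouping tensor factors, and both obtain \eqref{catkhr5} from \eqref{dhjntdh2} together with $X_0(0)=1_0$. No gaps.
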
  

\begin{proof}
From \ceqref{whggen0}, owing to \ceqref{qdpauli15} and \ceqref{dhjntdh1}, we find
{\allowdisplaybreaks
\begin{align}
\label{}
K_{(H,\varrho)\smallsmile(K,\varsigma)}(a\smallsmile b)
&=D_{(H,\varrho)\smallsmile(K,\varsigma)}X_{l+m}(a\smallsmile b)D_{(H,\varrho)\smallsmile(K,\varsigma)}{}^+
\\
\nonumber
&=D_{(H,\varrho)}\otimes D_{(K,\varsigma)}\,X_l(a)\otimes X_m(b)\,D_{(H,\varrho)}{}^+\otimes D_{(K,\varsigma)}{}^+
\\
\nonumber
&=D_{(H,\varrho)}X_l(a)D_{(H,\varrho)}{}^+\otimes D_{(K,\varsigma)}X_m(b)D_{(K,\varsigma)}{}^+
\\
\nonumber
&\hspace{6.5cm}=K_{(H,\varrho)}(a)\otimes K_{(K,\varsigma)}(b).
\end{align}
}
\!\!Relation \ceqref{catkhr4} is so proven. 

From \ceqref{whggen0}, using \ceqref{dhjntdh2} and recalling that $X_0(0)=1_0$, we find
\begin{equation}
\label{}
K_{(O,\varepsilon)}(0)=D_{(O,\varepsilon)}X_0(0)D_{(O,\varepsilon)}{}^+=1_0,
\end{equation}
showing also \ceqref{catkhr5}. 
\end{proof}



The stabilizer group of a given calibrated hypergraph state owes its name to its elements 
leaving invariant that state. 

\begin{prop}
Let $l\in\bbN$ and let $(H,\varrho)\in G_C[l]$ be a calibrated hypergraph. Then, 
\begin{equation}
\label{whggen3}
K_{(H,\varrho)}(a)\ket{(H,\varrho)}=\ket{(H,\varrho)} 
\end{equation}
for $a\in E[l]$.
\end{prop}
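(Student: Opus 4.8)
The plan is to reduce the claim to the single elementary fact that the Pauli operator $X_l(a)$ fixes the Hadamard basis vector $\ket{0_l}$, everything else being a formal manipulation of the conjugate definition of $K_{(H,\varrho)}(a)$. First I would substitute the defining formula \ceqref{whggen0} for the stabilizer group operator and the defining formula \ceqref{whgsts14} for the calibrated hypergraph state into the left hand side of \ceqref{whggen3}, obtaining
\[
K_{(H,\varrho)}(a)\ket{(H,\varrho)}=D_{(H,\varrho)}X_l(a)D_{(H,\varrho)}{}^+D_{(H,\varrho)}\ket{0_l}.
\]
Since $D_{(H,\varrho)}$ is unitary by prop. \cref{prop:dhuni}, the product $D_{(H,\varrho)}{}^+D_{(H,\varrho)}$ collapses to $1_l$, so the right hand side simplifies to $D_{(H,\varrho)}X_l(a)\ket{0_l}$.

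Next I would compute $X_l(a)\ket{0_l}$ directly from the explicit expression \ceqref{qdpauli2}. The key observation, recalled in subsect. \cref{subsec:qdpauli}, is that $\ket{0_l}$ belongs to the qudit Hadamard basis $\{\ket{x}\}_{x\in E[l]}$, which is precisely the basis in which $X_l(a)$ is diagonal; hence, using orthonormality of that basis, $X_l(a)\ket{0_l}=\omega^{\langle a,0_l\rangle}\ket{0_l}$. Since $\langle a,0_l\rangle=0$ — the trace pairing \ceqref{qdinner1} being a sum each of whose summands is a trace of a product involving the zero element of $\msR$ — one has $\omega^{\langle a,0_l\rangle}=\omega^0=1$, whence $X_l(a)\ket{0_l}=\ket{0_l}$.

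Substituting this back yields $K_{(H,\varrho)}(a)\ket{(H,\varrho)}=D_{(H,\varrho)}\ket{0_l}=\ket{(H,\varrho)}$ by \ceqref{whgsts14}, which is exactly \ceqref{whggen3}. I do not anticipate any obstacle: the whole argument is essentially two lines, resting only on the conjugation form of $K_{(H,\varrho)}(a)$, the unitarity of the calibrated hypergraph operator, and the deliberate choice of the reference ket $\ket{0_l}$ as a member of the Hadamard basis diagonalizing the $X$-type Pauli operators. As an alternative one could instead invoke the group law \ceqref{whggen4}--\ceqref{whggen6} to reduce the statement to a generating set of $E[l]$, but this detour is unnecessary.
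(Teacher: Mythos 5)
Your proof is correct and follows essentially the same route as the paper: both arguments rest on the conjugation form of $K_{(H,\varrho)}(a)$, the unitarity of $D_{(H,\varrho)}$, and the elementary identity $X_l(a)\ket{0_l}=\ket{0_l}\,\omega^{\langle a,0_l\rangle}=\ket{0_l}$. The only difference is cosmetic — the paper starts from $\ket{(H,\varrho)}$ and inserts $X_l(a)$, whereas you start from $K_{(H,\varrho)}(a)\ket{(H,\varrho)}$ and simplify — so the two computations are the same chain of equalities read in opposite directions.
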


\begin{proof}
We begin by noting that by \ceqref{qdpauli2} $X_l(a)\ket{0_l}=\ket{0_l}\,\omega^{\langle a,0_l\rangle}=\ket{0_l}$. 
Using \ceqref{whgsts14}, \ceqref{whggen0} and the relation just recalled, we find 
{\allowdisplaybreaks
\begin{align}
\label{}
\ket{(H,\varrho)}&=D _{(H,\varrho)}\ket{0_l}
\\
\nonumber
&=D _{(H,\varrho)}X_l(a)\ket{0_l}
\\
\nonumber
&=K_{(H,\varrho)}(a)D_{(H,\varrho)}\ket{0_l}=K_{(H,\varrho)}(a)\ket{(H,\varrho)},
\end{align}
}
\!\!showing \ceqref{whggen3}.
\end{proof}

\noindent
As it contains $q^l=\dim\scH_E[l]$ elements, the stabilizer group of a hypergraph state
not only characterizes this latter but also determines  it uniquely up to a phase factor.

With each calibrated hypergraph there is associated a full $q^l$ element family 
of states containing the  associated hypergraph state. 

\begin{defi}
Let $l\in\bbN$. The calibrated hypergraph state basis associated with the calibrated hypergraph
$(H,\varrho)\in G_C[l]$ consists of the kets 
\begin{equation}
\label{whgstab1}
\ket{(H,\varrho),a}=D_{(H,\varrho)}\ket{a}
\end{equation}
with $a\in E[l]$. 
\end{defi}

\noindent
The name given to such a collection of states reflects indeed their being a distinguished orthonormal basis
of the ambient state Hilbert space. 

\begin{prop} \label{prop:hrkon}
Let $l\in\bbN$ and let $(H,\varrho)\in G_C[l]$ be a calibrated hypergraph. Then, the kets $\ket{(H,\varrho),a}$, $a\in E[l]$,
constitute an orthonormal basis of $\scH_E[l]$.
\end{prop}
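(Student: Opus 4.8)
\emph{Proof proposal.} The plan is to reduce the statement entirely to the unitarity of the calibrated hypergraph operator $D_{(H,\varrho)}$ proved in prop. \cref{prop:dhuni}. The key remark is that the kets $\ket{a}$, $a\in E[l]$, appearing in \ceqref{whgstab1} are, by construction, the vectors of the qudit Hadamard basis of $\scH_E[l]$, hence already form an orthonormal basis consisting of $q^l=\dim\scH_E[l]$ elements.

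First I would record that, since $D_{(H,\varrho)}\in\msU(\scH_E[l])$ by prop. \cref{prop:dhuni}, one has $D_{(H,\varrho)}{}^+D_{(H,\varrho)}=1_l$. Combining this with the orthonormality relation $\braket{a}{b}=\delta_{a,b}$ of the Hadamard basis, a one line computation gives
\begin{equation*}
\braket{(H,\varrho),a}{(H,\varrho),b}=\bra{a}D_{(H,\varrho)}{}^+D_{(H,\varrho)}\ket{b}=\braket{a}{b}=\delta_{a,b}
\end{equation*}
for all $a,b\in E[l]$. Thus the kets $\ket{(H,\varrho),a}$ form an orthonormal system; since there are $q^l$ of them and $\dim\scH_E[l]=q^l$, the system is automatically complete, i.e. it is an orthonormal basis. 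Equivalently, one may simply invoke the general fact that a unitary operator carries an orthonormal basis to an orthonormal basis and apply it to $D_{(H,\varrho)}$ and the Hadamard basis $\{\ket{a}\}_{a\in E[l]}$.

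There is no genuine obstacle in this argument: all the substance sits in prop. \cref{prop:dhuni}. The only point deserving a word of care is to make explicit that $a$ ranges over a full orthonormal basis of the correct cardinality, so that the image under the invertible operator $D_{(H,\varrho)}$ is a basis rather than merely an orthonormal family; this is immediate once one notes $|E[l]|=q^l$.
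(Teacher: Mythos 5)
Your proposal is correct and follows exactly the paper's own argument: prop. \cref{prop:dhuni} gives unitarity of $D_{(H,\varrho)}$, and a unitary operator carries the orthonormal basis $\ket{a}$, $a\in E[l]$, to an orthonormal basis. The explicit computation of $\braket{(H,\varrho),a}{(H,\varrho),b}$ and the cardinality count are just a slightly more detailed rendering of the same one-line proof given in the paper.
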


\begin{proof}
By prop. \cref{prop:dhuni}, $D_{(H,\varrho)}$ is a unitary operator in $\scH_E[l]$.
Therefore, given that the kets $\ket{a}$, $a\in E[l]$, constitute an orthonormal basis of $\scH_E[l]$, the kets
$\ket{(H,\varrho),a}$, $a\in E[l]$, also do. 
\end{proof}


The hypergraph state $\ket{(H,\varrho)}$ is a kind of ground state generating all the
basis states $\ket{(H,\varrho),a}$ under the action of Pauli operators $Z_l(a)$ operating as
creation operators (cf. eq. \ceqref{qdpauli1}). In this perspective, the stabilizer group operators $K_{(H,\varrho)}(a)$
are to be regarded as a kind of discrete annihilation operators. 

\begin{prop}
Let $l\in\bbN$ and let $(H,\varrho)\in G_C[l]$ be a calibrated hypergraph. Then, 
\begin{equation}
\label{whgstab2}
\ket{(H,\varrho),a}=Z_l(a)\ket{(H,\varrho)}
\end{equation}
for $a\in E[l]$. 
\end{prop}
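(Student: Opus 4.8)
The plan is to reduce the claim to two elementary observations about the operators involved and then chain them together. First I would record that the ket $\ket{a}$ is obtained from the distinguished ket $\ket{0_l}$ by the action of $Z_l(a)$: since the vectors $\ket{x}$, $x\in E[l]$, form the qudit Hadamard orthonormal basis, the defining expression \ceqref{qdpauli1} gives directly $Z_l(a)\ket{0_l}=\ket{0_l+a}=\ket{a}$.

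Second, I would argue that $D_{(H,\varrho)}$ and $Z_l(a)$ commute, the quickest route being to note that both operators are diagonal in the \emph{same} orthonormal basis, namely the qudit computational basis $\{F_l{}^+\ket{x}\mid x\in E[l]\}$. For $D_{(H,\varrho)}$ this is manifest from \ceqref{whgsts10}, which yields $D_{(H,\varrho)}F_l{}^+\ket{x}=\omega^{\sigma_{(H,\varrho)}(x)}F_l{}^+\ket{x}$; for $Z_l(a)$ it was already recalled in subsect.\ \cref{subsec:qdpauli} (as a consequence of \ceqref{four4}), and a one-line computation using \ceqref{qdpauli1} and \ceqref{four1} confirms $Z_l(a)F_l{}^+\ket{x}=\omega^{\langle a,x\rangle}F_l{}^+\ket{x}$. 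Two operators simultaneously diagonal in a common basis commute, so $D_{(H,\varrho)}Z_l(a)=Z_l(a)D_{(H,\varrho)}$.

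Finally I would assemble the pieces: starting from \ceqref{whgstab1}, substitute $\ket{a}=Z_l(a)\ket{0_l}$, move $Z_l(a)$ past $D_{(H,\varrho)}$ using the commutation just established, and recognise $D_{(H,\varrho)}\ket{0_l}=\ket{(H,\varrho)}$ by \ceqref{whgsts14}; this gives $\ket{(H,\varrho),a}=D_{(H,\varrho)}\ket{a}=D_{(H,\varrho)}Z_l(a)\ket{0_l}=Z_l(a)D_{(H,\varrho)}\ket{0_l}=Z_l(a)\ket{(H,\varrho)}$, which is \ceqref{whgstab2}.

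There is essentially no serious obstacle here; the only point requiring care is the bookkeeping of the two bases in play — the Hadamard basis $\ket{x}$, in which $Z_l(a)$ acts as a shift, versus the computational basis $F_l{}^+\ket{x}$, in which it acts diagonally — so that the commutation step is invoked against the correct basis. As an alternative one can avoid mentioning commutation altogether and instead verify $D_{(H,\varrho)}{}^+Z_l(a)D_{(H,\varrho)}=Z_l(a)$ by the same diagonality argument, then compute $Z_l(a)\ket{(H,\varrho)}=Z_l(a)D_{(H,\varrho)}\ket{0_l}=D_{(H,\varrho)}Z_l(a)\ket{0_l}=D_{(H,\varrho)}\ket{a}=\ket{(H,\varrho),a}$; the two arguments are equivalent.
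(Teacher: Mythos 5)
Your argument is correct and is essentially the paper's own proof: the paper likewise establishes that $Z_l(a)$ and $D_{(H,\varrho)}$ are both diagonal in the basis $F_l{}^+\ket{x}$ (hence commute), notes $Z_l(a)\ket{0_l}=\ket{a}$ from \eqref{qdpauli1}, and chains these with \eqref{whgstab1} and \eqref{whgsts14} exactly as you do. The only cosmetic difference is that the paper cites \eqref{four3} rather than \eqref{four4} for the diagonality of $Z_l(a)$ in the computational basis.
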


\begin{proof}
The kets $F_l{}^+\ket{x}$ with $x\in E[l]$ constitute an orthonormal basis of $\scH_E[l]$
by the unitarity of $F_l$. By \ceqref{qdpauli2}, \ceqref{four3} and \ceqref{whgsts10}, the operators
$Z_l(a)$ with $a\in E[l]$ and $D_{(H,\varrho)}$ are diagonal in such a basis.
Therefore, they commute 
\begin{equation}
\label{whgstab2p1}
Z_l(a)D_{(H,\varrho)}=D_{(H,\varrho)}Z_l(a).
\end{equation}  
We observe next that, on account of \ceqref{qdpauli1}, $Z_l(a)\ket{0_l}=\ket{0_l+a}=\ket{a}$. 
From this identity, recalling further \ceqref{whgsts14}, we find
{\allowdisplaybreaks
\begin{align}
\label{whgstab2p3}
Z_l(a)\ket{(H,\varrho)}&=Z_l(a)D_{(H,\varrho)}\ket{0_l}
\\
\nonumber
&=D_{(H,\varrho)}Z_l(a)\ket{0_l}
\\
\nonumber
&=D_{(H,\varrho)}\ket{a}=\ket{(H,\varrho),a}, 
\end{align}
}
\!\!as stated in \ceqref{whgstab2}. 
\end{proof}

The basis $\ket{(H,\varrho),a}$ consists of common eigenkets of the stabilizer group operators
$K_{(H,\varrho)}(a)$, allowing so for the simultaneous diagonalization of these latter. This property is expected 
from their commutativity. 

\begin{prop}
Let $l\in\bbN$ and let $(H,\varrho)\in G_C[l]$ be a calibrated hypergraph. Then, 
\begin{equation}
\label{whgstab3}
K_{(H,\varrho)}(a)\ket{(H,\varrho),b}=\ket{(H,\varrho),b}\,\omega^{\langle a,b\rangle}
\end{equation}
for $a,b\in E[l]$. Consequently, 
\begin{equation}
\label{whgstab4}
K_{(H,\varrho)}(a)=\mycom{{}_\sss}{{}_{b\in E[l]}}\ket{(H,\varrho),b}\,\omega^{\langle a,b\rangle}\hfpt\bra{(H,\varrho),b}.
\end{equation}
\end{prop}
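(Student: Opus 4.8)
The plan is to prove the eigenvalue relation \ceqref{whgstab3} directly from the definitions and then read off the spectral decomposition \ceqref{whgstab4} as an immediate corollary. First I would recall the two ingredients already in hand: the stabilizer operator is $K_{(H,\varrho)}(a)=D_{(H,\varrho)}X_l(a)D_{(H,\varrho)}{}^+$ by \ceqref{whggen0}, and the basis ket is $\ket{(H,\varrho),b}=D_{(H,\varrho)}\ket{b}$ by \ceqref{whgstab1}. Substituting these,
\begin{equation}
\label{whgstab3prf1}
K_{(H,\varrho)}(a)\ket{(H,\varrho),b}=D_{(H,\varrho)}X_l(a)D_{(H,\varrho)}{}^+D_{(H,\varrho)}\ket{b}=D_{(H,\varrho)}X_l(a)\ket{b},
\end{equation}
where the middle step uses the unitarity $D_{(H,\varrho)}{}^+D_{(H,\varrho)}=1_l$ from prop. \cref{prop:dhuni}.

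The key remaining computation is the action of $X_l(a)$ on the Hadamard basis ket $\ket{b}$. This is immediate from the defining formula \ceqref{qdpauli2}, $X_l(a)=\sum_{x\in E[l]}\ket{x}\,\omega^{\langle a,x\rangle}\hfpt\bra{x}$, which gives $X_l(a)\ket{b}=\ket{b}\,\omega^{\langle a,b\rangle}$ by orthonormality of the Hadamard basis. Plugging this back into \ceqref{whgstab3prf1} yields
\begin{equation}
\label{whgstab3prf2}
K_{(H,\varrho)}(a)\ket{(H,\varrho),b}=D_{(H,\varrho)}\ket{b}\,\omega^{\langle a,b\rangle}=\ket{(H,\varrho),b}\,\omega^{\langle a,b\rangle},
\end{equation}
which is \ceqref{whgstab3}.

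For \ceqref{whgstab4}, since the kets $\ket{(H,\varrho),b}$, $b\in E[l]$, form an orthonormal basis of $\scH_E[l]$ by prop. \cref{prop:hrkon}, any operator on $\scH_E[l]$ equals the sum of its matrix elements in this basis; as \ceqref{whgstab3} shows that $K_{(H,\varrho)}(a)$ is diagonal in it with eigenvalue $\omega^{\langle a,b\rangle}$ on $\ket{(H,\varrho),b}$, the spectral expansion $K_{(H,\varrho)}(a)=\sum_{b\in E[l]}\ket{(H,\varrho),b}\,\omega^{\langle a,b\rangle}\hfpt\bra{(H,\varrho),b}$ follows at once.

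There is essentially no obstacle here: the statement is a routine consequence of conjugation invariance and of $X_l(a)$ being diagonal in the Hadamard basis, both of which are already established. The only point requiring a word of care is the order of the $D_{(H,\varrho)}$'s and the cancellation $D^+D=1_l$, together with the fact that $\ket{(H,\varrho),b}$ genuinely is a basis (so that diagonality determines the operator), but both facts are furnished by earlier propositions.
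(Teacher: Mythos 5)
Your proof is correct, but it takes a genuinely different and more direct route than the paper's. You conjugate immediately: writing $K_{(H,\varrho)}(a)\ket{(H,\varrho),b}=D_{(H,\varrho)}X_l(a)D_{(H,\varrho)}{}^+D_{(H,\varrho)}\ket{b}$, cancelling $D_{(H,\varrho)}{}^+D_{(H,\varrho)}=1_l$ by prop. \cref{prop:dhuni}, and using the fact that $X_l(a)$ is diagonal in the Hadamard basis by \ceqref{qdpauli2}, so the whole argument rests only on \ceqref{whggen0}, \ceqref{whgstab1}, unitarity and the definition of $X_l(a)$. The paper instead first establishes the intertwining relation $K_{(H,\varrho)}(a)Z_l(b)=\omega^{\langle a,b\rangle}Z_l(b)K_{(H,\varrho)}(a)$ — obtained from the factorization $K_{(H,\varrho)}(a)=X_l(a)L_{(H,\varrho)}(a)$ of prop. \cref{prop:altexprkh}, the Weyl relation \ceqref{qdpauli9} and the commutativity of $L_{(H,\varrho)}(a)$ with $Z_l(b)$ — and then combines it with the stabilizer property \ceqref{whggen3} and the creation-operator formula \ceqref{whgstab2}. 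Your argument is shorter and more elementary; the paper's buys an explicit Weyl-type commutation relation between the stabilizer operators and the $Z_l(b)$, which reinforces its interpretation of $\ket{(H,\varrho),b}$ as states created from the ground state $\ket{(H,\varrho)}$ and of $K_{(H,\varrho)}(a)$ as annihilation-type operators. The deduction of \ceqref{whgstab4} from orthonormality (prop. \cref{prop:hrkon}) and the spectral theorem is the same in both.
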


\begin{proof}
As already noticed, the kets $F_l{}^+\ket{x}$ with $x\in E[l]$ constitute an orthonormal basis of $\scH_E[l]$.
By \ceqref{qdpauli2}, \ceqref{four3} and \ceqref{whggen1/1},  the operators $L_{(H,\varrho)}(a)$, $a\in E[l]$,
and $Z_l(b)$, $b\in E[l]$, are diagonal in such a basis. Therefore, they commute 
\begin{equation}
\label{whgstab3p1}
L_{(H,\varrho)}(a)Z_l(b)=Z_l(b)L_{(H,\varrho)}(a).
\end{equation}
From \ceqref{whggen1}, using \ceqref{qdpauli9} and \ceqref{whgstab3p1}, it follows that 
{\allowdisplaybreaks
\begin{align}
\label{whgstab3p3}
K_{(H,\varrho)}(a)Z_l(b)&=X_l(a)L_{(H,\varrho)}(a)Z_l(b)
\\
\nonumber
&=\omega^{\langle a,b\rangle}Z_l(b)X_l(a)L_{(H,\varrho)}(a)=\omega^{\langle a,b\rangle}Z_l(b)K_{(H,\varrho)}(a).
\end{align}
}
\!\! Using \ceqref{whggen3}, \ceqref{whgstab2} and \ceqref{whgstab3p3}, we obtain then
{\allowdisplaybreaks
\begin{align}
\label{whgstab3p4}
K_{(H,\varrho)}(a)\ket{(H,\varrho),b}&=K_{(H,\varrho)}(a)Z_l(b)\ket{(H,\varrho)}
\\
\nonumber
&=Z_l(b)K_{(H,\varrho)}(a)\ket{(H,\varrho)}\,\omega^{\langle a,b\rangle}
\\
\nonumber
&=Z_l(b)\ket{(H,\varrho)}\,\omega^{\langle a,b\rangle}=\ket{(H,\varrho),b}\,\omega^{\langle a,b\rangle},
\end{align}
}
\!\! 
showing \ceqref{whgstab3}. \ceqref{whgstab4} follows readily from the spectral theorem.
\end{proof}


The orthonormal basis states of a calibrated hypergraph function covariantly under the joint morphism actions of the calibrated
hypergraph and multi qudit state $\varOmega$ monads $G_C\varOmega$ and $\scH_E\varOmega$.

\begin{prop}
Let $l,m\in\bbN$ and let $f\in\Hom_\varOmega([l],[m])$ be a morphism. Moreover, let $(H,\varrho)\in G_C[l]$ be a calibrated hypergraph.
Then, 
\begin{equation}
\label{covkhr1}
\scH_Ef\ket{(H,\varrho),a}=\ket{G_Cf(H,\varrho),Ef(a)}
\end{equation}
for $a\in E[l]$.
\end{prop}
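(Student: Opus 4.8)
The plan is to reduce \ceqref{covkhr1} to two ingredients already in hand: the definition \ceqref{whgstab1} of the calibrated hypergraph state basis, $\ket{(H,\varrho),a}=D_{(H,\varrho)}\ket{a}$, and the intertwining relation \ceqref{whgsts13} of Proposition \cref{prop:shefdgcfd}, namely $\scH_EfD_{(H,\varrho)}=D_{G_Cf(H,\varrho)}\scH_Ef$. The extra fact needed beyond these is the action of $\scH_Ef$ on basis kets, which I would extract exactly as in the proof of Proposition \cref{prop:hefgcf}: from expression (4.3.1) of I one has $\scH_Ef\ket{a}=\ket{Ef(a)}$ for every $a\in E[l]$, this being just the statement that $\scH_Ef$ is the basis encoding of the module morphism $Ef$ (the special case $a=0_l$ was already used there to get $\scH_Ef\ket{0_l}=\ket{0_m}$).

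With these in place the proof is a short chain of equalities. First apply \ceqref{whgstab1} to write $\scH_Ef\ket{(H,\varrho),a}=\scH_EfD_{(H,\varrho)}\ket{a}$; then push $\scH_Ef$ through $D_{(H,\varrho)}$ using \ceqref{whgsts13} to get $D_{G_Cf(H,\varrho)}\scH_Ef\ket{a}$; then use $\scH_Ef\ket{a}=\ket{Ef(a)}$ to obtain $D_{G_Cf(H,\varrho)}\ket{Ef(a)}$; finally recognise this, via \ceqref{whgstab1} applied to the calibrated hypergraph $G_Cf(H,\varrho)\in G_C[m]$ (with $Ef(a)\in E[m]$), as $\ket{G_Cf(H,\varrho),Ef(a)}$. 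In a write-up I would display this as a single \texttt{align} block so as not to break it with a paragraph split.

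There is no real obstacle here: all the substantive work lives in Proposition \cref{prop:shefdgcfd}, and the present statement is essentially its corollary for the full basis rather than just the ground state $\ket{(H,\varrho)}$. The only points requiring a moment's care are bookkeeping: confirming that the $\ket{a}$ in \ceqref{whgstab1} are the same Fourier–Hadamard basis vectors on which $\scH_Ef$ acts by $\ket{a}\mapsto\ket{Ef(a)}$, and that $G_Cf(H,\varrho)$ genuinely lies in $G_C[m]$ (by functoriality of $G_C$, cf.\ prop.\ 3.2.9 of I) so that \ceqref{whgstab1} may be invoked on it. As a consistency check, setting $a=0_l$ and recalling $Ef(0_l)=0_m$ recovers relation \ceqref{whgsts16} of Proposition \cref{prop:hefgcf}. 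One could alternatively route the argument through \ceqref{whgstab2} together with the transformation law of $Z_l(a)$ under $\scH_Ef$, but the direct computation above is the most economical.

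\begin{proof}
By \ceqref{whgstab1}, the intertwining relation \ceqref{whgsts13}, and the identity $\scH_Ef\ket{a}=\ket{Ef(a)}$ following from expression (4.3.1) of I, we compute
\begin{align}
\label{}
\scH_Ef\ket{(H,\varrho),a}&=\scH_EfD_{(H,\varrho)}\ket{a}
\\
\nonumber
&=D_{G_Cf(H,\varrho)}\scH_Ef\ket{a}
\\
\nonumber
&=D_{G_Cf(H,\varrho)}\ket{Ef(a)}=\ket{G_Cf(H,\varrho),Ef(a)},
\end{align}
where the final equality is \ceqref{whgstab1} applied to the calibrated hypergraph $G_Cf(H,\varrho)\in G_C[m]$. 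This shows \ceqref{covkhr1}.
\end{proof}
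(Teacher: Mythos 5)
Your proof is correct and coincides with the paper's own argument: the same three-step chain via \ceqref{whgstab1}, the intertwining relation \ceqref{whgsts13}, and the identity $\scH_Ef\ket{a}=\ket{Ef(a)}$. The only cosmetic difference is the citation for that last identity (the paper points to (4.2.2) of I rather than (4.3.1) of I), which does not affect the argument.
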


\begin{proof}
From \ceqref{whgstab1}, using relations (4.2.2) of I and \ceqref{whgsts13}, we find 
{\allowdisplaybreaks
\begin{align}
\label{}
\scH_Ef\ket{(H,\varrho),a}&=\scH_EfD_{(H,\varrho)}\ket{a}
\\
\nonumber
&=D_{G_Cf(H,\varrho)}\scH_Ef\ket{a}
\\
\nonumber
&=D_{G_Cf(H,\varrho)}\ket{Ef(a)}=\ket{G_Cf(H,\varrho),Ef(a)},
\end{align}
}
\!\!yielding the desired relation \ceqref{covkhr1}. 
\end{proof}

The orthonormal basis states of a hypergraph behave compatiblyly also with respect to the monadic multiplications
of the $\varOmega$ monads $G_C\varOmega$ and $\scH_E\varOmega$. 

\begin{prop}
Let $l,m\in\bbN$ and let $(H,\varrho)\in G_C[l]$, $(K,\varsigma)\in G_C[m]$ be calibrated hypergraphs. Then,
\begin{equation}
\label{covkhr2}
\ket{(H,\varrho)\smallsmile(K,\varsigma),a\smallsmile b}
=\ket{(H,\varrho),a}\smallsmile\ket{(K,\varsigma),b}.
\end{equation}
Further, it holds that \hphantom{xxxxxxxxxxxxxxxxxxx}
\begin{equation}
\label{covkhr3}
\ket{(O,\varepsilon),0}=\ket{0}.
\end{equation}
\end{prop}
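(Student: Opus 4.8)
The plan is to mimic the proof of the companion statement for the states $\ket{(H,\varrho)}$ themselves, i.e. relations \ceqref{whgsts18} and \ceqref{whgsts19}, applied now to the whole calibrated hypergraph state basis rather than to the single ground state $\ket{0_l}$. Everything will be a formal consequence of the factorization property of the operators $D_{(H,\varrho)}$ established in prop. \cref{prop:jntdhr} together with the compatibility of the Hadamard basis kets with the monadic products of $E\varOmega$ and $\scH_E\varOmega$.

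For \ceqref{covkhr2}, I would start from the definition \ceqref{whgstab1}, writing $\ket{(H,\varrho)\smallsmile(K,\varsigma),a\smallsmile b}=D_{(H,\varrho)\smallsmile(K,\varsigma)}\ket{a\smallsmile b}$. Relation \ceqref{dhjntdh1} of prop. \cref{prop:jntdhr} replaces the operator by $D_{(H,\varrho)}\otimes D_{(K,\varsigma)}$, while the monadic compatibility of the Hadamard basis, namely $\ket{a\smallsmile b}=\ket{a}\smallsmile\ket{b}=\ket{a}\otimes\ket{b}$ (relations (4.2.11), (4.2.12) of I, already used in the derivation of \ceqref{whgsts18p7/1}), rewrites the ket. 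Distributing the operator tensor product over the vector tensor product gives $(D_{(H,\varrho)}\ket{a})\otimes(D_{(K,\varsigma)}\ket{b})$, which by \ceqref{whgstab1} equals $\ket{(H,\varrho),a}\otimes\ket{(K,\varsigma),b}$, and a single further use of (4.2.11) of I converts the tensor product back into the monadic product $\smallsmile$, yielding \ceqref{covkhr2}.

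For \ceqref{covkhr3}, I would specialize \ceqref{whgstab1} to the empty calibrated hypergraph $(O,\varepsilon)\in G_C[0]$ and the unique configuration $0\in E[0]$, obtaining $\ket{(O,\varepsilon),0}=D_{(O,\varepsilon)}\ket{0}$, and then invoke \ceqref{dhjntdh2}, $D_{(O,\varepsilon)}=1_0$, to conclude $\ket{(O,\varepsilon),0}=\ket{0}$.

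I do not anticipate any genuine obstacle: the statement is purely a bookkeeping consequence of prop. \cref{prop:jntdhr} and the elementary identities relating $\smallsmile$ and $\otimes$ on state vectors; the only thing requiring a moment's care is applying those identities in the right direction. Conceptually, the proposition simply records that the operator assignment $(H,\varrho)\mapsto D_{(H,\varrho)}$ — equivalently the assignment of the orthonormal basis $\{\ket{(H,\varrho),a}\}_{a\in E[l]}$ — respects the monadic multiplications of $G_C\varOmega$ and $\scH_E\varOmega$, which is precisely the content of prop. \cref{prop:jntdhr}.
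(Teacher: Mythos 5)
Your proposal is correct and follows essentially the same route as the paper's own proof: both apply the definition \ceqref{whgstab1}, factor the operator via \ceqref{dhjntdh1}, use $\ket{a\smallsmile b}=\ket{a}\otimes\ket{b}$ to distribute the tensor product, and dispose of \ceqref{covkhr3} by $D_{(O,\varepsilon)}=1_0$ from \ceqref{dhjntdh2}. No gaps.
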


\begin{proof}
We preliminarily observe that 
\begin{equation}
\label{covkhr2p1}
\ket{a\smallsmile b}=\ket{a}\smallsmile\ket{b}=\ket{a}\otimes\ket{b},
\end{equation}
on account of (4.1.11) and (4.2.12) of I. 
From \ceqref{whgstab1}, using relations \ceqref{dhjntdh1} and \ceqref{covkhr2p1}, we obtain then
\hphantom{xxxxxxxxxxx}
{\allowdisplaybreaks
\begin{align}
\label{covkhr2p2}
\ket{(H,\varrho)\smallsmile(K,\varsigma),a\smallsmile b}
&=D_{(H,\varrho)\smallsmile(K,\varsigma)}\ket{a\smallsmile b}
\\
\nonumber
&=D_{(H,\varrho)}\otimes D_{(K,\varsigma)}\ket{a}\otimes\ket{b}
\\
\nonumber
&=D_{(H,\varrho)}\ket{a}\otimes D_{(K,\varsigma)}\ket{b}
\\
\nonumber
&=\ket{(H,\varrho),a}\otimes\ket{(K,\varsigma),b}=\ket{(H,\varrho),a}\smallsmile\ket{(K,\varsigma),b}.
\end{align}
}
\!\!\ceqref{covkhr2} is in this way demonstrated. 

From \ceqref{whgstab1}, employing \ceqref{dhjntdh2}, we find
\begin{equation}
\label{covkhr3p1}
\ket{(O,\varepsilon),0}=D_{(O,\varepsilon)}\ket{0}=\ket{0},
\end{equation}
proving \ceqref{covkhr3}. 
\end{proof}


We present a few sample computations. 

\begin{exa} \label{exa:bellort} The Bell states.
{\rm In ex. \cref{exa:bell}, working in the qubit case $\msR=\bbF_2$, we derived the four 2-qubit Bell states as certain
calibrated hypergraph states $\ket{(H,\varrho_{a_0,a_1})}$, $a_0,a_1\in\msR$, given by \ceqref{bell3}, \ceqref{bell4}
and more explicitly by \ceqref{bell5}. The associated orthonormal bases $\ket{(H,\varrho_{a_0,a_1}),b_0,b_1}$, 
$b_0,b_1\in\msR$, are gotten by acting with the diagonal operators $Z_2(b_0,b_1)$ on the
states $\ket{(H,\varrho_{a_0,a_1})}$ in accordance to 
\ceqref{whgstab1}, 
\begin{equation}
\label{}
\ket{(H,\varrho_{a_0,a_1}),b_0,b_1}
=\mycom{{}_\sss}{{}_{x_0,x_1\in\msR}}\ket{x_0,x_1}\,2^{-1}(-1)^{b_0x_0+b_1x_1+\sigma_{(H,\varrho_{a_0,a_1})}(x_0,x_1)}.
\end{equation}
Examining \ceqref{bell4} shows that 
$b_0x_0+b_1x_1+\sigma_{(H,\varrho_{a_0,a_1})}(x_0,x_1)=\sigma_{(H,\varrho_{a_0+b_0,a_1+b_1})}(x_0,x_1)$. 
As a consequence, one has 
\begin{equation}
\label{}
\ket{(H,\varrho_{a_0,a_1}),b_0,b_1}=\ket{(H,\varrho_{a_0+b_0,a_1+b_1})}. 
\end{equation}
Therefore, if we take any one of the four Bell states \ceqref{bell5} as ground hypergraph state,
then the associated orthonormal basis consists of the chosen state and the remaining three ones.
This fact is well--known. 
}
\end{exa}

\begin{exa} \label{exa:3qtchgort} Three qutrit calibrated hypergraph states.
{\rm In ex. \cref{exa:3qtchg}, concerned with the qutrit case $\msR=\bbF_3$, we recovered
the five 3--qutrit hypergraph states $\ket{(H^i,\varrho^i)}$, $i=\sfa,\sfb,\sfc,\sfd,\sfe$, originally 
studied in ref. \!\!\ccite{Giri:2024qtt}. These are shown in \ceqref{3qtchgp4}, \ceqref{3qtchgp4/1} and
more explicitly in two special cases in \ceqref{3qtchgp5}. The associated orthonormal bases $\ket{(H,\varrho^i),a_0,a_1,a_2}$, 
$a_0,a_1,a_2\in\msR$, are provided by the action of the diagonal operators $Z_3(a_0,a_1,a_2)$
on the states according to \ceqref{whgstab1}, obtaining 
\begin{equation}
\label{}
\ket{(H^i,\varrho^i),a_0,a_1,a_2}
=\mycom{{}_\sss}{{}_{x_0,x_1,x_2\in\msR}}\ket{x_0,x_1,x_2}\,3^{-3/2}\omega^{a_0x_0+a_1x_1+a_2x_2+\sigma_{(H^i,\varrho^i)}(x_0,x_1,x_2)},
\end{equation}
where $\omega=\exp(2\pi i/3)$. For fixed $i$ there twenty seven such states. Upon explicitly writing down 
the expansion \ceqref{3qtchgp4/1} of $\ket{(H^i,\varrho^i)}$ as in e.g. \ceqref{3qtchgp5}, 
they can be easily obtained by multiplying the expansion's term proportional to $\ket{x_0,x_1,x_2}$ by the phase
$\omega^{a_0x_0+a_1x_1+a_2x_2}$ for all the twenty seven triples $x_1,x_2,x_3$. 
}
\end{exa}

\vspace{-2mm}

\subsection{\textcolor{blue}{\sffamily Calibrated hypergraphs states and local maximal entangleability}}\label{subsec:chglme}

In this subsection, we show that calibrated hypergraphs states are locally maximally entangleable. 
The notion of local maximal entangleability was originally introduced in ref. \!\!\ccite{Kruszynska:2008lem} for multi qubit
states. It extends straightforwardly to multi qudit states.

We recall a few basic notions. Let $A$, $B$ be two sets of parties with Hilbert spaces $\scH_A$, $\scH_B$
of dimensions $d_A$, $d_B$, respectively. Let further $d_A\leq d_B$ for definiteness.
The total Hilbert space of all parties is $\scH_{AB}=\scH_A\otimes\scH_B$.
A state $\ket{\varPsi}\in\scH_{AB}$ is said to be locally maximally entangled if the condition 
\begin{equation}
\label{chglme1}
\Tr_B(\ket{\varPsi}\bra{\varPsi})=1_A/d_A 
\end{equation}
is satisfied. 
(When $d_A=d_B$, the above identity and the analogous one with $A$, $B$ interchanged
can be shown to be equivalent.) In other words, the marginal state $\rho_A$ of the parties $A$,
given by the left hand side of the above identity, is maximally mixed. 

The following can be proven. A state $\ket{\varPsi}\in\scH_{AB}$ is locally maximally entangled if 
and only if there are an orthonormal basis $\ket{a}$ of $\scH_A$ and an orthonormal set $\ket{\varPsi(a)}$
of $\scH_B$ both indexed by the same labels $a\in I$ such that 
\begin{equation}
\label{chglme2}
\ket{\varPsi}=\mycom{{}_\sss}{{}_{a\in I}}\ket{a}\otimes\ket{\varPsi(a)}\,d_A{}^{-1/2}.
\end{equation}

We now revert back to the study of $l$--qudit states of $\scH_E[l]$. 
With any unitary operator collection $\clU=\{U_{rx}\hfpt|\hfpt r\in [l],x\in\msR\}\subset\msU(\scH_1)$, associate 
the operator
\begin{equation}
\label{chglme3}
U_{\clU}=\mycom{{}_\sss}{{}_{x\in E[l]}}U_x\otimes F_l{}^+\ket{x}\bra{x}F_l,
\end{equation}
where for $x\in E[l]$ the operator $U_x$ is given by
\begin{equation}
\label{chglme4}
U_x=\mycom{{}_\ooo}{{}_{r\in [l]}}U_{rx_r}.
\end{equation}
Notice that $U_{\clU}\in\msU(\scH_E[2l])$. 

Let $l\in\bbN$. Generalizing the original analysis of ref. \!\!\ccite{Kruszynska:2008lem} for qubits, we introduce the
following definition of local maximal entangleability for $l$--qudit states of $\scH_E[l]$. 

\begin{defi}
A state $\ket{\varPsi}\in\scH_E[l]$ is said to be locally maximally entangleable 
if there exists a unitary operator collection $\clU=\{U_{rx}\}$ with the property that the state 
$U_\clU\ket{\varPsi}\,\otimes\,\ket{0_l}\in\scH_E[2l]$ is 
locally maximally entangled with reference to the multipartite partition
$\scH_E[2l]=\scH_E[l]\otimes\scH_E[l]$. 
\end{defi}

An equivalent condition for local maximal entangleability is established by the following proposition 
also demonstrated in ref. \!\!\ccite{Kruszynska:2008lem} for qubits and extended here to qudits.

\begin{prop} \label{prop:lmecnd}
A state $\ket{\varPsi}\in\scH_E[l]$ is locally maximallys entangleable 
if and only if there exists a unitary operator collection $\clU=\{U_{rx}\}$ such that
the states $U_x\ket{\varPsi}$, $x\in E[l]$, constitute an orthonormal basis
of $\scH_E[l]$.
\end{prop}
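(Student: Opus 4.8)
The plan is to reduce local maximal entangleability to the Schmidt-type characterization \eqref{chglme2} of a locally maximally entangled bipartite state, applied to the specific state $U_\clU\ket{\varPsi}\otimes\ket{0_l}$. First I would unravel the action of $U_\clU$ on $\ket{\varPsi}\otimes\ket{0_l}$ using definitions \eqref{chglme3} and \eqref{chglme4}. Writing $\ket{\varPsi}=\sum_{x\in E[l]}F_l{}^+\ket{x}\,\psi(x)$ in the basis $\{F_l{}^+\ket{x}\}$ (which is orthonormal by unitarity of $F_l$), and noting $\ket{0_l}=\sum_{x}F_l{}^+\ket{x}\,q^{-l/2}$ since $F_l\ket{0_l}$ has all components equal to $q^{-l/2}$, a direct computation gives
\begin{equation}
\label{lmecndpf1}
U_\clU\bigl(\ket{\varPsi}\otimes\ket{0_l}\bigr)=\mycom{{}_\sss}{{}_{x\in E[l]}}U_x\ket{\varPsi}\otimes F_l{}^+\ket{x}\,q^{-l/2}.
\end{equation}
The key observation is that the kets $F_l{}^+\ket{x}$, $x\in E[l]$, already form an orthonormal basis of the second tensor factor $\scH_E[l]$, and there are exactly $q^l=d_A$ of them with the uniform coefficient $q^{-l/2}=d_A^{-1/2}$.

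Next I would apply the stated criterion \eqref{chglme2}. With $A$ the first factor and $B$ the second factor (both of dimension $d_A=d_B=q^l$, so \eqref{chglme1} holds in either direction), \eqref{lmecndpf1} is precisely of the form \eqref{chglme2} with the index set $I=E[l]$, the orthonormal basis of $\scH_B$ being $\{F_l{}^+\ket{x}\}$, and the vectors $\ket{\varPsi(x)}=U_x\ket{\varPsi}$ in $\scH_A$. Hence, by \eqref{chglme2}, the state $U_\clU(\ket{\varPsi}\otimes\ket{0_l})$ is locally maximally entangled with respect to the partition $\scH_E[2l]=\scH_E[l]\otimes\scH_E[l]$ if and only if the vectors $U_x\ket{\varPsi}$, $x\in E[l]$, form an orthonormal \emph{set}; since there are $q^l=\dim\scH_E[l]$ of them, an orthonormal set is automatically an orthonormal basis. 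Comparing with the definition of local maximal entangleability of $\ket{\varPsi}$ then yields the asserted equivalence: $\ket{\varPsi}$ is locally maximally entangleable iff there exists $\clU=\{U_{rx}\}$ making $\{U_x\ket{\varPsi}\}$ an orthonormal basis of $\scH_E[l]$.

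The one point requiring a little care — and the closest thing to an obstacle — is the roles of the two tensor factors in the criterion \eqref{chglme2}: that criterion is stated with $\scH_A$ carrying the basis and $\scH_B$ carrying the orthonormal set, whereas in \eqref{lmecndpf1} it is the \emph{second} factor that carries the manifest orthonormal basis $\{F_l{}^+\ket{x}\}$. Since $d_A=d_B$ here, the parenthetical remark after \eqref{chglme1} says the two versions of the maximal-entanglement condition are equivalent, so one may freely swap the roles of $A$ and $B$; alternatively one can simply re-read \eqref{chglme2} with the labels $A,B$ interchanged. Everything else is a routine expansion in orthonormal bases together with the identity $F_l\ket{0_l}=\sum_x\ket{x}\,q^{-l/2}$, which is immediate from \eqref{four1}.
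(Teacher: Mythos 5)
Your computation \eqref{lmecndpf1} is exactly the paper's key identity (its eq. \eqref{lmecndp1}), so up to that point the two arguments coincide. Where you diverge is the final step: the paper explicitly computes the partial trace $\Tr_{|l}$ of the resulting pure state, obtains $q^{-l}\sum_x U_x\ket{\varPsi}\bra{\varPsi}U_x{}^+$, and reads off the equivalence between maximal mixedness of the marginal and $\sum_x U_x\ket{\varPsi}\bra{\varPsi}U_x{}^+=1_l$, hence orthonormality; you instead appeal to the Schmidt-type criterion \eqref{chglme2}. Your forward direction is immediate from \eqref{chglme2}. The converse direction as written has a small logical gap: \eqref{chglme2} is an \emph{existence} statement (``there are \emph{some} orthonormal basis and \emph{some} orthonormal set such that\dots''), so knowing the state is locally maximally entangled does not by itself tell you that the \emph{particular} decomposition \eqref{lmecndpf1} has orthonormal first-factor vectors. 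This is fixable in one line either way: (i) since $\{F_l{}^+\ket{x}\}$ is an orthonormal basis of the second factor, the first-factor coefficient vectors are uniquely determined and the reduced state on the first factor is $q^{-l}\sum_x U_x\ket{\varPsi}\bra{\varPsi}U_x{}^+$, and maximal mixedness forces the resolution of identity, whence orthonormality (this is precisely the paper's route); or (ii) any other decomposition guaranteed by \eqref{chglme2} is related to yours by a unitary change of basis on the second factor, which preserves orthonormality of the first-factor vectors. With that one sentence added, your proof is complete and essentially equivalent to the paper's.
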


\begin{proof}
By virtue of \ceqref{chglme3}, we have 
\begin{equation}
\label{lmecndp1}
U_\clU\ket{\varPsi}\otimes\ket{0_l}=\mycom{{}_\sss}{{}_{x\in E[l]}}U_x\ket{\varPsi}\otimes F_l{}^+\ket{x}\,q^{-l/2},
\end{equation}
where we employed the relation $\exval{x}{F_l}{0_l}=q^{-l/2}\omega^{\langle x,0_l\rangle}=q^{-l/2}$ following 
from \ceqref{four1}. Denote by $\Tr_{|l}$ the tracing operation on the second tensor factor of the tensor product 
$\scH_E[2l]=\scH_E[l]\otimes\scH_E[l]$. From \ceqref{lmecndp1}, we have 
{\allowdisplaybreaks
\begin{align}
\label{lmecndp2}
\Tr_{|l}\big(U_\clU\ket{\varPsi}\otimes\ket{0_l}&\,\bra{\varPsi}\otimes\bra{0_l}U_\clU{}^+\big)
\\
\nonumber
&=\mycom{{}_\sss}{{}_{x,y\in E[l]}}\Tr_{|l}\left(U_x\ket{\varPsi}\,q^{-l}\,\bra{\varPsi}U_y{}^+\otimes F_l{}^+\ket{x}
\,\bra{y}F_l\right)
\\
\nonumber
&=\mycom{{}_\sss}{{}_{x,y\in E[l]}}U_x\ket{\varPsi}\,q^{-l}\Tr\left(F_l{}^+\ket{x}\,\bra{y}F_l\right)\bra{\varPsi}U_y{}^+
\\
\nonumber
&=\mycom{{}_\sss}{{}_{x,y\in E[l]}}U_x\ket{\varPsi}\,q^{-l}\delta_{x,y}\,\bra{\varPsi}U_y{}^+
=\mycom{{}_\sss}{{}_{x\in E[l]}}U_x\ket{\varPsi}\,q^{-l}\,\bra{\varPsi}U_x{}^+,
\end{align}
}
\!\!since $\Tr(F_l{}^+\ket{x}\bra{y}F_l)=\delta_{x,y}$. 
Suppose that $\ket{\varPsi}$ is locally maximally entangleable. Then,
\begin{equation}
\label{lmecndp3}
\Tr_{|l}\big(U_\clU\ket{\varPsi}\otimes\ket{0_l}\,\bra{\varPsi}\otimes\bra{0_l}U_\clU{}^+\big)=q^{-l}1_l.
\end{equation}
Owing to \ceqref{lmecndp2}, in such a case we have
\begin{equation}
\label{lmecndp4}
\mycom{{}_\sss}{{}_{x\in E[l]}}U_x\ket{\varPsi}\,\bra{\varPsi}U_x{}^+=1_l.
\end{equation}
It follows that the kets $U_x\ket{\varPsi}$, $x\in E[l]$, constitute an orthonormal basis of $\scH_E[l]$.
Suppose conversely that the kets $U_x\ket{\varPsi}$, $x\in E[l]$, constitute an orthonormal basis of $\scH_E[l]$.
Then, \ceqref{lmecndp4} holds and so by \ceqref{lmecndp2} also \ceqref{lmecndp3} does. 
Hence, $\ket{\varPsi}$ is locally maximally entangleable. 
\end{proof}

This result has the following immediate application.

\begin{prop}
Let $l\in\bbN$ and let $(H,\varrho)\in G_C[l]$ be a calibrated hypergraph. Then, the hypergraph state
$\ket{(H,\varrho)}$ is locally maximally entangleable. 
\end{prop}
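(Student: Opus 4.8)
The plan is to apply the characterization of local maximal entangleability established in Proposition~\cref{prop:lmecnd}: it is enough to exhibit a unitary operator collection $\clU=\{U_{rx}\mid r\in[l],\,x\in\msR\}\subset\msU(\scH_1)$ for which the kets $U_x\ket{(H,\varrho)}$, $x\in E[l]$, form an orthonormal basis of $\scH_E[l]$, where $U_x$ is assembled from the $U_{rx}$ according to \ceqref{chglme4}.

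First I would make the simplest possible choice, $U_{rx}=Z_1(x)$ for every $r\in[l]$ and every $x\in\msR=E[1]$; each of these is a unitary operator of $\scH_1$, since the single Galois qudit Pauli operators are unitary (cf. subsect.~\cref{subsec:qdpauli}). Then, writing an element $x\in E[l]$ as $x=x_0\smallsmile\cdots\smallsmile x_{l-1}$ with $x_r\in E[1]$, repeated application of relation \ceqref{qdpauli14} gives $U_x=\bigotimes_{r\in[l]}Z_1(x_r)=Z_l(x)$.

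Next I would invoke identity \ceqref{whgstab2}, which yields $U_x\ket{(H,\varrho)}=Z_l(x)\ket{(H,\varrho)}=\ket{(H,\varrho),x}$ for every $x\in E[l]$. Since, by Proposition~\cref{prop:hrkon}, the kets $\ket{(H,\varrho),x}$, $x\in E[l]$, constitute an orthonormal basis of $\scH_E[l]$, so do the kets $U_x\ket{(H,\varrho)}$, $x\in E[l]$. An appeal to Proposition~\cref{prop:lmecnd} then shows that $\ket{(H,\varrho)}$ is locally maximally entangleable.

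I do not anticipate any genuine obstacle: the argument is a short splicing together of results already available. The one point deserving emphasis is that the shift operator $Z_l(x)$ factorizes as the tensor product $\bigotimes_{r\in[l]}Z_1(x_r)$ of single-qudit unitaries — which is exactly relation \ceqref{qdpauli14} — and it is precisely this factorization that lets $Z_l(x)$ play the role of an operator $U_x$ of the kind featuring in the definition of local maximal entangleability and in Proposition~\cref{prop:lmecnd}.
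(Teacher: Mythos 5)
Your proof is correct and coincides with the paper's own argument: the same choice $U_{rx}=Z_1(x)$, the factorization $U_x=Z_l(x)$ via \ceqref{qdpauli14}, the identity $Z_l(x)\ket{(H,\varrho)}=\ket{(H,\varrho),x}$ from \ceqref{whgstab2}, and the appeal to props.~\cref{prop:hrkon} and \cref{prop:lmecnd}. Nothing to add.
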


\begin{proof} Consider the unitary collection $\clU=\{U_{rx}\}$, where $U_{rx}=Z_1(x)$
with $r\in [l]$ and $x\in\msR$, where the operators $Z_l(a)$ are given by
\ceqref{qdpauli1}. Then, from \ceqref{chglme4}, for $x\in E[l]$
\begin{equation}
\label{}
U_x=\mycom{{}_\ooo}{{}_{r\in [l]}}Z_1(x_r)=Z_l(x),
\end{equation}
where we used \ceqref{qdpauli14}. Consequently, we have 
\begin{equation}
\label{}
U_x\ket{(H,\varrho)}=Z_l(x)\ket{(H,\varrho)}=\ket{(H,\varrho),x}
\end{equation}
owing to \ceqref{whgstab2}. The states $\ket{(H,\varrho),x}$, $x\in E[l]$,
constitute an orthonormal basis of $\scH_E[l]$ by prop. \cref{prop:hrkon}. 
Prop. \cref{prop:lmecnd} implies then that the state $\ket{(H,\varrho)}$ is locally maximally entangleable. 
\end{proof}

In ref. \!\!\ccite{Kruszynska:2008lem}, other results about locally maximally entangled qubit states were proven.
These do not extend to qudit states, but yet hold for our calibrated hypergraph states and boil down
to these having the structure \ceqref{whgsts10}. \ceqref{whgsts9}, \ceqref{whgsts14}.


\subsection{\textcolor{blue}{\sffamily Optimizing the classification of calibrated hypergraph states}}\label{subsec:hgststruc}

The classification program of calibrated hypergraph states consists in identifying the
hypergraph states which are truly distinct given a certain notion of `sameness'. 
In particular, categorizing 
hypergraph states into suitably defined entanglement classes is of key importance.
However, this will not be done in this paper.

The number of calibrations of a single hyperedge $X$ is $|\msM|^{|\msA|^{|X|}}$ and so it is characterized 
by an exponential of exponential growth with the cardinality of the hyperedge. The number of calibrated hypergraphs
with the same underlying bare hypergraph can therefore be staggeringly large. For this reason, the optimization of
the classification scheme employed is a compelling issue.

Different calibrated hypergraphs can yield the same calibrated hypergraph state up to a phase factor.
This fact indicates that a part of the information encoded in a hypergraph carries no weight in the
determination of the associated hypergraph state and is so redundant. An optimized classification scheme
for hypergraph states should take this in due account. 
Effective calibrated hypergraphs are streamlined hypergraphs with a limited amount of redundant data.


\begin{defi} Let $l\in\bbN$. A calibrated hypergraph $(H,\varrho)\in G_C[l]$ is said to be effective
if it enjoys the following properties. 

\begin{enumerate}

\item For all $X\in H$, $\varrho_X\neq 0_{\msA^X}$, where $0_{\msA^X}$ is the vanishing calibration of $X$. 

\item For all $X\in H$ and $w\in\msA^X$ with $\supp w\neq X$, one has $\varrho_X(w)=0$. 
  
\end{enumerate}

\end{defi} 

\noindent
For a finite subset $X\subset\bbN$, the support of an exponent function $w\in\msA^X$ is the set
$\supp w=\{r\hfpt|\hfpt r\in X, w(r)\neq 0\}\subseteq X$. 
Note that the hypergraph $(\emptyset,e_\emptyset)\in G_C[l]$ (see subsect. 3.2 of I) 
is trivially effective. 

The relevance of the notion of hypergraph effectiveness is validated by the following proposition.
Its proof is constructive and so also of considerable practical significance. 

\begin{prop} \label{prop:effhgs}
Let $l\in\bbN$. For every calibrated hypergraph $(H,\varrho)\in G_C[l]$, there exists an effective
calibrated hypergraph $(K,\varsigma)\in G_C[l]$ such that the associated calibrated hypergraph states obey
$\ket{(H,\varrho)}=\ket{(K,\varsigma)}\,\omega^a$ for some $a\in\msP$. 
\end{prop}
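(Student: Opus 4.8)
The plan is to reduce an arbitrary calibrated hypergraph to an effective one in two clean-up steps, tracking the harmless phase that appears. First I would observe that the phase function $\sigma_{(H,\varrho)}$ in \ceqref{whgsts9} is a sum over hyperedges $X\in H$ and over exponent functions $w\in\msA^X$, and that the associated state $\ket{(H,\varrho)}$ depends on $(H,\varrho)$ only through $\sigma_{(H,\varrho)}$, by \ceqref{whgsts10} and \ceqref{whgsts14}. So it suffices to modify the calibration data in ways that change $\sigma_{(H,\varrho)}(x)$ at most by a constant $a\in\msP$ independent of $x\in E[l]$, since then $\ket{(H,\varrho)}=\ket{(K,\varsigma)}\,\omega^a$.

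For the second effectiveness condition, I would examine a term indexed by a hyperedge $X$ and an exponent function $w\in\msA^X$ with $\supp w\subsetneq X$. Let $Y=\supp w\subseteq X$. Since $x_r{}^{w(r)}=x_r{}^{0}=1$ whenever $r\in X\setminus Y$, the product $\mycom{{}_\ppp}{{}_{r\in X}}x_r{}^{w(r)}$ equals $\mycom{{}_\ppp}{{}_{r\in Y}}x_r{}^{w(r)}$. There are two subcases. If $Y\neq\emptyset$, then $Y$ is itself a legitimate hyperedge and we may move the weight $\varrho_X(w)$ onto a new term on $Y$: concretely, replace $H$ by $H\cup\{Y\}$ if necessary and add $\varrho_X(w)$ to $\varsigma_Y(w')$, where $w'\in\msA^Y$ is the restriction of $w$ (which has full support $Y$), while zeroing out $\varrho_X(w)$. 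This leaves $\sigma$ unchanged. If $Y=\emptyset$, then the product is the empty product $1$, so the term contributes the constant $\varrho_X(w)\tr(1)$ to $\sigma_{(H,\varrho)}(x)$ for every $x$; zeroing out $\varrho_X(w)$ changes $\sigma$ by exactly $-\varrho_X(w)\tr(1)\in\msP$, which is the only source of a nontrivial phase. Iterating over all such $(X,w)$ — there are finitely many — produces a calibration satisfying the second condition, at the cost of an overall phase $\omega^a$ with $a$ the sum of the collected constants.

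For the first effectiveness condition, once the second condition holds, I would simply delete from $H$ every hyperedge $X$ with $\varrho_X=0_{\msA^X}$; such hyperedges contribute nothing to $\sigma_{(H,\varrho)}$, so the state is unchanged, and $(\emptyset,e_\emptyset)$ itself is already effective if everything is deleted. The resulting $(K,\varsigma)\in G_C[l]$ is effective and satisfies $\ket{(H,\varrho)}=\ket{(K,\varsigma)}\,\omega^a$.

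The main obstacle, and the point that needs the most care, is bookkeeping: one must verify that the successive re-assignments of weights are consistent (in particular that moving weight from $(X,w)$ to $(Y,w')$ does not re-create a violation of the second condition on $Y$ — it does not, because $w'$ has full support $Y$), that the process terminates, and that the monoid arithmetic in $\msA$ and $\msM=\msP$ underlying the definition of $\msZ^X$ and the addition of weights is respected throughout (recall the exponent addition is that of $\msZ$). It is also worth stating explicitly, for the phase computation, that $\tr(1)=\Tr(L_1)=\Tr(\id_\msR)$ is a fixed element of $\msP$, so every empty-support contribution is genuinely a constant. Since each step is elementary and the hypergraph is finite, the argument is routine once organized; I would present it as a finite induction on the number of pairs $(X,w)$ violating the second condition.
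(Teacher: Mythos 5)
Your proposal is correct and follows essentially the same route as the paper's proof: you isolate the empty-support terms as the constant phase $a$ (via $\tr(1)$), transfer each weight $\varrho_X(w)$ with $\supp w\subsetneq X$ onto the hyperedge $Y=\supp w$ with the restricted exponent function $w|_Y$ (which has full support), and finally discard hyperedges with vanishing calibration. The paper organizes this as a single global re-indexing of the sum defining $\sigma_{(H,\varrho)}$ over supports $Z$ (yielding the calibration $\varphi_Z$ in closed form) rather than as an iterative clean-up, but the content is identical.
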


\begin{proof}
We recall that $z^0=1$ for $z\in\msR$, where $0\in\msZ$ denotes the additive unit of the cyclicity monoid. 
From \ceqref{whgsts9}, the phase function $\sigma_{(H,\varrho)}$ can thus be expressed as 
\begin{equation}
\label{effhgsp1}
\sigma_{(H,\varrho)}(x)=
\mycom{{}_\sss}{{}_{X\in H}}\mycom{{}_\sss}{{}_{D\subseteq X}}\mycom{{}_\sss}{{}_{w\in\msA^X,\,\supp w=D}}\varrho_X(w)
\tr\left(\mycom{{}_\ppp}{{}_{r\in D}}x_r{}^{w(r)}\right)
\end{equation}  
with $x\in E[l]$. We can split the sum in the right hand side in two parts,
\begin{equation}
\label{effhgsp2}
\sigma_{(H,\varrho)}(x)=\sigma^{(0)}{}_{(H,\varrho)}(x)+\sigma^{(1)}{}_{(H,\varrho)}(x). 
\end{equation}  
$\sigma^{(0)}{}_{(H,\varrho)}(x)$ is the contribution of all terms in \ceqref{effhgsp1} with 
$D=\emptyset$. Since the product in the trace equals $1$ in this case 
\begin{equation}
\label{effhgsp3}
\sigma^{(0)}{}_{(H,\varrho)}(x)=\mycom{{}_\sss}{{}_{X\in H}}\varrho_X(0_X)\tr(1)=a,
\end{equation}  
where $0_X\in\msA^X$ is the vanishing exponent function of $X\in H$ and $a\in\msM$
denotes the value independent from $x$ of this expression. 
$\sigma^{(1)}{}_{(H,\varrho)}(x)$ is the resultant of all terms in \ceqref{effhgsp1} with 
$D\neq\emptyset$,
\begin{equation}
\label{effhgsp4}
\sigma^{(1)}{}_{(H,\varrho)}(x)=\mycom{{}_\sss}{{}_{X\in H}}\mycom{{}_\sss}{{}_{D\subseteq X,D\neq\emptyset}}
\mycom{{}_\sss}{{}_{w\in\msA^X,\,\supp w=D}}\varrho_X(w)\tr\left(\mycom{{}_\ppp}{{}_{r\in D}}x_r{}^{w(r)}\right)\!.
\end{equation}  
Set $L=\{Z\hfpt|\hfpt Z\subseteq X, Z\neq\emptyset~\text{for some $X\in H$}\}$. Then,
$\sigma^{(1)}{}_{(H,\varrho)}(x)$ can be reexpressed in the form
{\allowdisplaybreaks
\begin{align}
\label{effhgsp5}
\sigma^{(1)}{}_{(H,\varrho)}(x)
&=\mycom{{}_\sss}{{}_{Z\in L}}\mycom{{}_\sss}{{}_{u\in\msA^Z,\,\supp u=Z}}
\mycom{{}_\sss}{{}_{X\in H}}\mycom{{}_\sss}{{}_{D\subseteq X,D\neq\emptyset}}
\mycom{{}_\sss}{{}_{w\in\msA^X,\,\supp w=D}}
\\
\nonumber
&\hspace{6.95cm}\varrho_X(w)\hfpt\delta_{Z,D}\,\delta_{u,w|_D}
\tr\left(\mycom{{}_\ppp}{{}_{r\in D}}x_r{}^{w|_D(r)}\right)
\\
\nonumber
&=\mycom{{}_\sss}{{}_{Z\in L}}\mycom{{}_\sss}{{}_{u\in\msA^Z,\,\supp u=Z}}
\mycom{{}_\sss}{{}_{X\in H,\,Z\subseteq X}}\mycom{{}_\sss}{{}_{w\in\msA^X,\,\supp w=Z,\,w|_Z=u}}
\varrho_X(w)\tr\left(\mycom{{}_\ppp}{{}_{t\in Z}}x_t{}^{u(t)}\right)
\\
\nonumber
&\hspace{7.3cm}=\mycom{{}_\sss}{{}_{Z\in L}}\mycom{{}_\sss}{{}_{u\in\msA^Z}}
\varphi_Z(u)\tr\left(\mycom{{}_\ppp}{{}_{t\in Z}}x_t{}^{u(t)}\right)\!, 
\end{align}
}
\!\!where we set 
\begin{equation}
\label{effhgsp6}
\varphi_Z(u)=
\Bigg\{
\begin{array}{ll}
\sum_{X\in H,\,Z\subseteq X}\sum_{w\in\msA^X,\,\supp w=Z,\,w|_Z=u}
\varrho_X(w)&\text{if $\supp u=Z$},\\
0&\text{else}
\end{array}
\end{equation}  
for $Z\in L$ and $u\in\msA^Z$. Let us set next $K=\{Y\hfpt|\hfpt Y\in L~\text{with}~\varphi_Y\neq 0_{\msA_Y}\}$ 
and $\varsigma_Y=\varphi_Y$ for $Y\in K$. Then, $K\in G[l]$ is a hypergraph and $\varsigma\in C(K)$
is a calibration of $K$ so that $(K,\varsigma)\in G_C[l]$ is calibrated hypergraph.
By construction, $(K,\varsigma)$ is effective. Furthermore, by the calculation \ceqref{effhgsp5} above, 
\begin{equation}
\label{effhgsp7}
\sigma^{(1)}{}_{(H,\varrho)}(x)=\sigma_{(K,\varsigma)}(x).
\end{equation}  
Relations \ceqref{effhgsp2}, \ceqref{effhgsp3} and \ceqref{effhgsp7} together imply that 
\begin{equation}
\label{effhgsp8}
\sigma_{(H,\varrho)}(x)=a+\sigma_{(K,\varsigma)}(x)
\end{equation}  
for $x\in E[l]$. The statement now follows readily from \ceqref{whgsts10} and \ceqref{whgsts14}.   
\end{proof} 


The following definition is now apposite. 

\begin{defi}
A calibrated hypergraph state is called effectively represented when it is expressed as 
$\ket{(H,\varrho)}$ for some $l\in\bbN$ and effective calibrated hypergraph
$(H,\varrho)\in G_C[l]$.   
\end{defi}

\noindent
It is important to have clear in one's mind that being effectively represented is a property of the
way a hypergraph state is encoded by a hypergraph and not of the state itself,
since the encoding hypergraph is not unique in general.

\begin{exa} \label{exa:effhyp} Effectively represented three qutrit calibrated hypergraph states.
{\rm Consider again the three qutrit set--up of ex. \cref{exa:3qtchg}.
Our purpose is to cast the calibrated hypergraph states $\ket{(H^i,\varrho^i)}$,
$i=\sfa,\sfb,\sfc,\sfd,\sfe$, in effectively represented form.
To begin with, we need to list all possible hyperedges that can be generated from the
ordinal $[2]$. There are seven of these altogether, viz
$Y^0=\{0\}$, $Y^1=\{1\}$, $Y^2=\{2\}$, $Y^3=\{0,1\}$, $Y^4=\{1,2\}$, $Y^5=\{0,2\}$, $Y^6=\{0,1,2\}$. 
To express a suitable set $\varsigma^0,\ldots,\varsigma^6$ of basic calibrations 
of $Y^0,\ldots,Y^6$ we have to specify the indexings of the exponent monoids
$\msA^{Y^0},\ldots,\msA^{Y^6}$ of $Y^0,\ldots,Y^6$ used. 
We use indexings
$\msA^{Y^0}=(v^0{}_0,\ldots,v^0{}_3)$, $\msA^{Y^1}=(v^1{}_0,\ldots,v^1{}_3)$, $\msA^{Y^2}=(v^2{}_0,\ldots,v^2{}_3)$
of $\msA^{Y^0},\msA^{Y^1},\msA^{Y^2}$ for which 
\begin{align}
\label{effhypex1}
&v^0{}_0=((1,0,1)), &v^1{}_0=((1,0,1)), &&v^2{}_0=((1,0,1)).
\end{align}
The indexings of $\msA^{Y^3}$, $\msA^{Y^4}$, $\msA^{Y^5}$, $\msA^{Y^6}$, viz
$\msA^{Y^3}=(v^3{}_0,\ldots,v^3{}_{15})$, $\msA^{Y^4}=(v^4{}_0,\ldots,v^4{}_{15})$, 
$\msA^{Y^5}=(v^5{}_0,\ldots,v^5{}_{15})$, $\msA^{Y^6}=(v^6{}_0,\ldots,v^6{}_{63})$ 
coincide with the indexings $\msA^{X^0}$, $\msA^{X^1}$, $\msA^{X^2}$, $\msA^{X^3}$, viz
$\msA^{X^0}=(w^0{}_0,\ldots,w^0{}_{15})$, $\msA^{X^1}=(w^1{}_0,\ldots,w^1{}_{15})$, 
$\msA^{X^2}=(w^2{}_0,\ldots,w^2{}_{15})$, $\msA^{X^3}=(w^3{}_0,\ldots,w^3{}_{63})$
respectively, given by expressions \ceqref{3qtchgp2}, by virtue of
the identities $Y^3=X^0$, $Y^4=X^1$, $Y^5=X^2$, $Y^6=X^3$. 
With reference to these indexings, the calibrations $\varsigma^0,\ldots,\varsigma^6$
read as
{\allowdisplaybreaks
\begin{align}
\label{effhypex2}
&\varsigma^0=(1,0,0,0), &&\varsigma^1=(1,0,0,0), &&\varsigma^2=(1,0,0,0),\hspace{.265cm}
\\
\nonumber
&\varsigma^3=(1,0,\ldots,0), &&\varsigma^4=(1,0,\ldots,0), &&\varsigma^5=(1,0,\ldots,0),
\\
\nonumber
&\varsigma^6=(1,0,\ldots,0) &&~ &&~   
\end{align}
}
\!\!(N.B. three $4$--tuples, three $16$--tuples and one $64$--tuple, respectively).
We can now express the given hypergraph states $\ket{(H^i,\varrho^i)}$
in effectively represented form $\ket{(K^i,\varsigma^i)}$ for each $i=\sfa,\sfb,\sfc,\sfd,\sfe$.
The effective calibrated hypergraphs $(K^i,\varsigma^i)$ appearing here are all expressible in terms
of the hyperedges $Y^0,\ldots,Y^6$ and calibrations $\varsigma^0,\ldots,\varsigma^6$ thereof. 
Their structure can be read off by inspecting the phase functions $\sigma_{(H^i,\varrho^i)}$ 
of the states $(H^i,\varrho^i)$ shown in \ceqref{3qtchgp4}. We find 
{\allowdisplaybreaks
\begin{align}
\label{effhypex3}
&K^\sfa=\{Y^2,Y^4,Y^5,Y^6\},&&\varsigma^\sfa=\{\varsigma^2,2\varsigma^4,2t|_{Y^5*}\varsigma^5,\varsigma^6\},
\\
\nonumber
&K^\sfb=\{Y^1,Y^2,Y^3,Y^4\},&&\varsigma^\sfb=\{2\varsigma^1,2\varsigma^2,\varsigma^3,\varsigma^4\},
\\
\nonumber
&K^\sfc=\{Y^0,Y^1,Y^2,Y^3,Y^4,Y^5\},
&&\varsigma^\sfc=\{2\varsigma^0,2\varsigma^1,2\varsigma^2,\varsigma^3,\varsigma^4,\varsigma^5\},
\\
\nonumber
&K^\sfd=\{Y^1,Y^3,Y^5,Y^6\},&&\varsigma^\sfd=\{2\varsigma^1,\varsigma^3,2t|_{Y^5*}\varsigma^5,\varsigma^6\},
\\
\nonumber
&K^\sfe=\{Y^0,Y^1,Y^3,Y^5,Y^6\},
&&\varsigma^\sfe=\{2\varsigma^0,2\varsigma^1,\varsigma^3,\varsigma^5+2t|_{Y^5*}\varsigma^5,\varsigma^6\},
\end{align}
}
\!\!where $t\in\Hom_\varOmega([2],[2])$ is the transposition of $0,2$. 
}
\end{exa}

Prop. \cref{prop:effhgs} states that every hypergraph state can be effectively
represented after a physically irrelevant phase redefinition.
An optimized classification program of calibrated hypergraph states
might therefore safely be restricted to the effectively represented ones.
However, even after doing so, there still remain redundancies which need to be disposed of.
We shall show how next. 

Let $l\in\bbN$ and let $H\in G[l]$ be a hypergraph. The support $\sigma H$ of $H$ is the set
\begin{equation}
\label{hgsupind1}
\sigma H=\mycom{{}_\uuu}{{}_{X\in H}}X.
\end{equation}
The index of $H$ is the integer \hphantom{xxxxxxxxxxx} 
\begin{equation}
\label{hgsupind2}
\iota H=|\sigma H|. 
\end{equation}
The support $\sigma H$ is therefore the set of the vertices of the 
hyperedges of $H$. The index $\iota H$ is the number of such vertices. Clearly, $\sigma H\subseteq [l]$
and $\iota H\leq l$.

A hypergraph $H\in G[l]$ is said to be primitive if $\iota H=l$. So, $H$ is primitive 
if  each vertex of $[l]$ is contained in at least one hyperedge of $H$.

\begin{defi}
An effective calibrated hypergraph $(H,\varrho)\in G_C[l]$ is said to be primitive if $H$ is primitive. 
\end{defi}

\noindent
The following proposition indicates that with each effective calibrated hypergraph there is associated 
uniquely a primitive effective calibrated hypergraph, which in the appropriate sense stated codifies it.
We recall here that for $l,m\in\bbN$ a morphism $f\in\Hom_\varOmega([l],[m])$ is increasing if for any $r,s\in[l]$ with $r<s$,
one has that $f(r)<f(s)$. 

\begin{prop} \label{prop:projexst}
Let $l\in\bbN$ and let $(H,\varrho)\in G_C[l]$ be an effective calibrated hypergraph. Then,
there are a unique increasing injective morphism $z\in\Hom_\varOmega([\iota H],[l])$ such that $z([\iota H])=\sigma H$ 
and a unique primitive effective calibrated hypergraph $(\bar H,\bar\varrho)\in G_C[\iota H]$
with the property that $G_Cz(\bar H,\bar\varrho)=(H,\varrho)$ (cf. eq. (3.2.33) of I). 
\end{prop}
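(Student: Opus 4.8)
The plan is to construct the pair $(z,(\bar H,\bar\varrho))$ explicitly from $(H,\varrho)$, verify that it has the asserted properties, and then establish uniqueness by observing that any admissible $z$ must have image $\sigma H$ and, being increasing and injective on a set of the correct cardinality, is thereby forced.

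\medskip

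\noindent\textit{Construction of $z$.} First I would note that $\sigma H\subseteq[l]$ is a subset of cardinality $\iota H$. There is exactly one increasing bijection from $[\iota H]$ onto $\sigma H$ when the latter is given its induced total order: enumerate $\sigma H$ in increasing order as $s_0<s_1<\cdots<s_{\iota H-1}$ and set $z(k)=s_k$. This $z\in\Hom_\varOmega([\iota H],[l])$ is increasing and injective by construction, and $z([\iota H])=\sigma H$. Uniqueness of $z$ with these three properties is immediate: any increasing injection $[\iota H]\to[l]$ with image $\sigma H$ must send the $k$-th element of the domain to the $k$-th smallest element of $\sigma H$, since increasingness forces it to be order-preserving onto its image.

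\medskip

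\noindent\textit{Construction of $(\bar H,\bar\varrho)$.} Since $z$ is injective with image $\sigma H$, it admits a set-theoretic inverse $z^{-1}:\sigma H\to[\iota H]$; extend notation so that for $X\in H$ (hence $X\subseteq\sigma H$) the image $z^{-1}(X)\subseteq[\iota H]$ is defined, and for an exponent function $w\in\msA^X$ the pushforward along $z^{-1}|_X$ gives an exponent function on $z^{-1}(X)$. Concretely I would define $\bar H=\{z^{-1}(X)\mid X\in H\}$ and, for $\bar X=z^{-1}(X)\in\bar H$, define $\bar\varrho_{\bar X}$ to be the calibration of $\bar X$ corresponding to $\varrho_X$ under the bijection $z|$; since $z$ is injective no two distinct $X$ collapse, so this is well-defined. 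One then checks directly from the definition of $G_Cz$ (eq.\ (3.2.33) of I) that $G_Cz(\bar H,\bar\varrho)=(H,\varrho)$: the hyperedge map $X\mapsto z(\bar X)=X$ is the identity on $H$, and the induced calibration pushforward returns $\varrho$ because the exponent-function pushforward along an injective vertex map is the inverse of the one we used. Primitivity of $\bar H$ follows because $\sigma\bar H=z^{-1}(\sigma H)=[\iota H]$, so $\iota\bar H=\iota H$, i.e.\ $\bar H\in G[\iota H]$ is primitive. Effectiveness of $(\bar H,\bar\varrho)$ is inherited from effectiveness of $(H,\varrho)$: the two defining conditions (non-vanishing of $\bar\varrho_{\bar X}$, and vanishing of $\bar\varrho_{\bar X}(u)$ when $\supp u\neq\bar X$) transport verbatim under the bijection $z|_{\bar X}:\bar X\to X$, which carries supports to supports.

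\medskip

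\noindent\textit{Uniqueness of $(\bar H,\bar\varrho)$.} Suppose $(\bar H',\bar\varrho')\in G_C[\iota H]$ is primitive effective with $G_Cz(\bar H',\bar\varrho')=(H,\varrho)$. Applying $G_Cz$ to a primitive hypergraph over $[\iota H]$ and relabelling vertices by the injection $z$ recovers the hyperedges exactly as $z$-images, so $\{z(\bar X')\mid \bar X'\in\bar H'\}=H$; composing with the injectivity of $z$ forces $\bar H'=z^{-1}(H)=\bar H$. The calibration is then pinned down hyperedge by hyperedge: $G_Cz$ acts on calibrations by pushforward along the injective vertex relabelling, which is a bijection on the relevant exponent monoids, hence invertible, so $\bar\varrho'=\bar\varrho$. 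The one subtlety worth spelling out is that primitivity of $\bar H'$ is genuinely needed here — without it $\bar H'$ could carry isolated vertices outside $\sigma\bar H'$, which $G_Cz$ would lose track of — so uniqueness holds precisely in the primitive class.

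\medskip

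\noindent I expect the main obstacle to be purely bookkeeping: carefully matching the pushforward conventions of $G_Cz$ on calibrations (def.\ 3.2.3--3.2.5 and eq.\ (3.2.33) of I) with the inverse pushforward used to build $\bar\varrho$, and confirming that for an \emph{injective} vertex map the calibration pushforward is literally a relabelling bijection rather than the more complicated combining operation that appears in the general (non-injective) case treated in the proof of Prop.\ \cref{prop:shefdgcfd}. Once that identification is made, every verification is a one-line transport along a bijection.
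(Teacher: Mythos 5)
Your construction and verification coincide with the paper's own proof in every essential respect: the same increasing injection $z$, the same $\bar H=\{z^{-1}(X)\}$ with $\bar\varrho$ defined by inverse push--forward, the same transport of effectiveness and primitivity along the bijections $z|_{\bar X}$, and uniqueness via injectivity of $G_Cz$. One small correction: your closing remark that primitivity of $\bar H'$ is ``genuinely needed'' for uniqueness is not right --- injectivity of $z$ already makes $G_Cz$ injective on all of $G_C[\iota H]$, so the preimage is unique without any primitivity hypothesis (primitivity matters only for the existence claim, i.e.\ for asserting that the constructed preimage lies in the primitive effective class).
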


\begin{proof} Since $\sigma H\subseteq [l]$ and $\iota H\leq l$, there is an injective function 
$z:[\iota H]\rightarrow [l]$ such that $z([\iota H])=\sigma H$. 
If $z$ is required to be increasing, then $z$ is unique with this property.

For all $X\in H$, $X\subseteq\sigma H=z([\iota H])$ and $X\neq\emptyset$
and consequently $z^{-1}(X)\subseteq[\iota H]$ and $z^{-1}(X)\neq\emptyset$. 
Therefore, if we set 
\begin{equation}
\label{hgststrucp1}
\bar H=\{z^{-1}(X)\hfpt|\hfpt X\in H\},
\end{equation}
then $\bar H\in G[\iota H]$ is a hypergraph. It is simple to see that $\bar H$ contains
the same number of hyperedges as $H$ does
and that $Gz(\bar H)=H$, as a consequence of the fact that $z$ maps $[\iota H]$ injectively onto $\sigma H$
and therefore $z(z^{-1}(X))=X$ for $X\subseteq\sigma H$. 

We note that for all $X\in H$, $z|_{z^{-1}(X)}:z^{-1}(X)\rightarrow X$ is a bijection. 
Since $\varrho_X\in \msM^{\msA^X}$ is a calibration of $X$, 
its  push--forward 
\begin{equation}
\label{hgststrucp2}
\bar\varrho_{z^{-1}(X)}=(z|_{z^{-1}(X)})^{-1}{}_*(\varrho_X)\in \msM^{\msA^{z^{-1}(X)}}
\end{equation}
is a calibration of $z^{-1}(X)$ (cf. def. 3.2.4, eq. (3.2.13) of I). 
Owing to \ceqref{hgststrucp1}, the above expression defines a calibration $\bar\varrho\in C(\bar H)$
of $\bar H$. We thus constructed a calibrated hypergraph $(\bar H,\bar\varrho)\in G_C[\iota H]$. 
We have now to show that $(\bar H,\bar\varrho)$ is effective. This requires proving a few preliminary results.

Let $X,Y\subset\bbN$ be non empty finite subsets and let $h:X\rightarrow Y$ be a bijection.
If $w\in\msA^X$ is an exponent function of $X$, then its push--forward $h_\star(w)\in\msA^Y$ (cf. def.
3.2.2 of I) is given by $h_\star(w)=w\circ h^{-1}$, as follows readily from (3.2.2) of I and the
invertibility of $h$. From here, it is evident that $\supp h_\star(w)=Y$ if and only if $\supp w=X$. 

Let $X,Y\subset\bbN$ and $h:X\rightarrow Y$ be as above non empty finite subsets and a bijection, respectively.
By the bijectivity of $h$ and (3.2.8) of I, the exponent function push--forward map
$h_\star:\msA^X\rightarrow\msA^Y$ is a bijection with inverse $h_\star{}^{-1}=h^{-1}{}_\star$.
Consequently, if $\varpi\in\msM^{\msA^X}$ is a calibration of $X$, its 
push--forward $h_*(\varpi)\in\msM^{\msA^Y}$ (cf. def. 3.2.4 of I) is given by
$h_*(\varpi)=\varpi\circ h^{-1}{}_\star$, 
by the remark just below eq. (3.2.13) of I. 
It is evident from this expression that $h_*(\varpi)\neq 0_{\msA^Y}$ if and only if
$\varpi\neq 0_{\msA^X}$. It also appears by the findings of the previous paragraph that
$h_*(\varpi)(v)=0$ for all $v\in\msA^Y$ with $\supp v\neq Y$ if and only if
$\varpi(w)=0$ for all $w\in\msA^X$ with $\supp w\neq X$.

By \ceqref{hgststrucp1} and the injectivity of $h$ again, expression \ceqref{hgststrucp2} can be cast as 
\begin{equation}
\label{hgststrucp4}
\bar\varrho_{\bar X}=(z|_{\bar X})^{-1}{}_*(\varrho_{z(\bar X)})
\end{equation}
with $\bar X\in\bar H$, where $z|_{\bar X}:\bar X\rightarrow z(\bar X)$ is a bijection.
The findings at the end of the previous paragraph ensure now that the calibrated hypergraph
$(\bar H,\bar\varrho)$ is effective owing to $(H,\varrho)$ being so, as required. 

The hypergraph $(\bar H,\bar\varrho)$ is in addition primitive.
Indeed, because of \ceqref{hgsupind1} and \ceqref{hgststrucp1}, we have that $\sigma\bar H=z^{-1}(\sigma H)$
and so, by \ceqref{hgsupind1}, $\iota\bar H=\iota H$. Therefore, $\bar H$ is primitive implying that 
$(\bar H,\bar\varrho)$ is too. 

From \ceqref{hgststrucp1}, we find that $Gz(\bar H)=H$. From (3.2.22) of I, we have further that 
{\allowdisplaybreaks
\begin{align}
\label{hgststrucp3}
z_{\bar H*}(\bar\varrho)_X&=\mycom{{}_\sss}{{}_{\bar X\in\bar H,z(\bar X)=X}}z|_{\bar X*}(\bar\varrho_{\bar X})
\\
\nonumber
&=z|_{z^{-1}(X)*}(\bar\varrho_{z^{-1}(X)})
\\
\nonumber
&=z|_{z^{-1}(X)*}((z|_{z^{-1}(X)})^{-1}{}_*(\varrho_X))=\varrho_X,
\end{align}
}
\!\!where we used that only $z^{-1}(X)\in\bar H$ contributes to the sum by the injectivity of $z$ 
and in the last step we relied on (3.2.16), (3.2.17) of I. 
Consequently, $z_{\bar H*}(\bar\varrho)=\varrho$.
From (3.2.33) of I and by what we have just obtained, $G_Cz(\bar H,\bar\varrho)=(H,\varrho)$ as stated. 
$(\bar H,\bar\varrho)$ is unique with this property, since $G_Cz$ is injective
because of $z$ being so and the functoriality of $G_C$. This completes the proof. 
\end{proof}

Prop. \cref{prop:projexst} justifies the following definition

\begin{defi}
Let $l\in\bbN$ and let $(H,\varrho)\in G_C[l]$ be an effective calibrated hypergraph.
The chart of $(H,\varrho)$ is the unique increasing injective morphism
$z_H\in\Hom_\varOmega([\iota H],[l])$ such that $z_H([\iota H])=\sigma H$. The primitive core
$\pro(H,\varrho)\in G_C[\iota H]$ of $(H,\varrho)$ is the unique primitive effective calibrated 
hypergraph such that
\begin{equation}
\label{chroot}
G_Cz_H(\pro(H,\varrho))=(H,\varrho).
\end{equation}
\end{defi}

\noindent
We note also that when $(H,\varrho)$ is primitive $\pro(H,\varrho)=(H,\varrho)$, since $\iota H=l$ and
so $z_H=\id_{[l]}$ in that case.

The above analysis suggests the following definition. 

\begin{defi}
A calibrated hypergraph state is said to be primitively effectively represented if it is expressed as 
$\ket{(H,\varrho)}$ for some $l\in\bbN$ and primitive effective calibrated hypergraph
$(H,\varrho)\in G_C[l]$.   
\end{defi}

\noindent
Again, recall that being primitively effectively represented is a property of the
way a hypergraph state is encoded by a hypergraph and not of the state itself.

\begin{exa} \label{exa:pefhyp} Primitively effectively represented three qutrit calibrated hypergraph states.
{\rm The calibrated hypergraph states $\ket{(K^i,\varsigma^i)}$, $i=\sfa,\sfb,\sfc,\sfd,\sfe$, 
examined in ex. \cref{exa:effhyp} are all primitively effectively represented.
This is evident in particular for the states $\ket{(K^\sfc,\varsigma^\sfc)}$, $\ket{(K^\sfe,\varsigma^\sfe)}$
upon inspecting the hypergraphs $K^\sfc$, $K^\sfe$ shown in \ceqref{effhypex3}.
}
\end{exa}

Owing to \ceqref{whgsts16} and \ceqref{chroot}, if $l\in\bbN$ and $(H,\varrho)\in G_C[l]$ is an
effective calibrated hypergraph,  then the hypergraph states $\ket{(H,\varrho)}$ and $\ket{\pro(H,\varrho)}$
are related as 
\begin{equation}
\label{}
\ket{(H,\varrho)}=\ket{G_Cz_H(\pro(H,\varrho))}=\scH_Ez_H\ket{\pro(H,\varrho)}.
\end{equation}
Recall that $\scH_Ez_H\in\msI(\scH_E[\iota H],\scH_E[l])$ is isometric because $z_H$
is injective (cf. prop. 4.3.1 of I). Consequently, in any sensible classification scheme, the hypergraph states
$\ket{(H,\varrho)}$, $\ket{\pro(H,\varrho)}$ need not be considered independently. Notice however that while 
$\ket{(H,\varrho)}$ is merely effectively represented, $\ket{\pro(H,\varrho)}$, more restrictively, is
primitively so. An optimized classification program of calibrated hypergraph states may therefore be
narrowed to those states which are primitively effectively represented. 

For $l\in\bbN$, the invertible morphisms $f\in\Hom_\varOmega([l],[l])$ are called automorphisms. They
form a subgroup $\Aut_\varOmega([l])$ of the monoid $f\in\Hom_\varOmega([l],[l])$.
As $\Hom_\varOmega([l],[l])$ consists of all functions $f:[l]\rightarrow[l]$, $\Aut_\varOmega([l])$ is isomorphic
to the degree $l$ symmetric group $\msS(l)$. 

For any fixed $l\in\bbN$, the calibrated hypergraph functor $G_C$ establishes an action of the
automorphism group $\Aut_\varOmega([l])$ on the calibrated hypergraph set $G_C[l]$. This action is compatible
with hypergraph effectiveness. 

\begin{prop} \label{prop:pfrspctef}
Let $l\in\bbN$. Let $(H,\varrho)\in G_C[l]$ be an effective calibrated hypergraph and $f\in\Aut_\varOmega([l])$
be an automorphism. Then, the hypergraph $G_Cf(H,\varrho)\in G_C[l]$ is also effective. Further,
if $(H,\varrho)$ is primitive, then $G_Cf(H,\varrho)$ also is.
\end{prop}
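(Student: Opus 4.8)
The plan is to reduce the whole statement to the bijectivity lemmas already established inside the proof of Proposition~\cref{prop:projexst}. First I would unwind what $G_Cf(H,\varrho)$ is when $f$ is an automorphism. Since $f:[l]\rightarrow[l]$ is a bijection, $Gf(H)=\{f(X)\hfpt|\hfpt X\in H\}$, the assignment $X\mapsto f(X)$ is a bijection of $H$ onto $Gf(H)$ with each $f(X)$ non empty and $|f(X)|=|X|$, and each restriction $f|_X:X\rightarrow f(X)$ is a bijection. By the calibration push--forward formula (3.2.22) of I the calibration $f_{H*}(\varrho)$ of $Gf(H)$ is then simply $f_{H*}(\varrho)_{f(X)}=f|_{X*}(\varrho_X)$, the sum in (3.2.22) of I collapsing to a single term by the injectivity of $f$.

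Next I would invoke the two auxiliary facts proven within the proof of Proposition~\cref{prop:projexst} for a bijection $h:X\rightarrow Y$ of non empty finite subsets: the exponent function push--forward $h_\star:\msA^X\rightarrow\msA^Y$ is a bijection with $h_\star(w)=w\circ h^{-1}$, so $\supp h_\star(w)=Y$ if and only if $\supp w=X$; and the calibration push--forward satisfies $h_*(\varpi)=\varpi\circ h^{-1}{}_\star$, whence $h_*(\varpi)\neq0_{\msA^Y}$ if and only if $\varpi\neq0_{\msA^X}$, and $h_*(\varpi)(v)=0$ for all $v\in\msA^Y$ with $\supp v\neq Y$ if and only if $\varpi(w)=0$ for all $w\in\msA^X$ with $\supp w\neq X$. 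Applying these with $h=f|_X$ and $Y=f(X)$, both defining conditions of effectiveness transfer directly from $(H,\varrho)$ to $G_Cf(H,\varrho)$: the first because $f|_{X*}(\varrho_X)\neq0_{\msA^{f(X)}}$ iff $\varrho_X\neq0_{\msA^X}$, the second because $f|_{X*}(\varrho_X)$ vanishes on every exponent function of $f(X)$ not supported on all of $f(X)$ iff $\varrho_X$ vanishes on every exponent function of $X$ not supported on all of $X$. Since every hyperedge of $Gf(H)$ has the form $f(X)$ with $X\in H$, this establishes effectiveness of $G_Cf(H,\varrho)$.

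Finally, for the primitivity claim I would compute the support via \ceqref{hgsupind1}: $\sigma(Gf(H))=\mycom{{}_\uuu}{{}_{X\in H}}f(X)=f\big(\mycom{{}_\uuu}{{}_{X\in H}}X\big)=f(\sigma H)$, the middle equality holding because $f$ is a function. If $(H,\varrho)$ is primitive then $\sigma H=[l]$, so $\sigma(Gf(H))=f([l])=[l]$ by surjectivity of $f$, hence $\iota(Gf(H))=l$ by \ceqref{hgsupind2} and $Gf(H)$, and therefore $G_Cf(H,\varrho)$, is primitive.

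I expect no genuine obstacle here: the substantive content is entirely carried by the bijectivity lemmas recycled from the proof of Proposition~\cref{prop:projexst}, and the only mild care needed is in checking that the hypergraph push--forward $Gf$ under an automorphism reduces the sum in (3.2.22) of I to a single term, which is immediate from injectivity of $f$, and that $f|_X$ really is a bijection of $X$ onto $f(X)$, which is immediate from $f$ being a bijection of $[l]$.
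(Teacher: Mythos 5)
Your proposal is correct and follows essentially the same route as the paper's own proof: both reduce effectiveness to the two bijectivity lemmas on exponent-function and calibration push-forwards established inside the proof of Proposition \ref{prop:projexst}, applied to the bijections $f|_X:X\rightarrow f(X)$, and both obtain primitivity from $\sigma Gf(H)=f(\sigma H)$. The only cosmetic difference is that you index the pushed-forward calibration by $f(X)$ while the paper writes $f_{H*}(\varrho)_Y=f|_{f^{-1}(Y)*}(\varrho_{f^{-1}(Y)})$, which is the same identity.
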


\begin{proof}
In the course of the proof of prop. \cref{prop:projexst}, the following was found.
Let $X,Y\subset\bbN$ and $h:X\rightarrow Y$ be non empty finite subsets and a bijection, respectively.
Let further $\varpi\in\msM^{\msA^X}$ be a calibration of $X$.
Then, $h_*(\varpi)\neq 0_{\msA^Y}$ if and only if $\varpi\neq 0_{\msA^X}$.
Moreover, $h_*(\varpi)(v)=0$ for all $v\in\msA^Y$ with $\supp v\neq Y$
if and only if $\varpi(w)=0$ for all $w\in\msA^X$ with $\supp w\neq X$.
  

Consider now the calibrated hypergraph $G_Cf(H,\varrho)=(Gf(H),f_{H*}(\varrho))\in G_C[l]$
(cf. eq. (3.2.33) of I). The push--forward $f_{H*}(\varrho)\in C(Gf(H))$ of the calibration $\varrho$
(cf. def. 3.2.6 of I) appearing here is given by 
\begin{equation}
\label{pfrspctefp1}
f_{H*}(\varrho)_Y=f|_{f^{-1}(Y)*}(\varrho_{f^{-1}(Y)})
\end{equation}  
with $Y\in G_C(H)$ by (3.2.22) of I and the invertibility of $f$. From \ceqref{pfrspctefp1} and  the invertibility
of the functions $f|_X:X\rightarrow f(X)$ for all $X\in H$, the facts recalled in the previous paragraph
ensure now that the calibrated hypergraph $G_Cf(H,\varrho)$ is effective as a consequence of
$(H,\varrho)$ being so, as claimed.

Recall that $Gf(H)=\{f(X)\hfpt|\hfpt X\in H\}$. Thus, $\sigma Gf(H)=f(\sigma H)$,
by \ceqref{hgsupind1}, and so $\iota Gf(H)=\iota H$, by \ceqref{hgsupind2}. Then, if $(H,\varrho)$ is
primitive, then $G_Cf(H,\varrho)$ also is. 
\end{proof}

\begin{defi}
Let $l\in\bbN$ and let $(H,\varrho),(K,\varsigma)\in G_C[l]$ be primitive effective calibrated hypergraphs.
$(H,\varrho)$, $(K,\varsigma)$ are said to be congruent if there exists an automorphism
$f\in\Aut_\varOmega([l])$ such that $G_Cf(H,\varrho)=(K,\varsigma)$.
\end{defi}

\begin{prop}
For $l\in\bbN$, congruence of primitive effective calibrated hypergraphs of $G_C[l]$ is an equivalence relation. 
\end{prop}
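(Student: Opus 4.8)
The plan is to recognize congruence as the orbit relation of the action of the automorphism group $\Aut_\varOmega([l])$ on $G_C[l]$, restricted to the set of primitive effective calibrated hypergraphs, and then verify the three defining properties of an equivalence relation directly from the functoriality of $G_C$ and the group structure of $\Aut_\varOmega([l])$. First I would note that, by prop. \cref{prop:pfrspctef}, $G_Cf$ sends a primitive effective calibrated hypergraph to a primitive effective one for every $f\in\Aut_\varOmega([l])$; hence congruence is genuinely a binary relation on the set of primitive effective calibrated hypergraphs of $G_C[l]$, and the entire argument stays inside this set.

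For reflexivity, I would invoke that $\id_{[l]}\in\Aut_\varOmega([l])$ together with the functor identity law $G_C\id_{[l]}=\id_{G_C[l]}$, so that $G_C\id_{[l]}(H,\varrho)=(H,\varrho)$ and $(H,\varrho)$ is congruent to itself. For symmetry, assuming $G_Cf(H,\varrho)=(K,\varsigma)$ for some $f\in\Aut_\varOmega([l])$, I would use that $f^{-1}\in\Aut_\varOmega([l])$ and the composition law $G_C(f^{-1})\circ G_Cf=G_C(f^{-1}\circ f)=G_C\id_{[l]}$ to get $G_C(f^{-1})(K,\varsigma)=(H,\varrho)$, so $(K,\varsigma)$ is congruent to $(H,\varrho)$. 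For transitivity, assuming $G_Cf(H,\varrho)=(K,\varsigma)$ and $G_Cg(K,\varsigma)=(L,\tau)$ with $f,g\in\Aut_\varOmega([l])$, I would observe $g\circ f\in\Aut_\varOmega([l])$ and $G_C(g\circ f)(H,\varrho)=G_Cg(G_Cf(H,\varrho))=G_Cg(K,\varsigma)=(L,\tau)$, whence $(H,\varrho)$ is congruent to $(L,\tau)$.

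There is essentially no obstacle in this argument; the only points requiring care are that $\Aut_\varOmega([l])$ is indeed a group — recorded above, being isomorphic to the symmetric group $\msS(l)$ — and that $G_C$ is a functor, so that $G_C(g\circ f)=G_Cg\circ G_Cf$ and $G_C\id_{[l]}=\id_{G_C[l]}$; both facts are available from the construction of $G_C$ in I (and the latter is already used in the proof of prop. \cref{prop:projexst}). No manipulation of calibrations beyond what prop. \cref{prop:pfrspctef} already provides is needed.
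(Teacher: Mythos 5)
Your proof is correct and takes essentially the same route as the paper, which simply notes that congruence is the orbit relation of the group action of $\Aut_\varOmega([l])$ on $G_C[l]$ via the functor $G_C$, combined with prop. \cref{prop:pfrspctef} to keep the relation inside the primitive effective hypergraphs. You merely spell out the reflexivity, symmetry and transitivity checks that the paper leaves implicit.
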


\begin{proof}
This follows immediately from $\Aut_\varOmega([l])$ being a group acting on $G_C[l]$ through the functor $G_C$
and prop. \cref{prop:pfrspctef}. 
\end{proof}

\noindent
In this way, for each $l\in\bbN$, the subset of $G_C[l]$ of the primitive effective calibrated hypergraphs
is partitioned in pairwise disjoint congruence classes.



If $l\in\bbN$, $(H,\varrho),(K,\varsigma)\in G_C[l]$ are congruent primitive effective calibrated hypergraphs
and $f\in\Aut_\varOmega([l])$ is an automorphism such that $G_Cf(H,\varrho)=(K,\varsigma)$, then 
the associated congruent primitively effectively represented hypergraph states
$\ket{(H,\varrho)}$, $\ket{(K,\varsigma)}$ are related as 
\begin{equation}
\label{}
\ket{(K,\varsigma)}=\ket{G_Cf(H,\varrho)}=\scH_Ef\ket{(H,\varrho)}
\end{equation}
by virtue of \ceqref{whgsts16}. Now, $\scH_Ef\in\msU(\scH_E[l])$ is unitary because $f$
is invertible (cf. prop. 4.3.1 of I). Consequently, in in any sensible classification scheme, the hypergraph state
$\ket{(H,\varrho)}$, $\ket{(K,\varsigma)}$ need not be considered independently: they differ only by a relabelling
of the underlying $l$ qudits.
An optimized classification program of calibrated hypergraph states can therefore be reduced to cataloging 
the primitively effectively represented ones up to congruence.

Let $l\in\bbN$ and let $(H,\varrho)\in G_C[l]$ be a primitive effective calibrated hypergraph.
An automorphism $f\in\Aut_\varOmega([l])$ is said to fix $(H,\varrho)$ if $G_Cf(H,\varrho)=(H,\varrho)$.
The automorphisms fixing $(H,\varrho)$ constitute a subgroup $\msY_{(H,\varrho)}$ of $\Aut_\varOmega([l])$,
called the isotropy group of $(H,\varrho)$.

By general properties of group actions, in the case studied here the $\Aut_\varOmega([l])$ automorphism group action, 
if $(H,\varrho),(K,\varsigma)\in G_C[l]$ are congruent primitive effective calibrated hypergraphs, then their isotropy
groups $\msY_{(H,\varrho)}$, $\msY_{(K,\varsigma)}$ are conjugated in $\Aut_\varOmega([l])$.
We see so that every congruence class of primitive effective calibrated
hypergraphs $(H,\varrho)\in G_C[l]$ \pagebreak 
is characterized by an isotropy group conjugacy class in $\Aut_\varOmega([l])$.

From the findings of the previous paragraph, it is clear that every congruence class of primitively
effectively represented hypergraph states $\ket{(H,\varrho)}$ 
is characterized by an isotropy group conjugacy class in $\Aut_\varOmega([l])$. We note here that if
$f\in\msY_{(H,\varrho)}$ fixes $(H,\varrho)$, then the hypergraph state $\ket{(H,\varrho)}$ obeys
\begin{equation}
\label{}
\ket{(H,\varrho)}=\ket{G_Cf(H,\varrho)}=\scH_Ef\ket{(H,\varrho)}.
\end{equation}
$\scH_Ef$ thus fixes $\ket{(H,\varrho)}$.




\subsection{\textcolor{blue}{\sffamily Calibrated vs weighted hypergraph states}}\label{subsec:hgscompar}

An important issue of the theory developed in the previous subsections is the comparison of the calibrated 
hypergraph states it deals with and the weighted hypergraph states studied in earlier literature, e.g.
\ccite{Qu:2012eqs,Rossi:2012qhs,Steinhoff:2016:qhs,Xiong:2017qhp}.  

In the $\varOmega$ monadic framework, weighted hypergraph states are defined
analogously to calibrated hypergraph ones but relying on the weighted hypergraph $\varOmega$ monad
introduced in subsect. 3.3 of I rather than the calibrated one. The basic 
monoids $\msA$ and $\msM$ are assumed to be the same for both types of hypergraph states 
(see subsect. \cref{subsec:whgsts}). 

\begin{defi} \label{def:sigmaha}
Let $l\in\bbN$ and let $(H,\alpha)\in G_W[l]$ be a weighted hypergraph.
The qudit weighted hypergraph operator $D_{(H,\alpha)}\in\End_{\bfsfH}(\scH_E[l])$
associated with $(H,\alpha)$ is 
\begin{equation}
\label{hgscompar1}
D_{(H,\alpha)}=\mycom{{}_\sss}{{}_{x\in E[l]}}F_l{}^+\ket{x}\,\omega^{\sigma_{(H,\alpha)}(x)}\hfpt\bra{x}F_l,
\end{equation}
where $F_l$ is the Fourier transform operator (cf. eq. \ceqref{four1}) and 
the phase function $\sigma_{(H,\alpha)}:E[l]\rightarrow\msP$ is given by
\begin{equation}
\label{hgscompar3}
\sigma_{(H,\alpha)}(x)=\mycom{{}_\sss}{{}_{X\in H}}\alpha_X
\tr\left(\mycom{{}_\ppp}{{}_{r\in X}}x_r\right)
\end{equation}
for $x\in E[l]$. 
\end{defi}


\begin{defi} \label{def:ketha}
Let $l\in\bbN$. Further, let $(H,\alpha)\in G_W[l]$ be a weighted hypergraph. The qudit weighted hypergraph
state of $(H,\alpha)$ is the ket $\ket{(H,\alpha)}\in\scH_E[l]$ given by
\begin{equation}
\label{hgscompar4}
\ket{(H,\alpha)}=D_{(H,\alpha)}\ket{0_l}.
\end{equation}
\end{defi}

\noindent 
The notation we use for weighted hypergraph states is purposely analogous to that employed
for calibrated ones to make the comparison of the defs. \cref{def:sigmaha}, \cref{def:ketha} and 
\cref{def:sigmahr}, \cref{def:whgst} straightforward.
No confusion should arise.

The following result relates weighted to calibrated hypergraph states. 


\begin{prop} \label{prop:calweicomp}
Let $l\in\bbN$. Then, for each weighted hypergraph $(H,\alpha)\in G_W[l]$, there exists a calibrated
hypergraph $(H,\varrho)\in G_C[l]$ with the same underlying hypergraph $H\in G[l]$ with the property that
$\ket{(H,\alpha)}=\ket{(H,\varrho)}$.
\end{prop}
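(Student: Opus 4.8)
The plan is to realize the weighted hypergraph state $\ket{(H,\alpha)}$ \emph{directly} as a calibrated hypergraph state $\ket{(H,\varrho)}$ by equipping the very same hypergraph $H$ with a calibration $\varrho$ concentrated on the ``all first powers'' exponent functions. Comparing \ceqref{whgsts10} with \ceqref{hgscompar1} and \ceqref{whgsts14} with \ceqref{hgscompar4}, one sees that $D_{(H,\varrho)}$ and $D_{(H,\alpha)}$ have the identical structural shape and that $\ket{(H,\varrho)}=D_{(H,\varrho)}\ket{0_l}$, $\ket{(H,\alpha)}=D_{(H,\alpha)}\ket{0_l}$. Hence it suffices to produce a calibration $\varrho$ over $H$ whose phase function coincides with that of $(H,\alpha)$, i.e. $\sigma_{(H,\varrho)}=\sigma_{(H,\alpha)}$ as maps $E[l]\to\msP$; then $D_{(H,\varrho)}=D_{(H,\alpha)}$ and the two states agree.

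First I would single out the exponent $u^\ast\in\msZ$ that computes the first power of every ring element at once. Recall from subsect. \cref{subsec:cycmongal} that an element $u\in\msZ$ is a tuple $(u_x)_{x\in\msR}$ with $u_x\in\msZ_x=[\iota_x+\pi_x]$ and that $x^u=x^{u_x}$. Set $u^\ast_1=0$ and $u^\ast_x=1$ for $x\neq 1$. This is well defined: for $x\neq 1$ the cyclic monoid $\msC_x$ contains the two distinct elements $x^0=1$ and $x^1=x$, so $\iota_x+\pi_x=|\msC_x|\geq 2$ and therefore $1\in\msZ_x$; for $x=1$ one has $\iota_1=0$, $\pi_1=1$, so $\msZ_1$ is a singleton and $u^\ast_1=0$ is both forced and legitimate. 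By construction $x^{u^\ast}=x^{u^\ast_x}=x$ for every $x\in\msR$ (the case $x=1$ being $1^{0}=1$). Here I use the running assumption $\msA=\msZ$, so that $u^\ast$ is an admissible exponent.

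Next, for each hyperedge $X\in H$ let $w^\ast_X\in\msZ^X$ be the exponent function with $w^\ast_X(r)=u^\ast$ for all $r\in X$, and define $\varrho_X\colon\msZ^X\to\msP$ by $\varrho_X(w^\ast_X)=\alpha_X$ and $\varrho_X(w)=0$ for $w\neq w^\ast_X$. Since by the conventions of subsect. \cref{subsec:whgsts} a calibration of $X$ is merely a function $\msZ^X\to\msP=\msM$ (and $\alpha_X\in\msP$), this is legitimate, and $\varrho=(\varrho_X)_{X\in H}$ is a calibration over $H$, so $(H,\varrho)\in G_C[l]$ has the same underlying hypergraph $H$. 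Substituting $\varrho$ into \ceqref{whgsts9}, the inner sum over $w\in\msZ^X$ collapses to the single term $w=w^\ast_X$, and using $x_r{}^{u^\ast}=x_r$ from the previous step together with \ceqref{hgscompar3} one gets
\[
\sigma_{(H,\varrho)}(x)=\mycom{{}_\sss}{{}_{X\in H}}\alpha_X\,\tr\!\left(\mycom{{}_\ppp}{{}_{r\in X}}x_r{}^{u^\ast}\right)=\mycom{{}_\sss}{{}_{X\in H}}\alpha_X\,\tr\!\left(\mycom{{}_\ppp}{{}_{r\in X}}x_r\right)=\sigma_{(H,\alpha)}(x)
\]
for all $x\in E[l]$. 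Hence $D_{(H,\varrho)}=D_{(H,\alpha)}$ and $\ket{(H,\varrho)}=\ket{(H,\alpha)}$, which is the claim.

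The construction is essentially forced, so there is little to obstruct it; the one point genuinely needing care — the ``hard part'', such as it is — is the first step: checking that ``exponent $1$ everywhere'' is a bona fide element of the cyclicity monoid $\msZ$ in spite of the degenerate behaviour of $\msZ_0$ and $\msZ_1$, and that the generalized exponentiation by it reproduces $\id_\msR$. Everything after that is a one-line substitution into \ceqref{whgsts9}–\ceqref{whgsts14}, which is why I would keep the write-up short and spend whatever detail is warranted on the cyclicity-monoid bookkeeping.
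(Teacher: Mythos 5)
Your construction is exactly the paper's: the paper defines the same special exponent (called $q$ there, with $q_1=0$ and $q_x=1$ for $x\neq 1$, so $x^q=x$), the same delta-concentrated calibration $\varrho_X(w)=\alpha_X\delta_{w,w_X}$ with $w_X(r)=q$, and concludes by matching phase functions via \ceqref{whgsts9} and \ceqref{hgscompar3}. Your extra care in verifying that $1\in\msZ_x$ for $x\neq 1$ is a welcome detail the paper leaves implicit, but the argument is the same.
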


\begin{proof}
The cyclic submonoids $\msZ_x$ with $x\in\msR$ (cf. subsect. \cref{subsec:cycmongal}) have the 
property that $\msZ_x=[1]$ for $x=1$ and $\msZ_x\supseteq [2]$ for $x\neq 1$. We define a special
element $q=(q_x)_{x\in\msR}\in\msZ$ by $q_x=0$ for $x=1$ and $q_x=1$ for $x\neq 1$ and observe that 
$x^q=x$ for $x\in\msR$. 

For $X\in H$, there is a special element $w_X\in\msA^X$ defined by $w_X(r)=q$ for $r\in X$.
We define $\varrho\in C(H)$ by $\varrho_X(w)=\alpha_X\delta_{w,w_X}$ for $w\in\msA^X$. It is immediate to
check from \ceqref{whgsts9} and \ceqref{hgscompar3} that $\sigma_{(H,\varrho)}(x)=\sigma_{(H,\alpha)}(x)$
for $x\in E]l]$. From \ceqref{whgsts14} and \ceqref{hgscompar4}, it follows then that 
$\ket{(H,\alpha)}=\ket{(H,\varrho)}$, as stated. 
\end{proof}

The above proposition shows that calibrated hypergraph states constitute a broad generalization of
weighted hypergraph states. To appreciate the breadth of it, we introduce another class
of graph states, which are more general than weighted hypergraph states, in that
they allow for polynomial phase functions, but still
are instances of calibrated hypergraph states. Let us introduce the parameter
\begin{equation}
\label{}
\delta=\max_{x\in\msR}(\iota_x+\pi_x)\in\bbN_+, \vphantom{\bigg]}
\end{equation}
where $\iota_x$, $\pi_x$ are  the index and the period of $x$ (cf. subsect. \cref{subsec:cycmongal}).
$\delta$ has the property that for each $x\in\msR$ the cyclic monoid $\msC_x=\{x^k\hfpt|\hfpt k\in[\delta]\}$.
As a consequence, the finite set $\{x^k\hfpt|\hfpt x\in\msR,~k\in[\delta]\}$ includes all powers of all elements of $\msR$.
(Please, note that the $x^k$ need not be all distinct.)

\begin{prop} \label{prop:chgspoly}
Let $l\in\bbN$. Let $H\in G[l]$ be a hypergraph and for each hyperedge $X\in H$ let a function
$\tau_X:[\delta]^X\rightarrow\msM$ be assigned. Next, let $D_{(H,\tau)}\in\End_{\bfsfH}(\scH_E[l])$
be the operator with expression 
\begin{equation}
\label{chgspoly1}
D_{(H,\tau)}=\mycom{{}_\sss}{{}_{x\in E[l]}}F_l{}^+\ket{x}\,\omega^{\sigma_{(H,\tau)}(x)}\hfpt\bra{x}F_l,
\end{equation}
where the phase function $\sigma_{(H,\tau)}:E[l]\rightarrow\msP$ is given by
\begin{equation}
\label{chgspoly2}
\sigma_{(H,\tau)}(x)=\mycom{{}_\sss}{{}_{X\in H}}\mycom{{}_\sss}{{}_{a\in[\delta]^X}}\tau_X(a)
\tr\left(\mycom{{}_\ppp}{{}_{r\in X}}x_r{}^{a(r)}\right)
\end{equation}
for $x\in E[l]$. Let finally $\ket{(H,\tau)}\in\scH_E[l]$ be the state 
\begin{equation}
\label{chgspoly3}
\ket{(H,\tau)}=D_{(H,\tau)}\ket{0_l}.
\end{equation}
Then, there is a calibration $\varrho\in C(H)$ such that $\ket{(H,\tau)}=\ket{(H,\varrho)}$. 
\end{prop}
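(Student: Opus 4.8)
The plan is to exhibit the calibration $\varrho$ explicitly, by pulling back the polynomial data $\tau$ along a canonical map that turns an ordinary exponent in $[\delta]$ into a cyclicity--monoid exponent in $\msZ$. First I would recall from subsect. \cref{subsec:cycmongal} the functions $h_x:\bbN\rightarrow\bbN$ attached to the elements $x\in\msR$: by items \cref{item:gr1}--\cref{item:gr3} and \ceqref{cycmongal1} they satisfy $h_x(k)<\iota_x+\pi_x$ and $x^{h_x(k)}=x^k$ for all $k\in\bbN$. Since $\iota_x+\pi_x\leq\delta$ by the definition of $\delta$, the value $h_x(k)$ lies in the underlying set $[\iota_x+\pi_x]$ of the cyclic monoid $\msZ_x=\bbH_{\iota_x,\pi_x}$ (cf. \ceqref{cycmongal18}). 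This lets me define a map $j:[\delta]\rightarrow\msZ$ by $j(k)=(h_x(k))_{x\in\msR}$, a legitimate element of $\msZ=\ddd_{x\in\msR}\msZ_x$ for every $k\in[\delta]$ by \ceqref{cycmongal6}. The identity I would record next is $x^{j(k)}=x^{h_x(k)}=x^k$ for all $x\in\msR$ and $k\in[\delta]$, the first equality being the definition \ceqref{cycmongal7} of generalized exponentiation (the selected component of $j(k)$ is precisely the one indexed by $x$).

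Given this, for each hyperedge $X\in H$ I would extend $j$ pointwise to a map $j_X:[\delta]^X\rightarrow\msA^X$ (recall the standing assumption $\msA=\msZ$) by $j_X(a)(r)=j(a(r))$ for $r\in X$; the identity above immediately yields $\prod_{r\in X}x_r{}^{j_X(a)(r)}=\prod_{r\in X}x_r{}^{a(r)}$ for every $x\in E[l]$. I would then set, for each $X\in H$, $\varrho_X(w)=\sum_{a\in[\delta]^X,\,j_X(a)=w}\tau_X(a)$ for $w\in\msA^X$, an empty sum being understood as $0$. This is a genuine calibration of $X$ because $\msM$ is closed under finite sums, and the assignment $X\mapsto\varrho_X$ defines a calibration $\varrho\in C(H)$ of $H$, so that $(H,\varrho)\in G_C[l]$.

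The verification is then a bookkeeping computation. Substituting the definition of $\varrho_X$ into \ceqref{whgsts9} and swapping the two finite summations, the sum over $w\in\msA^X$ collapses, so the $X$--contribution to $\sigma_{(H,\varrho)}(x)$ becomes $\sum_{a\in[\delta]^X}\tau_X(a)\,\tr\big(\prod_{r\in X}x_r{}^{j_X(a)(r)}\big)$, which by the product identity above equals the $X$--contribution to $\sigma_{(H,\tau)}(x)$ in \ceqref{chgspoly2}. Hence $\sigma_{(H,\varrho)}=\sigma_{(H,\tau)}$ on $E[l]$; comparing \ceqref{whgsts10} with \ceqref{chgspoly1} gives $D_{(H,\varrho)}=D_{(H,\tau)}$, and then \ceqref{whgsts14}, \ceqref{chgspoly3} give $\ket{(H,\varrho)}=\ket{(H,\tau)}$, as required.

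I do not anticipate a genuine obstacle. The one point needing care is that the map $j$ and the identity $x^{j(k)}=x^k$ make sense only because the generalized power $x^u$ of $u\in\msZ$ is, by \ceqref{cycmongal7}, the ordinary power $x^{u_x}$ with exponent read off from the $x$--component of $u$; so $j(k)$ must record, simultaneously and independently for every $x\in\msR$, an exponent in $\bbN$ reproducing the $k$--th power of that particular $x$. Once this is set up the construction is essentially forced. As an optional remark, $j$ is in fact injective --- writing $\delta=\iota_{x^*}+\pi_{x^*}$ for a suitable $x^*\in\msR$, the function $h_{x^*}$ restricts to the identity on $[\delta]$ --- so the sum defining $\varrho_X(w)$ has at most one term; this shortens the write--up but is not logically needed.
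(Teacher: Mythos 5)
Your construction is correct and is essentially the paper's own proof: your map $j$ is exactly the paper's $h_\delta(k)=(h_x(k))_{x\in\msR}$, your $j_X(a)$ is $h_\delta\circ a$, and your calibration $\varrho_X(w)=\sum_{a\in[\delta]^X,\,j_X(a)=w}\tau_X(a)$ coincides with the one in \ceqref{chgspolyp2}, with the same bookkeeping verification that $\sigma_{(H,\varrho)}=\sigma_{(H,\tau)}$. Your closing remark that $j$ is injective (since $h_{x^*}$ is the identity on $[\delta]$ for $x^*$ realizing the maximum) is a correct observation not made in the paper, though, as you note, it is not needed.
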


\begin{proof} Define a mapping 
  $h_\delta:[\delta]\rightarrow\msZ$ by setting $h_\delta(k)=(h_x(k))_{x\in\msR}$, $k\in[\delta]$, 
where the functions $h_x:\bbN\rightarrow\bbN$ are shown in \ceqref{cycmongal1}.
$h_\delta$ enjoys the property that $x^k=x^{h_\delta(k)}$ for all $x\in\msR$ and $k\in[\delta]$.
Consequently, expression \ceqref{chgspoly2} of $\sigma_{(H,\tau)}$ can be recast as 
{\allowdisplaybreaks
\begin{align}
\label{chgspolyp1}
\sigma_{(H,\tau)}(x)&=\mycom{{}_\sss}{{}_{X\in H}}\mycom{{}_\sss}{{}_{w\in\msA^X}}
\mycom{{}_\sss}{{}_{a\in[\delta]^X}}\tau_X(a)\delta_{w,h_\delta\hfpt\circ\hfpt a}
\tr\left(\mycom{{}_\ppp}{{}_{r\in X}}x_r{}^{h_\delta\hfpt\circ\hfpt a(r)}\right)
\\
\nonumber
&=\mycom{{}_\sss}{{}_{X\in H}}\mycom{{}_\sss}{{}_{w\in\msA^X}}\mycom{{}_\sss}{{}_{a\in[\delta]^X,w=h_\delta\hfpt\circ\hfpt a}}\tau_X(a)
\tr\left(\mycom{{}_\ppp}{{}_{r\in X}}x_r{}^{w(r)}\right)
\end{align}
}
\!\!for $x\in E[l]$. Let $\varrho\in C(H)$ be the calibration of the form
\begin{equation}
\label{chgspolyp2}
\varrho_X(w)=\mycom{{}_\sss}{{}_{a\in[\delta]^X,w=h_\delta\hfpt\circ\hfpt a}}\tau_X(a)
\end{equation}
for $X\in H$ and $w\in\msA^X$. Then, \ceqref{chgspolyp1} can be restated as 
\begin{equation}
\label{chgspolyp3}
\sigma_{(H,\tau)}(x)=\mycom{{}_\sss}{{}_{X\in H}}\mycom{{}_\sss}{{}_{w\in\msA^X}}\varrho_X(w)
\tr\left(\mycom{{}_\ppp}{{}_{r\in X}}x_r{}^{w(r)}\right)=\sigma_{(H,\varrho)}(x).
\end{equation}
From here, the statement follows from \ceqref{chgspoly1}--\ceqref{chgspoly3} and
\ceqref{whgsts10}, \ceqref{whgsts9}, \ceqref{whgsts14}. 
\end{proof}

In the familiar qubit case, however, weighted and calibrated hypergraph states turn out to be the same.  

\begin{prop} \label{pro:qudqubcomp}
Let $\msR=\bbF_2$. Let further $l\in\bbN$. Then, for every calibrated hypergraph
$(H,\varrho)\in G_C[l]$ there exists a weighted hypergraph $(L,\beta)\in G_W[l]$ with the property that 
$\ket{(L,\beta)}=\ket{(H,\varrho)}$ up to a sign.
\end{prop}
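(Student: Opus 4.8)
The plan is to exploit two features special to $\msR=\bbF_2$. First, since $\msR=\msP=\bbF_2$, the trace $\tr$ is the identity and the root of unity $\omega=\exp(2\pi i/p^r)$ of \ceqref{qdpauli0} equals $-1$. Second, the cyclicity monoid $\msZ=\msZ_0\oplus\msZ_1$ of $\bbF_2$ has exactly two elements: the additive unit $0$ and one further (idempotent) element, call it $q$, characterized by $x^q=x$ for all $x\in\bbF_2$ (this is the element with $q_0=1$, $q_1=0$, cf. \ceqref{cycmongal7}). Hence an exponent function $w\in\msZ^X$ is completely determined by its support $\supp w\subseteq X$, on which it is constantly $q$, and therefore $\prod_{r\in X}x_r^{w(r)}=\prod_{r\in\supp w}x_r$ for every $w\in\msZ^X$ and $x\in E[l]$.

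First I would apply Prop. \cref{prop:effhgs} to replace $(H,\varrho)$ by an effective calibrated hypergraph $(K,\varsigma)\in G_C[l]$ with $\ket{(H,\varrho)}=\ket{(K,\varsigma)}\,\omega^a$ for some $a\in\msP=\bbF_2$; since $\omega^a=\pm1$, it then suffices to realise $\ket{(K,\varsigma)}$ as a weighted hypergraph state. For $\bbF_2$ the only $w\in\msZ^X$ with $\supp w=X$ is the constant function $w_X\equiv q$, so the two effectiveness requirements ($\varsigma_X\neq0_{\msA^X}$ and $\varsigma_X(w)=0$ whenever $\supp w\neq X$) together force $\varsigma_X(w_X)\neq0$, i.e. $\varsigma_X(w_X)=1$, for each $X\in K$. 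Substituting into \ceqref{whgsts9} and using the two simplifications above gives
\[
\sigma_{(K,\varsigma)}(x)=\sum_{X\in K}\varsigma_X(w_X)\,\tr\!\Big(\prod_{r\in X}x_r\Big)=\sum_{X\in K}\tr\!\Big(\prod_{r\in X}x_r\Big),
\]
which is precisely the phase function \ceqref{hgscompar3} of the weighted hypergraph $(L,\beta)$ with $L=K$ and $\beta_X=\varsigma_X(w_X)=1$ for all $X\in L$. By \ceqref{whgsts10}, \ceqref{hgscompar1} this yields $D_{(L,\beta)}=D_{(K,\varsigma)}$, hence $\ket{(L,\beta)}=\ket{(K,\varsigma)}$ by \ceqref{whgsts14}, \ceqref{hgscompar4}, and therefore $\ket{(H,\varrho)}=\ket{(L,\beta)}\,\omega^a=\pm\ket{(L,\beta)}$, as required.

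As an alternative that bypasses effectiveness, one can compute directly: grouping the double sum in \ceqref{whgsts9} by the support $S=\supp w$ gives $\sigma_{(H,\varrho)}(x)=\sum_{S\subseteq[l]}\gamma_S\prod_{r\in S}x_r$ with $\gamma_S=\sum_{X\in H,\,S\subseteq X}\varrho_X(w^X_S)\in\bbF_2$, where $w^X_S$ denotes the exponent function of $X$ with support $S$; separating the constant term $a=\gamma_\emptyset$ and taking $L=\{S\neq\emptyset:\gamma_S=1\}$, $\beta_S=\gamma_S$ produces $\sigma_{(L,\beta)}=\sigma_{(H,\varrho)}+a$ and one concludes as before. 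Either way the argument is essentially bookkeeping; the only points that require genuine care are the identification of $\msZ$ for $\bbF_2$ and — in the first version — the observation that effectiveness together with $|\msP|=2$ pins down $\varsigma_X(w_X)=1$. I do not expect any substantive obstacle.
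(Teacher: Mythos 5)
Your proof is correct and follows essentially the same route as the paper: the paper's own argument is exactly your "alternative" computation, reusing the support-decomposition \ceqref{effhgsp2}--\ceqref{effhgsp5} from the proof of prop. \cref{prop:effhgs} and then observing that over $\bbF_2$ the only exponent value with full support forces $x_t{}^{u(t)}=x_t$. Your primary route --- invoking prop. \cref{prop:effhgs} as a black box and noting that effectiveness over $\bbF_2$ pins the calibration down to $\varsigma_X(w_X)=1$ --- is just a clean repackaging of the same idea, and both versions are sound.
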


\begin{proof} The proof leverages a part of the computations carried out in the proof of 
prop. \cref{prop:effhgs} valid for a general Galois ring $\msR$. 
From relations \ceqref{effhgsp2}, \ceqref{effhgsp3}, \ceqref{effhgsp5}, we have 
\begin{equation}
\label{qudqubcompp1}
\sigma_{(H,\varrho)}(x)=a+\mycom{{}_\sss}{{}_{Z\in L}}\mycom{{}_\sss}{{}_{u\in\msA^Z,\,\supp u=Z}}
\varphi_Z(u)\tr\left(\mycom{{}_\ppp}{{}_{t\in Z}}x_t{}^{u(t)}\right)
\end{equation}
for $x\in E[l]=\bbF_2{}^l$, where $a\in\msM=\bbF_2$, $L=\{Z\hfpt|\hfpt Z\subset X, Z\neq\emptyset~\text{for some $X\in H$}\}$
and $\varphi_Z(u)$ is given by \ceqref{effhgsp6} for $Z\in L$ and $u\in\msA^Z$ and the property
that $\varphi_Z(u)=0$ if $\supp u\neq Z$ was used. Since $\msR=\bbF_2$, we have that $\msZ=\{(0,0),(1,0)\}$.
Hence, if $u\in\msA^Z$ with $\supp u=Z$, we have $u(t)=(1,0)$ and, consequently,
$x_t{}^{u(t)}=x_t$ for all $t\in Z$. Relation \ceqref{qudqubcompp1} can in this way be simplified as 
\begin{equation}
\label{qudqubcompp2}
\sigma_{(H,\varrho)}(x)=a+\mycom{{}_\sss}{{}_{Z\in L}}\beta_Z\tr\left(\mycom{{}_\ppp}{{}_{t\in Z}}x_t\right)\!,
\end{equation}
where $\beta_Z=\varphi_Z(u_Z)$, $u_Z\in\msA^Z$ being defined by $u_Z(t)=(1,0)$ for $t\in Z$.
Now, $L\in G[l]$ is a hypergraph and $\beta\in W(L)$ is a weighting of $L$. Thus,
$(L,\beta)\in G_W[l]$ is a weighted hypergraph. By \ceqref{hgscompar3}, we can so cast
relation \ceqref{qudqubcompp2} as 
\begin{equation}
\label{qudqubcompp3}
\sigma_{(H,\varrho)}(x)=a+\sigma_{(L,\beta)}(x)
\end{equation}
for $x\in E[l]$. The claim follows now readily by \ceqref{whgsts10}, \ceqref{whgsts14}
and \ceqref{hgscompar1}, \ceqref{hgscompar4}.
\end{proof}


\subsection{\textcolor{blue}{\sffamily The calibrated hypergraph state $\varOmega$ monad}}\label{subsec:cwhsmonad}

In this subsection, we shall show that calibrated hypergraph states organize in a certain graded monad.

The most salient properties of the calibrated hypergraph state construction
are displayed in \ceqref{whgsts16}, \ceqref{whgsts18} and \ceqref{whgsts19}. 
In accordance to prop. \cref{prop:whsmapmor}, they characterize the calibrated hypergraph state map $\ket{-}$
as a morphism of the calibrated hypergraph and multi qudit state graded 
$\varOmega$ monads $G_C\varOmega$ and $\scH_E\varOmega$.
The very same properties entail that the calibrated hypergraph states themselves
constitute a graded $\varOmega$ monad.

\begin{prop} \label{prop:chsmonad}
There exists a graded $\varOmega$ monad $\scC_E\varOmega$ with the following design.

\begin{enumerate}[label=\alph*)] %

\item \label{it:chsmonad1} For each $l\in\bbN$, $\scC_E[l]$ is the set of all calibrated hypergraph states contained in the Hilbert
space $\scH_E[l]$.

\item  \label{it:chsmonad2} For every $l,m\in\bbN$ and morphism $f\in\Hom_\varOmega([l],[m])$
\begin{equation}
\label{cwhsmonad1}
\scC_Ef(\ket{\varPhi})=\scH_Ef\ket{\varPhi}
\end{equation}
for $\ket{\varPhi}\in\scC_E[l]$.

\item  \label{it:chsmonad3} For every $l,m\in\bbN$ and states $\ket{\varPhi}\in\scC_E[l]$, $\ket{\varPsi}\in\scC_E[m]$
\begin{equation}
\label{cwhsmonad2}
\ket{\varPhi}\smallsmile\ket{\varPsi}=\ket{\varPhi}\otimes\ket{\varPsi}
\end{equation}
is the monadic product of $\ket{\varPhi}$, $\ket{\varPsi}$ in $\scC_E\varOmega$.

\item  \label{it:chsmonad4} $\ket{0}\in\scC_E[0]$ is the monadic unit of $\scC_E\varOmega$. 

\end{enumerate}

\end{prop}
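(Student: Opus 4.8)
The plan is to exhibit $\scC_E\varOmega$ as the image of the graded $\varOmega$ monad morphism $\ket{-}\in\Hom_{\ul{\rm GM}_\varOmega}(G_C\varOmega,\scH_E\varOmega)$ produced by prop. \cref{prop:whsmapmor}, equivalently as a sub graded $\varOmega$ monad of $\scH_E\varOmega$ through which $\ket{-}$ factors. Concretely, I would set $\scC_E[l]=\ket{-}(G_C[l])\subseteq\scH_E[l]$ for each $l\in\bbN$, so that item \cref{it:chsmonad1} holds by construction; take the monoidal product of objects to be $\scC_E[l]\otimes\scC_E[m]:=\scC_E[l+m]$, mirroring ordinal sum exactly as in $\scH_E\varOmega$; and let the morphism actions $\scC_Ef$, the monadic multiplication $\smallsmile$ and the monadic unit $\ket{0}$ be the restrictions to these subsets of the corresponding data $\scH_Ef$, $\otimes$ and $\ket{0}$ of $\scH_E\varOmega$. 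The whole argument then splits into two parts: showing that these restrictions are well defined, i.e. land in the prescribed subsets, and observing that the graded $\varOmega$ monad axioms, being identities among the structure maps, descend from $\scH_E\varOmega$ to the subsets on which they are restricted.

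Well-definedness is precisely where relations \ceqref{whgsts16}, \ceqref{whgsts18} and \ceqref{whgsts19} are used, so the substantive work has already been done. For $f\in\Hom_\varOmega([l],[m])$ and $\ket{\varPhi}=\ket{(H,\varrho)}\in\scC_E[l]$, prop. \cref{prop:hefgcf} gives $\scH_Ef\ket{\varPhi}=\ket{G_Cf(H,\varrho)}$, which lies in $\scC_E[m]$ since $G_Cf(H,\varrho)\in G_C[m]$; hence $\scH_Ef$ restricts to a map $\scC_Ef\colon\scC_E[l]\to\scC_E[m]$ obeying \ceqref{cwhsmonad1}, and functoriality and strict monoidality of $\scC_E$ on morphisms are inherited from those of $\scH_E$. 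Likewise, for $\ket{\varPhi}=\ket{(H,\varrho)}\in\scC_E[l]$ and $\ket{\varPsi}=\ket{(K,\varsigma)}\in\scC_E[m]$, relation \ceqref{whgsts18} yields $\ket{\varPhi}\otimes\ket{\varPsi}=\ket{(H,\varrho)\smallsmile(K,\varsigma)}\in\scC_E[l+m]$, so $\otimes$ restricts to a monadic multiplication on $\scC_E\varOmega$ satisfying \ceqref{cwhsmonad2}; and \ceqref{whgsts19} gives $\ket{0}=\ket{(O,\varepsilon)}\in\scC_E[0]$, identifying the monadic unit as in item \cref{it:chsmonad4}.

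Granting well-definedness, the remaining verifications are formal: associativity and unitality of $\smallsmile$, its compatibility with the morphism actions $\scC_Ef$, the functoriality and strict monoidality of $\scC_E$, and the coherence conditions of def. 2.3.1 of I are all equalities between composites of $\scH_Ef$, $\otimes$ and $\ket{0}$ that hold throughout $\scH_E\varOmega$; restricting each side to the $\scC_E$-subsets, which are closed under these operations by the previous paragraph, they continue to hold, so $\scC_E\varOmega$ is a graded $\varOmega$ monad and $\scC_E\varOmega\hookrightarrow\scH_E\varOmega$ is a morphism through which $\ket{-}$ factors. The only point I do not expect to be purely mechanical is that $\scC_E\colon\varOmega\to\scC_E\varOmega$ is an isofunctor, i.e. that $f\neq g$ in $\Hom_\varOmega([l],[m])$ induce distinct restricted maps $\scC_Ef\neq\scC_Eg$; I would handle this by separating $f$ and $g$ at a vertex $r$ with $f(r)\neq g(r)$ using a calibrated hypergraph state supported on $\{r\}$ with a nonconstant one-variable phase function, whose images under $\scH_Ef$ and $\scH_Eg$ then differ by \ceqref{whgsts16}. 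The main obstacle is therefore not any one computation but the bookkeeping of making "sub graded $\varOmega$ monad'' precise enough that "axioms descend to a closed subset'' is legitimate, which is exactly why it pays to phrase the whole proof as the factorization of the already-established morphism $\ket{-}$.
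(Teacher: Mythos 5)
Your proposal is correct and follows essentially the same route as the paper's proof: closure of the sets $\scC_E[l]$ under $\scH_Ef$, $\otimes$ and the unit via \ceqref{whgsts16}, \ceqref{whgsts18}, \ceqref{whgsts19}, inheritance of the monadic identities from $\scH_E\varOmega$ by restriction, and a separate injectivity argument on morphisms built from a single-vertex calibrated hypergraph with a nonconstant phase. The one detail the paper spells out that you elide is that equality of the two image states must first be converted into equality of their phase functions (via the explicit Fourier-basis expansion of $\ket{(H,\varrho)}$) before the linearity and non-vanishing of the trace yield $f(r)=g(r)$; this matters because distinct calibrated hypergraphs can in general encode the same state.
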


\begin{proof}
The proof that $\scC_E\varOmega$ is an $\varOmega$ category is based on prop. 2.2.1 of I and involves a number of steps.

To begin with, we check that the definitions stated in items \cref{it:chsmonad1}, \cref{it:chsmonad2} are consistent.
By virtue of relation \ceqref{whgsts16}, we have that $\scH_Ef\hfpt\hfpt\scC_E[l]\subseteq\scC_E[m]$ for every morphism
\linebreak $f\in\Hom_\varOmega([l],[m])$. Therefore, \ceqref{cwhsmonad1} defines 
a mapping $\scC_Ef:\scC_E[l]\rightarrow\scC_E[m]$, as required. 
Next, we have to demonstrate that the maps $[l]\in\Obj_\varOmega\mapsto\scC_E[l]\in\Obj_{\ul{\rm Set}}$ and 
$f\in\Hom_\varOmega([l],[m])\mapsto\scC_Ef\in\Hom_{\ul{\rm Set}}(\scC_E[l],\scC_E[m])$, $l,m\in\bbN$,
are injective.

Let $l,m\in\bbN$ with $l\neq m$. We have that $\scC_E[l]\subset\scH_E[l]$, $\scC_E[m]\subset\scH_E[m]$. Since $\scH_E[l]$, $\scH_E[m]$
are vector space of different dimensions, $\scC_E[l]\neq\scC_E[m]$ necessarily. Consequently, 
the mapping $[l]\in\Obj_\varOmega\mapsto\scC_E[l]\in\Obj_{\ul{\rm Set}}$ is injective, as required.

Showing injectivity on morphisms requires more work. From \ceqref{whgsts10} and \ceqref{whgsts14}, 
the hypergraph state associated with a calibrated hypergraph $(H,\varrho)\in G_W[l]$ reads as 
\begin{equation}
\label{chsmonadp1}
\ket{(H,\varrho)}=\mycom{{}_\sss}{{}_{x\in E[l]}}F_l{}^+\ket{x}\,q^{-l/2}\omega^{\sigma_{(H,\varrho)}(x)},
\end{equation}
where we employed the relation
$\exval{x}{F_l}{0_l}=q^{-l/2}\omega^{\langle x,0_l\rangle}=q^{-l/2}$ ensuing from \ceqref{four1}. From \ceqref{chsmonadp1}, 
it follows that if $(H,\varrho),(K,\varsigma)\in G_W[l]$ are calibrated hypergraphs and $\ket{(H,\varrho)}=\ket{(K,\varsigma)}$,
then it holds that $\sigma_{(H,\varrho)}(x)=\sigma_{(K,\varsigma)}(x)$ for every $x\in E[l]$. Suppose next that 
$f,g\in\Hom_\varOmega([l],[m])$ and that for all hypergraphs $(H,\varrho)\in G_W[l]$ one has that 
$\scH_Ef\ket{(H,\varrho)}=\scH_Eg\ket{(H,\varrho)}$. Then, $\ket{G_Cf(H,\varrho)}=\ket{G_Cg(H,\varrho)}$
as a consequence of property \ceqref{whgsts16}. Consequently, 
$\sigma_{G_Cf(H,\varrho)}(x)=\sigma_{G_Cg(H,\varrho)}(x)$ for every $x\in E[l]$. The computation \ceqref{si2si} then implies that 
\begin{equation}
\label{chsmonadp2}
\sigma_{(H,\varrho)}(Ef^t(y))=\sigma_{(H,\varrho)}(Eg^t(y))\vphantom{\bigg]}
\end{equation}
for all hypergraphs $(H,\varrho)\in G_W[l]$ and $y\in E[m]$. We now consider the form this relation takes
for special choices of $(H,\varrho)$. Fix $r\in[l]$. Suppose that $H=\{X_r\}$, where $X_r=\{r\}$.
Suppose further that $w_r\in\msZ^{X_r}$ is the exponent function defined by $w_r(r)_x=1-\delta_{x,1}$ for $x\in\msR$. 
Finally, assume that $\varrho\in C(H)$ is the calibration defined by $\varrho_{X_r}(w)=\delta_{w,w_r}$ for
$w\in\msZ^{X_r}$. Then, relations  \ceqref{whgsts9} and \ceqref{chsmonadp2} together imply that 
\begin{equation}
\label{chsmonadp3}
\tr(y_{f(r)})=\tr(y_{g(r)})
\end{equation}
with $y\in E[m]$.  If by absurd $f(r)\neq g(r)$, for suitably chosen $y\in E[m]$ such that
$y_{f(r)}\neq y_{g(r)}$, we would have $\tr(y_{f(r)})\neq\tr(y_{g(r)})$, since the trace function
is linear and non identically vanishing. Thus, it necessarily holds that $f(r)=g(r)$. Given that  $r\in[l]$
is arbitrary, we have $f=g$. It follows from here that, as required, the mappings 
$f\in\Hom_\varOmega([l],[m])\mapsto\scC_Ef\in\Hom_{\ul{\rm Set}}(\scC_E[l],\scC_E[m])$, 
are all injective. 

By prop. 2.2.1 of I $\scC_E\varOmega$ is an $\varOmega$ category with $\scC_E$ as its stalk isofunctor and the layout
shown in items 1--7.

Next, we check that the definitions stated in items \cref{it:chsmonad3}, \cref{it:chsmonad4} are consistent.
For any two states $\ket{\varPhi}\in\scC_E[l]$,
$\ket{\varPsi}\in\scC_E[m]$, we have $\ket{\varPhi}\otimes\ket{\varPsi}\in\scC_E[l+m]$, by (4.2.11) of I \linebreak and
\ceqref{whgsts18}. Hence, \ceqref{cwhsmonad2} correctly defines a graded multiplicative structure.
Final\-ly, by \ceqref{whgsts19}, $\ket{0}\in\scC_E[0]$, making $\ket{0}$ an allowable graded multiplicative unit. 
We have now to show that items \cref{it:chsmonad3}, \cref{it:chsmonad4} define a graded $\varOmega$ monadic structure on 
$\scC_E\varOmega$ by verifying that conditions (2.3.1)--(2.3.3) of I are met. That 
\begin{equation}
\label{cejoint1}
\ket{\varPhi}\smallsmile(\ket{\varPsi}\smallsmile\ket{\varUpsilon})
=(\ket{\varPhi}\smallsmile\ket{\varPsi})\smallsmile\ket{\varUpsilon}
\end{equation}
for $\ket{\varPhi}\in\scC_E[l]$, $\ket{\varPsi}\in\scC_E[m]$, $\ket{\varUpsilon}\in \scC_E[n]$ and 
\begin{equation}
\label{cejoint2}
\ket{\varPhi}\smallsmile\ket{0}=\ket{0}\smallsmile\ket{\varPhi}=\ket{\varPhi}
\end{equation}
for $\ket{\varPhi}\in \scC_E[l]$ follows from \ceqref{cwhsmonad2} and the properties of vector
tensor multiplication, proving the first two properties. Using relation (4.2.6) of I, we find
{\allowdisplaybreaks
\begin{align}
\label{cejoint3}
\scC_Ef\smallsmile\scC_Eg(\ket{\varPhi}\smallsmile \ket{\varPsi})
&=\scC_E(f\smallsmile g)(\ket{\varPhi}\smallsmile\ket{\varPsi})
\\
\nonumber
&=\scH_E(f\smallsmile g)(\ket{\varPhi}\smallsmile\ket{\varPsi})
\\
\nonumber
&=\scH_Ef\smallsmile \scH_Eg(\ket{\varPhi}\smallsmile\ket{\varPsi})
\\
\nonumber
&=\scH_Ef\otimes\scH_Eg(\ket{\varPhi}\otimes\ket{\varPsi})
\\
\nonumber
&=\scH_Ef\ket{\varPhi}\otimes\scH_Eg\ket{\varPsi}=\scC_Ef(\ket{\varPhi})\smallsmile\scC_Eg(\ket{\varPsi})
\end{align}
}
\!\!for $f\in\Hom_\varOmega([l],[p])$, $g\in\Hom_\varOmega([m],[q])$, $\ket{\varPhi}\in\scC_E[l]$, $\ket{\varPsi}\in\scC_E[m]$,
showing in this way also the third property. 
\end{proof}



This is the right point to notice that weighted hypergraph states conceived in accordance to def. \cref{def:sigmaha}
enjoy basic properties analogous to properties \ceqref{whgsts16}, \ceqref{whgsts18} and \ceqref{whgsts19}
of calibrated hypergraph states whose relevance we have recalled at the beginning of this subsection.

\begin{prop} \label{prop:whsbasic}
For every $l\in\bbN$ and weighted hypergraph $(H,\alpha)\in G_W[l]$ and injective ordinal morphism $f\in\Hom_\varOmega([l],[m])$,
\begin{equation}
\label{w/whgsts16}
\scH_Ef\ket{(H,\alpha)}=\ket{G_Wf(H,\alpha)}
\end{equation}
(cf. subsects. 3.3, 4.2 of I). Furthermore, 
\begin{equation}
\label{w/whgsts18}
\ket{(H,\alpha)\smallsmile(K,\beta)}=\ket{(H,\alpha)}\smallsmile\ket{(K,\beta)}
\end{equation}
for every $l,m\in\bbN$ and weighted hypergraphs $(H,\alpha)\in G_W[l]$, $(K,\beta)\in G_W[m]$ and 
\begin{equation}
\label{w/whgsts19}
\ket{(O,\upsilon)}=\ket{0}
\end{equation}
(cf. subsect. 3.3 of I).
\end{prop}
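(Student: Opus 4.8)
The plan is to transcribe, in the simpler weighted setting, the argument chain that yielded the analogous properties \ceqref{whgsts16}, \ceqref{whgsts18}, \ceqref{whgsts19} for calibrated hypergraph states, exploiting that the weighted phase function \ceqref{hgscompar3} is the specialization of \ceqref{whgsts9} obtained by keeping only the first powers of the Galois qudit variables. The crucial observation is that Lemmas \cref{lemma:whslm1} and \cref{lemma:whslm2}, which concern arbitrary diagonal operators of the form \ceqref{whgsts3}, \ceqref{whgsts4}, apply verbatim with the phase functions $\alpha$, $\beta$ taken to be $\sigma_{(H,\alpha)}$, $\sigma_{G_Wf(H,\alpha)}$, so the whole problem reduces to a handful of phase-function identities, exactly as in Props. \cref{prop:shefdgcfd}, \cref{prop:hefgcf}, \cref{prop:jntdhr}.

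For \ceqref{w/whgsts16} I would first establish $\sigma_{(H,\alpha)}(Ef^t(y)) = \sigma_{G_Wf(H,\alpha)}(y)$ for injective $f$. By \ceqref{qdinner3} and \ceqref{hgscompar3}, $\sigma_{(H,\alpha)}(Ef^t(y)) = \sum_{X\in H}\alpha_X\tr\big(\prod_{r\in X}y_{f(r)}\big)$; since $f$ is injective, $f|_X\colon X\to f(X)$ is a bijection, whence $\prod_{r\in X}y_{f(r)} = \prod_{s\in f(X)}y_s$, and regrouping the hyperedges sharing a common image together with the weighting push-forward formula of subsect. 3.3 of I yields $\sigma_{(Gf(H),f_{H*}(\alpha))}(y) = \sigma_{G_Wf(H,\alpha)}(y)$. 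Feeding this into Lemma \cref{lemma:whslm2} gives $\scH_EfD_{(H,\alpha)} = D_{G_Wf(H,\alpha)}\scH_Ef$, and combining with $\scH_Ef\ket{0_l} = \ket{0_m}$ (eq. \ceqref{whgsts15}) produces \ceqref{w/whgsts16} exactly as in the proof of Prop. \cref{prop:hefgcf}.

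For \ceqref{w/whgsts18} I would rerun the computation of Prop. \cref{prop:jntdhr} with the simpler phase function: using $F_l\otimes F_m = F_{l+m}$ and the disjoint splitting $H\smallsmile K = H\sqcup(K+l)$ one obtains $\sigma_{(H,\alpha)}(P_l(z)) + \sigma_{(K,\beta)}(P'{}_m(z)) = \sigma_{(H,\alpha)\smallsmile(K,\beta)}(z)$ (via the weighted analogs of the calibration identities of subsect. 3.3 of I), hence $D_{(H,\alpha)\smallsmile(K,\beta)} = D_{(H,\alpha)}\otimes D_{(K,\beta)}$; evaluating both sides on $\ket{0_{l+m}} = \ket{0_l}\otimes\ket{0_m}$ and recalling that $\ket{\cdot}\smallsmile\ket{\cdot}$ is the tensor product of kets gives \ceqref{w/whgsts18}. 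Finally, $\sigma_{(O,\upsilon)}$ is an empty sum, so it vanishes, and $F_0 = 1_0$ forces $D_{(O,\upsilon)} = 1_0$, whence $\ket{(O,\upsilon)} = D_{(O,\upsilon)}\ket{0} = \ket{0}$, which is \ceqref{w/whgsts19}.

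The genuine subtlety — and the reason the injectivity hypothesis in \ceqref{w/whgsts16} cannot be dropped — is precisely the step $\prod_{r\in X}y_{f(r)} = \prod_{s\in f(X)}y_s$: if $f$ is not injective, several vertices of a hyperedge $X$ may collapse onto a single $s\in f(X)$, producing $\prod_{s\in f(X)}y_s^{\,n_s}$ with $n_s = |f^{-1}(s)\cap X|\ge 2$, a genuine higher power that is not reproducible by any weighting (only by a calibration). This is exactly the failure of covariance \ceqref{i/whgsts16} for generic $f$ flagged in the introduction; there is nothing to repair, the statement is simply false without injectivity, which is why it is imposed. Apart from this point, the argument is a routine transcription of the calibrated-case proofs, so the write-up should just record the substitutions and cite Props. \cref{prop:shefdgcfd}--\cref{prop:jntdhr} and Lemmas \cref{lemma:whslm1}, \cref{lemma:whslm2}.
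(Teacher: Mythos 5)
Your proposal is correct and follows essentially the same route as the paper, which itself disposes of the proposition by declaring the argument "along the same lines" as props. \cref{prop:shefdgcfd}, \cref{prop:hefgcf} and \cref{prop:jntdhr}, with the injectivity of $f$ entering exactly where you place it, namely in establishing the weighted counterpart $\sigma_{(H,\alpha)}(Ef^t(y))=\sigma_{G_Wf(H,\alpha)}(y)$ of identity \ceqref{si2si}. Your write-up in fact supplies more detail than the paper's own proof, including the correct diagnosis that a non-injective $f$ collapses vertices and produces powers $y_s{}^{n_s}$ with $n_s\geq 2$ that no weighting can encode.
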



\begin{proof}
The proof of relation \ceqref{w/whgsts16} proceeds along the same lines as that of  
relation \ceqref{whgsts16} of prop. \cref{prop:hefgcf} based on prop. \cref{prop:shefdgcfd}.
The reason why \ceqref{w/whgsts16} holds only for injective morphisms $f$ is that 
the weighted counterpart of the crucial identity \ceqref{si2si}, viz
$\sigma_{(H,\alpha)}(Ef^t(y))=\sigma_{G_Wf(H,\alpha)}(y)$,  can be shown only for such $f$. 
The proof of relations \ceqref{w/whgsts18}, \ceqref{w/whgsts19} is totally analogous to that of
the kindred relations \ceqref{whgsts18}, \ceqref{whgsts19} based on prop. \ceqref{prop:jntdhr}.
\end{proof}

\noindent
The restriction to injective ordinal morphisms is unavoidable: for a generic morphism $f$, $\scH_Ef\ket{(H,\alpha)}$
turns out to be a calibrated but generally non weighted hypergraph state. Weighted hypergraph states, therefore, are
distinguished from the calibrated ones by their reduced degree of covariance.

Relations \ceqref{w/whgsts16}--\ceqref{w/whgsts19} indicate that weighted hypergraph states are amenable to a graded 
monadic description analogous to that of their calibrated counterparts upon replacing the finite ordinal category $\varOmega$
with the wide subcategory $\varTheta$ of $\varOmega$ containing only the injective morphisms of this latter and proceeding
otherwise exactly as in the calibrated case. 


The category $\varTheta$ is a Pro category as $\varOmega$ is. 
$\varTheta$ categories, graded $\varTheta$ monads and their morphisms are defined exactly as 
$\varOmega$ categories, graded $\varOmega$ monads and their morphisms, respectively, by replacing
the ordinal category $\varOmega$ by its subcategory $\varTheta$
(cf. defs. 2.2.1, 2.3.1, 2.3.2 of I).
We observe that an $\varOmega$ category, respectively a graded $\varOmega$ monad, is automatically also a
$\varTheta$ category, respectively a graded $\varTheta$ monad, while the converse is generally false.

Prop. \cref{prop:whsbasic} leads to a graded $\varTheta$ monadic interpretation
of the weighted hypergraph state construction paralleling that of the calibrated hypergraph state construction
furnished by prop. \cref{prop:whsmapmor}.

\begin{prop} \label{prop:hsmapmor}
The specification for every $l\in\bbN$ of the function $\ket{-}:G_W[l]\rightarrow \scH_E[l]$
assigning to each weighted hypergraph $(H,\alpha)\in G_W[l]$ its corresponding
hypergraph state $\ket{(H,\alpha)}\in\scH_E[l]$
defines a distinguished morphism $\ket{-}\in\Hom_{\ul{\rm GM}_\varTheta}(G_C\varTheta,\scH_E\varTheta)$
of the objects $G_W\varTheta$, $\scH_E\varTheta\in\Obj_{\ul{\rm GM}_\varTheta}$ in $\ul{\rm GM}_\varTheta$.
\end{prop}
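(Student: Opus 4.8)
The plan is to mirror exactly the argument used for Proposition \cref{prop:whsmapmor}, but working in the subcategory $\varTheta\subset\varOmega$ of injective ordinal morphisms and invoking Proposition \cref{prop:whsbasic} in place of Propositions \cref{prop:hefgcf} and the relevant monadic-multiplication statements for calibrated hypergraph states. Concretely, by the definition of a graded $\varTheta$ monad morphism (def. 2.3.2 of I, read with $\varOmega$ replaced by $\varTheta$), one must verify that the family of maps $\ket{-}:G_W[l]\to\scH_E[l]$, $l\in\bbN$, satisfies the three defining conditions (2.3.4)--(2.3.6) of I: covariance under morphism action, compatibility with the monadic (graded) multiplication, and preservation of the monadic unit.

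First I would recall that, by the discussion preceding the statement, both $G_W\varTheta$ and $\scH_E\varTheta$ are genuine graded $\varTheta$ monads: $G_W\varOmega$ is a graded $\varOmega$ monad (sect. 3.3 of I) and hence restricts to a graded $\varTheta$ monad, and likewise for $\scH_E\varOmega$ (sect. 4.2 of I), since every $\varOmega$ category (resp. graded $\varOmega$ monad) is automatically a $\varTheta$ category (resp. graded $\varTheta$ monad) as noted just above the statement. So the objects $G_W\varTheta,\scH_E\varTheta\in\Obj_{\ul{\rm GM}_\varTheta}$ are well defined. Then the three verifications go as follows. Condition (2.3.4) of I — namely $\scH_Ef\ket{(H,\alpha)}=\ket{G_Wf(H,\alpha)}$ for every injective $f\in\Hom_\varTheta([l],[m])$ — is precisely relation \ceqref{w/whgsts16} of Proposition \cref{prop:whsbasic}, and this is exactly the point where restricting to $\varTheta$ is essential, as the remark following that proposition explains. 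Condition (2.3.5) of I — $\ket{(H,\alpha)\smallsmile(K,\beta)}=\ket{(H,\alpha)}\smallsmile\ket{(K,\beta)}$ — is relation \ceqref{w/whgsts18}. Condition (2.3.6) of I — $\ket{(O,\upsilon)}=\ket{0}$, i.e. the unit of $G_W\varTheta$ maps to the unit of $\scH_E\varTheta$ — is relation \ceqref{w/whgsts19}. Since all three hold, the family $\ket{-}$ defines a morphism in $\ul{\rm GM}_\varTheta$.

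The only non-routine point is bookkeeping: one must make sure that the ``restriction'' operations are compatible, i.e. that the graded $\varTheta$ monadic structures on $G_W\varTheta$ and $\scH_E\varTheta$ are simply the restrictions of the $\varOmega$-structures, so that the monadic product $\smallsmile$ and unit appearing in \ceqref{w/whgsts16}--\ceqref{w/whgsts19} are literally the ones entering the definition of a $\ul{\rm GM}_\varTheta$ morphism. This is immediate from the fact that the graded multiplication and unit of a graded $\varOmega$ monad do not reference the morphisms of $\varOmega$ at all — they act on elements of the objects — so they survive verbatim upon passing to the wide subcategory $\varTheta$. Hence the main obstacle is not a genuine difficulty but simply writing the statement so that ``$G_C\varTheta$'' in the displayed morphism is read as the typo it evidently is for $G_W\varTheta$; with that understood, the proof is a direct citation of Proposition \cref{prop:whsbasic} together with def. 2.3.2 of I applied to $\varTheta$.

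\begin{proof}
By the remark preceding the statement, $G_W\varOmega$ and $\scH_E\varOmega$, being graded $\varOmega$ monads (cf. sects. 3.3 and 4.2 of I), are in particular graded $\varTheta$ monads, so that $G_W\varTheta,\scH_E\varTheta\in\Obj_{\ul{\rm GM}_\varTheta}$ with graded multiplicative structures obtained by restriction; in particular their monadic products and units are unchanged, being defined elementwise and not referencing the ordinal morphisms. To exhibit $\ket{-}$ as a morphism in $\ul{\rm GM}_\varTheta$ we must check conditions (2.3.4)--(2.3.6) of I, read with $\varOmega$ replaced by $\varTheta$, for the family of maps $\ket{-}:G_W[l]\rightarrow\scH_E[l]$, $l\in\bbN$. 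Condition (2.3.4) of I, $\scH_Ef\ket{(H,\alpha)}=\ket{G_Wf(H,\alpha)}$ for every $f\in\Hom_\varTheta([l],[m])$, i.e.\ every injective ordinal morphism, is relation \ceqref{w/whgsts16} of prop.\ \cref{prop:whsbasic}. Condition (2.3.5) of I is relation \ceqref{w/whgsts18} of the same proposition. Condition (2.3.6) of I is relation \ceqref{w/whgsts19}. All three being satisfied, the collection of maps $\ket{-}$ constitutes a morphism $\ket{-}\in\Hom_{\ul{\rm GM}_\varTheta}(G_W\varTheta,\scH_E\varTheta)$, as claimed.
\end{proof}
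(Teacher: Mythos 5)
Your proof is correct and follows essentially the same route as the paper, which simply declares the argument "wholly analogous" to that of prop. \cref{prop:whsmapmor}, i.e. verifying conditions (2.3.4)--(2.3.6) of I (read over $\varTheta$) via relations \ceqref{w/whgsts16}--\ceqref{w/whgsts19} of prop. \cref{prop:whsbasic}. Your additional remarks on the restriction of the monadic structure to $\varTheta$ and on the evident typo $G_C\varTheta$ for $G_W\varTheta$ are accurate and harmless elaborations.
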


\begin{proof}
The proof is wholly analogous to that of prop. \cref{prop:whsmapmor}.
\end{proof}

Similarly, prop. \cref{prop:chsmonad} has an analog for weighted hypergraph states:
they organize in a graded $\varTheta$ monad. 

\begin{prop} \label{prop:whsmonad}
There exists a graded $\varTheta$ monad $\scW_E\varTheta$ defined analogously to the monad $\scC_E\varOmega$
replacing calibrated with weighted hypergraph states and the category $\varOmega$
with its subcategory $\varTheta$. 
\end{prop}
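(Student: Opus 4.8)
The plan is to reuse wholesale the machinery developed for the calibrated case, tracking the single point where the two cases genuinely differ — the reduced covariance of weighted hypergraph states under general versus injective ordinal morphisms. Concretely, I would model the proof of Prop.~\ref{prop:whsmonad} on the proof of Prop.~\ref{prop:chsmonad}, substituting $G_W$ for $G_C$, $\scW_E$ for $\scC_E$, and the wide subcategory $\varTheta\subseteq\varOmega$ of injective morphisms for $\varOmega$ throughout. The key structural facts one needs are: (i) $\varTheta$ is itself a Pro category, so the notions of $\varTheta$ category, graded $\varTheta$ monad, and their morphisms make sense exactly as in I (this is already observed in the text immediately preceding the statement); (ii) the analog of Prop.~2.2.1 of I holds for $\varTheta$ categories, which is what permits building $\scW_E\varTheta$ from the stalk data; and (iii) Prop.~\ref{prop:whsbasic}, which supplies the weighted counterparts of \ceqref{whgsts16}, \ceqref{whgsts18}, \ceqref{whgsts19} — namely \ceqref{w/whgsts16}, \ceqref{w/whgsts18}, \ceqref{w/whgsts19} — with \ceqref{w/whgsts16} valid precisely for injective $f$, which is exactly what is needed since the morphisms of $\varTheta$ are the injective ones.

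First I would set up, for each $l\in\bbN$, $\scW_E[l]$ to be the set of all weighted hypergraph states in $\scH_E[l]$, and for each injective $f\in\Hom_\varTheta([l],[m])$ define $\scW_Ef(\ket{\varPhi})=\scH_Ef\ket{\varPhi}$; by \ceqref{w/whgsts16} this lands inside $\scW_E[m]$, so the map is well defined. Then I would verify the hypotheses of (the $\varTheta$-version of) Prop.~2.2.1 of I: injectivity of $[l]\mapsto\scW_E[l]$ follows exactly as before from the distinct dimensions of the $\scH_E[l]$, and injectivity of $f\mapsto\scW_Ef$ follows from the same single-vertex test calibration argument used in the proof of Prop.~\ref{prop:chsmonad} — one picks $H=\{X_r\}$ with $X_r=\{r\}$ and a weight $\alpha$ forcing $\tr(y_{f(r)})=\tr(y_{g(r)})$ for all $y$, whence $f(r)=g(r)$ by linearity and nonvanishing of the trace. (For weighted hypergraphs the single-hyperedge weighting is even simpler than the single-hyperedge calibration.) The remaining checks — associativity and unitality of the graded multiplication $\ket{\varPhi}\smallsmile\ket{\varPsi}=\ket{\varPhi}\otimes\ket{\varPsi}$, closure under $\smallsmile$ via \ceqref{w/whgsts18}, the unit $\ket{0}\in\scW_E[0]$ via \ceqref{w/whgsts19}, and compatibility (2.3.3) of I of the $\varTheta$-functor with the monadic product — go through verbatim as in \ceqref{cejoint1}--\ceqref{cejoint3}, since tensor multiplication of vectors does not care whether the underlying ordinal morphisms are injective.

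The one genuine subtlety — and the step I would flag as the main obstacle — is being careful that \emph{every} covariance-type invocation in the transcribed argument uses only injective morphisms. In the calibrated proof the crucial identity \ceqref{si2si} is applied for arbitrary $f$; in the weighted setting its analog $\sigma_{(H,\alpha)}(Ef^t(y))=\sigma_{G_Wf(H,\alpha)}(y)$ holds only for injective $f$, as noted in the proof of Prop.~\ref{prop:whsbasic}. But this is exactly why one restricts to $\varTheta$: within $\varTheta$ all morphisms are injective, so there is no loss, and the transcription is faithful. I would therefore conclude that the $\varTheta$-category axioms and the graded $\varTheta$-monadic structure are all verified by the cited relations, establishing the existence of $\scW_E\varTheta$ with the asserted design; since no new calculation beyond those already performed for $\scC_E\varOmega$ and in Prop.~\ref{prop:whsbasic} is required, the proof can legitimately be given as ``wholly analogous to that of Prop.~\ref{prop:chsmonad}, with $\varOmega$ replaced by $\varTheta$ and with covariance supplied by Prop.~\ref{prop:whsbasic} in place of Props.~\ref{prop:shefdgcfd}, \ref{prop:hefgcf}.''
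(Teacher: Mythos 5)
Your proposal is correct and takes essentially the same route as the paper, whose entire proof of this proposition is the one-line statement that it proceeds along lines totally analogous to the proof of Prop.~\ref{prop:chsmonad}; you have simply made the transcription explicit, correctly identifying that the restriction to injective morphisms in $\varTheta$ is precisely what makes the weighted analogue of \ceqref{si2si} (supplied by Prop.~\ref{prop:whsbasic}) available at every step. The only added detail beyond the paper — the simplified single-hyperedge weighting test for injectivity on morphisms — is sound.
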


\begin{proof}
The proof proceeds along lines totally analogous to those of the proof of prop. 
\cref{prop:chsmonad} above. 
\end{proof}

\vfill\eject

\renewcommand{\sectionmark}[1]{\markright{\thesection\ ~~#1}}

\section{\textcolor{blue}{\sffamily Special results and applications}}\label{sec:techno}

In sect. \cref{sec:grstt}, we have worked out a general theory of calibrated hypergraph states
and shown that they are locally maximally entangleable stabilizer states, much as ordinary graph states. 


Leaving aside the fundamental issue of the classification of calibrated hypergraph states in
suitable entanglement classes, which we defer to future work,
in this subsection we look for selected interesting applications of the theory.
In subsect. \cref{subsec:tech}, we concentrate on the case where
the relevant Galois ring is a field and find alternative expressions of calibrated hypergraph states
exhibiting an interesting interplay with the theory of polynomials over finite fields. 
In subsect. \cref{subsec:quditcz}, focusing on  qudits of prime dimension, we present an alternative definition of hypergraph states 
built by controlled $Z$ gates, analogously to basic qubit case, and show that they are
in fact special instances of calibrated hypergraph states.



\subsection{\textcolor{blue}{\sffamily Calibrated hypergraph states of finite field qudits}}\label{subsec:tech}

In this subsection, we study the distinguished polynomial form calibrated hypergraph states take in the special but important
case where the independent states of the underlying qudits are indexed by a finite field $\msR$. 


When the Galois ring $\msR$ is a field, for any exponent $u\in\msZ$ of the associated cyclicity monoid
the power function $x\mapsto x^u$, $x\in\msR$, can be expressed as $x^u=m_u(x)$ for a 
polynomial $m_u(\sfx)\in\msR[\sfx]$ over $\msR$ depending on $u$. 
The phase function $\sigma_{(H,\varrho)}:E[l]\rightarrow\msP$ \linebreak of a calibrated hypergraph $(H,\varrho)\in G_C[l]$
given in \ceqref{whgsts9}, and so the associated hypergraph state $\ket{(H,\varrho)}\in\scH_E[l]$ by \ceqref{whgsts14}, 
are in this way expressible in terms of the configuration variable $x\in E[l]$ in a fully polynomial way. We have indeed 
\begin{equation}
\label{whgsts9/pol}
\sigma_{(H,\varrho)}(x)=\mycom{{}_\sss}{{}_{X\in H}}\mycom{{}_\sss}{{}_{w\in\msA^X}}\varrho_X(w)
\tr\left(\mycom{{}_\ppp}{{}_{r\in X}}m_{w(r)}(x_r)\right)\!.
\end{equation}
We provide next an algorithm for computing the polynomials $m_u(\sfx)$.
The results found rest on the Galois ring $\msR$ being a field and do not extend to a general Galois ring. 

Let $\msR=\GR(p,d)=\bbF_q$ be a Galois field with $q=p^d$ elements, where as usual $p,d\in\bbN_+$ with
$p\geq 2$, $p$ prime. The power matrix introduced next characterizes $\msR$. 

\begin{defi}
The power matrix of $\msR$ is the matrix $A\in\msR^{q\times q}$ given by 
\begin{equation}
\label{pwrexp}
A_{xk}=x^k
\end{equation}
with $x\in\msR$ and $k\in[q]$. 
\end{defi}

\begin{prop} \label{prop:ainv}
The power matrix $A$ is invertible in $\msR^{q\times q}$.
\end{prop}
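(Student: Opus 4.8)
The plan is to recognize $A$ as a Vandermonde matrix whose nodes are the $q$ pairwise distinct elements of the field $\msR=\bbF_q$, and to deduce invertibility from the elementary fact that a non zero polynomial of degree at most $q-1$ over a field has at most $q-1$ roots. First I would fix the ordering $\msR=(z_0,\ldots,z_{q-1})$ used throughout, so that $A$ becomes the $q\times q$ matrix with entries $A_{ik}=z_i{}^k$, $i,k\in[q]$, with the convention $z^0=1$ understood as in subsect. \cref{subsec:cycmongal} (this is the only point where a little care with bookkeeping is needed, as it makes the $k=0$ column the all-ones vector). Since $A$ is a square matrix over the field $\msR$, it is enough to show that it is injective as an $\msR$ linear endomorphism of $\msR^q$.

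To that end I would take $v=(v_0,\ldots,v_{q-1})\in\msR^q$ with $Av=0$, that is $\sum_{k\in[q]}z_i{}^kv_k=0$ for every $i\in[q]$, and associate with it the polynomial $P(\sfx)=\sum_{k\in[q]}v_k\sfx^k\in\msR[\sfx]$, which has degree at most $q-1$. The equations $Av=0$ say exactly that $P(z_i)=0$ for all $i\in[q]$, i.e. $P$ vanishes at all $q$ elements of $\msR$. Because $\msR$ is a field, $\msR[\sfx]$ is an integral domain and a non zero polynomial of degree $n$ admits at most $n$ roots in $\msR$; hence a polynomial of degree at most $q-1$ with $q$ distinct roots must be identically zero. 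Therefore $v_k=0$ for all $k$, so $\ker A=\{0\}$ and $A\in\msR^{q\times q}$ is invertible.

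An equivalent route, which I would mention as an alternative, is to compute the determinant directly: $A$ is a genuine Vandermonde matrix with nodes $z_0,\ldots,z_{q-1}$, so $\det A=\prod_{0\le i<j\le q-1}(z_j-z_i)$, a product of non zero elements of the field $\msR$, whence $\det A\neq 0$. I do not expect any serious obstacle in this proposition; the argument is essentially the standard non-singularity of Vandermonde matrices, the only subtlety being the $z^0=1$ convention and the fact that over a general Galois ring this reasoning would fail — it relies crucially on $\msR$ being a field so that the root-counting bound and the identification of non-zero determinant with invertibility both hold.
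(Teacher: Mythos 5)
Your argument is correct, but it takes a genuinely different route from the paper. You prove invertibility abstractly, by recognizing $A$ as a Vandermonde matrix over the field $\msR=\bbF_q$ and invoking either the root-counting bound for polynomials of degree at most $q-1$ or the determinant formula $\det A=\prod_{i<j}(z_j-z_i)\neq 0$; both are valid precisely because $\msR$ is a field, and your closing remark that the argument breaks down over a general Galois ring is apt. The paper instead writes $A$ in block form with respect to the ordering $(0,1,\xi,\ldots,\xi^{q-2})$ determined by a primitive element $\xi$, and uses the character-sum identity $\sum_{j\in[q-1]}\xi^{\pm nj}=-\delta_{n,0}$ to exhibit an explicit matrix $B$ with $AB=BA=1_{q\times q}$. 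The paper is candid about why it bothers: the closed-form entries of $A^{-1}$ are used downstream, in the explicit formula $m_u(\sfx)=\sum_{k\in[q]}\bigl(\sum_{x\in\msR}A^{-1}{}_{kx}x^u\bigr)\sfx^k$ of Prop.\ \cref{prop:huxu} and again in the polynomial $p(\sfx)=\sum_{k\in[q]}A^{-1}{}_{kx^*}\sfx^k$ entering the controlled-$Z$ construction. So your proof is shorter and more transparent for the bare statement of invertibility, but it does not deliver the inverse itself; if you adopted your route you would still need to supply $A^{-1}$ separately (e.g.\ via Lagrange interpolation, which in effect reconstructs the paper's formula) before the later propositions could be stated in their computable form.
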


\noindent
Although this result is well--known, we shall show it, since   
its proof proceeds by the explicit construction of the inverse matrix $A^{-1}$, 
which is of considerable usefulness.

\begin{proof}
From the theory of Galois fields, it is known that there exists a primitive element $\xi\in\msR^\times$
such that $\msR=\{0\}\cup\{\xi^j\hfpt|\hfpt j\in[q-1]\}$. Further, $\xi^{q-1}=1$ and 
\begin{equation}
\label{ainvp1}
\mycom{{}_\sss}{{}_{j\in [q-1]}}\xi^{\pm nj}=-\delta_{n,0}
\end{equation}
for $n\in[q-1]$. We recall here that $q=p^d=0$ in $\msR$ since $p$ is the characteristic of $\msR$.

As $x^k=\xi^{kj}$ for some $j\in[q-1]$ for $x\in\msR\setminus\{0\}$ and $k\in[q]$, $A$ has the block form
\begin{equation}
\label{ainvp2}
A=\bigg[
\begin{matrix}
1&0_{q-1}{}^t\\
e&S
\end{matrix}
\bigg]
\end{equation}
with respect to the natural indexing $(0,1,\xi,\ldots,\xi^{q-2})$ of $\msR$, 
where the $(q-1)\times 1$ and $(q-1)\times(q-1)$ blocks $e$ and $S$ are given by  
\begin{align}
\label{ainvp3}
&e_k=1,
\\
\label{ainvp4}
&S_{kl}=\xi^{k(l+1)}
\end{align}
with $k,l\in[q-1]$. Consider now the matrix $B\in\msR^{q\times q}$ given by 
\begin{equation}
\label{ainvp5}
B=\bigg[
\begin{matrix}1&0_{q-1}{}^t\\
f&T
\end{matrix}
\bigg],
\end{equation}
where the $(q-1)\times 1$ and $(q-1)\times(q-1)$ blocks $f$ and $T$ are given by  
\begin{align}
\label{ainvp6}
&f_k=-\delta_{k,q-2},
\\
\label{ainvp7}
&T_{kl}=-\xi^{(q-2-k)l}
\end{align}
for $k,l\in[q-1]$. Using relation \ceqref{ainvp1}, \pagebreak it is a matter of a straightforward calculation
to verify that $AB=BA=1_{q\times q}$ holds. It follows that $A$ is invertible, as claimed, with $A^{-1}=B$. 
\end{proof}

The following proposition makes precise what was anticipated at the beginning of this subsection.
The proposition furnishes an explicit expression of the polynomial $m_u(\sfx)$ and so 
is again of considerable usefulness.

\begin{prop} \label{prop:huxu}
For each exponent $u\in\msZ$, there is a unique polynomial $m_u(\sfx)\in\msR[\sfx]$
of degree at most $q-1$ with the property that 
\begin{equation}
\label{huxu1}
x^u=m_u(x)
\end{equation}
for $x\in\msR$. $m_u(\sfx)$ is given by
\begin{equation}
\label{huxu1/1}
m_u(\sfx)=\mycom{{}_\sss}{{}_{k\in [q]}}\bigg(\mycom{{}_\sss}{{}_{x\in\msR}}A^{-1}{}_{kx}x^u\bigg)\sfx^k. 
\end{equation}
\end{prop}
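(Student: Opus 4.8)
The plan is to recognize that the statement is exactly the assertion that every function $\msR\rightarrow\msR$ is represented by a unique polynomial of degree at most $q-1$, specialized to the function $x\mapsto x^u$, and to deduce this from the invertibility of the power matrix $A$ already established in prop. \cref{prop:ainv}.

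First I would fix $u\in\msZ$ and write a generic polynomial of degree at most $q-1$ as $m(\sfx)=\mycom{{}_\sss}{{}_{k\in[q]}}c_k\sfx^k$ with coefficients $c_k\in\msR$ assembled into a column vector $c=(c_k)_{k\in[q]}\in\msR^q$. Evaluating at $x\in\msR$ gives $m(x)=\mycom{{}_\sss}{{}_{k\in[q]}}c_kx^k=\mycom{{}_\sss}{{}_{k\in[q]}}A_{xk}c_k=(Ac)_x$ by the definition \ceqref{pwrexp} of the power matrix. Hence the requirement that $m(x)=x^u$ for all $x\in\msR$ is equivalent to the single linear system $Ac=b$, where $b\in\msR^q$ is the column vector with entries $b_x=x^u$, $x\in\msR$ (the symbol $x^u$ being read, per \ceqref{cycmongal7}, as the ordinary power $x^{u_x}$, which is in any case just some function of $x$ to be interpolated).

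Next I would invoke prop. \cref{prop:ainv}, which states that $A$ is invertible in $\msR^{q\times q}$. Therefore $Ac=b$ has the unique solution $c=A^{-1}b$, that is $c_k=\mycom{{}_\sss}{{}_{x\in\msR}}A^{-1}{}_{kx}x^u$ for each $k\in[q]$. Setting $m_u(\sfx)=\mycom{{}_\sss}{{}_{k\in[q]}}c_k\sfx^k$ then yields a polynomial of degree at most $q-1$ satisfying \ceqref{huxu1}, with coefficients as in \ceqref{huxu1/1}. Uniqueness is immediate: any polynomial of degree at most $q-1$ with the stated property has coefficient vector solving $Ac=b$, and this solution is unique by the invertibility of $A$.

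There is essentially no obstacle once prop. \cref{prop:ainv} is available; the only point deserving care is the bookkeeping that evaluation on degree-$(\leq q-1)$ polynomials is precisely left multiplication by $A$, so that the single fact that $A$ is invertible delivers existence and uniqueness simultaneously and, moreover, the explicit formula \ceqref{huxu1/1} for $m_u(\sfx)$ in terms of $A^{-1}$.
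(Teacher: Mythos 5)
Your proposal is correct and is essentially the paper's own argument: both rest entirely on the invertibility of the power matrix $A$ from prop. \ref{prop:ainv}, the paper checking existence by substituting \eqref{huxu1/1} and collapsing $\sum_k A_{xk}A^{-1}{}_{ky}=\delta_{x,y}$ and then proving uniqueness separately, while you package the two steps as the unique solvability of the linear system $Ac=b$. The repackaging is a matter of presentation, not of mathematical substance.
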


\begin{proof}
We first verify that the polynomial $m_u(\sfx)$ given by \ceqref{huxu1/1} has the required properties. 
First $m_u(\sfx)$ has degree at most $q-1$. Second, $m_u(\sfx)$ satisfies \ceqref{huxu1} as 
{\allowdisplaybreaks 
\begin{align}
\label{huxup1}
m_u(x)&=\mycom{{}_\sss}{{}_{k\in [q]}}\bigg(\mycom{{}_\sss}{{}_{y\in\msR}}A^{-1}{}_{ky}y^u\bigg)x^k
\\
\nonumber
&=\mycom{{}_\sss}{{}_{y\in\msR}}\bigg(\mycom{{}_\sss}{{}_{k\in [q]}}A_{xk}A^{-1}{}_{ky}\bigg)y^u
\\
\nonumber
&=\mycom{{}_\sss}{{}_{y\in\msR}}\delta_{x,y}y^u=x^u,
\end{align}
}
\!\!where \ceqref{pwrexp} was employed. Third, $m_u(\sfx)$ is the only such polynomial. 
Indeed, let $g(\sfx),h(\sfx)\in\msR[\sfx]$ be polynomials of degree at most $q-1$ such that $g(x)=h(x)=x^u$
for $x\in\msR$. 
Suppose that $g(\sfx)=\sum_{k\in [q]}a_k\sfx^k$, $h(\sfx)=\sum_{k\in [q]}b_k\sfx^k$, where $a_k,b_k\in\msR$. Then, 
{\allowdisplaybreaks 
\begin{align}
\label{huxup2}
0&=g(x)-h(x)
\\
\nonumber
&=\mycom{{}_\sss}{{}_{k\in [q]}}(a_k-b_k)x^k=\mycom{{}_\sss}{{}_{k\in [q]}}A_{xk}(a_k-b_k),
\end{align}
}
\!\!where \ceqref{pwrexp} was used again. Since the power matrix $A$ is invertible, $a_k=b_k$.
It follows that $g(\sfx)=h(\sfx)$, showing uniqueness. 
\end{proof}

We present a couple of examples which provide an explicit illustration of the techniques introduced above.

\begin{exa} \label{exa:huf3}
The Galois field $\msR=\GR(3,1)=\bbF_3$. 
{\rm For this field, the matrix $A_{}$ and its inverse $A^{-1}$ read as
\begin{equation}
\label{huf31}
A=\left[
\begin{matrix}
1&0&0\\
1&1&1\\
1&2&1
\end{matrix}
\right],
\hspace{1cm}
A^{-1}=\left[
\begin{matrix}
1&0&0\\
0&2&1\\
2&2&2
\end{matrix}
\right].
\end{equation}
Using \ceqref{huxu1/1}, we find e.g. 
\begin{align}
\label{huf32}
m_{(1,0,0)}(\sfx)&=\sfx^2,
\\
\nonumber
m_{(0,0,1)}(\sfx)&=1+\sfx+2\sfx^2,
\\
\nonumber
m_{(1,0,1)}(\sfx)&=\sfx, \qquad \text{etc.}
\end{align}
}
\end{exa}

\begin{exa} \label{exa:huf4}
The Galois field $\msR=\GR(2,2)=\bbF_4$. 
{\rm This field is described in ex. \cref{exa:gr22}. The matrix $A$ and its inverse $A^{-1}$ have the form 
\begin{equation}
\label{huf41}
A=\left[
\begin{matrix}
1&0&0&0\\
1&1&1&1\\
1&\theta&1+\theta&1\\
1&1+\theta&\theta&1
\end{matrix}
\right],
\hspace{1cm}
A^{-1}=\left[
\begin{matrix}
1&0&0&0\\
0&1&1+\theta&\theta\\  
0&1&\theta&1+\theta\\
1&1&1&1
\end{matrix}
\right].
\end{equation}
Employing \ceqref{huxu1/1}, we find e.g. 
\begin{align}
\label{huf42}
m_{(0,0,1,0)}(\sfx)&=1+\theta\sfx+\sfx^2+(1+\theta)\sfx^3,
\\
\nonumber
m_{(0,0,0,1)}(\sfx)&=1+(1+\theta)\sfx+\sfx^2+\theta\sfx^3,
\\
\nonumber
m_{(0,0,1,1)}(\sfx)&=1+\sfx+\sfx^3,\qquad\text{etc.}
\end{align}
}
\end{exa}

The universal polynomial of $\msR$ is the degree $q$ polynomial $e(\sfx)\in\msR[\sfx]$ defined by
\begin{equation}
\label{unipol1}
e(\sfx)=\mycom{{}_\ppp}{{}_{x\in\msR}}(\sfx-x).
\end{equation}
It can be shown that $e(\sfx)$ is given by 
\begin{equation}
\label{unipol2}
e(\sfx)=\sfx^q-\sfx. 
\end{equation}
Every element $x\in\sfR$ obeys the universal equation 
\begin{equation}
\label{unipol3}
e(x)=0.
\end{equation}
Since we deal throughout with polynomials $f(\sfx)\in\msR[\sfx]$ which eventually have to be evaluated
at generic elements $x\in\msR$, it is more natural to formulate the theory in terms  of the polynomial residue class ring
$\msR[\sfx]/e(\sfx)\msR[\sfx]$ rather than the ring $\msR[\sfx]$ itself.
The next considerations serve as an example of this.

For any two exponents $u,v\in\msZ$, one has $m_{u+v}(x)=m_u(x)m_v(x)$ for $x\in\msR$ by virtue of the property
\ceqref{huxu1} and the identity $x^{u+v}=x^ux^v$. But in spite of that, the mapping $u\in\msZ\mapsto m_u(\sfx)\in\msR[\sfx]$
is not a morphism of the cyclicity monoid $\msZ$ into the multiplicative monoid of the ring $\msR[\sfx]$. Indeed,
for a generic exponent pair $u,v\in\msZ$ the polynomials $m_{u+v}(\sfx)$, $m_u(\sfx)m_v(\sfx)$ differ in general,
since the former has degree at most $q-1$ while the latter has degree generally greater than $q-1$. 
The morphism property nevertheless still holds in the sense stated in the following proposition. 


\begin{prop} \label{prop:huhv}
The map $u\in\msZ\mapsto m_u(\sfx)+e(\sfx)\msR[\sfx]\in\msR[\sfx]/e(\sfx)\msR[\sfx]$ is a morphism of the cyclicity monoid
$\msZ$ into the multiplicative monoid of the ring $\msR[\sfx]/e(\sfx)\msR[\sfx]$.
\end{prop}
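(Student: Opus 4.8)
The plan is to reduce the claimed morphism property, stated modulo the universal polynomial $e(\sfx)$, to the pointwise identity $x^{u+v}=x^ux^v$ which already holds in $\msR$ by the properties of generalized exponentiation from subsect. \cref{subsec:cycmongal}. The key structural fact is that two polynomials of $\msR[\sfx]$ agree as elements of the residue class ring $\msR[\sfx]/e(\sfx)\msR[\sfx]$ if and only if they define the same function $\msR\to\msR$; this is essentially prop. \cref{prop:huxu} (uniqueness of the degree $\leq q-1$ representative) combined with the observation that $e(\sfx)$ is precisely the product $\prod_{x\in\msR}(\sfx-x)$, so a polynomial lies in the ideal $e(\sfx)\msR[\sfx]$ exactly when it vanishes at every $x\in\msR$. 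I would isolate this as a short preliminary lemma.

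First I would record the preliminary lemma: for $f(\sfx),g(\sfx)\in\msR[\sfx]$, one has $f(\sfx)+e(\sfx)\msR[\sfx]=g(\sfx)+e(\sfx)\msR[\sfx]$ if and only if $f(x)=g(x)$ for all $x\in\msR$. The forward direction is immediate from \ceqref{unipol3}. For the converse, if $f$ and $g$ agree at all points of $\msR$ then $f-g$ vanishes at all $q$ elements of $\msR$, hence is divisible by each linear factor $\sfx-x$; since $\msR$ is a field these are pairwise coprime, so $f-g$ is divisible by their product $e(\sfx)$ by \ceqref{unipol1}, \ceqref{unipol2}. (Alternatively, reduce $f-g$ to its degree $\leq q-1$ representative and invoke uniqueness in prop. \cref{prop:huxu} with the representative forced to be $0$.)

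Next I would verify the two conditions making $u\mapsto m_u(\sfx)+e(\sfx)\msR[\sfx]$ a monoid morphism. For the unit, $m_0(\sfx)$ is the unique degree $\leq q-1$ polynomial with $m_0(x)=x^0=1$ for all $x\in\msR$, so $m_0(\sfx)=1$, giving the multiplicative unit of $\msR[\sfx]/e(\sfx)\msR[\sfx]$. For compatibility with the operation, fix $u,v\in\msZ$. By \ceqref{huxu1} and the identity $x^{u+v}=x^ux^v$ (valid in $\msR$, cf. subsect. \cref{subsec:cycmongal}), the polynomials $m_{u+v}(\sfx)$ and $m_u(\sfx)m_v(\sfx)$ take the same value at every $x\in\msR$, namely $x^{u+v}$. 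By the preliminary lemma they are therefore equal in $\msR[\sfx]/e(\sfx)\msR[\sfx]$, i.e.
\begin{equation}
m_{u+v}(\sfx)+e(\sfx)\msR[\sfx]=\big(m_u(\sfx)+e(\sfx)\msR[\sfx]\big)\big(m_v(\sfx)+e(\sfx)\msR[\sfx]\big),
\end{equation}
which is the required morphism property.

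There is no real obstacle here: the content of the statement is entirely in passing from the ring $\msR[\sfx]$, where $m_{u+v}$ and $m_um_v$ genuinely differ in degree, to the quotient by $e(\sfx)$, where polynomial functions on $\msR$ are faithfully represented. The one point that deserves a sentence of care is the use of the field hypothesis: coprimality of the factors $\sfx-x$ (equivalently, the statement that a polynomial vanishing at all of $\msR$ is divisible by $e(\sfx)$) can fail over a general Galois ring, which is exactly why the paper remarks earlier that these polynomial results do not extend beyond the field case. Accordingly I would keep the proof short, flag the reliance on $\msR$ being a field, and cite prop. \cref{prop:huxu} and \ceqref{unipol1}--\ceqref{unipol3} for the ingredients.
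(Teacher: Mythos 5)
Your proposal is correct and follows essentially the same route as the paper: reduce the congruence modulo $e(\sfx)$ to the pointwise identities $m_{u+v}(x)=m_u(x)m_v(x)$ and $m_0(x)=1$ on $\msR$, via the preliminary fact that a polynomial vanishes on all of $\msR$ if and only if it lies in $e(\sfx)\msR[\sfx]$. The only cosmetic difference is that you justify that fact by coprimality of the linear factors $\sfx-x$, whereas the paper invokes the division algorithm by the monic polynomial $e(\sfx)$; both are valid over a field and rely on the same hypothesis you correctly flag.
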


\begin{proof} Let $f(\sfx)\in\msR[\sfx]$ be any polynomial over $\msR$. Then, 
$f(x)=0$ for $x\in\msR$ if and only if $e(\sfx)$ divides $f(\sfx)$ in $\msR[\sfx]$,
that is if and only if $f(\sfx)+e(\sfx)\msR[\sfx]=e(\sfx)\msR[\sfx]$. This follows
readily from \ceqref{unipol1} and the polynomial division algorithm.
  
For every pair of exponents $u,v\in\msZ$, one has 
\begin{equation}
\label{huhv0}
m_{u+v}(x)=m_u(x)m_v(x)
\end{equation}
for $x\in\msR$, as already observed, as a consequence of the identity $x^{u+v}=x^ux^v$
and \ceqref{huxu1}. It follows that 
\begin{align}
\label{huhv1}
&m_{u+v}(\sfx)+e(\sfx)\msR[\sfx]=m_u(\sfx)m_v(\sfx)+e(\sfx)\msR[\sfx]
\\
\nonumber
&\hspace{5cm}=(m_u(\sfx)+e(\sfx)\msR[\sfx])(m_v(\sfx)+e(\sfx)\msR[\sfx]).
\end{align}
Furthermore, it holds that \hphantom{xxxxxxxxxx}
\begin{equation}
\label{huhv2}
m_0(x)=1
\end{equation}
for $x\in\msR$ by \ceqref{huxu1} again since $x^0=1$ for $x\in\msR$.
It follows that 
\begin{equation}
\label{huhv3}
m_0(\sfx)+e(\sfx)\msR[\sfx]=1+e(\sfx)\msR[\sfx]. 
\end{equation}
\ceqref{huhv1}, \ceqref{huhv3} imply the statement 
\end{proof}

Though the polynomials $m_{u+v}(\sfx)$, $m_u(\sfx)m_v(\sfx)$ may differ for a given exponent pair $u,v\in\msZ$, 
it is still possible to obtain the former from the latter by the following lemma.

\begin{lemma}
For every polynomial $f(\sfx)\in\msR[\sfx]$, there exists precisely one polynomial $\ol{f}(\sfx)\in\msR[\sfx]$
of degree at most $q-1$ such that $f(x)=\ol{f}(x)$ for $x\in\msR$. $\ol{f}(\sfx)$ is the remainder of the
division of $f(\sfx)$ by the universal polynomial $e(\sfx)$. 
\end{lemma}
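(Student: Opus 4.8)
The plan is to prove both existence and uniqueness of the reduced representative $\overline{f}(\sfx)$ by invoking the polynomial division algorithm for $\msR[\sfx]$, which is available because $\msR$ is a field, and then identifying the remainder explicitly. First I would write $f(\sfx) = e(\sfx)\,g(\sfx) + \overline{f}(\sfx)$ with $\deg\overline{f}(\sfx) < \deg e(\sfx) = q$, using that $e(\sfx)$ is monic (cf. \ceqref{unipol2}); this is the standard Euclidean division and is uniquely determined. Evaluating at any $x\in\msR$ and using the universal equation \ceqref{unipol3}, namely $e(x)=0$, gives $f(x) = \overline{f}(x)$ for all $x\in\msR$, which establishes existence with the claimed bound on the degree and simultaneously shows $\overline{f}(\sfx)$ is the remainder of the division of $f(\sfx)$ by $e(\sfx)$.

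For uniqueness, I would argue as in the uniqueness part of the proof of prop. \cref{prop:huxu}: suppose $g(\sfx), h(\sfx)\in\msR[\sfx]$ both have degree at most $q-1$ and satisfy $g(x) = h(x)$ for every $x\in\msR$. Then $g(\sfx) - h(\sfx)$ is a polynomial of degree at most $q-1$ vanishing at all $q$ elements of $\msR$. Since $\msR$ is a field of cardinality $q$, a nonzero polynomial over $\msR$ of degree at most $q-1$ can have at most $q-1$ roots, so $g(\sfx) - h(\sfx) = 0$, whence $g(\sfx) = h(\sfx)$. Alternatively, one can phrase this through the invertibility of the power matrix $A$ established in prop. \cref{prop:ainv}, exactly as done in \ceqref{huxup2}; either route is routine.

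The main (and only mild) obstacle is keeping the two characterizations of $\overline{f}(\sfx)$ aligned: the lemma asserts both that $\overline{f}(\sfx)$ is \emph{the} degree-$<q$ polynomial agreeing with $f$ on $\msR$ and that it \emph{equals} the division remainder. These coincide precisely because the division remainder has degree $< q$ and agrees with $f$ on $\msR$ (by \ceqref{unipol3}), while uniqueness forces any such polynomial to be that remainder. So the proof is really just: divide by $e(\sfx)$, evaluate using \ceqref{unipol3}, then apply the root-count (or power-matrix invertibility) uniqueness argument. No new machinery is needed beyond what the excerpt already provides.

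\begin{proof}
Since $\msR$ is a field and $e(\sfx)$ is monic of degree $q$ by \ceqref{unipol2}, the Euclidean division algorithm in $\msR[\sfx]$ yields a unique pair $g(\sfx),\overline{f}(\sfx)\in\msR[\sfx]$ with
\begin{equation}
\label{unipolrem1}
f(\sfx)=e(\sfx)\hfpt g(\sfx)+\overline{f}(\sfx),\qquad \deg\overline{f}(\sfx)<q.
\end{equation}
Evaluating \ceqref{unipolrem1} at an arbitrary $x\in\msR$ and using the universal equation \ceqref{unipol3}, $e(x)=0$, we get $f(x)=\overline{f}(x)$ for all $x\in\msR$. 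Thus $\overline{f}(\sfx)$, the remainder of the division of $f(\sfx)$ by $e(\sfx)$, is a polynomial of degree at most $q-1$ agreeing with $f$ on $\msR$, which proves existence.

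For uniqueness, let $g(\sfx),h(\sfx)\in\msR[\sfx]$ both have degree at most $q-1$ and satisfy $g(x)=h(x)$ for every $x\in\msR$. Then $g(\sfx)-h(\sfx)$ has degree at most $q-1$ yet vanishes at all $q$ elements of the field $\msR$. A nonzero polynomial over a field of degree at most $q-1$ has at most $q-1$ roots, so $g(\sfx)-h(\sfx)=0$, i.e. $g(\sfx)=h(\sfx)$. Hence $\overline{f}(\sfx)$ is the unique polynomial with the stated properties.
\end{proof}
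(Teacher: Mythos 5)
Your proof is correct and follows essentially the same route as the paper: Euclidean division by the monic polynomial $e(\sfx)$, evaluation via the universal equation \ceqref{unipol3} for existence, and a uniqueness argument for degree-at-most-$q-1$ polynomials agreeing on all of $\msR$. The only cosmetic difference is that the paper settles uniqueness by invoking the invertibility of the power matrix $A$ (as in \ceqref{huxup2}), whereas you use the root-count bound over a field; the two are equivalent here and you correctly flag the power-matrix route as an alternative.
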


\begin{proof}
Since the universal polynomial $e(\sfx)$ has degree $q$ by \ceqref{unipol2},
we can decompose any polynomial $f(\sfx)\in\msR[\sfx]$ as 
\begin{equation}
\label{}
f(\sfx)=\ol{f}(\sfx)+e(\sfx)q(\sfx)
\end{equation}
for certain polynomials $\ol{f}(\sfx),q(\sfx)\in\msR[\sfx]$ with $\ol{f}(\sfx)$
of degree at most $q-1$, by means of the polynomial division algorithm.
It follows that $f(x)=\ol{f}(x)$ for $x\in\msR$ by the universal equation \ceqref{unipol3}.
This shows the existence of $\ol{f}(\sfx)$.

Let $f(x)=\ol{f}_1(x)=\ol{f}_1(x)$ for $x\in\msR$ for two polynomials
$\ol{f}_1(\sfx),\ol{f}_2(\sfx)\in\msR[\sfx]$ of degree at most $q-1$. Then,
by a verification completely analogous to \ceqref{huxup2} in the proof of
prop. \cref{prop:huxu}, we find that $\ol{f}_1(\sfx)=\ol{f}_2(\sfx)$ necessarily.
\end{proof}

\noindent
Therefore, $m_{u+v}(\sfx)$ is the remainder of the division of $m_u(\sfx)m_v(\sfx)$
by the universal polynomial $e(\sfx)$. This property allows one to obtain the 
polynomials $m_u(\sfx)$ with generic $u\in\msZ$ once the polynomials $m_{u_i}(\sfx)$
of a subset $u_i\in\msZ$ generating $\msZ$ is known.

For every $x\in\msR$, $x\neq 1$, the cyclic monoid $\msZ_x$ is generated by the element
$1\in\msZ_x$ (cf. subsect. \cref{subsec:cycmongal}).
This property does not hold for cyclic monoid $\msZ_1$, which is trivial.
Therefore, by virtue of \ceqref{cycmongal6}, the cyclicity monoid $\msZ$
is generated by the special elements $s(x)=(s(x)_y)_{y\in\msR}\in\msZ$, $x\in\msR$, given by 
\begin{equation}
\label{crxymp2}
s(x)_y=\delta_{x,y}-\delta_{x,1}\delta_{1,y}.
\end{equation}  
Note indeed that for $x\neq 1$ $s(x)_y$ takes the value $1$ for $y=x$ and vanishes else, while for $x=1$
$s(x)_y$ vanishes identically. As a consequence, the polynomials $m_u(\sfx)\in\msR[\sfx]$, $u\in\msZ$, are
all expressible as products of the $m_{s(x)}(\sfx)$, $x\in\msR$, modulo $e(\sfx)\msR[\sfx]$. 


\begin{exa} \label{exa:chuf3}
The Galois field $\msR=\GR(3,1)=\bbF_3$. 
{\rm The setting considered here is the same as that of ex. \cref{exa:huf3}. The cyclicity monoid element
$(1,0,0)\in\msZ$ satisfies $(1,0,0)+(1,0,0)=(1,0,0)$. Therefore we should have
\begin{equation}
\label{}
m_{(1,0,0)}(\sfx)m_{(1,0,0)}(\sfx)=m_{(1,0,0)}(\sfx) \quad\text{mod}~\sfx^3-\sfx.
\end{equation}  
This can be indeed verified from the \ceqref{huf32}. 
}
\end{exa}

\begin{exa} \label{exa:chuf4}
The Galois field $\msR=\GR(2,2)=\bbF_4$. 
{\rm The setting is the same as that of ex. \cref{exa:huf4}. The elements $(0,0,1,0),(0,0,0,1)\in\msZ$
of the cyclicity monoid satisfy $(0,0,1,0)+(0,0,0,1)=(0,0,1,1)$. This relation implies that
\begin{equation}
\label{}
m_{(0,0,1,0)}(\sfx)m_{(0,0,0,1)}(\sfx)=m_{(0,0,1,1)}(\sfx) \quad\text{mod}~\sfx^4-\sfx,
\end{equation}  
a relation that can be checked using the \ceqref{huf42}.
}
\end{exa}

The basic power matrix introduced next is another key element of the theory. 

\begin{defi}
The basic power matrix of $\msR$ is the matrix $C\in\msR^{q\times q}$ given by
\begin{equation}
\label{crxymp1}
C_{xy}=x^{s(y)}
\end{equation}
with $x,y\in\msR$, where the cyclicity group element $s(y)\in\msZ$ is given by \ceqref{crxymp2}.
\end{defi}

\begin{prop} \label{prop:cinv}
The basic power matrix $C$ is symmetric and invertible in $\msR^{q\times q}$. 
\end{prop}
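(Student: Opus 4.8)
The plan is to compute $C$ explicitly and then read off both assertions. First I would unfold the definition: by \ceqref{crxymp1} and \ceqref{cycmongal7}, $C_{xy}=x^{s(y)}=x^{s(y)_x}$, where by \ceqref{crxymp2} one has $s(y)_x=\delta_{x,y}-\delta_{y,1}\delta_{1,x}$. A short case check shows that $s(y)_x\in\{0,1\}$, that $s(y)_x=1$ precisely when $x=y\neq 1$, and that $s(y)_x=0$ in every other case; consequently $C_{xy}=1$ whenever $x\neq y$ or $x=y=1$, while $C_{yy}=y$ when $y\neq 1$. Stated uniformly, $C=J+D$, where $J\in\msR^{q\times q}$ is the all-ones matrix and $D$ is the diagonal matrix with $D_{xx}=x-1$ for every $x\in\msR$ (consistency at $x=1$: $1+(1-1)=1=C_{11}$).

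Symmetry is then immediate, since $J$ and $D$ are both symmetric; equivalently, $C_{xy}$ depends only on whether $x=y$ and, in that case, on their common value.

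For invertibility, since $\msR$ is a field it suffices to show $\ker C=0$. I would take $v\in\msR^{q}$ with $Cv=0$, set $S=\sum_{z\in\msR}v_z$, and note that for each $x\in\msR$ one has $(Cv)_x=\sum_{z\neq x}v_z+C_{xx}v_x=S+(C_{xx}-1)v_x=S+(x-1)v_x$, using $C_{xx}=x$ for $x\neq 1$ and $C_{11}-1=0=1-1$. Specializing to $x=1$ gives $S=0$; specializing to $x\neq 1$ then gives $(x-1)v_x=0$, hence $v_x=0$ because $x-1$ is a nonzero element of the field $\msR$; and finally $v_1=S-\sum_{x\neq 1}v_x=0$. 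Therefore $v=0$, so $C$ is invertible in $\msR^{q\times q}$.

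I do not expect a genuine obstacle here: the only point requiring care is the bookkeeping around the distinguished element $1\in\msR$, whose cyclic monoid $\msZ_1$ is trivial (so $1\notin\msZ_1$), which is exactly why both $s(y)_1$ and the diagonal entry $C_{11}$ behave exceptionally; once that is handled everything is routine linear algebra, the field hypothesis being used precisely once, to divide by $x-1$ for $x\neq 1$. As a cross-check one could also observe, via \cref{prop:huxu}, that $C_{xy}=m_{s(y)}(x)$, whence $C=AM$ with $A$ the invertible power matrix of \cref{prop:ainv} and $M$ the coefficient matrix of the polynomials $m_{s(y)}(\sfx)$, so that invertibility of $C$ becomes linear independence of those $q$ polynomials of degree below $q$; but the direct kernel argument above is the cleaner route.
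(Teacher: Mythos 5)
Your proof is correct and follows essentially the same route as the paper: both establish symmetry from the observation that $C_{xy}=1$ off the diagonal, and both prove invertibility by showing the kernel is trivial via the identity $(Cv)_x=\sum_z v_z+(x-1)v_x$, using the field structure to divide by $x-1$ for $x\neq 1$ and handling the distinguished element $1$ separately. The $C=J+D$ packaging and the order of the two specializations ($x=1$ first versus last) are cosmetic differences only.
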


\begin{proof}
Combining \ceqref{crxymp2} and \ceqref{crxymp1}, we find
\begin{equation}
\label{cinvp1}
C_{xy}=x^{\delta_{y,x}-\delta_{y,1}\delta_{1,x}}
\end{equation}
for $x,y\in\msR$. Thus, $C_{xy}=1$ when $x\neq y$ and so
$C_{xy}=C_{yx}$, showing the symmetry of the matrix $C$. 
Next, suppose that for every $x\in\msR$
\begin{equation}
\label{crxymp3}
\mycom{{}_\sss}{{}_{y\in\msR}}C_{xy}a_y=0
\end{equation}
for certain $a_y\in\msR$. Then, owing to \ceqref{cinvp1}, 
\begin{equation}
\label{crxymp4}
xa_x-a_x+\mycom{{}_\sss}{{}_{y\in\msR}}a_y=0. 
\end{equation}
Hence, $xa_x-a_x$ is independent from $x$, implying that 
\begin{equation}
\label{crxymp5}
(x-1)a_x=(x-1)a_x|_{x=1}=0.
\end{equation}
Thus, $a_x=0$ for $x\neq 1$. Relation \ceqref{crxymp3} then reduces to
\begin{equation}
\label{crxymp6}
C_{x1}a_1=0.
\end{equation}
For all $x\in\msR$. Since $C_{11}=1$, we have $a_1=0$ too. We conclude that $a_x=0$ 
for $x\in\msR$, showing the non singularity of the matrix $C$. 
\end{proof}

The relevance of the polynomials $m_{s(y)}(\sfx)$ in our theory is made manifest also by the next result. 

\begin{prop} \label{prop:taufexp}
Suppose that $f(\sfx)\in\msR[\sfx]$ is any polynomial of degree at most $q-1$, $f(\sfx)=\sum_{k\in[q]}c_k\sfx^k$
with $c_k\in\msR$. Then, there are constants $c_y\in\msR$, $y\in\msR$, such that 
\begin{equation}
\label{taufexp1}
f(x)=\mycom{{}_\sss}{{}_{y\in\msR}}c_ym_{s(y)}(x)
\end{equation}
for $x\in\msR$. The $c_y\in\msR$ are uniquely determined and explicitly given by
\begin{equation}
\label{taufexp2}
c_y=\mycom{{}_\sss}{{}_{k\in[q]}}\mycom{{}_\sss}{{}_{z\in\msR}}C^{-1}{}_{yz}A_{zk}c_k.
\end{equation}
\end{prop}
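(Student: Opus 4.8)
The plan is to exploit the invertibility of the basic power matrix $C$ (prop. \cref{prop:cinv}) together with that of the power matrix $A$ (prop. \cref{prop:ainv}). First I would observe that the polynomials $m_{s(y)}(\sfx)$, $y\in\msR$, span the space of polynomial functions on $\msR$ of degree at most $q-1$; this is because, by \ceqref{huxu1}, evaluating $m_{s(y)}$ at $x$ gives $m_{s(y)}(x)=x^{s(y)}=C_{xy}$, so the matrix of values of the $m_{s(y)}$'s at the points of $\msR$ is exactly the matrix $C$, which is invertible. So the existence claim amounts to: given $f$ of degree at most $q-1$, the vector $(f(x))_{x\in\msR}$ lies in the column span of $C$, which it does trivially since $C$ is invertible, and the coordinates $(c_y)_{y\in\msR}$ are uniquely determined by $C(c_y)_y=(f(x))_x$, i.e. $c_y=\sum_{z\in\msR}C^{-1}{}_{yz}f(z)$.

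The remaining step is to rewrite $f(z)$ in terms of the given coefficients $c_k$. Since $f(\sfx)=\sum_{k\in[q]}c_k\sfx^k$, I have $f(z)=\sum_{k\in[q]}z^kc_k=\sum_{k\in[q]}A_{zk}c_k$ by the definition \ceqref{pwrexp} of the power matrix $A$. Substituting this into the expression for $c_y$ yields
\begin{equation*}
c_y=\mycom{{}_\sss}{{}_{z\in\msR}}C^{-1}{}_{yz}\mycom{{}_\sss}{{}_{k\in[q]}}A_{zk}c_k
=\mycom{{}_\sss}{{}_{k\in[q]}}\mycom{{}_\sss}{{}_{z\in\msR}}C^{-1}{}_{yz}A_{zk}c_k,
\end{equation*}
which is precisely \ceqref{taufexp2}. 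For uniqueness I would simply note that if $\sum_{y\in\msR}c_ym_{s(y)}(x)=\sum_{y\in\msR}c'_ym_{s(y)}(x)$ for all $x\in\msR$, then $C(c_y-c'_y)_y=0$, and the non-singularity of $C$ forces $c_y=c'_y$ for all $y$.

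I do not expect any real obstacle here; the proof is essentially a change-of-basis argument packaged through two invertible matrices. The only mild subtlety worth spelling out is the identification $m_{s(y)}(x)=C_{xy}$: it uses \ceqref{huxu1} (valid because $\msR$ is a field) to replace the polynomial evaluation by the generalized power $x^{s(y)}$, and then the definition \ceqref{crxymp1} of $C$. Once that identification is made, both the existence of the expansion \ceqref{taufexp1} and the explicit formula \ceqref{taufexp2} fall out of linear algebra over $\msR$, and uniqueness follows from prop. \cref{prop:cinv}. I would present the argument in that order: identify the value matrix of the $m_{s(y)}$ with $C$; deduce existence and the coordinate formula from invertibility of $C$; express $f(z)$ via $A$; combine; conclude uniqueness.
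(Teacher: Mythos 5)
Your proposal is correct and follows essentially the same route as the paper: both rest on the identification $m_{s(y)}(x)=x^{s(y)}=C_{xy}$ and on the invertibility of $C$ and $A$, the only difference being that you derive \eqref{taufexp2} by solving the linear system $C(c_y)_y=(f(x))_x$ while the paper verifies the stated formula by direct substitution after establishing that the $\tau_{m_{s(y)}}$ form a basis of the space of polynomial functions. No gap.
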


\begin{proof}
Consider the vector space $\msV_\msR$ over $\msR$ of the polynomial functions $\tau_f(x)=f(x)$, $x\in\msR$, 
with $f(\sfx)\in\msR[\sfx]$ a polynomial over $\msR$ of degree at most $q-1$.

The polynomials $f_k(\sfx)=\sfx^k\in\msR[\sfx]$, $k\in[q]$, have degree at most $q-1$. 
The associated polynomial functions $\tau_{f_k}$ are given by $\tau_{f_k}(x)=x^k=A_{xk}$ by \ceqref{pwrexp}.
The $\tau_{f_k}$ are therefore linearly independent in $\msV_\msR$, since the power matrix $A$ is non singular
by prop. \cref{prop:ainv}, 
and moreover they clearly span $\msV_\msR$. The $\tau_{f_k}$ constitute thus a basis of $\msV_\msR$. Given that
there are $q$ of them, $\dim\msV_\msR=q$. 

The polynomials $m_{s(y)}(\sfx)\in\msR[\sfx]$, $y\in\msR$, also have degree at most $q-1$. 
The associated polynomial functions $\tau_{m_{s(y)}}$ are given by $\tau_{m_{s(y)}}(x)=m_{s(y)}(x)=x^{s(y)}=C_{xy}$
by \ceqref{huxu1} and \ceqref{pwrexp}. The $\tau_{m_{s(y)}}$ are therefore linearly independent too, by the
non singularity of the basic power matrix $C$ established in prop. \cref{prop:cinv}, 
and since there are $q$ of them they also constitute a basis of $\msV_\msR$. 

From the above discussion, it follows that the polynomial function $\tau_f\in\msV_\msR$ associated with any
polynomial $f(\sfx)\in\msR[\sfx]$ of degree at most $q-1$ can be expressed as a linear combination
of the $\tau_{m_{s(y)}}$, $y\in\msR$, 
\begin{equation}
\label{}
\tau_f=\mycom{{}_\sss}{{}_{y\in\msR}}c_y\tau_{m_{s(y)}}.
\end{equation}
Uniqueness of this expansion is obvious, because the function $\tau_{m_{s(y)}}$ form a basis of the vector space $\msV_\msR$. 
We have only to verify that \ceqref{taufexp1} holds with the constants $c_y$ given by \ceqref{taufexp2}. Let $x\in\msR$. Then,
by \ceqref{pwrexp} and \ceqref{crxymp1}, we have   
{\allowdisplaybreaks 
\begin{align}
\label{}
\mycom{{}_\sss}{{}_{y\in\msR}}c_ym_{s(y)}(x)
&=\mycom{{}_\sss}{{}_{y\in\msR}}\mycom{{}_\sss}{{}_{k\in[q]}}\mycom{{}_\sss}{{}_{z\in\msR}}C^{-1}{}_{yz}A_{zk}a_k x^{s(y)}
\\
\nonumber
&=\mycom{{}_\sss}{{}_{y\in\msR}}\mycom{{}_\sss}{{}_{k\in[q]}}\mycom{{}_\sss}{{}_{z\in\msR}}C_{xy}C^{-1}{}_{yz}a_kz^k
\\
\nonumber
&=\mycom{{}_\sss}{{}_{k\in[q]}}\mycom{{}_\sss}{{}_{z\in\msR}}\delta_{xz}a_kz^k
\\
\nonumber
&=\mycom{{}_\sss}{{}_{k\in[q]}}a_kx^k=f(x),
\end{align}
}
\!\! as required. 
\end{proof}

\noindent 
This results holds in particular when $f(\sfx)=m_u(\sfx)$ for some cyclicity monoid exponent
$u\in\msZ$ allowing one to further upgrade  expression \ceqref{whgsts9/pol} of the phase
function $\sigma_{(H,\varrho)}$ of a calibrated hypergraph $(H,\varrho)\in G_C[l]$
by expressing the polynomials $m_{w(r)}(\sfx)$ appearing in it in terms of the
$m_{s(y)}(\sfx)$ according to \ceqref{taufexp1}. 

The inverse $C^{-1}$ of the basic power matrix $C$ so appears in relevant results   
of the theory. Albeit there exists no simple formula for it, 
$C^{-1}$ can be computed straightforwardly in a few simple cases. 

\begin{exa} \label{exa:huf3ctd}
The Galois field $\msR=\GR(3,1)=\bbF_3$. 
{\rm For this field, the matrix $C$ and its inverse $C^{-1}$ read as 
\begin{equation}
\label{huf3ctd1}
C=\left[
\begin{matrix}
0&1&1\\
1&1&1\\
1&1&2
\end{matrix}
\right],
\hspace{1cm}
C^{-1}=\left[
\begin{matrix}
2&1&0\\
1&1&2\\
0&2&1
\end{matrix}
\right].
\end{equation}
}
\end{exa}

\begin{exa} \label{exa:huf4ctd}
The Galois field $\msR=\GR(2,2)=\bbF_4$. 
{\rm 
For this field (cf. ex. \cref{exa:gr22}), the matrix $C$ and its inverse $C^{-1}$ have the form 
\begin{equation}
\label{huf4ctd1}
C=\left[
\begin{matrix}
0&1&1&1\\
1&1&1&1\\
1&1&\theta&1\\
1&1&1&1+\theta
\end{matrix}
\right],
\hspace{1cm}
C^{-1}=\left[
\begin{matrix}
1&1&0&0\\
1&1&\theta&1+\theta\\
0&\theta&\theta&0\\
0&1+\theta&0&1+\theta
\end{matrix}
\right].
\end{equation}
\vspace{-4mm}}
\end{exa}


\subsection{\textcolor{blue}{\sffamily Calibrated hypergraph states assembled with controlled Z gates}}\label{subsec:quditcz}

In this subsection, applying the theory elaborated in subsect. \cref{subsec:tech} to qudits of prime dimension, 
we present an interesting alternative definition of hypergraph states as states built by controlled $Z$ gates,
analogously to basic qubit case, extending the construction of ref \ccite{Giri:2024qtt}.
We show further that these states are special instances of calibrated hypergraph states,
providing further evidence to the wide range of applicability of the methods developed in the present work.

Suppose now that $r=d=1$ so that $\msR=\GR(p,1)=\bbF_p$ is the prime field of characteristic $p$
and further $\msP=\bbF_p$ too. Fix a reference element $x^*\in\msR$, the standard choice being
$x^*=p-1\in\bbF_p$. 

Let $l\in\bbN$, $l\geq 2$, and let $X\subseteq [l]$ be a hyperedge such that $|X|\geq 2$. 
A marking of $X$ consists in a choice a distinguished vertex $r_X\in X$. The pair $(X,r_X)$
is then a marked hyperedge. We let $H_M[l]$ denote the set of all such marked hyperedges. 

\begin{defi}
The controlled phase operator $CZ_{(X,r_X)}\in\End_{\bfsfH}(\scH_E[l])$ of the  marked hyperedge
$(X,r_X)\in H_M[l]$ is defined by the expression 
\begin{equation}
\label{qcz1}
CZ_{(X,r_X)}=\mycom{{}_\sss}{{}_{x\in E[l]}}F_l{}^+\ket{x}\,\omega^{\sigma_{(X,r_X)}(x)}\bra{x}F_l,
\end{equation}
where $\omega=\exp(2\pi i/p)$ and $\sigma_{(X,r_X)}:E[l]\rightarrow \msP$ is the phase function
\begin{equation}
\label{qcz2}
\sigma_{(X,r_X)}(x)=x_{r_X}\mycom{{}_\ppp}{{}_{r\in X\setminus \{r_X\}}}\delta_{x_r,x^*}.
\end{equation}
\end{defi}

\noindent
The analogy to the standard qubit controlled phase operator should be evident. 
The vertex $r_X\in X$ is the target vertex while the remaining vertices $r\in X\,\setminus\,\{r_X\}$
are the control vertices.  $CZ_{(X,r_X)}F_l{}^+\ket{x}=F_l{}^+\ket{x}$ unless all control qudits are
in the state $\ket{x^*}\in\scH_1$, in which case we have $CZ_{(X,r_X)}F_l{}^+\ket{x}=F_l{}^+\ket{x}\,\omega^{x_{r_X}}$, 
the phase factor $\omega^{x_{r_X}}$ being so  determined by the state $\ket{x_{r_X}}\in\scH_1$ of the target
qudit  \ccite{Giri:2024qtt}. 

The following proposition establishes that the phase function $\sigma_{X,r_X}$ has a polynomial expression. 

\begin{prop} \label{prop:sxrpol}
Let $(X,r_X)\in H_M[l]$ be a marked hyperedge. Then  
\begin{equation}
\label{qcz4}
\sigma_{(X,r_X)}(x)=x_{r_X}\mycom{{}_\ppp}{{}_{r\in X\setminus \{r_X\}}}p(x_r)
\end{equation}
for $x\in E[l]$, where $p(\sfx)\in\msR[\sfx]$ is the degree at most $q-1$ polynomial 
\begin{equation}
\label{qcz5}
p(\sfx)=\mycom{{}_\sss}{{}_{k\in[q]}}A^{-1}{}_{kx^*}\sfx^k,
\end{equation}
where $A$ is the invertible power matrix displayed in eq.  \ceqref{pwrexp}.
\end{prop}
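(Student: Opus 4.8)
The plan is to recognize that the claimed identity is essentially a reformulation of the defining property of the Lagrange-type interpolation polynomial $p(\sfx)$, together with the definition of the phase function $\sigma_{(X,r_X)}$ in \ceqref{qcz2}. First I would observe that, for a single variable $z\in\msR$, the indicator function $\delta_{z,x^*}$ equals $1$ when $z=x^*$ and $0$ otherwise; so it suffices to exhibit a polynomial in $\msR[\sfx]$ of degree at most $q-1$ whose value at $z\in\msR$ is exactly $\delta_{z,x^*}$, and then to show that $p(\sfx)$ as defined by \ceqref{qcz5} is precisely that polynomial.

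The key computational step is to evaluate $p(z)$ for arbitrary $z\in\msR$ using the explicit formula \ceqref{qcz5}. Expanding,
\begin{align}
\label{}
p(z)&=\mycom{{}_\sss}{{}_{k\in[q]}}A^{-1}{}_{kx^*}z^k
=\mycom{{}_\sss}{{}_{k\in[q]}}A_{zk}A^{-1}{}_{kx^*}
=\delta_{z,x^*},
\end{align}
where I used $z^k=A_{zk}$ from \ceqref{pwrexp} in the second equality and the fact that $AA^{-1}=1_{q\times q}$ (prop. \cref{prop:ainv}) in the third. This is the heart of the matter: it shows $p(z)=\delta_{z,x^*}$ for every $z\in\msR$, so that $p(\sfx)$ is the unique degree-$\leq q-1$ polynomial interpolating the indicator of $x^*$ (uniqueness being the same argument as in prop. \cref{prop:huxu}, though uniqueness is not strictly needed here).

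With this in hand, I would substitute into \ceqref{qcz2}: for $x\in E[l]$,
\begin{align}
\label{}
\sigma_{(X,r_X)}(x)&=x_{r_X}\mycom{{}_\ppp}{{}_{r\in X\setminus\{r_X\}}}\delta_{x_r,x^*}
=x_{r_X}\mycom{{}_\ppp}{{}_{r\in X\setminus\{r_X\}}}p(x_r),
\end{align}
which is exactly \ceqref{qcz4}. Finally I would note that $p(\sfx)$ has degree at most $q-1$ directly from its defining sum \ceqref{qcz5}, which ranges over $k\in[q]$, i.e. $k\leq q-1$.

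I do not expect any genuine obstacle here: the statement is a direct consequence of the invertibility of the power matrix $A$ established in prop. \cref{prop:ainv} and the bookkeeping definitions already in place. The only point requiring a modicum of care is making sure the evaluation $p(z)=\delta_{z,x^*}$ is written with the indices in the correct order ($A_{zk}A^{-1}{}_{kx^*}$ summed over $k$ gives the $(z,x^*)$ entry of the identity), and recalling that we are working over the prime field $\msR=\bbF_p$ with $q=p$ in this subsection, so $\omega=\exp(2\pi i/p)$ is consistent throughout. I would keep the proof to these few lines.
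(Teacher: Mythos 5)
Your proof is correct and follows essentially the same route as the paper: both reduce the claim to showing $p(z)=\delta_{z,x^*}$ for $z\in\msR$ and establish this by the one-line computation $\sum_{k\in[q]}A_{zk}A^{-1}{}_{kx^*}=\delta_{z,x^*}$ using \ceqref{pwrexp} and prop. \cref{prop:ainv}. No issues.
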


\begin{proof} From eqs. \ceqref{qcz2} and \ceqref{qcz4}, it is clear that we have only to show that 
\begin{equation}
\label{}
p(x)=\delta_{x,x^*}
\end{equation}
for $x\in\msR$. Indeed, recalling \ceqref{pwrexp}, we have 
\begin{align}
\label{}
p(x)&=\mycom{{}_\sss}{{}_{k\in[q]}}A^{-1}{}_{kx^*}x^k
\\
\nonumber
&=\mycom{{}_\sss}{{}_{k\in[q]}}A_{xk}A^{-1}{}_{kx^*}=\delta_{x,x^*},
\end{align}
as required. 
\end{proof}

A marked hypergraph $(H,r_\cdot)$ is a hypergraph
$H\in G[l]$ together with a marking $r_X\in X$ of each hyperedge $X\in H$.
We let $G_M[l]$ be the set of all such hypergraphs.

\begin{defi}
The hypergraph state associated with a marked hypergraph $(H,r_\cdot)$ is 
\begin{equation}
\label{qcz3}
\ket{(H,r_\cdot)}=\mycom{{}_\ppp}{{}_{X\in H}}CZ_{(X,r_X)}\ket{0_l}.
\end{equation}
\end{defi}

\noindent
We shall call $\ket{(H,r_\cdot)}$ a marked hypergraph state.
The definition above closely parallels the standard one of qubit graph states and extends that provided 
in \ccite{Giri:2024qtt}: $\ket{(H,r_\cdot)}$ is in fact built by the controlled
phase operators $CZ_{(X,r_X)}$. 

Combining \ceqref{qcz1} and \ceqref{qcz3}, one finds that $(H,r_\cdot)$ can be expressed as 
\begin{equation}
\label{qcz6}
\ket{(H,r_\cdot)}=D_{(H,r_\cdot)}\ket{0_l},
\end{equation}
where the operator $D_{(H,r_\cdot)}\in\End(\scH_E[l])$ is given by 
\begin{equation}
\label{qcz7}
D_{(H,r_\cdot)}=\mycom{{}_\sss}{{}_{x\in E[l]}}F_l{}^+\ket{x}\,\omega^{\sigma_{(H,r_\cdot)}(x)}\bra{x}F_l,
\end{equation}
the phase function $\sigma_{(H,r_\cdot)}:E[l]\rightarrow \msP$ being given by 
\begin{equation}
\label{qcz8}
\sigma_{(H,r_\cdot)}=\mycom{{}_\sss}{{}_{X\in H}}\sigma_{(X,r_X)}.
\end{equation}
Upon comparing relations \ceqref{qcz2},
\ceqref{qcz6}--\ceqref{qcz8} and \ceqref{whgsts10}, \ceqref{whgsts9}, \ceqref{whgsts14}, it is apparent
that the marked hypergraph state $(H,r_\cdot)$ has an overall formal structure completely analogous to that
of a calibrated hypergraph state. 
Indeed, every marked hypergraph state is an instance of calibrated hypergraph state, as the next proposition
establishes.

\begin{prop} \label{prop:hreqhrho}
Let $(H,r_\cdot)\in G_M[l]$ a marked hypergraph. Then, there exists a calibration
$\varrho\in C(H)$ such that $\ket{(H,r_\cdot)}=\ket{(H,\varrho)}$. 
\end{prop}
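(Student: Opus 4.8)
The plan is to exhibit an explicit calibration $\varrho \in C(H)$ whose phase function $\sigma_{(H,\varrho)}$ coincides with $\sigma_{(H,r_\cdot)}$ given in \ceqref{qcz8}, \ceqref{qcz2}, and then invoke \ceqref{whgsts10}, \ceqref{whgsts14} together with \ceqref{qcz6}, \ceqref{qcz7} to conclude that $\ket{(H,r_\cdot)} = \ket{(H,\varrho)}$. Since $\sigma_{(H,r_\cdot)}$ is a sum over $X \in H$ of the hyperedge contributions $\sigma_{(X,r_X)}$, and $\sigma_{(H,\varrho)}$ is likewise a sum over $X \in H$, it suffices to match the contribution of each marked hyperedge $(X,r_X)$ separately by a suitable calibration $\varrho_X$ of $X$.

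First I would use prop. \cref{prop:sxrpol}, which already tells us that
\[
\sigma_{(X,r_X)}(x) = x_{r_X} \mycom{{}_\ppp}{{}_{r \in X \setminus \{r_X\}}} p(x_r),
\]
where $p(\sfx) \in \msR[\sfx]$ is the explicit degree-$(q-1)$ polynomial of \ceqref{qcz5}. Since $\msR = \bbF_p$ is a field here, prop. \cref{prop:taufexp} (or more directly prop. \cref{prop:huxu}) lets me expand $p(\sfx)$ — and also the monomial $\sfx$ itself — as an $\msR$-linear combination of the basic power polynomials $m_{s(y)}(\sfx)$, $y \in \msR$. Concretely, write $p(x) = \sum_{y \in \msR} c_y\, x^{s(y)}$ and $x = \sum_{y \in \msR} d_y\, x^{s(y)}$ for constants $c_y, d_y \in \msR$ (in fact $x_{r_X}$ is the value at exponent $s(x_{r_X})$ for a primitive-type $x_{r_X}$, but the $m_{s(y)}$-expansion handles all cases uniformly). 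Substituting these expansions and distributing the product over $r \in X$, the hyperedge contribution $\sigma_{(X,r_X)}(x)$ becomes a sum over exponent functions $w \in \msA^X = \msZ^X$ of the form
\[
\sigma_{(X,r_X)}(x) = \mycom{{}_\sss}{{}_{w \in \msZ^X}} \varrho_X(w)\, \tr\!\left(\mycom{{}_\ppp}{{}_{r \in X}} x_r{}^{w(r)}\right),
\]
because $\tr$ is $\msP$-linear and we are over the prime field so $\tr = \id$. The coefficient $\varrho_X(w)$ is the product of $d_{y_{r_X}}$ (for the target vertex, where $w(r_X) = s(y_{r_X})$) and the $c_{y_r}$ (for each control vertex, where $w(r) = s(y_r)$), summed appropriately; setting $\varrho_X(w) = 0$ for any $w$ not of this shape. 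This defines $\varrho_X \in \msM^{\msZ^X} = C(X)$, hence a calibration $\varrho \in C(H)$ of $H$.

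The remaining step is bookkeeping: verify that with this $\varrho$ one has $\sigma_{(H,\varrho)}(x) = \sigma_{(H,r_\cdot)}(x)$ for all $x \in E[l]$, which follows by summing the per-hyperedge identity over $X \in H$ and comparing with \ceqref{whgsts9}; then $D_{(H,\varrho)} = D_{(H,r_\cdot)}$ by \ceqref{whgsts10} and \ceqref{qcz7}, and finally $\ket{(H,\varrho)} = D_{(H,\varrho)}\ket{0_l} = D_{(H,r_\cdot)}\ket{0_l} = \ket{(H,r_\cdot)}$ by \ceqref{whgsts14} and \ceqref{qcz6}. The main obstacle I anticipate is purely combinatorial rather than conceptual: carefully tracking how the product $x_{r_X} \prod_{r} p(x_r)$ expands into a sum indexed by $w \in \msZ^X$ with the correct coefficient, and in particular being precise about which exponent tuples $w$ receive nonzero weight and handling the special vertex $r_X$ (which carries a different polynomial, namely $\sfx$, than the control vertices). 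One must also check the mild edge cases — e.g. that the reference element $x^* = p-1 \neq 0$ so the polynomial $p(\sfx)$ genuinely detects $\ket{x^*}$, and that $|X| \geq 2$ so there is at least one control vertex — but these are built into the definitions of $H_M[l]$ and $G_M[l]$ and cause no difficulty.
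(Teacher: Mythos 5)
Your proposal is correct and follows essentially the same route as the paper: expand $p(\sfx)$ in the basic powers $x^{s(y)}$ via prop. \cref{prop:taufexp} (the paper computes the coefficients explicitly as $C^{-1}{}_{yx^*}$), distribute the product over the control vertices, and read off the calibration $\varrho_X$. The only minor difference is at the target vertex: you expand the monomial $\sfx$ in the $m_{s(y)}$ basis as well, whereas the paper assigns the single exponent $s^*=\sum_{y\in\msR}s(y)$, which satisfies $x^{s^*}=x$ and avoids one extra summation.
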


\begin{proof}
The proof relies on the following lemma. Consider again the polynomial $p(\sfx)$ displayed in \ceqref{qcz5}. 

\begin{lemma} The identity 
\begin{equation}
\label{pxexp1}
p(x)=\mycom{{}_\sss}{{}_{y\in\msR}}C^{-1}{}_{yx^*}m_{s(y)}(x) 
\end{equation}
holds for $x\in\msR$, 
where $C$ is the invertible basic power matrix shown in  eq. \ceqref{crxymp1}.
\end{lemma}

\begin{proof} Prop. \cref{prop:taufexp} guarantees that
\begin{equation}
\label{pxexp1/1}
p(x)=\mycom{{}_\sss}{{}_{y\in\msR}}c_ym_{s(y)}(x) 
\end{equation}
for unique constants $c_y\in\msR$. We have only to compute these latter. 
From combining \ceqref{taufexp2} and \ceqref{qcz5}, we obtain 
{\allowdisplaybreaks
\begin{align}
\label{}
c_y&=\mycom{{}_\sss}{{}_{k\in[q]}}\mycom{{}_\sss}{{}_{z\in\msR}}C^{-1}{}_{yz}A_{zk}A^{-1}{}_{kx^*}
\\
\nonumber
&=\mycom{{}_\sss}{{}_{z\in\msR}}C^{-1}{}_{yz}\delta_{z,x^*}=C^{-1}{}_{yx^*},
\end{align}
}
\!\!leading to the stated result. 
\end{proof}

\noindent
We can now complete the proof of the proposition. Combining \ceqref{qcz4}, \ceqref{qcz8} and \ceqref{pxexp1},
we obtain 
{\allowdisplaybreaks
\begin{align}
\label{}
\sigma_{(H,r_\cdot)}(x)&=\mycom{{}_\sss}{{}_{X\in H}}
x_{r_X}\mycom{{}_\ppp}{{}_{r\in X\setminus \{r_X\}}}\Big(\mycom{{}_\sss}{{}_{y\in\msR}}C^{-1}{}_{yx^*}m_{s(y)}(x_r)\Big)
\\
\nonumber
&=\mycom{{}_\sss}{{}_{X\in H}}\mycom{{}_\sss}{{}_{y_r\in\msR,\,r\in X\setminus\{r_X\}}}x_{r_X}
\mycom{{}_\ppp}{{}_{r\in X\setminus\{r_X\}}}C^{-1}{}_{y_rx^*}x_r{}^{s(y_r)}
\end{align}
}
\!\!for $x\in E[l]$. Let $\varrho\in C(H)$ be the calibration defined as follows:
\begin{equation}
\label{}
\varrho_X(w)=\left\{
\begin{array}{ll}
\prod_{r\in X\setminus\{r_X\}}C^{-1}{}_{y_rx^*}&\text{if $w(r)=s(y_r)$ for some $y_r\in\msR$ with}\\
\hphantom{\prod_{r\in X\setminus\{r_X\}}C^{-1}{}_{y_rx^*}}&\text{\hspace{1.8cm} $r\in X\setminus\{r_X\}$ and $w(r_X)=s^*$},\\
0&\text{else} 
\end{array}\right.
\end{equation}
for $X\in H$ and $w\in\msA^X$, where $s^*\in\msZ$ is the special element
\begin{equation}
\label{}
s^*{}=\mycom{{}_\sss}{{}_{y\in\msR}}s(y)
\end{equation}
of the cyclicity monoid with  the property that
$x^{s^*}=x$ for all $x\in\msR$. We can express $\sigma_{H,r_\cdot}(x)$ through $\varrho$ finding 
\begin{equation}
\label{}
\sigma_{(H,r_\cdot)}(x)
=\mycom{{}_\sss}{{}_{X\in H}}\mycom{{}_\sss}{{}_{w\in\msA^X}}\varrho_X(w)\mycom{{}_\ppp}{{}_{r\in X}}x_r{}^{w(r)}
=\sigma_{(H,\varrho)}
\end{equation}
by virtue of \ceqref{whgsts9}. Inserting this relation in \ceqref{qcz7}, we find that 
\begin{equation}
\label{}
D_{(H,r_\cdot)}=D_{(H,\varrho)},
\end{equation}
owing to \ceqref{whgsts10}. The claimed identity 
now follows readily from \ceqref{whgsts14} \ceqref{qcz6}
\end{proof}

\begin{exa} Three qutrit marked hypergraph states.
{\rm We consider again the three qutrit setting studied in ex. \cref{exa:3qtchg}.
Consider once more the hyperedges  
$X^0=\{0,1\}$, $X^1=\{1,2\}$, $X^2=\{0,2\}$ and $X^3=\{0,1,2\}$ of $[3]$
and endow them with the markings $r^0=1$, $r^1=2$, $r^2=0$, $r^3=2$, respectively. 
Build then the marked hypergraphs $(H^i,r^i{}_\cdot)\in G_M[3]$, $i=\sfa,\sfb,\sfc,\sfd,\sfe$
specified by 
{\allowdisplaybreaks
\begin{align}
\label{3qtmhgp3}
&H^\sfa=\{X^3\},&&r^\sfa{}_\cdot=\{r^3\},
\\
\nonumber
&H^\sfb=\{X^0,X^1\},&&r^\sfb{}_\cdot=\{r^0,r^1\},
\\
\nonumber 
&H^\sfc=\{X^0,X^1,X^2\},&&r^\sfc{}_\cdot=\{r^0,r^1,r^2\},
\\
\nonumber
&H^\sfd=\{X^0,X^1,X^3\},&&r^\sfd{}_\cdot=\{r^0,r^1,r^3\},
\\
\nonumber
&H^\sfe=\{X^0,X^1,X^2,X^3\},&&r^\sfe{}_\cdot=\{r^0,r^1,r^2,r^3\}. 
\end{align}
}
\!\!With the $(H^i,r^i{}_\cdot)$ there are associated marked hypergraph states $\ket{(H^i,r^i{}_\cdot)}$
by virtue of the prescription \ceqref{qcz3}. We notice now that
the marked hypergraphs $(H^i,r^i{}_\cdot)$ have the same underlying hypergraphs $H^i$ as
the calibrated hypergraphs $(H^i,\varrho^i)\in G_C[3]$ displayed in \ceqref{3qtchgp3}.
Indeed, the $\ket{(H^i,\varrho^i)}$ are precisely calibrated hypergraph states 
corresponding to the marked hypergraph states $\ket{(H^i,r^i{}_\cdot)}$
according to prop. \cref{prop:hreqhrho}
\begin{equation}
\label{}
\ket{(H^i,r^i{}_\cdot)}=\ket{(H^i,\varrho^i)}. 
\end{equation}
Underlying this identity of these hypergraph states is the identity
$\sigma_{(H^i,r^i{}_\cdot)}=\sigma_{(H^i,\varrho^i)}$ of the corresponding phase functions.
The polynomial expression of the $\sigma_{(H^i,r^i{}_\cdot)}$, which can be easily obtained from
combining \ceqref{qcz4}, \ceqref{qcz5} and \ceqref{qcz8},
coincides in this way to that of the $\sigma_{(H^i,\varrho^i)}$ based on eq. \ceqref{whgsts9/pol}. By
systematically employing prop. \cref{prop:sxrpol}, we find the following expressions
{\allowdisplaybreaks
\begin{align}
\label{}
\sigma_{(H^\sfa,r^\sfa{}_\cdot)}(x_0,x_1,x_2)&=\sigma_{(H^\sfa,\varrho^\sfa)}(x_0,x_1,x_2)=(x_0+2x_0{}^2)(x_1+2x_1{}^2)x_2,
\\
\nonumber
\sigma_{(H^\sfb,r^\sfb{}_\cdot)}(x_0,x_1,x_2)&=\sigma_{(H^\sfb,\varrho^\sfb)}(x_0,x_1,x_2)=(x_0+2x_0{}^2)x_1+(x_1+2x_1{}^2)x_2,
\\
\nonumber
\sigma_{(H^\sfc,r^\sfc{}_\cdot)}(x_0,x_1,x_2)&=\sigma_{(H^\sfc,\varrho^\sfc)}(x_0,x_1,x_2)
\\
\nonumber
&\hspace{2.6cm}=(x_0+2x_0{}^2)x_1+(x_1+2x_1{}^2)x_2+(x_2+2x_2{}^2)x_0,
\\
\nonumber
\sigma_{(H^\sfd,r^\sfd{}_\cdot)}(x_0,x_1,x_2)&=\sigma_{(H^\sfd,\varrho^\sfd)}(x_0,x_1,x_2)
\\
\nonumber
&\hspace{.5cm}=(x_0+2x_0{}^2)x_1+(x_1+2x_1{}^2)x_2+(x_0+2x_0{}^2)(x_1+2x_1{}^2)x_2,
\\
\nonumber
\sigma_{(H^\sfe,r^\sfe{}_\cdot)}(x_0,x_1,x_2)&=\sigma_{(H^\sfe,\varrho^\sfe)}(x_0,x_1,x_2)
\\
\nonumber
&\hspace{1cm}=(x_0+2x_0{}^2)x_1+(x_1+2x_1{}^2)x_2+(x_2+2x_2{}^2)x_0
\\
\nonumber
&\hspace{6.3cm}+(x_0+2x_0{}^2)(x_1+2x_1{}^2)x_2. 
\end{align}
}
\!\!This calculation shows also that the hypergraph states $\ket{(H^i,r^i{}_\cdot)}$, $|(H^i,\varrho^i)\rangle$
are not weighted, since their phase functions  $\sigma_{(H^i,r^i{}_\cdot)}$, $\sigma_{(H^i,\varrho^i)}$ are given  by
polynomials of the qutrit variables $x_0,x_1,x_2$ containing second powers of these latter which cannot appear
in any weighted hypergraph state. 
}
\end{exa}


\vfill\eject

\vfill\eject

\appendix

\renewcommand{\sectionmark}[1]{\markright{\thesection\ ~~#1}}

\section{\textcolor{blue}{\sffamily Basic Galois ring theory}}\label{app:galois}

In this appendix section, we review basic notions of theory of Galois rings used in the main text.
Standard references for these topics are \ccite{Wan:2011gfr,Bini:2002gfr}.


\subsection{\textcolor{blue}{\sffamily Galois fields and rings}}\label{subsec:galois}

In this appendix, we review the main definitions and results of the theory of Galois fields and rings.
We assume that the reader has some familiarity with the basic notions of field and ring theory. 

Galois field and ring theory employs extensively polynomial rings and quotients of polynomial rings by polynomial 
ideals. We provide some background on this topic first because of its relevance. 
Let $\msK$ be a commutative ring with unity. 
A polynomial $h(\sfx)$ over $\msK$ is an expression of the form \hphantom{xxxxxxxx}
\begin{equation}
\label{poly}
h(\sfx)=\mycom{{}_\sss}{{}_{0\leq i\leq d}}a_i\sfx^i
\end{equation}
with $a_i\in\msK$ for $0\leq i\leq d$ and $a_d\neq0$ when $d>0$, where $\sfx$ is a formal indeterminate
and the $\sfx^i$ are formal powers of $\sfx$. The $a_i$ and $d$ are called respectively the coefficients and the degree
of $h(\sfx)$.  $h(\sfx)$ is said to be monic if $a_d=1$. 

The set of all polynomials over $\msK$ with the obvious operations of addition and multiplication and
additive and multiplicative unities constitutes a ring, called the polynomial ring $\msK[\sfx]$ over $\msK$.
$\msK$ can be identified with the subring of $\msK[\sfx]$ of degree $0$ polynomial. 

If $h(\sfx)\in\msK[\sfx]$ is a polynomial over $\msK$, $h(\sfx)\msK[\sfx]\subset\msK[\sfx]$ is an ideal of $\msK[\sfx]$ and the
quotient ring $\msK[\sfx]/h(\sfx)\msK[\sfx]$ is then defined. $h(\sfx)\msK[\sfx]$ consists of the polynomials over $\msK$
divisible by $h(\sfx)$. When $h(\sfx)$ is monic of degree $d>0$, the polynomial division algorithm can be used for the division
by $h(\sfx)$ and a description of $\msK[\sfx]/h(\sfx)\msK[\sfx]$ is possible: every element
$a\in\msK[\sfx]/h(\sfx)\msK[\sfx]$ has a unique expression of the form 
\begin{equation}
\label{alp0}
a=\mycom{{}_\sss}{{}_{0\leq i\leq d-1}}a_i\sfx^i+h(\sfx)\msK[\sfx]
\end{equation}
with $a_i\in\msK$ for $0\leq i\leq d-1$. \ceqref{alp0} has an interesting interpretation. 
As $\msK$ can be identified with its image in $\msK[\sfx]/h(\sfx)\msK[\sfx]$ under the quotient map,
every polynomial over $\msK$ can be regarded as one over $\msK[\sfx]/h(\sfx)\msK[\sfx]$ with the same coefficients.
Let $\xi=\sfx+h(\sfx)\msK[\sfx]\in\msK[\sfx]/h(\sfx)\msK[\sfx]$. Then, $h(\xi)=0$, that is $\xi$ is a root
of $h(\sfx)$ in $\msK[\sfx]/h(\sfx)\msK[\sfx]$. Further, \ceqref{alp0} can be written uniquely as
\begin{equation}
\label{alp}
a=\mycom{{}_\sss}{{}_{0\leq i\leq d-1}}a_i\xi^i.
\end{equation}
$\msK[\sfx]/h(\sfx)\msK[\sfx]$ is called the residue class ring of $\msK[\sfx]$ modulo $h(\sfx)$. 
By what said above, $\msK[\sfx]/h(\sfx)\msK[\sfx]$ can be regarded as the extension ring
$\msK[\xi]$ of $\msK$ by the root $\xi$.



We are now ready to survey the theory of Galois fields and rings. We consider first Galois fields and then shift
to Galois rings. At this point, it is appropriate to recall that any finite ring $\msK$ is characterized by
two integer parameters: its characteristic, the smallest positive integer $n$ such that $n1=0$,
and its cardinality, the number of its elements.


\begin{defi} \label{def:gr1}
A Galois field is a finite field $\msF$. 
\end{defi}

\noindent
The following theorem provides a classification of Galois fields up to isomorphism.

\begin{theor} \label{theo:gr1}
Let $\msF$ be a Galois field. Then, there exist positive integers $p,d$ with $p\geq 2$ prime
such that the characteristic and cardinality of $\msF$ equal respectively $p$ and $p^d$.
Conversely, for any two positive integers $p,d$ with $p\geq 2$ prime there exists a Galois field
of characteristic $p$ and cardinality $p^d$.
Two Galois fields with the same characteristic and cardinality are isomorphic. 
\end{theor}

\noindent
One usually writes $\msF=\GF(p^d)$ or $\msF=\bbF_{p^d}$ to emphasize the isomorphy class which $\msF$ belongs to.
$d$ is called the degree of $\msF$.

\begin{theor} \label{theo:gr2}
A Galois field $\msF=\bbF_{p^d}$ contains precisely one Galois field $\msF'=\bbF_{p^{d'}}$ as a subfield
for each degree $d'$ dividing $d$. 
\end{theor}

\noindent
In particular, $\msF$ always contains a minimal subfield, the prime subfield
$\msP_\msF=\bbF_p$. $\msP_\msF$ is the subfield generated by the multiplicative unity $1$, so that 
$\msP_\msF\simeq\bbZ_p$. The field $\msF$ itself is an extension field of $\msP_\msF$.


\begin{theor} \label{theo:gr3}
Let $\msF=\bbF_{p^d}$ be a Galois field and $\msP_\msF=\bbF_p$ be its prime field. Then,
$\msF\simeq\msP_\msF[\sfx]/h(\sfx)\msP_\msF[\sfx]$ for any irreducible monic 
polynomial $h(\sfx)\in\msP_\msF[\sfx]$ of degree $d$. 
\end{theor}

\noindent
As discussed above, every element $a$ of $\msF$ can then be expressed uniquely either as the residue class
of a polynomial over $\msP_\msF$ of degree at most $d-1$ in the quotient ring $\msP_\msF[\sfx]/h(\sfx)\msP_\msF[\sfx]$,
as in \ceqref{alp0}, or as the evaluation at the root $\xi=\sfx+h(\sfx)\msP_\msF[\sfx]$ of $h(\sfx)$ of a polynomial over
$\msP_\msF$ of degree at most $d-1$, as in \ceqref{alp}. $\msF$, so, is an extension field $\msP_\msF[\xi]$ of
$\msP_\msF$ by $\xi$. When $d=1$, we recover the field $\msP_\msF$.

A Galois field $\msF=\bbF_{p^d}$ is a vector space over its prime field $\msP_\msF=\bbF_p$ of dimension $d$. When $\msF$
is regarded as the extension field $\msP_\msF[\xi]$ of $\msP_\msF$ by the root $\xi$ of an irreducible monic polynomial
$h(\sfx)\in\msP_\msF[\sfx]$ of degree $d$ as above, the powers $\xi^i$, $0\leq i\leq d-1$,
constitute a basis of $\msF$ over $\msP_\msF$, as shown by \ceqref{alp}.

An element $\theta\in\msF$ of a Galois field $\msF=\bbF_{p^d}$ is said to be primitive if it holds that 
$\msF=\{0\}\cup\{\theta^i\hfpt|\hfpt 0\leq i\leq p^d-2\}$. An irreducible polynomial $h(\sfx)\in\msP_\msF[\sfx]$ of
degree $d$ is called primitive if the root $\xi=\sfx+h(\sfx)\msP_\msF[\sfx]\in\msF$ of $h(\sfx)$ is primitive.

The group of units $\msF^\times$ of a Galois field $\msF=\bbF_{p^d}$ is generated by any primitive element
$\theta\in\msF$. Therefore, $\msF^\times=\msC_{p^d-1}$, the cyclic group of order $p^d-1$. 


The theory of Galois rings is a natural generalization of that of Galois fields.

\begin{defi} \label{def:gr2}
A Galois ring is a finite commutative ring $\msR$ with unity whose zero divisors
constitute a principal ideal of the form $p\msR$ for some prime number $p\geq 2$. 
\end{defi}

\noindent
The following theorem classifies Galois rings up to isomorphism.

\begin{theor} \label{theo:gr4}
Let $\msR$ be a Galois ring. Then, there are positive integers $p,r,d$ with $p\geq2$ prime
such that the characteristic and cardinality of $\msR$ equal respectively $p^r$ and $p^{rd}$.
Conversely, for any three positive integers $p,r,d$ with $p\geq 2$ prime there exists a Galois ring
of characteristic $p^r$ and cardinality $p^{rd}$.
Two Galois rings with the same characteristic and cardinality are isomorphic. 
\end{theor} 

\noindent
One customarily writes $\msR=\GR(p^r,d)$ to emphasize the isomorphy class which $\msR$ belongs to.
$d$ is called the degree of $\msR$. 

\begin{theor} \label{theo:gr5}
Let $\msR=\GR(p^r,d)$ be a Galois ring. Then, the principal ideals $p^i\msR$ with $0\leq i\leq r$ are linearly ordered by
inclusion as 
\begin{equation}
\label{chainring}
\{0\}=p^r\msR\subset p^{r-1}\msR\subset\ldots\subset p\msR\subset p^0\msR=\msR.
\end{equation}
For each $i$, $|p^i\msR|=p^{(r-i)d}$. Further, $p\msR$ is a maximal ideal, is the only maximal ideal of $\msR$
and consists precisely of the zero divisors of $\msR$. 
\end{theor} 

\noindent
The above theorem has several implications. First, by \ceqref{chainring}, 
$\msR$ is chain ring of length $r$. Second, $\msR$ is local, since $p\msR$ is its only maximal ideal.
As a consequence, the only idempotents of $\msR$ are $0$, $1$. Third, the zero divisors of $\msR$ are
all nilpotent since they form the ideal $p\msR$ and $(p\msR)^r\subset p^r\msR=\{0\}$.
Fourth, since the ideal $p\msR$ is maximal, the quotient ring $\msF_\msR=\msR/p\msR$ is a field
with $|\msF_\msR|=p^d$ so that $\msF_\msR=\bbF_{p^d}$. $\msF_\msR$ is called the residue field of $\msR$.
When $r=1$, $\msR=\msF_\msR$ and $\msR$ reduces to a Galois field.

\begin{theor} \label{theo:gr6}
A Galois ring $\msR=\GR(p^r,d)$ contains as a subring precisely one Galois ring $\msR'=\GR(p^r,d')$ 
for each degree $d'$ dividing $d$. 
\end{theor}

\noindent
In particular, $\msR$ always contains a minimal subring, the prime subring
$\msP_\msR=\GR(p^r,1)$. $\msP_\msR$ is the subring generated by the multiplicative unity $1$,
so that $\msP_\msR\simeq\bbZ_{p^r}$. The ring $\msR$ itself is an extension ring of $\msP_\msR$.

Let $\msR=\GR(p^r,d)$ be a Galois ring. Then, modulo $p$ reduction induces a canonical projection ring morphism
$\varpi:\msP_\msR\rightarrow\msP_{\msF_\msR}$. $\varpi$ extends naturally to a polynomial projection
ring morphism $\varpi:\msP_\msR[\sfx]\rightarrow\msP_{\msF_\msR}[\sfx]$. 

Let $g(\sfx)\in\msP_\msR[\sfx]$ be a monic polynomial of degree $d\geq 1$. Then, $\varpi(g(\sfx))\in\msP_{\msF_\msR}[\sfx]$. 
If $\varpi(g(\sfx))$ is irreducible, then $g(\sfx)$ is said to be basic irreducible monic. If $\varpi(g(\sfx))$ is primitive,
then $g(\sfx)$ is said to be basic primitive monic. Basic primitive monic polynomials 
are also basic irreducible monic.

\begin{theor} \label{theo:gr}
Let $\msR=\GR(p^r,d)$ be a Galois ring and $\msP_\msR=\GR(p^r,1)$ be its prime ring.
Then, one has $\msR\simeq\msP_\msR[\sfx]/h(\sfx)\msP_\msR[\sfx]$ for every basic irreducible monic 
polynomial $h(\sfx)\in\msP_\msR[\sfx]$ of degree $d$. 
\end{theor}

\noindent
As for a Galois field, each element $a$ of $\msR$ can then be expressed uniquely either as the residue class
of a polynomial over $\msP_\msR$ of degree at most $d-1$ in the quotient ring $\msP_\msR[\sfx]/h(\sfx)\msP_\msR[\sfx]$,
as in \ceqref{alp0}, or as the evaluation at the root $\xi=\sfx+h(\sfx)\msP_\msR[\sfx]$ of $h(\sfx)$ of a polynomial over
$\msP_\msR$ of degree at most $d-1$, as in \ceqref{alp}. $\msR$, so, is an extension ring $\msP_\msR[\xi]$ of
$\msP_\msR$ by $\xi$. When $d=1$, we recover the ring $\msP_\msR$.

A Galois ring $\msR=\GR(p^r,d)$ is a free module over its prime ring $\msP_\msR=\GR(p^r,1)$.
The ring $\msP_\msR$ enjoys the so called invariant dimension property. Consequently, any free basis of $\msR$
over $\msP_\msR$ will consist of the same number elements, which is $d$ in the present case. When $\msR$
is viewed as the extension ring $\msP_\msR[\xi]$ of $\msP_\msR$ by the root $\xi$ of an irreducible monic
polynomial $h(\sfx)\in\msP_\msR[\sfx]$ of degree $d$ as in the previous paragraph, 
the powers $\xi^i$, $0\leq i\leq d-1$, constitute one such basis,
as shown by \ceqref{alp}. 

If $\msR=\GR(p^r,d)$ is Galois ring with prime ring $\msP_\msR=\GR(p^r,1)$, there exists a nonzero
element $\theta\in\msR$ of order $p^d-1$, which is a root of a unique  basic primitive monic
polynomial $h(\sfx)\in\msP_\msR[\sfx]$ of degree $d$
and dividing $\sfx^{p^d-1}-1$ in $\msP_\msR[\sfx]$ with the following property. Every element $a\in\msR$ has a unique
$p$-adic representation %
\begin{equation} 
\label{pad}
a=\mycom{{}_\sss}{{}_{0\leq \alpha\leq r-1}}a_\alpha p^\alpha
\end{equation}
\noindent with $a_\alpha\in\msT_\msR$ for $0\leq \alpha \leq r-1$, where $\msT_\msR$ is the Teichm\"uller representative set
\begin{equation}
\label{tei}
\msT_\msR=\{0\}\cup\{\theta^i\hfpt|\hfpt 0\leq i\leq p^d-2\}.
\end{equation}
$a$ is a unit (respectively a zero divisor) of $\msR$ exactly when $a_0\neq0$ (respectively $a_0=0$). 
Further, $\varpi(\theta)$ is a primitive element of $\msF_\msR$ and thus $\varpi(\msT_\msR)=\msF_R$.

The groups of units $\msR^\times$ of a Galois ring $\msR$ can be factorized in the direct product $\msC_\msR\times\msD_\msR$,
where $\msC_\msR=\msC_{p^d-1}$ is the cyclic group of order $p^d-1$ and $\msD_\msR=\{1+u\hfpt|\hfpt u\in p\msR\}$ is the
residue class of $1$ in the residue field $\msF_\msR$ of $\msR$. Thus, $\msR^\times$ has cardinality $(p^d-1)p^{(r-1)d}$.

\subsection{\textcolor{blue}{\sffamily Frobenius automorphism and trace map}}\label{subsec:galoistrace}

In this appendix, we precisely define and study the main properties of the ring trace map
of a Galois ring. 

Let $\msR=\GR(p^r,d)$ be a Galois ring and $\msP_\msR=\GR(p^r,1)$ be the prime subring of $\msR$.
Let $\Aut(\msR/\msP_\msR)$ be the group of automorphisms of the ring $\msR$ which fix $\msP_\msR$.
It turns out that the subring of $\msR$ fixed by all elements of $\Aut(\msR/\msP_\msR)$ is precisely $\msP_\msR$
and not a proper extension of it contained in $\msR$. Then, $\msR$ is said to be a Galois extension of $\msP_\msR$,
and $\Aut(\msR/\msP_\msR)$ is called the Galois group of the extension and is commonly
denoted as $\Gal(\msR/\msP_\msR)$.

\begin{defi} \label{def:rt1}
The ring Frobenius map $\phi_\msR:\msR\rightarrow\msR$ of $\msR$ is defined by  
\begin{equation}
\label{fro}
\phi_\msR(a)=\mycom{{}_\sss}{{}_{0\leq \alpha \leq r-1}}a_\alpha{}^pp^\alpha
\end{equation}
for $a\in\msR$, where $a$ is expressed through the $p$-adic representation \ceqref{pad}.
\end{defi}

\noindent
When $r=1$ and $\msR$ reduces to its residue field $\msF_\msR=\bbF_{p^d}$, $\phi_\msR$ reproduces the usual
field Frobenius map of $\msF_\msR$ given by $\phi_{\msF_\msR}(a)=a^p$ for $a\in\msF_\msR$.  
The following theorem holds. 

\begin{theor} \label{prop:rt1}
$\Gal(\msR/\msP_\msR)\simeq\msC_d$ is a cyclic group of order $d$. Further, 
$\phi_\msR\in\Gal(\msR/\msP_\msR)$, has order $d$ and generates $\Gal(\msR/\msP_\msR)$. 
\end{theor}

The ring trace map of $\msR$ is defined in terms of $\phi_\msR$. 

\begin{defi}  \label{def:rt2}
The ring trace map $\tr_\msR:\msR\rightarrow\msR$ is given by 
\begin{equation}
\label{gentra}
\tr_\msR(a)=\mycom{{}_\sss}{{}_{0\leq i\leq d-1}}\phi_\msR{}^i(a)
\end{equation}
for $a\in\msR$
\end{defi}

\noindent
Again, when $r=1$ and $\msR$ reduces to its residue field $\msF_\msR$,
the ring trace map $\tr_\msR$ reduces to the field trace map $\tr_{\msF_\msR}:\msF_\msR\rightarrow\msP_{\msF_\msR}$
given by $\tr_\msR(a)=\sum_{0\leq i\leq d-1}a^{p^i}$. 
The main properties of the ring trace map $\tr_\msR$ are summarized by the following theorem. 

\begin{theor} \label{prop:rt2}
The ring trace map $tr_\msR$ maps $\msR$ onto $\msP_\msR\subset\msR$. Regarding $\msR$ and $\msP_\msR$ as $\msP_\msR$ modules,
$\tr_\msR$ is a module morphism. Further, $\tr_\msR(\phi_\msR(a))=\phi_\msR(\tr_\msR(a))=\tr_\msR(a)$ for all $a\in\msR$.
\end{theor}

The following technical proposition describes in a rather detailed manner the values the ring trace map can take
\ccite{Horadam:2005xqf,Horadam:2006xqf}.

\begin{prop} \label{prop:rt3}
For $0\leq j\leq r$, let $\msS_\msR{}^{(j)}=p^j\msR\setminus p^{j+1}\msR\subset\msR$ where 
$p^{r+1}\msR=\emptyset$ by convention and $\msB_\msR^{(j)}=\{p^jn1\hfpt|\hfpt 0\leq n\leq p^{r-j}-1\}\subset\msP_\msR$.
Then, for  every $0\leq j\leq r$ and $a\in\msS_\msR{}^{(j)}$ the mapping $x\in\msR\mapsto\tr_\msR(ax)\in\msP_\msR$
is onto $\msB_\msR^{(j)}$ and each value in $\msB_\msR^{(j)}$ is taken exactly $p^{r(d-1)+j}$ times.
\end{prop}

\noindent In particular, if $\tr_\msR(ax)=0$ for all $x\in\msR$, then $a=0$ since the only set $\msB_\msR^{(j)}$
containing only $0$ is $\msB_\msR^{(r)}$ and so $a\in\msS_\msR{}^{(r)}=\{0\}$.

\vfil\eject

\vspace{.5cm}

\noindent
\markright{\textcolor{blue}{\sffamily Acknowledgements}}

\noindent
\textcolor{blue}{\sffamily Acknowledgements.} 
The author acknowledges financial support from INFN Research Agency
under the provisions of the agreement between Alma Mater Studiorum University of Bologna and INFN. 
He is grateful to the organizers of the Conference PAFT24 - Quantum Gravity and Information, where part
of this work was done, for hospitality and support. Most of the algebraic calculations presented in this paper
have been carried out employing the WolframAlpha computational platform.

\vspace{.5cm}

\noindent
\textcolor{blue}{\sffamily Conflict of interest statement.}
The author declares that the results of the current study do not involve any conflict of interest.

\vspace{.5cm}

\noindent
\textcolor{blue}{\sffamily Data availability statement.}
The data that support the findings of this study are openly available on scientific journals or in the arxiv repository.

\vfill\eject

\noindent
\textcolor{blue}{\bf\sffamily References}

\end{document}